\documentclass[11pt]{scrartcl}
\usepackage[english]{babel}
\usepackage[T1]{fontenc}
\usepackage[utf8]{inputenc}
\usepackage{color,amssymb,amsfonts,amsthm,mathrsfs,graphicx,setspace,amsmath}
\usepackage{caption}
\usepackage[left=3cm,right=3cm,top=2.5cm,bottom=3cm]{geometry}
\usepackage{thmtools}
\usepackage[mathscr]{eucal}
\usepackage{fullpage}
\usepackage{xcolor}
\usepackage{tikz}
\usetikzlibrary{shapes.geometric,hobby,calc} \usetikzlibrary{decorations.pathmorphing}
\usetikzlibrary{decorations.text}
\usetikzlibrary{shapes.misc}
\usetikzlibrary{calc}
\usetikzlibrary{arrows.meta}
\usetikzlibrary{decorations,shapes}
\usepackage{hyperref}
\usepackage{nameref,cleveref}
\usepackage[square, numbers, comma, sort&compress]{natbib}  \usepackage{enumitem}
\setlist{nosep}

\usepackage{macros}
\usepackage{tikzmacros}
\usepackage{dsfont}
\usepackage{soul}
\usepackage{xspace}
\usepackage[prependcaption,colorinlistoftodos]{todonotes}
\usepackage{titling}
\usepackage{marginnote}
\usepackage{makeidx}
\usepackage{subcaption}
\usepackage{float}
\usepackage{thm-restate}

\newcommand{\Symbol}[1]{}

\hypersetup{
	colorlinks=true,
	linkcolor=myBlue,
	citecolor=myBlue,
	urlcolor=myLightBlue,
	bookmarksopen=true,
	bookmarksnumbered,
	bookmarksopenlevel=2,
	bookmarksdepth=3,
	hypertexnames=false }

\usetikzlibrary[positioning]
\usetikzlibrary{patterns}
\usetikzlibrary{shapes}
\usetikzlibrary{decorations.markings}
\usetikzlibrary{arrows}

\graphicspath{{./}{../Figures/}}

\title{Cycles of Well-Linked Sets \rom{2}: an Elementary Bound for the Directed Grid Theorem\thanks{The results in this manuscript were also presented in Milani's PhD thesis~\cite{Milani2024}. A preliminary version of this paper was published at \emph{FOCS 2024}\cite{hkmm24cows}.}}

\predate{}
\date{}
\postdate{}

\preauthor{}

\DeclareRobustCommand{\authorthing}{
	\begin{center}
	    Meike Hatzel\thanks{Partially supported by the Institute for Basic Science (IBS-R029-C1).}
            \smallskip\\[0em] Technical University Darmstadt, Germany\\[0em]
	    \href{mailto:research@meikehatzel.com}{research@meikehatzel.com}\\[0em]
	    \bigskip
	    Stephan Kreutzer\smallskip\\[0em]
        Technische Universität Berlin\\[0em]
        \href{mailto:stephan.kreutzer@tu-berlin.de}{stephan.kreutzer@tu-berlin.de}
	    \\[0em]
	    \bigskip
	    Marcelo Garlet Milani\smallskip
			\\[0em]
 			National Institute of Informatics, Tokyo, Japan
			\\[0em]
			\href{mailto:research@mgarletmilani.com}{research@mgarletmilani.com}
	    \\[0em]
	    \bigskip
	    Irene Muzi\smallskip\\[0em]
            Universität Hamburg\\[0em]
            \href{mailto:irene.muzi@gmail.com}{irene.muzi@gmail.com}
          \end{center}}
\author{\authorthing}
\postauthor{}

\newenvironment{cenv}{\begin{list}{}{            \setlength{\labelwidth}{1.5em}      \setlength{\leftmargin}{\labelwidth}      \addtolength{\leftmargin}{\labelsep}      \setlength{\listparindent}{0em}      \setlength{\topsep}{10pt}      \setlength{\itemsep}{5pt}      \setlength{\parsep}{0pt}    }
  }{
  \end{list}
}

\newcounter{claimcounter}

\newcommand{\DrawPicture}[1]{}
\newcommand{\DeferTask}[1]{}
\renewcommand{\DeferTask}[1]{}
\renewcommand{\DrawPicture}[1]{}
\renewcommand*{\Task}[2][mediumpriority]{}
\renewcommand*{\TaskPerson}[3][]{}

\begin{document}
\setlength{\abovedisplayskip}{2pt}
\setlength{\belowdisplayskip}{2pt}
\setlength{\abovedisplayshortskip}{0pt}
\setlength{\belowdisplayshortskip}{0pt}
\maketitle

\begin{abstract}

		In 2015, Kawarabayashi and Kreutzer proved the Directed Grid Theorem --- the generalisation of the well-known Excluded Grid Theorem to directed graphs --- confirming a conjecture by Reed, Johnson, Robertson, Seymour and Thomas from the mid-nineties.
		The theorem states that there is a function $f$ such that every digraph of directed \treewidth $f(k)$ contains a cylindrical grid of order $k$ as a butterfly minor.
		However, the given function grows faster than any non-elementary function of the size of the grid minor.
    More precisely, it is larger than a power tower whose height depends on the size of the grid.
		
		In this paper, we present an alternative proof of the Directed Grid Theorem which is conceptually much simpler, more modular in composition and improves the upper bound for the function \(f\) to a power tower of height 22.
  
        A key concept of our proof is a new structure called \emph{cycles of well-linked sets (CWS)}.
        We show that any digraph of large directed \treewidth contains a large CWS, which in turn contains a large cylindrical grid.

				  		\end{abstract}

\clearpage

\newpage

\tableofcontents

\newpage

\section{Introduction}

	The Excluded Grid Theorem by Robertson and Seymour is a central result in the study of graph minors and is the first major building block of their Graph Minors project~\cite{GM-series}.
    Additionally, the theorem has found numerous applications beyond its original scope, for instance in the theory of graph algorithms (see, for example,~\cite {cygan2015parameterized}).

	Based on a conjecture by Reed, Johnson, Robertson, Seymour and Thomas from the mid-nineties~\citep{johnson2001directed}, Kawarabayashi and Kreutzer proved in 2015~\cite{kawarabayashi2015directed} an Excluded Grid Theorem for directed graphs, that is, the existence of a function $f$ such that every digraph of directed \treewidth $f(k)$ contains a \emph{cylindrical grid} of order $k$ as a butterfly minor.
    In addition, they provided an XP algorithm which produces a directed tree-decomposition of width at most $f(k)$ or finds a cylindrical grid of order $k$ as a butterfly minor.
	Campos et al.~\cite{campos2019adapting} improved their result from XP to FPT.

	The Directed Grid Theorem has been used to prove advanced results in digraph structure theory~\cite{giannopoulou2020directed,giannopoulou2020canonical,2022directedtangle}, Erd\H os-P\'osa/cycle packing~\cite{masavrik2022packing,amiri2016erdos,zbMATH07088645,kawarabayashi2020half} and matching theory~\cite{matchinggrid2019,GiannopoulouKW24} as well as for algorithmic results~\cite{zbMATH06600957,edwards2017half}.
	Intuitively, the Directed Grid Theorem yields a \emph{win-win} situation.
	If a digraph contains a large cylindrical grid as a butterfly minor, the grid's uniform structure can be useful in solving the task at hand.
	Otherwise, if the digraph does not contain a large cylindrical grid as a butterfly minor, we know it has bounded directed \treewidth and we can use its directed tree-decomposition.

	As the original function \(f\) by Kawarabayashi and Kreutzer is very large (specifically, it grows faster than a tower whose height is dependent on the size of the grid), improving it has a major impact on all results which depend on it.
	It is thus natural to ask what the best bound for \(f\) is.
	Indeed, already determining whether it is an \emph{elementary} function	\footnote{In rough terms, a function is elementary if it can be constructed using a finite sequence of arithmetic operations, exponentials, logarithms and trigonometric functions, starting from constants and variables.}
	would vastly improve upon the bound given by~\cite{kawarabayashi2015directed}.
	Further, improving the bounds of the Directed Grid Theorem may require the development of powerful frameworks whose usefulness extends beyond their initial scope, as was observed with the path of sets-system framework introduced by~\cite{chekuri2016polynomial,chuzhoy2021tighterbounds} in order to obtain polynomial bounds for the (undirected) Grid Theorem.

	Our proof starts with a result from~\cite{COSSI}, where we introduced the concept of \emph{paths of order-linked sets, paths of well-linked sets} and \emph{cycles of well-linked sets} and obtained an elementary bound for Younger's conjecture, improving upon the non-elementary bound given by~\cite{reed1996packing}.

	On a high level, a path of well-linked sets consists of an ordered set of clusters, each in turn consisting of two sets of vertices $A$ and $B$, where $A$ is well-linked to $B$ such that each cluster is connected to the next by a linkage and all the clusters and linkages are pairwise internally vertex-disjoint.
	For the path of well-linked sets we obtain in~\cite{COSSI}, we have the additional property that the last cluster is well-linked back to the first one.
	However, the linkages connecting these two clusters may arbitrarily intersect the path of well-linked sets.

	In~\cite{COSSI}, we also introduced a framework based on temporal digraphs in order to construct a path of well-linked sets and a path of order-linked sets in digraphs of large directed \treewidth.
	We then showed that we can obtain \emph{fences} and \emph{acyclic grids} from these two objects.

	This paper builds upon the abovementioned work and proves an elementary upper bound for the Directed Grid Theorem.
	Our starting point is the following statement (\(\PowerTower{h}{n}\) denotes a power tower of height \(h\) with base 2 and an \(n\) \say{on top}, and \(\Polynomial{d}{x_1, \ldots, x_k}\) denotes a polynomial of degree \(d\) on the variables \(x_1, \ldots, x_k\)).

	\begin{restatable}[{\cite[Theorem 9.9]{COSSI}}]{theorem}{thmHighDTWToPOWL}
		\label{thm:high_dtw_to_POSS_plus_back-linkage}
		There exists a function \(\bound{thm:high_dtw_to_POSS_plus_back-linkage}{t}{w,\ell} \in \PowerTower{7}{\Polynomial{25}{w, \ell}}\) such that every digraph $D$ with $\dtw{D} \geq \bound{thm:high_dtw_to_POSS_plus_back-linkage}{t}{w, \ell}$ contains a path of well-linked sets $(\mathcal{S} = ( S_0, S_1, \dots,$ $S_{\ell}), \mathscr{P})$ of width $w$ and length $\ell$ and, additionally, $B(S_\ell)$ is well-linked to $A(S_0)$ in $D$.
	\end{restatable}

	Our first technical contribution is to prove that a digraph containing a path of well-linked sets as described above also contains a cycle of well-linked sets.
	A cycle of well-linked sets consists of a path of well-linked sets together with a linkage \(\mathcal{L}\) from the last to the first cluster, where \(\mathcal{L}\) is internally disjoint from the path of well-linked sets.
	The functions
	\(\bound{thm:POSS_plus_back-linkage_to_COSS}{w'}{}\),
	\(\bound{thm:POSS_plus_back-linkage_to_COSS}{r}{}\) and
	\(\bound{thm:POSS_plus_back-linkage_to_COSS}{\ell'}{}\)
	used below are defined later, but we observe here that
	\begin{align*}
		\bound{thm:POSS_plus_back-linkage_to_COSS}{w'}{w,\ell} & \in \PowerTower{2}{\Polynomial{97}{\ell, w}},
		\\[0em]
		\bound{thm:POSS_plus_back-linkage_to_COSS}{r}{w,\ell} & \in \PowerTower{13}{\Polynomial{97}{\ell, w}} \text{ and }
		\\[0em]
		\bound{thm:POSS_plus_back-linkage_to_COSS}{\ell'}{w,\ell} & \in \PowerTower{14}{\Polynomial{97}{\ell, w}}.
	\end{align*}
	\begin{restatable}{theorem}{thmPOWLtoCOWS}
  	\label{thm:POSS_plus_back-linkage_to_COSS}
		Let $w, \ell$ be integers, let $\Brace{\mathcal{S} = \Brace{ S_0, S_1, \dots, S_{\ell'}},\mathscr{P}}$ be a strict path of well-linked sets of width $w'$ and length $\ell'$ and let $\mathcal{R}$ be a $B(S_{\ell'})$-$A(S_0)$-linkage of order $r$.
		If $w' \geq \bound{thm:POSS_plus_back-linkage_to_COSS}{w'}{w,\ell}$, $r \geq \bound{thm:POSS_plus_back-linkage_to_COSS}{r}{w,\ell}$ and $\ell' \geq \bound{thm:POSS_plus_back-linkage_to_COSS}{\ell'}{w, \ell}$, then $\ToDigraph{\Brace{\mathcal{S},\mathscr{P}} \cup \mathcal{R}}$ contains a cycle of well-linked sets of width $w$ and length $\ell$.
	\end{restatable}

	While at first it may look like the path of well-linked sets provided by~\cref{thm:high_dtw_to_POSS_plus_back-linkage} is very close to a cycle of well-linked sets, ensuring that the linkage connecting the last cluster back to the first one is internally disjoint from the path of well-linked sets is far from being a simple task, as the intersections may yield mostly short cycles which are not very helpful in building a large cylindrical grid.

	Constructing such a \emph{back-linkage} is a step in the proof of the Directed Grid Theorem which finds no parallel in the Undirected Grid Theorem, as there is no need to \say{close cycles} in the undirected setting.
	More than just a mere technical step, a better understanding of this part of the proof can reveal further insights into the differences between the Directed and Undirected Grid Theorems.
	In particular, if one could obtain linear upper bounds for this part of the proof,
	then this could be evidence that the functions for the Directed and for the Undirected Grid Theorems are asymptotically the same.
	Conversely, if one could obtain super-linear lower bounds, then this would be evidence that these functions might be distinct.

	In order to obtain the back-linkage, we introduce the concept of \(c\)-\emph{horizontal webs}, which are a special case of the webs used in~\cite{COSSI,kawarabayashi2015directed}, and apply our framework from~\cite{COSSI} in order to both construct a cylindrical grid from \(2\)-horizontal webs as well as obtaining 2-horizontal webs from the path of well-linked sets described above.
	
	Moreover, our modular approach leads us to two questions regarding obtaining 2-horizontal webs and constructing a cylindrical grid from them (see~\cref{sec:conclusion} for an in-depth discussion), which are not only natural questions on their own, but are also both necessary and sufficient for obtaining better bounds for the Directed Grid Theorem.
	In this way, we effectively reduce the complexity of further improvements of the theorem, as each question can be answered independently.

	As in~\cite{COSSI}, we first look for an intermediate structure with similar connectivity properties as the cylindrical grid, which we call \emph{cycle of well-linked sets}.
	With the tools developed in~\cite{COSSI}, we can easily show that every large cycle of well-linked sets contains a large cylindrical grid, allowing us to deduce our main result from~\cref{thm:POSS_plus_back-linkage_to_COSS}.
	The exact function \(\bound{thm:high-treewidth-implies-cylindrical-grid}{dtw}{}\) is defined later, but we note here that \(\bound{thm:high-treewidth-implies-cylindrical-grid}{dtw}{k} \in \PowerTower{22}{\Polynomial{9}{k}}\).
	
	\begin{restatable}{theorem}{thmCylindricalGrid}
		\label{thm:high-treewidth-implies-cylindrical-grid}
		Every digraph $D$ with $\dtw{D} \geq \bound{thm:high-treewidth-implies-cylindrical-grid}{dtw}{k}$ contains a cylindrical grid of order $k$ as a butterfly minor.
	\end{restatable}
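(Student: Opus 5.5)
The proof of \cref{thm:high-treewidth-implies-cylindrical-grid} will chain three results: \cref{thm:high_dtw_to_POSS_plus_back-linkage} to get a path of well-linked sets with a back-connection, \cref{thm:POSS_plus_back-linkage_to_COSS} to turn it into a cycle of well-linked sets, and finally the fact (via the tools of~\cite{COSSI}) that a large cycle of well-linked sets contains a large cylindrical grid as a butterfly minor.

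\textbf{Step 1 (choosing parameters).} Given \(k\), I take the width \(w\) and length \(\ell\) of a cycle of well-linked sets large enough to contain a cylindrical grid of order \(k\). I expect both to be polynomial in \(k\), for instance \(w,\ell \in \Polynomial{c}{k}\) for a small constant \(c\). I then set
\[
w' = \bound{thm:POSS_plus_back-linkage_to_COSS}{w'}{w,\ell}, \quad r = \bound{thm:POSS_plus_back-linkage_to_COSS}{r}{w,\ell}, \quad \ell' = \bound{thm:POSS_plus_back-linkage_to_COSS}{\ell'}{w,\ell},
\]
and define
\[
\bound{thm:high-treewidth-implies-cylindrical-grid}{dtw}{k} = \bound{thm:high_dtw_to_POSS_plus_back-linkage}{t}{\max(w',r), \ell'}.
\]

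\textbf{Step 2 (back-linkage).} By \cref{thm:high_dtw_to_POSS_plus_back-linkage}, \(D\) contains a path of well-linked sets \((\mathcal{S},\mathscr{P})\) of width \(\max(w',r)\) and length \(\ell'\), and \(B(S_{\ell'})\) is well-linked to \(A(S_0)\). Well-linkedness gives a \(B(S_{\ell'})\)-\(A(S_0)\)-linkage \(\mathcal{R}\) of order \(\max(w',r) \geq r\), obtained by taking equal-size subsets of both sides.

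I must also check that the path is \emph{strict}, as \cref{thm:POSS_plus_back-linkage_to_COSS} requires. If the output of \cref{thm:high_dtw_to_POSS_plus_back-linkage} is not strict, I plan to restrict to subclusters and sublinkages (or take every other cluster) to enforce strictness, paying a constant factor in width and length. I would also shrink the width to exactly \(w'\) by restricting the linkages in \(\mathscr{P}\) and the sets \(A,B\), which preserves well-linkedness of subsets.

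\textbf{Step 3 (cycle and grid).} Applying \cref{thm:POSS_plus_back-linkage_to_COSS} yields a cycle of well-linked sets of width \(w\) and length \(\ell\) in \(D\). The remaining step, which I expect to be the main obstacle, is to extract the grid from this cycle. My plan is as follows:

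\begin{enumerate}
\item In each cluster, use well-linkedness of \(A(S_i)\) to \(B(S_i)\) to route linkages through consecutive clusters, producing a fence or acyclic grid inside the path part as in~\cite{COSSI}.
\item Use the back-linkage \(\mathcal{L}\), which is internally disjoint from the path, to close the \(k\) horizontal paths into \(k\) concentric cycles.
\item Supply the \(k\) vertical paths in alternating directions through the clusters, using the order-linkedness and well-linkedness to reroute.
\end{enumerate}

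The delicate point is ensuring that the closing linkage meets the routed paths in the correct order. I would handle this by devoting one cluster to permuting the endpoints: well-linkedness allows any matching of \(A\)-vertices to \(B\)-vertices. Once this is done, contracting the paths gives the butterfly minor.

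\textbf{Step 4 (bound).} Finally, I compose the bounds. Since \(r\) and \(\ell'\) lie in \(\PowerTower{14}{\Polynomial{97}{\ell,w}}\), substituting into \(\PowerTower{7}{\Polynomial{25}{\cdot}}\) gives a tower of height roughly \(7+14+1=22\) with a polynomial in \(k\) on top, which matches the claimed bound \(\PowerTower{22}{\Polynomial{9}{k}}\) after absorbing constants.
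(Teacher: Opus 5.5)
\textbf{Verdict: there is a genuine gap in Step~3.}

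Your Steps 1 and 2 are exactly the paper's proof of \cref{thm:high_dtw_to_COSS}. The paper uses width $w'+r$ instead of $\max(w',r)$, which makes no difference. The paper then finishes in one line by citing \cref{thm:coss contains grid}.

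The gap is in Step~3, where you re-derive that theorem yourself. Your treatment of the ordering issue rests on the claim that well-linkedness lets one cluster realise \emph{any} matching of $A$-vertices to $B$-vertices. This is false. By \cref{def:well-linked}, well-linkedness only guarantees, for equal-size $A'\subseteq A(S_i)$ and $B'\subseteq B(S_i)$, \emph{some} $A'$-$B'$-linkage of order $|A'|$. It gives no control over which vertex of $A'$ is joined to which vertex of $B'$. Order-linkedness is no better, since it only yields order-preserving shifts. Prescribing the pairing is a directed linkage problem that these structures do not solve. So the point you correctly flag as delicate remains unresolved: the closing linkage induces an arbitrary permutation from the ends of the rows to their starts, and for the wrong permutation the rows close up into the wrong cycles.

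\textbf{The missing idea} is to absorb this permutation rather than correct it. In \cref{thm:coss contains grid} the paper does the following.
\begin{itemize}
\item It uses \cref{thm:poss-to-fence} to find a $(q,q)$-fence with $q\geq (k-1)(2k-1)+1$ in the path part. The rows start in $A(S_0)$ and end in the $B$-set of the last cluster of that part.
\item It closes the fence with a linkage, built from the last linkage of the cycle, that is internally disjoint from the fence.
\item It applies \cref{lemma:fence-plus-disjoint-back-linkage-implies-cylindrical-grid}, which accepts an \emph{arbitrary} such closing linkage, with no assumption on the induced permutation.
\end{itemize}
The quadratic oversizing of the fence (an Erd\H{o}s--Szekeres-type bound) is what replaces your permuting cluster. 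Citing \cref{thm:coss contains grid} directly repairs your proof.

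\textbf{Smaller points.}
\begin{itemize}
\item Do not shrink the width to $w'$. $\mathcal{R}$ has order $r$, so $B(S_{\ell'})$ must keep at least $r$ vertices. \cref{thm:POSS_plus_back-linkage_to_COSS} only needs width \emph{at least} its bound.
\item Strictness costs nothing. Replace each $S_i$ by the union of all $A(S_i)$-$B(S_i)$-paths in $S_i$; every linkage witnessing well-linkedness survives. So no loss in width or length is needed.
\item $\ell$ is not polynomial in $k$. \cref{thm:poss-to-fence} needs length exponential in the fence size, so the length bound in \cref{thm:coss contains grid} is exponential in $k$. The bound of \cref{thm:high_dtw_to_COSS} is a tower of height $21$ in $w,\ell$, so this exponential is exactly where the 22nd level comes from. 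Your `$+1$' has no source under your own polynomial assumption.
\end{itemize}
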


	The statement above has several implications for results using the Directed Grid Theorem.
	In~\cref{sec:erdos-posa}, we analyse one such application, namely the \emph{Erd\H os-Pósa property} of digraphs, and deduce that our results also improve the non-elementary bounds previously obtained by~\cite{amiri2016erdos} to elementary ones.

	The remainder of the paper is organised as follows.
	Basic notation and definitions are listed in~\cref{sec:preliminaries}, necessary definitions and statements from~\cite{COSSI} are repeated in~\cref{sec:powls-framework},~\cref{sec:constructing cows} contains our main proof and concluding remarks are given in~\cref{sec:conclusion}.

\subsection{Erd\H os-P\'osa property for directed graphs}
\label{sec:erdos-posa}

One of the applications of the Directed Grid Theorem is proving the Erd\H os-P\'osa property of digraphs.
We say that a digraph $H$ has the Erd\H os-P\'osa property if there is a function $\ell_H$ such that, in any digraph $D$, we can find $n$ disjoint $H$-butterfly minors or $\ell_H(n)$ vertices covering all $H$-butterfly minors.
Amiri, Kawarabayashi, Kreutzer and Wollan~\cite{amiri2016erdos} proved that the Erd\H os-P\'osa property holds for strongly connected digraphs precisely when they are minors of a cylindrical grid. 

\begin{theorem}\cite[Theorem 4.1]{amiri2016erdos}\label{thm:akkw}
	Let $H$ be a strongly connected digraph.
	There is a function \(\ell_H\) such that the following holds.
	$H$ has the Erd\H os-P\'osa property for butterfly (topological) minors if, and only if, there is a cylindrical grid (wall) of order $c$ of which $H$ is a butterfly (topological) minor.
	Furthermore, for every fixed strongly connected digraph $H$ satisfying these conditions and every $k$ there is a polynomial time algorithm which, given a digraph $D$ as input, either computes $k$ disjoint (butterfly or topological) models of $H$ in $D$ or a set $S$ of $\leq \ell_H(k)$ vertices such that $D-S$ does not contain a model of $H$.
\end{theorem}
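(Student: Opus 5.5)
The statement to prove is \cref{thm:akkw}, the Erd\H{o}s--P\'osa characterisation of Amiri, Kawarabayashi, Kreutzer and Wollan. The paper quotes it, and the interesting point here is how its bound depends on the Directed Grid Theorem. I would re-derive it by their scheme, with \cref{thm:high-treewidth-implies-cylindrical-grid} substituted wherever the grid theorem is used. The argument has two directions.

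\textbf{If direction.} Suppose $H$ is a butterfly (topological) minor of a cylindrical grid (wall) of order $c$. Fix $k$ and argue by induction on $k$, the case $k=1$ being trivial.
\begin{enumerate}
\item Let $D$ be a digraph. If $\dtw{D} \geq \bound{thm:high-treewidth-implies-cylindrical-grid}{dtw}{ck}$, then by \cref{thm:high-treewidth-implies-cylindrical-grid} $D$ contains a cylindrical grid of order $ck$ as a butterfly minor.
\item Partition the concentric cycles and radial paths of this grid into $k$ disjoint subgrids of order $c$. Each subgrid contains a model of $H$, so we obtain $k$ disjoint models in $D$.
\item Otherwise $D$ has a directed tree-decomposition of width $w < \bound{thm:high-treewidth-implies-cylindrical-grid}{dtw}{ck}$. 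Walk down the arborescence to find a bag $B$, together with its guard, such that the subgraph below contains a model of $H$ and no proper subtree does.
\item Since $H$ is strongly connected, each model of $H$ lies inside a single strong component of $D - (B \cup \text{guard})$, or else it meets $B \cup \text{guard}$. Deleting $B$ and its guard, at most $O(w)$ vertices, therefore separates the models lying below from the rest.
\item Apply the induction hypothesis for $k-1$ to the remainder. This gives the recursion $\ell_H(k) \leq \ell_H(k-1) + O(\bound{thm:high-treewidth-implies-cylindrical-grid}{dtw}{ck})$.
\end{enumerate}
The main technical point is this step. Guards in directed tree-decompositions only separate in the strong sense, and strong connectivity of $H$ is exactly what makes them sufficient.

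\textbf{Only if direction.} Suppose $H$ is not a minor of any cylindrical grid. For each $n$, build a digraph consisting of a large cylindrical-grid-like structure in which any two models of $H$ must intersect, yet no small hitting set exists. This is the construction used by Amiri et al.; it does not involve the grid function at all, so I would cite it unchanged.

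\textbf{Algorithm.} Use the FPT version of the grid theorem (Campos et al.) to either compute a decomposition or extract the grid. The resulting bound on $\ell_H(k)$ is $k \cdot \PowerTower{22}{\Polynomial{9}{ck}}$, so it is elementary.
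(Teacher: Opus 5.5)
The paper does not prove this theorem. It quotes it from Amiri et al.\ and only combines its ingredient \cref{lem:akkw2} with the new grid theorem in \cref{thm:ep-ele}. Your ``if'' direction follows exactly that route: a cylindrical grid of order $kc$ yields $k$ disjoint models, and otherwise you use the bounded-width argument of \cref{lem:akkw2}, which you reprove by induction on $k$. Your ``only if'' direction is, as in the paper, a citation. Two things need fixing.

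\textbf{Which vertices to delete.} In steps 3--4 you must delete $\Gamma(t)=\beta(t)\cup\bigcup_{e\sim t}\gamma(e)$. That is the bag together with the guards of \emph{all} tree arcs at $t$, including the arcs to the children. By \cref{def:directed-tree-width} this is still at most $w+1$ vertices.

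Deleting only the bag and the guard of the parent arc does not suffice. A minimal, hence strongly connected, model of $H$ may lie in $\beta(T_t)\setminus\beta(t)$ and meet two different child subtrees, passing through the guard of a child arc. Your choice of $t$ (no proper subtree contains a model) does not rule this out, and the recursion on $D-\beta(T_t)$ never hits it.

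Once the child guards are removed as well, every surviving strongly connected model lies either inside one child subtree, which the choice of $t$ excludes, or inside $D-\beta(T_t)$. Then $\ell_H(k)\le\ell_H(k-1)+\lvert\Gamma(t)\rvert$ follows.

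\textbf{The algorithmic paragraph overclaims.} The FPT algorithm of Campos et al.\ is an algorithmic version of the Kawarabayashi--Kreutzer proof. When it does not output a grid, it outputs a decomposition whose width is bounded by the original non-elementary function, not by the function of \cref{thm:high-treewidth-implies-cylindrical-grid}. The paper establishes its improved bound only existentially, and \cref{thm:ep-ele} makes no algorithmic claim.

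Your argument still yields a polynomial-time algorithm for fixed $H$ and $k$ with \emph{some} function $\ell_H$, which is all the quoted statement requires. But the algorithm does not achieve an elementary $\ell_H$ unless you also supply an algorithmic version of the new grid theorem.
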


Their proof uses the following lemma, which we restate to make the bounds explicit.

\begin{lemma}[{\cite[Lemma 4.2]{amiri2016erdos}}]
	\label{lem:akkw2}
	Let $G$ be a directed graph with $\dtw{G} \leq w$. For each strongly connected directed graph $H$, the graph $G$ has either $k$ disjoint copies of $H$ as a topological (butterfly) minor, or contains a set $T$ of at most $k \cdot (w+1)$ vertices such that $H$ is not a topological (butterfly) minor of $G-T$. 
\end{lemma}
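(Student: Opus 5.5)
We prove the statement by induction on $k$, using that every strongly connected $H$-model (topological or butterfly) lives inside a single strongly connected part of $G$. The tool is the separation property of a directed tree-decomposition $(T,\beta,\gamma)$ of width at most $w$. For an edge $e=(t,t')$ of the arborescence $T$, the guard $\gamma(e)$ meets every directed walk that leaves $\beta(T_{t'})$ and returns to it. Consequently every strongly connected subgraph of $G-\gamma(e)$ lies entirely inside $\beta(T_{t'})$ or entirely outside it. Since each bag together with its incoming guards has at most $w+1$ vertices, we are free to delete bags of this kind.

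\textbf{Base case and choice of the separating node.} If $k=0$, or if $G$ has no model of $H$, we take $T=\emptyset$. Otherwise we choose a node $t$ of $T$ that is as low as possible (maximal distance from the root) such that $G[\beta(T_t)]$ contains a model of $H$. Let $X=\beta(t)\cup\Gamma(t)$, where $\Gamma(t)$ is the set of guards on the edges incident with $t$. By the definition of width, $|X|\le w+1$.

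\textbf{Claim: every model of $H$ in $G-X$ lies entirely outside $\beta(T_t)$.} Let $M$ be a model in $G-X$. Because $H$ is strongly connected, the union of its branch sets and connecting paths induces a strongly connected subgraph of $G$. The part of $M$ inside $\beta(T_t)$ avoids $\beta(t)$, so it lies in the subtrees $\beta(T_{t'})$ of the children $t'$ of $t$. By the guard property applied to each edge $(t,t')$, with $\gamma(t,t')\subseteq X$, the model is either entirely in some $\beta(T_{t'})$ or disjoint from $\beta(T_t)$. The first option contradicts the minimality of $t$. Hence every model of $H$ in $G-X$ lives in $G':=G-\beta(T_t)$ minus $X$.

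\textbf{Induction step and recombination.} Fix one model $M_0$ of $H$ in $G[\beta(T_t)]$. The graph $G'-X$ inherits a directed tree-decomposition of width at most $w$ by restriction, since taking subgraphs does not increase directed treewidth. We apply induction with $k-1$. Either $G'-X$ contains $k-1$ disjoint models, and together with $M_0$, which is disjoint from them because it lies in $\beta(T_t)$, we obtain $k$ disjoint models of $H$. Or there is a set $T'$ of size at most $(k-1)(w+1)$ hitting all models in $G'-X$, and then $T=T'\cup X$ has size at most $k(w+1)$ and hits all models of $H$ in $G$, by the Claim.

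The main obstacle is the Claim itself. It requires checking carefully the guard semantics in the convention used here, namely that the guards of $(t,t')$ separate $\beta(T_{t'})$ with respect to strongly connected subgraphs. It also requires checking that a butterfly model of a strongly connected $H$ induces a strongly connected subgraph. For butterfly models this holds because each branch set is an in-arborescence joined to an out-arborescence, and $H$ is strongly connected. With the Claim established, the counting $|X|\le w+1$ is immediate from the definition of width.
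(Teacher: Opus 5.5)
Your argument is correct and is essentially the standard proof of this lemma in \cite{amiri2016erdos}; the paper itself only quotes the lemma and gives no proof. That proof takes a deepest node $t$ whose subtree induces a model and deletes $\Gamma(t)$ ($\le w+1$ vertices). It then uses the guards on the edges to the children of $t$ to push every surviving model out of $\beta(T_t)$, and recurses with $k-1$ on $G-\beta(T_t)-\Gamma(t)$ while keeping one model inside $\beta(T_t)$, exactly as you do.

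Two small repairs. First, a butterfly model need not be strongly connected as given, since branches of the in- or out-trees not used by any arc path may dangle, even into some $\beta(T_{t'})$. So state the Claim for minimal models, which are strongly connected and suffice wherever you apply the Claim. Second, for $k=0$ it is the first alternative that holds vacuously; $T=\emptyset$ need not be a valid hitting set there.
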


The previous two results and our improved bound for the Directed Grid Theorem yield the following.

\newcommand{\Butterflies}[1]{\textcolor{myGreen}{#1}}
\begin{theorem}\label{thm:ep-ele}
	Let \(H\) be a strongly connected digraph that is a topological \Butterflies{(butterfly)} minor of the cylindrical wall \Butterflies{(grid)} of order \(c\).
	Then for any digraph $D$ and any natural number $k$ either $D$ contains $k$ disjoint copies of $H$ as a topological \Butterflies{(butterfly)} minor or a set $S$ of at most \(k(\bound{statement:high-dtw-to-wall}{dtw}{kc} + 1)\) \Butterflies{($k(\bound{thm:high-treewidth-implies-cylindrical-grid}{dtw}{kc} + 1)$)} vertices such that \(H\) is not a topological \Butterflies{(butterfly)} minor of \(D - S\).
\end{theorem}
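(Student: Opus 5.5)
The plan is a standard win-win argument that combines \cref{lem:akkw2} with \cref{thm:high-treewidth-implies-cylindrical-grid} (for butterfly minors) or its wall analogue, the statement referenced by \(\bound{statement:high-dtw-to-wall}{dtw}{}\) (for topological minors). Set \(w := \bound{thm:high-treewidth-implies-cylindrical-grid}{dtw}{kc}\) in the butterfly case and \(w := \bound{statement:high-dtw-to-wall}{dtw}{kc}\) in the topological case. We then split into two cases according to whether \(\dtw{D} \leq w\) or not.

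\emph{Case \(\dtw{D} \leq w\).} Apply \cref{lem:akkw2} to \(G := D\) with this \(w\) and the strongly connected digraph \(H\). It yields either \(k\) disjoint copies of \(H\) as a topological (butterfly) minor, or a set \(T\) of at most \(k(w+1)\) vertices such that \(H\) is not a topological (butterfly) minor of \(D - T\). Both outcomes are exactly the conclusion of the statement.

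\emph{Case \(\dtw{D} > w\).} By \cref{thm:high-treewidth-implies-cylindrical-grid}, or by the wall version in the topological setting, \(D\) contains a cylindrical grid (wall) of order \(kc\) as a butterfly (topological) minor. The remaining step is to find \(k\) pairwise vertex-disjoint cylindrical grids (walls) of order \(c\) inside a cylindrical grid (wall) of order \(kc\). I would do this concretely: partition the \(kc\) concentric cycles into \(k\) consecutive blocks of \(c\) cycles each, and partition the \(kc\) radial paths into \(k\) consecutive blocks of \(c\) paths each. For \(i = 1, \ldots, k\), the subgraph formed by the \(i\)-th block of cycles together with the segments of the \(i\)-th block of paths that traverse those cycles is a cylindrical grid (wall) of order \(c\). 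Alternatively one can take all \(kc\) paths but only the segments crossing the \(i\)-th block of cycles; either choice gives \(k\) disjoint copies as long as the blocks are chosen disjointly. Since \(H\) is a topological (butterfly) minor of each copy, and butterfly/topological minor models compose under taking minors, \(D\) contains \(k\) disjoint models of \(H\). In this case we fall into the first alternative of the conclusion.

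The only point needing care is this last step. One must check that the subgrid formed by a block of cycles and the corresponding path segments really is a cylindrical grid (wall) of order \(c\), with the orientations of the cycles and paths preserved. One must also check that disjoint subgraphs of a model of the grid yield disjoint models in \(D\): the branch sets of distinct subgrids are disjoint because they correspond to disjoint vertex sets of the grid. Finally, transitivity of the butterfly (topological) minor relation gives the models of \(H\). All of this is routine, so the theorem follows directly from the two cited results.
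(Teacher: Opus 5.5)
Your proposal is correct and matches the paper's proof: the same win-win split between \cref{thm:high-treewidth-implies-cylindrical-grid} (resp.\ \cref{statement:high-dtw-to-wall}) and \cref{lem:akkw2}, where your block decomposition spells out the step the paper only asserts, namely that a grid (wall) of order $kc$ contains $k$ disjoint copies of $H$. One small slip: a cylindrical grid of order $kc$ has $2kc$ radial paths, not $kc$, so each block must contain $2c$ consecutive paths; your alternative (keep the segments of all paths across the $i$-th block of cycles, then discard all but $2c$ consecutive ones) avoids the issue.
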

\begin{proof}
	We consider the case where \(H\) is a butterfly minor of a cylindrical grid.
	The other case follows analogously by applying \cref{statement:high-dtw-to-wall} instead of \cref{thm:high-treewidth-implies-cylindrical-grid}.

	If $\dtw{D}\geq \bound{thm:high-treewidth-implies-cylindrical-grid}{dtw}{kc}$, then by~\cref{thm:high-treewidth-implies-cylindrical-grid} $D$ contains a cylindrical grid of order $kc$ and hence $k$ disjoint copies of $H$ as a butterfly minor. Otherwise, we can apply~\cref{lem:akkw2}.
\end{proof}

When $H$ is a cycle of length two, this result is equivalent to Younger's conjecture, which states that for all directed graphs $D$, there exists a function $f$ such that if $D$ does not contain $k$ disjoint cycles, $D$ contains a feedback vertex set of size bounded by $f(k)$.
This conjecture was proven true in 1996 by Reed, Robertson, Seymour and Thomas, and~\cref{thm:ep-ele} is an improvement of their non-elementary bound.
This bound is not the best we can achieve, and in fact, in~\cite{COSSI} we prove that the function $f$ is at most a power tower of height $8$.

\section{Preliminaries}
\label{sec:preliminaries}

In this section, we establish our notation and recall standard concepts and results from the literature used throughout the paper.

\paragraph{Sequences, sets and functions.}
Given sequences $S_1 \coloneqq (x_1, x_2, \dots, x_{j})$ and $S_2 \coloneqq (y_1, y_2, \dots$, $y_{k})$, we write $S_1 \cdot S_2$ for the sequence $S_3 \coloneqq (x_1, x_2, \dots, x_{j}, y_1, y_2, \dots, y_{k})$. 
We say that $S_1 \cdot S_2$ is a \emph{decomposition} of $S_3$.
The following is a well-known theorem about sequences of numbers due to Erd\H{o}s and Szekeres.
\begin{theorem}[\cite{erdosszekeres1935}]
	\label{thm:erdos_szekeres}
	Let $r,s \in\N$.
	Every sequence of distinct numbers of length at least $\Brace{r-1}\Brace{s-1}+1$ contains a monotonically increasing subsequence of length $r$ or a monotonically decreasing subsequence of length $s$.
\end{theorem}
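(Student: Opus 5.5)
The plan is to argue via Dilworth's theorem on partially ordered sets, applied to the index order of the terms of the sequence. Let the sequence be \(a_1, \ldots, a_n\) with \(n \geq (r-1)(s-1)+1\) and all \(a_i\) distinct. I would either apply Dilworth's theorem directly, or, more self-containedly, use the classical labelling argument, which I describe here. To each index \(i\) assign the pair \((x_i, y_i)\), where \(x_i\) is the length of the longest increasing subsequence ending at \(a_i\) and \(y_i\) is the length of the longest decreasing subsequence ending at \(a_i\).

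The key step is to show that the map \(i \mapsto (x_i, y_i)\) is injective. Take \(i < j\). Since the numbers are distinct, either \(a_i < a_j\) or \(a_i > a_j\).
\begin{itemize}
\item If \(a_i < a_j\), then appending \(a_j\) to a longest increasing subsequence ending at \(a_i\) gives \(x_j \geq x_i + 1\).
\item If \(a_i > a_j\), the same argument with decreasing subsequences gives \(y_j \geq y_i + 1\).
\end{itemize}
In either case \((x_i, y_i) \neq (x_j, y_j)\). This is the only place where distinctness is used, and it is the crux of the argument.

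Now suppose, for contradiction, that there is no increasing subsequence of length \(r\) and no decreasing subsequence of length \(s\). Then every \(x_i\) lies in \(\{1, \ldots, r-1\}\) and every \(y_i\) lies in \(\{1, \ldots, s-1\}\). So there are at most \((r-1)(s-1)\) possible pairs, while there are \(n \geq (r-1)(s-1)+1\) indices. By the pigeonhole principle two indices share a pair, contradicting injectivity.

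Some index therefore has \(x_i \geq r\) or \(y_i \geq s\), which yields the desired monotone subsequence. Taking a prefix of it if needed gives length exactly \(r\) or \(s\).

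There is no real obstacle here. The only point needing care is the degenerate case \(r = 1\) or \(s = 1\), where the bound is \(1\). There a single element is trivially a monotone subsequence of length \(1\), and the argument above covers this case as well.
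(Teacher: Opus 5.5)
The paper does not prove this statement. It is quoted from Erdős and Szekeres and used only as a black box, with $r=s$, in the proof of \cref{cor:web-to-folded-or-ordered-web-without-gaps}, so there is no in-paper argument to compare against.

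Your proof is the standard labelling argument (usually credited to Seidenberg), and it is correct. Injectivity of $i \mapsto (x_i,y_i)$ is exactly where distinctness is needed, and the pigeonhole count over $\{1,\dots,r-1\}\times\{1,\dots,s-1\}$ finishes the argument. The opening mention of Dilworth's theorem is superfluous, since you never use it.

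One small point: the counting step is only literally valid for $r,s\geq 1$. If $r=0$ or $s=0$ is allowed, the statement is trivial anyway, because the empty subsequence is monotone of length $0$.
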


\paragraph{Power towers and polynomials.}
Let \(d\) be an integer and \(V = \{x_1, \ldots, x_k\}\) a set of variables.
A \emph{polynomial} of degree \(d\) over \(V\) is a function \(p(x_{1}, x_{2}, \ldots, x_{k})\) of the form \(p(x_{1}, x_{2}, \ldots, x_{k}) = \sum_{i=1}^n (c_i \prod_{j=1}^k x_j^{e_{j,i}})\), where for each \(1 \leq i \leq n\) and each \(1 \leq j \leq k\) we have that \(c_i \in \Reals\), \(e_{j,i} \in \Naturals\) and \(\sum_{j=1}^k e_{j,i} \leq d\).
We write \(\Polynomial{d}{x_{1}, x_{2}, \ldots, x_{k}}\) for the set of all functions \(f\) for which there is a polynomial \(p\) of degree \(d\) over the variable set \(x_{1}, x_{2}, \ldots, x_{k}\) such that \(f(x_{1}, x_{2}, \ldots, x_{k}) \in \Oh(p(x_{1}, x_{2}, \ldots, x_{k}))\).

We define \emph{power towers} as follows.
Given an integer \(h\) and a set of functions \(F\) over a set of variables \(V\), we define a set of functions \(\PowerTower{h}{F}\) recursively as follows.
We set \(\PowerTower{0}{F} = F\) and define \(\PowerTower{h}{F}\) as  \(\Set{f : \Reals^{\Abs{V}} \to \Reals \mid f \in \Oh(2^{g(V)}), g \in \PowerTower{h-1}{F}}\) for \(h > 1\).
If \(F = \Polynomial{d}{V}\), we say that a function \(f \in \PowerTower{h}{F}\) is a \emph{power tower} of height \(h\).

\paragraph{Graphs and digraphs.} 
We denote by $E(G)$ the edge/arc set of a graph $G$, directed or not, and by $V(G)$ its vertex set.
We often use $G$ for undirected graphs and $D$ for directed graphs (also called \emph{digraphs}). 

Let $D$ be a digraph. 
Given a set of vertices $X \subseteq \V{D}$, we write $D - X$ for the digraph $(Y \coloneqq \V{D} \setminus X$, $\A{D} \cap (Y \times Y))$. 
Similarly, given a set of arcs $F \subseteq \A{D}$, we write $D - F$ for the digraph $(\V{D}, \A{D} \setminus F)$.

If $D$ is a digraph and $v \in V(D)$, then  $\InN{D}{v} \coloneqq \{ u \in V \mid (u,v) \in \A{D}\}$ is the set of \emph{in-neighbours} and $\OutN{D}{v} \coloneqq \{ u \in V \mid (v,u) \in \A{D}\}$ the set of \emph{out-neighbours} of $v$.
By $\Indeg[D]{v} \coloneqq \Abs{\InN{}{v}}$ we denote the \emph{in-degree} of $v$ and by $\Outdeg[D]{v} \coloneqq \Abs{\OutN{}{v}}$  its out-degree.
When working with a set or another structure $X$ containing digraphs, we write $\ToDigraph{X}$ to mean the digraph obtained by taking the union of all digraphs in $X$.

\paragraph{Paths and walks.}
A \emph{walk} of length $\ell$ in a digraph $D$ is a sequence of vertices \(W \coloneqq \Brace{v_0, v_1, \dots, v_{\ell}}\) such that $\Brace{v_i, v_{i+1}} \in \A{D}$, for all $0 \leq i < \ell$.
We write $\Start{W}$ for $v_0$ and $\End{W}$ for $v_\ell$ and say that $W$ is a $v_0$-$v_\ell$-walk.

A walk $W \coloneqq \Brace{v_0, v_1, \dots, v_{\ell}}$ is called a \emph{path} if no vertex appears twice in it and it is called a \emph{cycle} if $v_0 = v_\ell$ and $v_i \neq v_j$ for all $0 \leq i < j < \ell$. 

We often identify a walk $W$ in $D$ with the corresponding subgraph and write $V(W)$ and $E(W)$ for the set of vertices and arcs appearing on it.

Given two walks $W_1 \coloneqq (x_1, x_2, \dots, x_{j})$ and $W_2 \coloneqq (y_1, y_2, \dots, y_{k})$ with $\End{W_1} = \Start{W_2}$, we make use of the concatenation notation for sequences and write $W_1 \cdot W_2$ for the walk $W_3 \coloneqq (x_1, x_2, \dots, x_{j}, y_2, y_3, \dots, y_{k})$.
We say that $W_1 \cdot W_2$ is a decomposition of $W_3$.
If $W_1$ or $W_2$ is an empty sequence, then the result of $W_1 \cdot W_2$ is the other walk (or the empty sequence if both walks are empty).

\paragraph{Specific digraphs.}
We denote the digraph of a path on $k$ vertices by $\Pk{k}$.
For the \emph{bidirected path on $k$} vertices, we write $\biPk{k} \coloneqq (\Set{u_1, u_2, \dots, u_{k}}, \{(u_i, u_j) \mid 1 \leq i, j \leq k \text{ and } \Abs{i - j} = 1\})$.
The \emph{cycle on $k$ vertices} is given by $\Ck{k} \coloneqq (\{u_0, u_1, \dots, u_{k - 1}\}, \{(u_{i}, u_{i+1 \mod k}) \mid 0 \leq i < k\})$.

\paragraph{Connectivity.} A digraph $D$ is said to be \emph{strongly connected} if for every $u,v \in V$ there is a $u$-$v$-path \textbf{and} a $v$-$u$-path in $D$.
We say $D$ is \emph{unilateral} if for every $u,v \in V$ there is a $u$-$v$-path \textbf{or} a $v$-$u$-path in $D$.

\paragraph{Linkages and separators.}
Let $A, B \subseteq V(D)$.
An $A$-$B$-walk is a walk $W$ that starts in $A$ and ends in $B$.
A set $X \subseteq \V{D}$ is an \emph{$A$-$B$-separator} if there are no $A$-$B$-paths in $D - X$. 

A \emph{linkage} in $D$ is a set $\LLL$ of pairwise vertex-disjoint paths.
The \emph{order} $|\LLL|$ of $\LLL$ is the number of paths it contains.

Given a linkage \(\mathcal{L}\) and a digraph \(H\), we say that \(\mathcal{L}\) is \emph{internally disjoint} from \(H\) if \(\V{\mathcal{L}} \cap \V{H} \subseteq \Start{\mathcal{L}} \cup \End{\mathcal{L}}\).

An \emph{$A$-$B$-linkage}  of order $k$ is a linkage $\LLL \coloneqq \{L_1, L_2, \dots, L_{k}\}$ such that $\Start{L_i} \in A$ and $\End{L_i} \in B$ for all $1 \leq i \leq k$.
We write $\Start{\mathcal{L}}$ for the set $\{\Start{L_i} \mid L_i \in \mathcal{L}\}$ and $\End{\mathcal{L}}$ for the set $\{\End{L_i} \mid L_i \in \mathcal{L}\}$.
We also extend the notation for path concatenation to linkages.
Given linkages $\mathcal{P} = \Set{P_1, P_2, \dots, P_{k}}$ and $\mathcal{Q} = \Set{Q_1, Q_2, \dots, Q_{k}}$ such that $\End{\mathcal{P}} = \Start{\mathcal{Q}}$, we write $\mathcal{P} \cdot \mathcal{Q}$ for the linkage $\Set{P_a \cdot Q_b \mid P_a \in \mathcal{P}, Q_b \in \mathcal{Q} \text{ and } \End{P_a} = \Start{Q_b}}$. 

It is often convenient to use a linkage $\mathcal{L}$ as a function $\mathcal{L}: \Start{\mathcal{L}} \to \End{\mathcal{L}}$. 
The expression $\Fkt{\mathcal{L}}{a} = b$ then means that $\mathcal{L}$ contains a path starting in $a$ and ending in $b$. 

We frequently use the following classical result by Menger~\cite{menger}.
\begin{theorem}[Menger's Theorem~\cite{menger}]
	\label{thm:menger}
	Let $D$ be a digraph, $A,B \subseteq \V{D}$ with $\Abs{A} = \Abs{B}$.
	There is an $A$-$B$-linkage of size $k$ in $D$ if and only if every $A$-$B$-separator has size at~least~$k$.
\end{theorem}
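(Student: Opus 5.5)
We plan to prove both directions directly, the nontrivial one by induction on the number of arcs of $D$, in the style of Göring's short proof. The hypothesis $\Abs{A} = \Abs{B}$ plays no role in the argument.

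\emph{Necessity.} Suppose $\mathcal{L}$ is an $A$-$B$-linkage of order $k$ and $X$ is an $A$-$B$-separator. Every path of $\mathcal{L}$ is an $A$-$B$-path, so it must contain a vertex of $X$. The paths are pairwise vertex-disjoint, so these vertices are distinct, and hence $\Abs{X} \geq k$.

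\emph{Sufficiency, base case.} Assume every $A$-$B$-separator has size at least $k$, and induct on $\Abs{\A{D}}$. If $D$ has no arcs, then $A \cap B$ is itself an $A$-$B$-separator. Hence $\Abs{A \cap B} \geq k$, and $k$ trivial one-vertex paths form the required linkage.

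\emph{Sufficiency, inductive step.} Otherwise fix an arc $e = (x,y)$ and consider $D - e$.
\begin{itemize}
\item If every $A$-$B$-separator in $D - e$ still has size at least $k$, we are done by induction.
\item Otherwise there is an $A$-$B$-separator $S$ of $D - e$ with $\Abs{S} \leq k-1$. Every $A$-$B$-path of $D - S$ uses $e$, so $S \cup \{x\}$ and $S \cup \{y\}$ are both $A$-$B$-separators of $D$. Since these have size at least $k$, we get $\Abs{S} = k-1$ and $x, y \notin S$.
\end{itemize}

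\emph{The key claim.} We claim every $A$-$(S \cup \{x\})$-separator $T$ in $D - e$ is an $A$-$B$-separator in $D$. Let $P$ be an $A$-$B$-path in $D$. Then $P$ meets $S \cup \{x\}$: if it avoids $S$, it must use $e$ and so passes through $x$. Let $P'$ be the initial segment of $P$ up to its first vertex in $S \cup \{x\}$. This segment does not use $e$, because $e$ leaves $x$. So $P'$ lies in $D - e$ and must meet $T$, which proves the claim. Consequently $\Abs{T} \geq k$.

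\emph{Applying induction to both halves.} By induction in $D - e$, there is an $A$-$(S \cup \{x\})$-linkage $\mathcal{P}$ of order $k$. Symmetrically, using terminal segments and the fact that $e$ enters $y$, there is an $(S \cup \{y\})$-$B$-linkage $\mathcal{Q}$ of order $k$ in $D - e$.

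\emph{Gluing.} We join $\mathcal{P}$ and $\mathcal{Q}$ at the vertices of $S$. The $\mathcal{P}$-path ending at $x$ is joined to the $\mathcal{Q}$-path starting at $y$ via the arc $e$.

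\emph{Main obstacle: disjointness.} The delicate point is showing the glued paths are pairwise disjoint. This reduces to checking that a path of $\mathcal{P}$ and a path of $\mathcal{Q}$ share no vertex outside $S \cup \{x,y\}$. Suppose they shared such a vertex $v$. Then the segment of the $\mathcal{P}$-path from $A$ to $v$, followed by the segment of the $\mathcal{Q}$-path from $v$ to $B$, would contain an $A$-$B$-path in $D - e$. Each path of $\mathcal{P}$ meets $S \cup \{x\}$ only at its endpoint, and each path of $\mathcal{Q}$ meets $S \cup \{y\}$ only at its start. So this $A$-$B$-path avoids $S$, contradicting that $S$ separates $A$ from $B$ in $D - e$.

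The same argument rules out $x$ lying on a $\mathcal{Q}$-path or $y$ lying on a $\mathcal{P}$-path. For instance, if $x$ lay on a $\mathcal{Q}$-path, we would obtain an $A$-$B$-path through $x$ that avoids $S$ and does not use $e$. Once disjointness is established, the glued paths form an $A$-$B$-linkage of order $k$ in $D$, completing the induction.
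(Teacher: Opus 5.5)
The paper does not prove this statement; it cites Menger's theorem as a classical result and uses it as a black box. Your argument is a correct, self-contained proof. It is G\"oring's arc-deletion induction, properly adapted to digraphs: you use that $e=(x,y)$ leaves $x$ to handle the $A$-$(S\cup\{x\})$ side (initial segments up to the first hit of $S\cup\{x\}$ avoid $e$). You use that $e$ enters $y$ to handle the $(S\cup\{y\})$-$B$ side (terminal segments from the last hit of $S\cup\{y\}$ avoid $e$). Compared with a flow or augmenting-path proof, this route is short and purely combinatorial. What it buys over the paper's citation is self-containment.

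Two points deserve to be made explicit.

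\emph{Dropping $|A|=|B|$.} This is not merely harmless but necessary. You apply the induction hypothesis to the pairs $(A, S\cup\{x\})$ and $(S\cup\{y\}, B)$, whose sizes generally differ. So what you are really proving, and what the induction requires, is the general form of Menger's theorem.

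\emph{Where paths meet $S\cup\{x\}$ and $S\cup\{y\}$.} Your gluing and disjointness checks use three facts:
\begin{enumerate}
\item every vertex of $S\cup\{x\}$ is the last vertex of a path of $\mathcal{P}$;
\item each path of $\mathcal{P}$ meets $S\cup\{x\}$ only in its last vertex;
\item dually, every vertex of $S\cup\{y\}$ is the start of a path of $\mathcal{Q}$, and each path of $\mathcal{Q}$ meets $S\cup\{y\}$ only there.
\end{enumerate}
These do not follow from the definition of an $A$-$B$-path, which allows internal vertices in the target set. They do follow at once from $|S\cup\{x\}| = k = |\mathcal{P}|$ and pairwise disjointness: the $k$ distinct ends exhaust $S\cup\{x\}$. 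Hence an internal occurrence of some $s\in S\cup\{x\}$ on one path would meet the path ending at $s$. The same counting works for $\mathcal{Q}$.

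With this stated, your case analysis is complete: shared vertices outside $S\cup\{x,y\}$, $x$ on a $\mathcal{Q}$-path, and $y$ on a $\mathcal{P}$-path each yield an $A$-$B$-walk in $(D-e)-S$. So the glued walks are paths and are pairwise disjoint, as you claim.
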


Let $D$ be a digraph, $A, B \subseteq V(D)$ and $c \in \N$.
An \emph{$A$-$B$-linkage with congestion $c$} in $D$ is a set $\LLL$ of $A$-$B$-paths such that no vertex of $V(D)$ occurs in more than $c$ distinct paths in $\LLL$.
A linkage of congestion $1$ is called \emph{integral}, and a linkage of congestion $2$ is called \emph{half-integral}.
 
A simple application of~\cref{thm:menger} yields the following lemma (see, for example,~\cite{kawarabayashi2015directed} for a proof). 
\begin{lemma}[{\cite{kawarabayashi2015directed}}]
			\label{lemma:half_integral_to_integral_linkage}
	Let $D$ be a digraph, $A,B \subseteq \V{D}$.
	If there is an $A$-$B$-linkage of order $ck$ and congestion $c$ in $D$, then there is an integral $A$-$B$-linkage of order $k$ in $D$. 
\end{lemma}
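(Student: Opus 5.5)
Every vertex-capacity argument here goes through Menger's Theorem (\cref{thm:menger}), so I would do the same. Let $\mathcal{L}$ be an $A$-$B$-linkage of order $ck$ and congestion $c$ in $D$. It suffices to show that every $A$-$B$-separator $X$ has size at least $k$. Once that is established, \cref{thm:menger} yields an integral $A$-$B$-linkage of order $k$ (after restricting to suitable subsets of size $k$ of $A$ and $B$ if $|A| \neq |B|$, or simply applying the set version of Menger).

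For the separator bound, fix an $A$-$B$-separator $X$. Every path $L \in \mathcal{L}$ is an $A$-$B$-path in $D$, so it must meet $X$, since otherwise $L$ would survive in $D - X$. Count incidences between the $ck$ paths and the vertices of $X$: each path contributes at least one incidence, so there are at least $ck$ incidences. By the congestion bound, each $x \in X$ lies on at most $c$ distinct paths of $\mathcal{L}$, so there are at most $c|X|$ incidences. Hence $c|X| \geq ck$, i.e.\ $|X| \geq k$.

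The step needing care is the bookkeeping in the incidence count: the paths must be counted as distinct members of $\mathcal{L}$, so that order $ck$ really means $ck$ elements. The other point to handle is the form of Menger's Theorem used. \Cref{thm:menger} is phrased with $|A| = |B|$, while the lemma allows arbitrary $A, B$. I would handle this by applying Menger's Theorem in its general form (the maximum number of disjoint $A$-$B$-paths equals the minimum size of an $A$-$B$-separator), or equivalently by adding a super-source joined to $A$ and a super-sink joined from $B$. Beyond this, the argument is a direct counting step and I expect no real obstacle.
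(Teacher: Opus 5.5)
Your proof is correct and is the intended argument. The paper gives no proof of its own; it calls the lemma a simple application of Menger's Theorem and refers to \cite{kawarabayashi2015directed}. Your counting bound $c|X| \geq ck$ for every $A$-$B$-separator $X$, combined with the general form of Menger's Theorem (or a super-source/super-sink to avoid the $|A| = |B|$ restriction in \cref{thm:menger}), is exactly that application.
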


Throughout the paper, we frequently work with a special kind of linkage which we define next.

\begin{definition}[minimal linkage]
	\label{def:H-minimal}
	Let $D$ be a digraph, $H\subseteq D$ be a subgraph, and $\LLL$ be a linkage of order $k$.
    $\LLL$ is \emph{minimal with respect to $H$}, or \emph{$H$-minimal}, if for all arcs $e \in \bigcup_{L\in \LLL} \E{L} \setminus \E{H}$ there is no $\Start{\LLL}$-$\End{\LLL}$-linkage of order $k$ in the graph $(\LLL \cup H) - e$.
\end{definition}

Given a linkage $\mathcal{L}$ in a digraph $D$ and a subgraph $H \subseteq D$, we can obtain a linkage $\mathcal{L}'$ with the same order and same endpoints as $\mathcal{L}$ which is $H$-minimal.
This can be done by iteratively removing arcs $e \in \A{\mathcal{L}} \setminus \A{H}$ for which a $\Start{\mathcal{L}}$-$\End{\mathcal{L}}$-linkage of order $\Abs{\mathcal{L}}$ avoiding $e$ exists.

Minimal linkages were used extensively in~\cite{kawarabayashi2015directed}.
The idea is that, when constructing paths of an $H$-minimal linkage $\LLL$, we always prefer to use arcs of $H$ over arcs not in $E(H)$.
This implies the following property, which we exploit frequently in our proofs.

\begin{definition}[weak minimality]
	\label{def:weak_minimality}
	A linkage $\mathcal{L}$ in a digraph $D$ is \emph{weakly $k$-minimal} with respect to a subgraph $H$ of $D$ if for every $L_1 \cdot e \cdot L_2 \in \mathcal{L}$ with $e \in \E{\mathcal{L}} \setminus \E{H}$ there is a $\V{L_1}$-$\V{L_2}$-separator of size at most $k-1$ in $\Brace{\mathcal{L} \cup H} - e$. 
\end{definition}

\begin{observation}
	\label{obs:H-minimal-implies-weakly-minimal}
	Let $H$ be a subgraph of a digraph $D$ and let $\mathcal{L}$ be a linkage which is $H$-minimal.
	Then $\mathcal{L}$ is weakly $\Abs{\mathcal{L}}$-minimal with respect to $H$.
\end{observation}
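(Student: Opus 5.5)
We need to prove \cref{obs:H-minimal-implies-weakly-minimal}: an $H$-minimal linkage $\mathcal{L}$ is weakly $\Abs{\mathcal{L}}$-minimal with respect to $H$. Write $k \coloneqq \Abs{\mathcal{L}}$. We argue by contradiction via Menger's Theorem (\cref{thm:menger}): a small separator must exist, since otherwise we could reroute $\mathcal{L}$ around the arc $e$.

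Fix a path $L = L_1 \cdot e \cdot L_2 \in \mathcal{L}$ with $e \in \E{\mathcal{L}} \setminus \E{H}$, and write $e = (u,v)$, so $u = \End{L_1}$ and $v = \Start{L_2}$. Let $D' \coloneqq (\mathcal{L} \cup H) - e$. Suppose, for contradiction, that every $\V{L_1}$-$\V{L_2}$-separator in $D'$ has size at least $k$.

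The goal is now to build a $\Start{\mathcal{L}}$-$\End{\mathcal{L}}$-linkage of order $k$ in $D'$, which contradicts $H$-minimality. The obvious first attempt is to replace $e$ by a $\V{L_1}$-$\V{L_2}$-path in $D'$. Such a path exists because $k \geq 1$, but it may intersect the other paths of $\mathcal{L}$, so this does not work directly.

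Instead, apply Menger to the endpoint sets, i.e. show that every $\Start{\mathcal{L}}$-$\End{\mathcal{L}}$-separator $X$ in $D'$ has size at least $k$. Suppose $\Abs{X} < k$. The $k-1$ paths of $\mathcal{L} \setminus \{L\}$ lie in $D'$ and are pairwise disjoint, so $X$ must hit each of them. Hence $\Abs{X} \geq k-1$, and so $\Abs{X} = k-1$ with exactly one vertex of $X$ on each path other than $L$ and no vertex on $L$. Then $L_1$ and $L_2$ are disjoint from $X$. In $D' - X$, the vertex $\Start{L}$ reaches all of $\V{L_1}$ along $L_1$, and every vertex of $\V{L_2}$ reaches $\End{L}$ along $L_2$. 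Since $X$ has only $k-1 < k$ vertices, it is not a $\V{L_1}$-$\V{L_2}$-separator, so $D' - X$ contains a $\V{L_1}$-$\V{L_2}$-path. Concatenating it with the appropriate segments of $L_1$ and $L_2$ gives a $\Start{\mathcal{L}}$-$\End{\mathcal{L}}$-walk in $D' - X$, and hence a path. This contradicts $X$ being a separator.

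Therefore every $\Start{\mathcal{L}}$-$\End{\mathcal{L}}$-separator in $D'$ has size at least $k$. By \cref{thm:menger}, $D' = (\mathcal{L} \cup H) - e$ contains a $\Start{\mathcal{L}}$-$\End{\mathcal{L}}$-linkage of order $k$, which contradicts $H$-minimality of $\mathcal{L}$. So some $\V{L_1}$-$\V{L_2}$-separator in $D'$ has size at most $k-1$, as required. The only delicate step is the counting argument that pins $\Abs{X}$ to exactly $k-1$ with $X$ missing $L$; everything else follows directly from the definitions.
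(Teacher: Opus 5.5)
Your proof is correct and follows essentially the same route as the paper. The paper also argues by contradiction: a $\Start{\mathcal{L}}$-$\End{\mathcal{L}}$-separator of size less than $\Abs{\mathcal{L}}$ must hit each path of $\mathcal{L} \setminus \{L\}$ exactly once and miss $L$, which yields a $\Start{L}$-$\End{L}$-path avoiding it, so Menger's Theorem contradicts $H$-minimality. The one difference is cosmetic. The paper first turns the hypothesis into a $\V{L_1}$-$\V{L_2}$-linkage $\mathcal{Q}$ of order $\Abs{\mathcal{L}}$ and picks a path of $\mathcal{Q}$ missed by the separator, whereas you apply the separator hypothesis directly, which saves one invocation of Menger.
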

\begin{proof}
	Assume towards a contradiction that there is some $L \in \mathcal{L}$ and some $e \in \E{L} \setminus \E{H}$ such that $L$ can be decomposed into $L_1 \cdot e \cdot L_2$ and there is no $\V{L_1}$-$\V{L_2}$-separator of size less than $\Abs{\mathcal{L}}$ in $\ToDigraph{\mathcal{L} \cup H} - e$.
	By~\cref{thm:menger}, there is a $\V{L_1}$-$\V{L_2}$-linkage $\mathcal{Q}$ of order $\Abs{\mathcal{L}}$ in $\ToDigraph{\mathcal{L} \cup H} - e$.
	
	Let $S$ be a minimum $\Start{\mathcal{L}}$-$\End{\mathcal{L}}$-separator in $\ToDigraph{\mathcal{L} \cup H} - e$.
	Because $\mathcal{L}$ is $H$-minimal, we have that $\Abs{S} < \Abs{\mathcal{L}}$.
	Hence, $S$ must hit every path in $\mathcal{L} \setminus \Set{L}$ and must be disjoint from $L$.
	
	Since $\Abs{\mathcal{Q}} = \Abs{\mathcal{L}}$, there is some $Q \in \mathcal{Q}$ which is not hit by $S$.
	Hence, there is a $\Start{L}$-$\End{L}$-path in $\ToDigraph{\mathcal{L} \cup H} - e - S$, a contradiction to the assumption that $S$ is a separator. 
	Thus, $\mathcal{L}$ is weakly $\Abs{\mathcal{L}}$-minimal with respect to $H$.
\end{proof}

We close this part by recalling the definition of well-linkedness, an important property of a central concept in our proof, the path of well-linked sets.

\begin{definition}
    \label{def:well-linked}
	Let $A, B$ be sets of vertices in a digraph $D$.
    We say that \emph{$A$ is well-linked to $B$ in $D$} if for every $A' \subseteq A$ and every $B' \subseteq B$ with $\Abs{A'} = \Abs{B'}$ there is an $A'$-$B'$-linkage of order $\Abs{A'}$ in $D$.
	If \(A\) is well-linked to \(A\), then we say that \(A\) is a \emph{well-linked set}.
\end{definition}

\paragraph{Minors.}
Given a digraph $D$ and an arc $(u,v) \in \A{D}$, we say that $(u,v)$ is \emph{butterfly contractible} if $\Outdeg{u} = 1$ or $\Indeg{v} = 1$. 
The \emph{butterfly contraction} of $(u,v)$ is the operation which consists in removing $u$ and $v$ from $D$, then adding a new vertex $uv$, together with the arcs $\Set{(w, uv) \mid w \in \InN{D}{u}}$ and $\Set{(uv, w) \mid w \in \OutN{D}{v}}$. 
Note that, by definition of digraphs, we \emph{remove} duplicated arcs and loops, that is, arcs of the form $(w,w)$.
If there is a subgraph $D'$ of $D$ such that we can construct another digraph $H$ from $D'$ using butterfly contractions, then we say that $H$ is a \emph{butterfly minor of $D$}, or that \emph{$D$ contains $H$ as a butterfly minor}.

\begin{definition}
	\label{def:butterfly-model}
    Let $H$ and $D$ be directed graphs.
    A \textbf{butterfly-model} of \(H\) in \(D\) is a function \(\mu\) which assigns to every \(x \in V(H) \cup \A{H}\) a subdigraph of \(D\)
		such that:
    \begin{enumerate}
        \item for every pair of distinct vertices \(u, v \in V(H)\), \(\mu(u)\) and \(\mu(v)\) are vertex-disjoint,
        \item for a vertex \(v \in V(H)\) and a non-incident edge \(e \in \A{H}\), \(\mu(v)\) and \(\mu(e)\) are vertex-disjoint,
                \item for every \(v \in V(H)\), \(\mu(v)\) is the union of an in-tree and an out-tree intersecting exactly on their common root, and
        \item for every \((u, v) \in \A{H}\), \(\mu((u, v))\) is a directed path starting at a vertex of the out-tree of \(\mu(u)\) and ending at a vertex of the in-tree of \(\mu(v)\).
    \end{enumerate}
\end{definition}

Given a digraph \(D\) and an arc \((u,v) \in \A{D}\), a \emph{subdivision} of \((u,v)\) is the operation consisting in removing \((u,v)\), adding a new vertex \(w\) and two new arcs
\((u,w), (w,u)\) to \(D\).
A subdivision of \(D\) is any digraph obtained after subdividing any number of arcs of \(D\).
We say that a digraph \(H\) is a \emph{topological minor} of \(D\) if some subdivision of \(H\) is contained as a subgraph in \(D\).

The following statement is well known.
We include a proof here for completeness.

\begin{observation}
	\label{statement:degree-3-topological-butterfly-minor}
	Let \(D, H\) be digraphs.
	If \(\Indeg{u} \leq 2\), \(\Outdeg{u} \leq 2\) and \(\Indeg{v} + \Outdeg{v} \leq 3\) holds for every vertex \(v\) of \(H\)
	and \(D\) contains \(H\) as a butterfly minor,
	then \(D\) also contains \(H\) as a topological minor.
\end{observation}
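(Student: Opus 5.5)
We work with a butterfly-model \(\mu\) of \(H\) in \(D\) in the sense of \cref{def:butterfly-model}; one exists because \(D\) contains \(H\) as a butterfly minor. The plan is to shrink each branch set \(\mu(v)\) until it is just a path, possibly a single vertex, with the edge paths attached at appropriate points. Once every branch set is such a path, choosing one vertex of each \(\mu(v)\) as the image of \(v\) and rerouting gives a subdivision of \(H\) as a subgraph of \(D\).

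Fix \(v \in V(H)\). By assumption \(\Indeg{v} \le 2\), \(\Outdeg{v} \le 2\) and \(\Indeg{v}+\Outdeg{v} \le 3\). We first trim \(\mu(v)\). Let \(T^{in}\) be its in-tree, \(T^{out}\) its out-tree and \(r\) their common root. Each incoming edge \(e=(u,v)\) has \(\mu(e)\) ending at some vertex \(x_e\) of \(T^{in}\), and each outgoing edge \(f\) has \(\mu(f)\) starting at some vertex \(y_f\) of \(T^{out}\). Replace \(T^{in}\) by the union of the \(x_e\)-\(r\) paths inside \(T^{in}\), and \(T^{out}\) by the union of the \(r\)-\(y_f\) paths inside \(T^{out}\). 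Conditions 1 and 2 of \cref{def:butterfly-model} are preserved, since we only delete vertices. The trimmed in-tree is a union of at most two paths into \(r\), and likewise for the out-tree.

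We then split into cases by degree. By symmetry, reversing all arcs, we may assume \(\Indeg{v} \le 1\) or \(\Outdeg{v} \le 1\). Suppose \(\Outdeg{v} \le 1\). Then the trimmed out-tree is a single path from \(r\) to \(y_f\), or just \(r\), and we may move the start of \(\mu(f)\) back to \(r\) by prepending this path to \(\mu(f)\); the result is still a directed path disjoint from the others. The in-tree is the union of at most two directed paths \(P_1, P_2\) ending at \(r\). These merge at a first common vertex \(z\) and share the segment from \(z\) to \(r\).

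Set the branch vertex of \(v\) to be \(z\), where \(z = x_e\) or \(z = r\) in the degenerate cases. Extend the two incoming edge paths by the segments of \(P_1\) and \(P_2\) up to \(z\). Extend the outgoing path by the segment from \(z\) to \(r\) followed by the out-tree path. By the tree structure, all these extensions use pairwise disjoint, private vertices of \(\mu(v)\), so the extended edge paths are internally disjoint from one another. The case \(\Indeg{v} \le 1\) is symmetric: the branch vertex is the split point of the out-tree. Degree-\(0\) and degree-\(1\) cases are trivial.

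Finally, distinct branch sets are disjoint, and each extension uses only vertices of \(\mu(v)\). The combined edge paths and branch vertices therefore form a subdivision of \(H\) in \(D\), taking subdivision to mean replacing an arc by a directed path. The main point needing care is verifying that the rerouted edge paths stay pairwise internally disjoint. This holds because the pieces used come from the \(\le 3\) branches of a subdivided star centred at \(z\) inside \(\mu(v)\).
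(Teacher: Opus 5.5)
Your proposal is correct and follows essentially the same route as the paper. The paper also takes as branch vertex the first common vertex of the two in-tree paths to the root when the in-degree is two, the last common vertex of the two out-tree paths when the out-degree is two, and the root otherwise; it then routes each arc image through the trees to these branch vertices. Like the paper, your disjointness argument tacitly assumes that distinct edge images \(\mu(e)\) are internally disjoint from each other and from the branch sets. \cref{def:butterfly-model} does not state this explicitly, but it is clearly the intended reading.
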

\begin{proof}
	Let \(u_{1}, u_{2}, \ldots, u_{n}\) be the vertices of \(H\).
	Let \(\mu\) be a butterfly model of \(H\) in \(D\).
	We find distinct vertices \(v_{1}, v_{2}, \ldots, v_{n}\) in \(D\) as follows.

	If \(u_i\) has in-degree two, then it must have out-degree at most one.
	Let \(w_1, w_2\) be the two in-neighbours of \(u_i\) in \(H\),
	let \(L_1 = \mu((w_1, u_i))\) and
	let \(L_2 = \mu((w_2, u_i))\).
	Let \(c_i\) be a vertex of \(\mu(u_i)\)
	which is reachable from all sources of \(\mu(u_i)\) and
	can reach all sinks of \(\mu(u_i)\).
	
	By definition of \(\mu\), there are two paths \(P_1, P_2\) from
	\(\End{L_1}\) and \(\End{L_2}\) to \(c_i\), respectively.
	Choose \(v_i\) as the first vertex in the intersection between \(P_1\) and \(P_2\).

	If \(u_i\) has out-degree two, we proceed in an analogous fashion as above,
	taking \(w_1, w_2\) as the out-neighbours of \(u_i\) instead.

	By definition of \(\mu\), there are two paths \(P_1, P_2\) from
	\(c_i\) to \(\Start{L_1}\) and \(\Start{L_2}\), respectively.
	Choose \(v_i\) as the last vertex in the intersection between \(P_1\) and \(P_2\).

	If \(u_i\) has both in-degree and out-degree at most one, we choose \(v_i = c_i\).

	We now construct a subdivision \(H'\) of \(H\).
	For every arc \((u_i, u_j) ∈ \A{H}\),
	let \(P_{i,j}\) be a path from \(v_i\) to \(v_j\) inside the union of
	\(\mu(u_i)\), \(\mu(u_j)\) and \(\mu((u_i, u_j))\)
	containing \(\mu((u_i, u_j))\) as a subpath.
	We subdivide the arc \((u_i, u_j)\) \(\Abs{\V{P_{i,j}}} - 2\) many times.
	By choice of \(v_i\) and because \(u_i\) has in-degree or out-degree at most 2, any two distinct paths \(P_{i,j}, P_{x,y}\) are internally disjoint.
	Hence, \(D\) contains \(H'\) as a subdigraph, and, thus, \(H\) as a topological minor.
\end{proof}

\subsection{Directed \treewidth and cylindrical grids}
\label{sec:grids}

In this \namecref{sec:grids}, we recall the definition of directed \treewidth and the concepts of webs, cylindrical grids and cylindrical walls.

Directed \treewidth was originally introduced by Reed~\cite{reed1999introducing} and by Johnson, Robertson, Seymour and Thomas~\cite{johnson2001directed} (see also~\cite{JohnsonRST2001}).
Adler~\cite{adler2007directed} showed that the original definition in~\cite{johnson2001directed} of directed \treewidth is not closed under butterfly minors.
We, therefore, use the variant of directed \treewidth defined in \cite{kawarabayashi2022directed}, which is closed under taking butterfly minors~\cite{butterflyminors2025}.

However, we only include this definition for the sake of completeness, as our proofs do not immediately use it.

An \emph{arborescence} $T$ is an acyclic directed graph obtained from an undirected rooted tree by orienting all edges away from the root.
That is, $T$ has a vertex $r_0$, called the root of $T$, with the property that for every  $r \in V(T)$ there is a unique directed path from $r_0$ to $r$ in $T$.
For each $r \in V(T)$, we denote the subarborescence of $T$ induced by the set of vertices in $T$ reachable from $r$ by $T_r$.
In particular, $r$ is the root of $T_r$.

\begin{definition}[{\cite[Definition 3.1]{kawarabayashi2022directed}}]
	\label{def:directed-tree-width}
	A \emph{directed \treedecomposition} of a digraph $D$ is a triple $(T, \beta, \gamma)$, where $\beta: V(T) \to 2^{V(D)}$ and $\gamma: E(T) \to 2^{V(D)}$ are functions and $T$ is an arborescence such that
	\begin{enamerate}{T}{item:directed-tree-width:last}
		\item \label{item:directed-tree-width:beta}
		      $\Set{\beta(t) : t \in V(T)}$ is a partition of $V(D)$ into (possibly empty) sets and
		\item \label{item:directed-tree-width:guard}
		      for every $e = (s, t) \in E(T)$, there is no closed directed walk in $D - \gamma(e)$ containing a vertex in $A$ and a vertex in $B$, where $A = \bigcup\Set{\beta(t ): t \in V(T_t)}$ and $B = V(D) \setminus A$.
		\label{item:directed-tree-width:last}
	\end{enamerate}
	For $t \in V(T)$ we define $\Gamma(t) \coloneqq \beta(t) \cup \bigcup\Set{\gamma(e) : e \sim t}$, where $e \sim t$ if $e$ is incident to $t$, and we define $\beta(T_t) \coloneqq \bigcup\Set{\beta(t) : t \in V(T_t)}$.
	The \emph{width} of $(T, \beta, \gamma)$ is the smallest integer $w$ such that $\Abs{\Gamma(t)} \leq w + 1$ for all $t \in V(T)$.
	The \emph{directed \treewidth} of $D$ is the smallest integer $w$ such that $D$ has a directed \treedecomposition of width $w$.
	The sets $\beta(t)$ are called the bags and the sets $\gamma(e)$ are called the guards of the directed \treedecomposition.
\end{definition}

We now define another obstruction to directed \treewidth called \emph{cylindrical grids},
which are illustrated in \cref{fig:cylindrical-grid}.
Since we are interested in grids in the context of butterfly minors, we define grids by linkages instead of giving explicit vertex and arc sets. 
\begin{definition} \label{def:cylindrical-grid}
	A \emph{cylindrical grid} of order $k$ is a digraph $G_k$ consisting of $k$ pairwise disjoint directed cycles $C_1, C_2, \dots, C_{k}$ of length $2k$, together with a set of $2k$ pairwise vertex-disjoint paths $P_1, P_2, \dots, P_{2k}$ of length $k - 1$ such that 
	\begin{itemize}
		\item each path $P_i$ has exactly one vertex in common with each cycle $C_j$ and both endpoints of $P_i$ are in $\V{C_1} \cup \V{C_k}$, 
		\item the paths $P_1, P_2, \dots, P_{2k}$ appear on each $C_i$ in this order, and
		\item for each $1 \leq i \leq 2k$, if $i$ is odd, then the cycles $C_1, C_2, \dots, C_{k}$ occur on $P_i$ in this order and, if $i$ is even, then the cycles occur in the reverse order $C_k, C_{k-1}, \dots, C_{1}$. 
	\end{itemize}
\end{definition}

\begin{figure}[!ht]
	\centering
	\begin{minipage}{0.4\textwidth}
		\centering
		\includegraphics{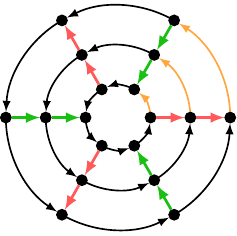}
	\end{minipage}
	\begin{minipage}{0.4\textwidth}
		\centering
		\includegraphics{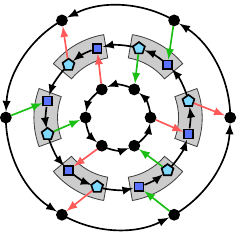}
	\end{minipage}
	\caption{A cylindrical grid of order 3 on the left and the cylindrical wall of order 3 on the right.
	The split vertices on the wall have a grey background,
	where the cyan pentagon is the vertex \(v^{\text{out}}\) and
	the blue rectangle is the vertex \(v^{\text{in}}\).}
	\label{fig:cylindrical-grid}
\end{figure}

Working with butterfly minors can sometimes require technical proofs due to the complexity of butterfly minor models.
For this reason, stating our results in terms of topological minors is also desirable.
To do so, we need a slightly different definition of a cylindrical grid, called a \emph{cylindrical wall}.
In the context of topological minors, it is more convenient to define the cylindrical wall of order \(k\) as a single digraph and not as a family of digraphs.

A \emph{cylindrical wall of order \(k\)} is obtained from the cylindrical grid of order \(k\) with a minimum number of vertices by splitting every vertex of in-degree and out-degree two in the following way.
Replace a vertex \(v\) by two vertices \(v^{\text{in}}, v^{\text{out}}\), add the arc \((v^{\text{in}}, v^{\text{out}})\), replace every arc \((u, v)\) with the arc \((u, v^{\text{in}})\) and every arc \((v, u)\) with the arc \((v^{\text{out}}, u)\).

It is a simple exercise to see that every cylindrical grid of order \(2k - 2\)
contains the cylindrical wall of order \(k\).

\begin{figure}[H]
		\centering
		\includegraphics{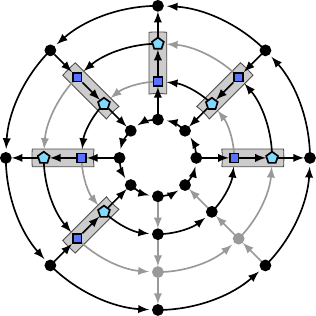}
		\caption{How a subdivision of the cylindrical wall of order 3 can be constructed inside a cylindrical grid of order 4.
		Unused arcs and vertices of the grid are coloured grey.
		The split vertices of the wall have a grey background,
		where the blue square vertex represents \(v^{\text{in}}\) and
		the cyan pentagon vertex represents \(v^{\text{out}}\).
			\label{fig:wall-inside-grid}}
\end{figure}

\begin{observation}
	\label{statement:grid-contains-wall}
	Every cylindrical grid of order \(2k - 2\) contains the cylindrical wall of order \(k\) as a topological minor.
\end{observation}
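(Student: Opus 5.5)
We prove the statement by an explicit embedding. Index the cycles of the cylindrical grid \(G_{2k-2}\) as \(C_1, \dots, C_{2k-2}\) and its radial paths as \(P_1, \dots, P_{4k-4}\). Odd paths run from \(C_1\) towards \(C_{2k-2}\) and even paths run in the reverse direction. The cylindrical wall of order \(k\) is the order-\(k\) grid with every vertex of in- and out-degree two split into an arc \((v^{\text{in}}, v^{\text{out}})\). So it suffices to find, for each of its \(k\) cycles and \(2k\) paths, internally disjoint directed paths in \(G_{2k-2}\) realising the same incidence pattern. In that realisation each crossing of a wall cycle with a wall path must become a short directed segment \(v^{\text{in}} \to v^{\text{out}}\), and no vertex may be shared between a cycle and a path.

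The construction follows \cref{fig:wall-inside-grid}. First, let the \(i\)-th wall cycle be a subdivision of the grid cycle \(C_{2i-1}\), for \(1 \leq i \leq k-1\), together with \(C_{2k-2}\) for the last one. This leaves a free grid cycle between consecutive chosen cycles, which gives room for rerouting. Second, let the \(j\)-th wall path use the grid paths \(P_{2j-1}\) and \(P_{2j}\) for \(1 \leq j \leq 2k-2\); this gives \(4k-4 \ge 2k\) grid paths for \(k \ge 2\). At a crossing between a wall path \(Q\) and a wall cycle \(C\), the path enters \(C\) at a vertex \(x\), travels along one arc of \(C\) to the next vertex \(y\), and leaves there. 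Then \(x\) serves as \(v^{\text{in}}\) and \(y\) as \(v^{\text{out}}\), the arc \(xy\) is the split arc, and the remaining arcs of \(C\) subdivide the wall cycle's arcs. To switch direction for even wall paths, we use the unused grid cycles and the pairing of consecutive odd and even grid paths: an even wall path travels along an even grid path, which already has the reversed orientation. Once the splitting scheme is fixed, each wall path is a concatenation of grid segments. Finally, by the standard remark that a subdigraph which is a subdivision yields a topological minor, and since all vertices of the wall have total degree at most \(3\), the resulting subgraph is a subdivision of the wall.

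The main obstacle is the bookkeeping. The wall requires \(2k\) paths with alternating directions and \(k\) cycles, but the budget is only \(4k-4\) grid paths and \(2k-2\) grid cycles, so the embedding has to be tight, particularly at the boundary cycles \(C_1\) and \(C_{2k-2}\). I would first verify the cases \(k=2\) and \(k=3\) against the figure. Then I would define the embedding by explicit index formulas, for example wall path \(j\) occupying the grid columns \(2j-1, 2j\) modulo \(4k-4\), with the two extra wall paths routed through the spare cycles. Finally I would check internal disjointness and the cyclic order of paths on each wall cycle. If the count does not close, a fallback is to drop the claimed constant and embed into a slightly larger grid. That would only affect constants, but for the stated bound I would aim to make the explicit scheme work.
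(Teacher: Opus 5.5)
\textbf{There is a genuine gap.} You have not given a construction. The part you defer as ``bookkeeping'' is the whole content of the statement, and the scheme you do commit to cannot work in a grid of order $2k-2$.

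\emph{Where your scheme fails.} Suppose the $i$-th wall cycle is a subdivision of a single grid cycle $C$. Then every split arc $(v^{\mathrm{in}},v^{\mathrm{out}})$ must be a subpath of $C$. The wall path must enter $v^{\mathrm{in}}$ and leave $v^{\mathrm{out}}$ along the grid paths through those vertices, because each grid vertex lies on one cycle and one column only.
\begin{itemize}
\item In $G_{2k-2}$ the successor of $P_m\cap C$ on $C$ is $P_{m+1}\cap C$, and $P_{m+1}$ has the \emph{opposite} orientation to $P_m$. So your step ``travel along one arc of $C$ to $y$ and leave there'' makes the wall path turn back.
\item If you instead walk on to $P_{m+2}$, each of the $2k$ crossings on an inner wall cycle occupies three consecutive vertices of $C$. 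The middle vertex is an interior subdivision vertex that no other wall path may use. Because consecutive wall paths alternate direction, this forces at least $6k$ vertices on a cycle of length $4k-4$.
\item In that variant the wall paths also drift by two columns per crossing. Your spare cycles cannot undo this, since going backwards along a directed cycle means going around the whole cylinder.
\end{itemize}

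\emph{Your counts are also inconsistent.}
\begin{itemize}
\item $1\le j\le 2k-2$ gives only $2k-2$ wall paths, not $2k$.
\item Pairing $P_{2j-1},P_{2j}$ for $2k$ wall paths would need $4k>4k-4$ columns.
\item A monotone wall path cannot use two oppositely oriented columns.
\item The ``two extra wall paths'' cannot be routed through cycles, because cycles run around the cylinder, not across it.
\item With your choice, wall cycles $k-1$ and $k$ are the adjacent cycles $C_{2k-3}$ and $C_{2k-2}$, so there is no spare cycle between them.
\end{itemize}
The fallback to a larger grid does not prove the stated bound.

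\textbf{The missing idea.} Realise each split arc on a grid \emph{path} rather than on a grid cycle, and spend two consecutive grid cycles on each inner wall cycle.
\begin{itemize}
\item Let wall path $j$ simply be $P_j$ for $1\le j\le 2k$. The columns $P_{2k+1},\dots,P_{4k-4}$ stay unused, so no rerouting is needed.
\item Let the boundary wall cycles be $C_1$ and $C_{2k-2}$. Their wall vertices have degree three and are not split.
\item For $2\le i\le k-1$, let the split arc on $P_j$ be the arc of $P_j$ between $C_{2i-2}$ and $C_{2i-1}$.
\end{itemize}
Odd columns run from $C_1$ towards $C_{2k-2}$ and even columns run the other way. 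Hence $v^{\mathrm{out}}$ of an odd $P_j$ and $v^{\mathrm{in}}$ of $P_{j+1}$ both lie on $C_{2i-1}$. Likewise, $v^{\mathrm{out}}$ of an even $P_j$ and $v^{\mathrm{in}}$ of $P_{j+1}$ both lie on $C_{2i-2}$.

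So wall cycle $i$ zig-zags: it takes the arc of $C_{2i-1}$ after odd columns and the arc of $C_{2i-2}$ after even columns. It closes along $C_{2i-2}$ through the unused columns. This uses exactly $1+2(k-2)+1=2k-2$ grid cycles, which is where the order $2k-2$ comes from.

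Disjointness is then immediate. Distinct wall paths are distinct columns, distinct wall cycles use distinct grid cycles, and wall cycle $i$ meets $P_j$ exactly in its split arc. In effect, the ``free'' cycle $C_{2i-2}$ that you reserve for rerouting should be merged with $C_{2i-1}$ into a single wall cycle.
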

\begin{proof}
	Let \(P_{1}, P_{2}, \ldots, P_{4k - 4}\) be the paths and
	\(C_{1}, C_{2}, \ldots, C_{2k - 2}\) be the cycles of the given cylindrical grid
	as described in \cref{def:cylindrical-grid}.
	We construct a subdivision of the cylindrical wall of order \(k\) as follows.

	Define \(C'_1 = C_1\) and \(C'_k = C_{2k - 2}\).
	For each \(1 \leq j \leq k\),
	let \(v^{\text{out}}_{1,2j - 1}\) be the vertex of \(P_{2j - 1}\) in \(C_1\),
	let \(v^{\text{in}}_{1,2j}\)      be the vertex of \(P_{2j}\)     in \(C_1\),
	let \(v^{\text{in}}_{k,2j - 1}\)  be the vertex of \(P_{2j - 1}\) in \(C_{2k - 2}\) and
	let \(v^{\text{out}}_{k,2j}\)     be the vertex of \(P_{2j}\)     in \(C_{2k - 2}\).

	For each \(2 \leq i \leq k - 1\), construct \(C'_i\) as follows.
	For each \(1 \leq j \leq 2k\),
	let \(\variablestyle{a} = \variablestyle{in}, \variablestyle{b} = \variablestyle{out}\) if \(j\) is odd and
	let \(\variablestyle{a} = \variablestyle{out}, \variablestyle{b} = \variablestyle{in}\) if \(j\) is even.
	Let \(v^{\variablestyle{a}}_{i,j}\) be the vertex of \(P_j\) on \(C_{2(i - 1)}\) and
	let \(v^{\variablestyle{b}}_{i,j}\) be the vertex of \(P_j\) on \(C_{2(i - 1) + 1}\).
	Let \(L_{i,j}\) be the
	\(v^{\variablestyle{out}}_{i,j}\)-\(v^{\variablestyle{in}}_{i,j+1}\)-subpath
	of \(C_{2(i - 1)}\) (if \(j\) is even) or
	\(C_{2(i - 1) + 1}\) (if \(j\) is odd).
	Let \(R_i\) be the \(v^{\variablestyle{out}}_{i,2k}\)-\(v^{\variablestyle{in}}_{i,1}\)-subpath
	of \(C_{2(i-1)}\).
	Define \(C'_i = L_{i,1} \cdot L_{i,2} \cdot \ldots \cdot L_{i,2k} \cdot R_{i}\).

	For each \(1 \leq i \leq k - 1\) and
	each \(1 \leq j \leq 2k\),
	define the paths \(A_{i,j}, B_{i,j}\) as follows.
	\(B_{i,j}\) is the \(v^{\variablestyle{in}}_{i,j}\)-\(v^{\variablestyle{out}}_{i,j}\)-subpath of \(P_j\).
	If \(j\) is even,
	define \(A_{i,j}\) as the
	\(v^{\variablestyle{out}}_{i,j}\)-\(v^{\variablestyle{in}}_{i+1,j}\)-subpath of \(P_j\).
	If \(j\) is odd,
	define \(A_{i,j}\) as the
	\(v^{\variablestyle{out}}_{i+1,j}\)-\(v^{\variablestyle{in}}_{i,j}\)-subpath of \(P_j\).

	One can confirm that the union of all \(C'_i, A_{i,j}, B_{i,j}\) defined above
	is a subdivision of the cylindrical wall of order \(k\), completing the proof.
\end{proof}

An essential difference between cylindrical grids and grids in undirected graphs is that cylindrical grids are \emph{locally acyclic} in the following sense. 
Suppose we delete in each cycle $C_i$ the arc $e_i$ whose head is on the path $P_1$.
These arcs are drawn in \textcolor{myOrange}{orange} in~\cref{fig:cylindrical-grid}.
The resulting digraph is acyclic and consists of two linkages: the linkage  $\{P_1, \dots, P_{2k}\}$ and the linkage $\{C_1 - e_1, \dots, C_k - e_k\}$ which contains for each cycle $C_i$ the path that remains once the arc $e_i$ is deleted.
Digraphs of this form are called \emph{fences}.
See~\cref{fig:cyl-grid-from-fence} for a drawing of cylindrical grids illustrating how they are constructed from a fence with additional arcs closing the cycles.

\begin{definition}
	\label{def:fence}
	A \emph{$(p,q)$-fence} is a tuple $(\mathcal{P}, \mathcal{Q})$ such that
	\begin{itemize}
		\item \label{fence:P}
		$\mathcal{P} = \Brace{P_1, P_2, \dots, P_{2p}}$ and $\mathcal{Q} = \Brace{Q_1, Q_2, \dots, Q_{q}}$ are linkages,
		\item \label{fence:P-intersection-Q}
		for each $1 \leq i \leq 2p$ and each $1 \leq j \leq q$, the digraph $P_i \cap Q_j$ is a path (and therefore non-empty),
		\item \label{fence:P-order-Q}
		for each $1 \leq j \leq q$, the paths $P_1 \cap Q_j, P_2 \cap Q_j, \dots, P_{2p} \cap Q_j$ appear in this order along $Q_j$, and
		\item \label{fence:Q-order-P}
		for each $1 \leq i \leq 2p$, if $i$ is odd then the paths $P_i \cap Q_1, P_i \cap Q_2, \dots, P_i \cap Q_{q}$ appear in this order along $P_i$, and if $i$ is even instead, then the paths $P_i \cap Q_q, P_i \cap Q_{q-1}, \dots, P_i \cap Q_{1}$ appear in this order along $P_i$.
	\end{itemize}
\end{definition}

See~\cref{fig:fence-1} for an illustration.
The \say{horizontal} paths, or \emph{rows}, constitute the linkage $\QQQ$ and the \emph{columns} form the linkage $\PPP$.

	\begin{figure}[ht]
		\centering
		\begin{subfigure}[b]{0.49\textwidth}
			\centering
			\includegraphics{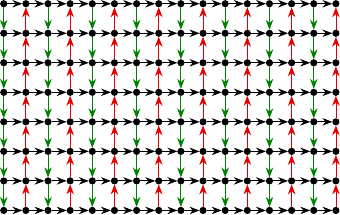}
			\caption{An $(8,8)$-fence.}
			\label{fig:fence-1}
		\end{subfigure}
		\hfill
		\begin{subfigure}[b]{0.49\textwidth}
			\centering
			\includegraphics{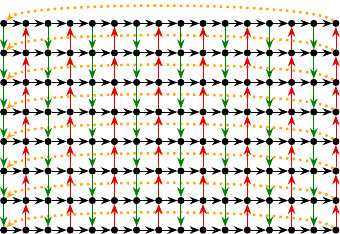}
			\caption{A cylindrical grid of order \(8\).}
			\label{fig:cyl-grid-from-fence}
		\end{subfigure}
		\caption{A fence and, on the right, an illustration of how a cylindrical grid is obtained from a fence.
			The \textcolor{myOrange}{dotted orange} paths symbolise the arcs $e_i$ that close the cycles drawn solid in~\cref{fig:cylindrical-grid}.}
		\label{fig:acyclic-grid-and-fence}
	\end{figure}

Further decomposing the fence constructed from the cylindrical grid yields an even simpler form of directed grid.
In a fence, we can only route from \say{left to right} and we can route \say{upwards} as well as \say{downwards}. 
An even simpler form of a directed grid is obtained if we remove the \say{upwards} paths from a fence, that is, every second column.
The resulting digraph is called an \emph{acyclic grid}, illustrated in~\cref{fig:acyclic-grid}.

\begin{figure}[!ht]
	\centering
	\begin{subfigure}[b]{0.49\textwidth}
        \centering
		\includegraphics{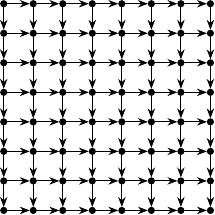}
        \caption{An acyclic $(8,8)$-grid.}
        \label{fig:acyclic-grid}
    \end{subfigure}
    \hfill
    \begin{subfigure}[b]{0.49\textwidth}
        \centering
        \includegraphics{fig-fence.pdf}
        \caption{An $(8,8)$-fence.}
        \label{fig:fence}
    \end{subfigure}
	\caption{An acyclic grid and a fence.}
	\label{fig:acyclic-grid-and-fence}
\end{figure}

\begin{definition}
	\label{def:acyclic-grid}
	An \emph{acyclic $(p,q)$-grid} is a pair $(\mathcal{P}, \mathcal{Q})$ such that
	\begin{itemize}
		\item \label{acyclic-grid:P}
		      $\mathcal{P} = \Brace{P_1, P_2, \dots, P_{p}}$ and $\mathcal{Q} = \Brace{Q_1, Q_2, \dots, Q_{q}}$ are linkages,
		\item \label{acyclic-grid:P-intersection-Q}
		      for each $1 \leq i \leq p$ and each $1 \leq j \leq q$, the digraph $P_i \cap Q_j$ is a path (and therefore non-empty),
		\item \label{acyclic-grid:P-order-Q}
		      for each $1 \leq j \leq q$, the paths $P_1 \cap Q_j, P_2 \cap Q_j, \dots, P_{p} \cap Q_j$ appear in this order along $Q_j$, and
		\item \label{acyclic-grid:Q-order-P}
		      for each $1 \leq i \leq p$, the paths $P_i \cap Q_1, P_i \cap Q_2, \dots, P_i \cap Q_{q}$ appear in this order along $P_i$.
	\end{itemize}
\end{definition}

Another grid-like structure we define is the \emph{web}, initially introduced by Reed et al.~in~\cite{reed1996packing}.
Webs are essential in the proof of the Directed Grid Theorem in~\cite{kawarabayashi2015directed} and are equally crucial for our results.

\begin{definition}\label{def:web}
	Let $D$ be a digraph.
	Two linkages $\mathcal{H}$ and $\mathcal{V}$ in $D$ constitute an $\Brace{\Abs{\mathcal{H}},\Abs{\mathcal{V}}}$-web $\Brace{\mathcal{H},\mathcal{V}}$ if every path in $\mathcal{V}$ intersects every path in $\mathcal{H}$.
	
	The set $\Start{\mathcal{V}}$ is called the \emph{top} of the web, while the set $\End{\mathcal{V}}$ is called the \emph{bottom} of the web.
	Finally, $\Brace{\mathcal{H},\mathcal{V}}$ is \emph{well-linked} if $\End{\mathcal{V}}$ is well-linked linked to $\Start{\mathcal{V}}$ in $D$.
\end{definition}

Next to these webs, we use the following two more restrictive definitions.

\begin{definition}
	\label{def:ordered_web}
	Let $\Brace{\mathcal{H}, \mathcal{V}}$ be an $\Brace{h,v}$-web.
	We say that $\Brace{\mathcal{H}, \mathcal{V}}$ is an \emph{ordered web} if there is an ordering of $\mathcal{V} = \Brace{V_1, V_2, \dots, V_{v}}$ for which each path $H \in \mathcal{H}$ can be decomposed into $H = H_1 \cdot H_2 \cdot \ldots \cdot H_{v}$ such that $H_i$ intersects $V_j$ if and only if $i = j$.
\end{definition}
	
\begin{definition}
	\label{def:folded-web}
	An $(h,v)$-web $\Brace{\mathcal{H}, \mathcal{V}}$ is a \emph{folded web} if every $V_i \in \mathcal{V}$ can be split as $V_i^a \cdot V_i^b \coloneqq V_i$ such that both $V_i^a$ and $V_i^b$ intersect all paths of $\mathcal{H}$.
\end{definition}

\section{Path of sets and temporal digraphs}
\label{sec:powls-framework}

In~\cite{COSSI}, we introduced the concept of paths of well-linked and order-linked sets.
Here, we repeat some of the definitions and statements we use.
First, we need a concept similar to well-linkedness, but which describes the connectivity given by an acyclic grid.
This property is called \emph{order-linkedness}, defined below.

\textbf{Shifts and order-linkedness.}
Let $A = (a_1,\dots, a_n)$ and $B = (b_1,\dots, b_m)$ be ordered sets.
Let $r \in \N,$ let $A'$ be an ordered subset of $A$ and let $B'$ be an ordered subset of $B$ such that $\Abs{A'} = \Abs{B'}.$
We say that $B'$ is an \emph{$r$-shift of $A'$} if there is a bijection $\pi : A' \rightarrow B'$ such that 
\begin{itemize}[noitemsep,topsep=0pt]
	\item for all $a_i \in A'$ we have that $\pi(a_i) = b_j$ implies $i \leq j$;
	\item there are at most $r$ vertices $a_i \in A'$ with $\pi(a_i) \neq b_i$; and
	\item for all $a_i, a_j \in A',$ if $a_i \leq_A a_j,$ then $\pi(a_i) \leq_B \pi(a_j).$
\end{itemize}

Let $H$ be a digraph, $A = \Brace{a_1, \dots, a_n}, B = \Brace{b_1, \dots, b_m} \subseteq \V{H}$ be ordered sets and let $r \in \N.$
We say that $A$ is $r$-\emph{order-linked} to $B$ in $H$ if for every $A' \subseteq A$ and every $B' \subseteq B$ with $\Abs{A'} = \Abs{B'}$ where $B'$ is an $r$-shift of $A'$ witnessed by the bijection $\pi$ there is an $A'$-$B'$-linkage $\mathcal{L}$ in $H$ satisfying $\Fkt{\pi}{a} = \Fkt{\mathcal{L}}{a}$ for all $a \in A'.$

\newcommand*{\powls}{path of well-linked sets\xspace}
\newcommand*{\powlss}{paths of well-linked sets\xspace}
\newcommand*{\pools}[1]{	\ifx\relax #1\relax
	path of order-linked sets\xspace
	\else
	path of $#1$-order-linked sets\xspace
	\fi}
\newcommand*{\poolss}[1]{	\ifx\relax #1\relax
	paths of order-linked sets\xspace
	\else
	paths of $#1$-order-linked sets\xspace
	\fi}
\newcommand*{\cowls}{cycle of well-linked sets\xspace}
\newcommand*{\cowlss}{cycles of well-linked sets\xspace}

We now define two central objects in our proofs: the paths of well-linked sets and the paths of order-linked sets.

\newcommand{\OLtext}[1]{\textcolor{myGreen}{#1}}
\newcommand{\WLtext}[1]{\textcolor{myLightBlue}{#1}}
\begin{definition}[path of \OLtext{$r$-order-linked}/\WLtext{well-linked} sets~\cite{COSSI}]
	\label{def:path-of-order-linked-sets}
	\label{def:path-of-well-linked-sets}
	A \emph{path of \OLtext{$r$-order-linked}\slash \WLtext{well-linked} sets} of width $w$ and length $\ell$ is a tuple $\Brace{\mathcal{S},\mathscr{P}}$ such that
	\begin{itemize}
		\item $\mathcal{S}$ is a sequence of $\ell + 1$ pairwise disjoint subgraphs $\Brace{S_0,\dots,S_{\ell}},$ which are called \emph{clusters},
		
		\item for every $0 \leq i \leq \ell$ there are disjoint \OLtext{ordered} sets $A(S_i),B(S_i) \subseteq \V{S_i}$ of size $w$ such that $A(S_i)$ is \OLtext{$r$-order-linked}\slash \WLtext{well-linked} to $B(S_i)$ in $S_i,$
		
		\item $\mathscr{P}$ is a sequence of $\ell$ pairwise disjoint linkages $\Brace{\mathcal{P}_0, \mathcal{P}_1, \dots, \mathcal{P}_{\ell - 1}}$ such that, for every $0 \leq i < \ell,$
		$\mathcal{P}_i$ is a $B(S_i)$-$A(S_{i+1})$-linkage of order $w$ which is internally disjoint from $S_i$ and $S_{i+1}$ and disjoint from every $S \in \mathcal{S} \setminus \Set{S_i, S_{i+1}}.$
	\end{itemize}
	
	Further, a \OLtext{\pools{r}} $\Brace{\mathcal{S}, \mathscr{P}}$ is called \emph{uniform} if for all $0\leq i < \ell$ and for all $b_1,b_2 \in B(S_i)$ we have that $b_1 \leq_{B(S_i)} b_2$ implies $\mathcal{P}_i(b_1) \leq_{A(S_{i+1})} \mathcal{P}_i(b_2).$
	A \WLtext{\powls} is called \emph{strict} if every vertex in $S_i$ lies on an $A(S_i)$-$B(S_i)$-path.
\end{definition}

Just as we can find a large fence inside a large acyclic grid, we can also construct a path of well-linked sets from a path of order-linked sets.

\begin{lemma}[{\cite[Lemma 8.3]{COSSI}}]
	\label{proposition:order-linked to path of well-linked sets}
	\label[prop-abbr]{prop-abbr:order-linked to path of well-linked sets}
	Let $\bound{proposition:order-linked to path of well-linked sets}{w}{w, \ell} \coloneqq w(\ell + 1)$.
	\boundDef{prop-abbr:order-linked to path of well-linked sets}{w}{w, \ell} 
	Every path of $w$-order-linked sets $(\mathcal{S} = \Brace{S_0, S_1, \dots, S_{\ell}}$, $\mathscr{P} = \Brace{\mathcal{P}_0, \mathcal{P}_1, \dots, \mathcal{P}_{\ell - 1}})$ of width at least $\bound{proposition:order-linked to path of well-linked sets}{w}{w, \ell}$ and length at least $\ell$ contains a \powls $\Brace{\mathcal{S}' = \Brace{S'_0, S'_1, \dots, S'_{\ell}}, \mathscr{P}' = \Brace{\mathcal{P}'_0, \mathcal{P}'_1, \dots, \mathcal{P}'_{\ell - 1}}}$ of width $w$ and length $\ell$.
	Further, for every $0 \leq i \leq \ell$ we have $A(S_i') \subseteq A(S_i)$, $B(S_i') \subseteq B(S_i)$, $S_i' \subseteq S_i$ and for every $0 \leq i < \ell$ we have $\mathcal{P}_i' \subseteq \mathcal{P}_i$.
\end{lemma}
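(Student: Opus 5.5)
The plan is to keep every cluster as it is, $S'_i \coloneqq S_i$, and to shrink only the terminal sets and linkages. We choose $A(S'_i) \subseteq A(S_i)$ and $B(S'_i) \subseteq B(S_i)$ of size $w$ and set $\mathcal{P}'_i \subseteq \mathcal{P}_i$ to be the paths starting in $B(S'_i)$. This forces the consistency condition $A(S'_{i+1}) = \mathcal{P}_i(B(S'_i))$. Disjointness and internal disjointness of the $\mathcal{P}'_i$ are then inherited from $(\mathcal{S},\mathscr{P})$, as are the containments in the last sentence of the statement. So the only real issue is well-linkedness of $A(S'_i)$ to $B(S'_i)$ inside $S_i$.

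\textbf{Key observation.} Identify vertices of $A(S_i)$ and $B(S_i)$ with their indices $1,\dots,W$ in the given orders, where $W = w(\ell+1)$. Suppose every index in $A(S'_i)$ is at most every index in $B(S'_i)$. Take any $A'' \subseteq A(S'_i)$ and $B'' \subseteq B(S'_i)$ with $|A''| = |B''|$, and let $\pi$ be the order-preserving bijection $A'' \to B''$. Then:
\begin{itemize}
\item $\pi$ never decreases the index, because every $A$-index is at most every $B$-index;
\item $\pi$ moves at most $|A''| \le w$ elements;
\item $\pi$ is monotone by construction.
\end{itemize}
Hence $B''$ is a $w$-shift of $A''$. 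Since $A(S_i)$ is $w$-order-linked to $B(S_i)$ in $S_i$, there is an $A''$-$B''$-linkage of order $|A''|$ in $S_i$. This is exactly well-linkedness of $A(S'_i)$ to $B(S'_i)$. The task therefore reduces to a purely combinatorial selection: choose index sets with $\max A(S'_i) \le \min B(S'_i)$ for every $i$, consistent with the linkages.

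\textbf{Greedy selection.} Proceed forward along the path.
\begin{enumerate}
\item Let $A(S'_0)$ consist of the $w$ smallest indices of $A(S_0)$, and put $M_0 \coloneqq w$.
\item Given $A(S'_i)$ with maximum index $M_i$, let $T_i \coloneqq \{ b \in B(S_i) : \text{index}(b) \ge M_i \}$.
\item Inside $T_i$, choose $B(S'_i)$ as the $w$ vertices whose images under $\mathcal{P}_i$ have the $w$ smallest indices among $\mathcal{P}_i(T_i) \subseteq A(S_{i+1})$.
\item Set $A(S'_{i+1}) \coloneqq \mathcal{P}_i(B(S'_i))$.
\end{enumerate}
Since $\mathcal{P}_i(T_i)$ has $|T_i| = W - M_i + 1$ elements, the $w$ smallest of them have indices at most $W - |T_i| + w$. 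This gives the recursion $M_{i+1} \le M_i + w - 1$. For the last cluster, we take $B(S'_\ell)$ to be any $w$ elements of $T_\ell$.

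\textbf{Main obstacle: the counting.} We need $|T_i| \ge w$ for all $i \le \ell$, using only $W = w(\ell+1)$. The crude recursion gives $M_i \le (i+1)w - i$. It is not tight enough to guarantee $|T_\ell| \ge w$ when $w$ is much larger than $\ell$. I would try the following refinements in order:
\begin{itemize}
\item Use that the shift definition allows equal indices, so $T_i$ may include index $M_i$.
\item Choose $A(S'_0)$, and at each step the images, to be as small as possible more cleverly, for instance by reserving a block of $w$ indices per cluster.
\item Failing that, replace the condition ``$\max A' \le \min B'$'' by a weaker sufficient dominance condition that still makes every order-preserving sub-bijection a $w$-shift.
\end{itemize}
Tuning this bookkeeping so that exactly $w(\ell+1)$ suffices is where I expect the real work to lie. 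The linkage and well-linkedness arguments above are routine.
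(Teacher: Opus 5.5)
\textbf{The gap is in your final counting step, and it cannot be closed: the statement as given is false for $w \ge \ell+2$.}

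Your reduction itself is correct. With $S'_i=S_i$, the index condition $\max A(S'_i)\le\min B(S'_i)$ makes every order-preserving bijection between equal-size subsets a $w$-shift. So $w$-order-linkedness gives well-linkedness. Your greedy recursion $M_{i+1}\le M_i+w-1$ is also correct, and it produces a valid selection whenever $W\ge(\ell+2)w-(\ell+1)$. In particular it proves the lemma when $w\le\ell+1$.

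For $w\ge\ell+2$, a counterexample goes as follows.
\begin{itemize}
\item Let each cluster $S_i$ be the digraph on $A(S_i)=(a^i_1,\dots,a^i_W)$ and $B(S_i)=(b^i_1,\dots,b^i_W)$ with arcs $a^i_pb^i_q$ for all $p\le q$.
\item Let $\mathcal{P}_i$ consist of the arcs $b^i_pa^{i+1}_p$.
\item This is a uniform path of $r$-order-linked sets for every $r$, since every shift is realised by single arcs.
\end{itemize}
Now take any admissible $(\mathcal{S}',\mathscr{P}')$.
\begin{itemize}
\item Since $S'_i\subseteq S_i$ and the only $a^i_p$--$b^i_q$ paths are the arcs, well-linkedness applied to singletons forces $\max A(S'_i)\le\min B(S'_i)$ on indices.
\item Since $\mathcal{P}'_i\subseteq\mathcal{P}_i$, the set $A(S'_{i+1})$ carries exactly the indices of $B(S'_i)$.
\item Hence $\max A(S'_{i+1})=\max B(S'_i)\ge\min B(S'_i)+w-1\ge\max A(S'_i)+w-1$. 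This holds for every selection, not only your greedy one.
\item Starting from $\max A(S'_0)\ge w$, this gives $\max B(S'_\ell)\ge(\ell+2)w-(\ell+1)$, which exceeds $W=w(\ell+1)$ exactly when $w\ge\ell+2$.
\end{itemize}
The smallest instance is $\ell=0$, $w=2$. The arcs $a_1b_1,a_1b_2,a_2b_2$ give a $2$-order-linked pair that is not well-linked, because $a_2$ cannot reach $b_1$.

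None of your three refinements can rescue the bound $w(\ell+1)$:
\begin{itemize}
\item (i) is already built in, since $T_i$ includes index $M_i$.
\item (ii) reserving blocks only wastes indices.
\item (iii) is impossible, because in the example above the dominance condition is necessary for well-linkedness.
\end{itemize}

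The paper gives no proof of this lemma; it imports it from \cite{COSSI}, so there is no argument here to compare with. What your argument does establish is the statement with width $(\ell+2)w-(\ell+1)$ (for instance $w(\ell+2)$), or with $w(\ell+1)$ when $w\le\ell+1$. The corrected bound is less than twice the stated one.
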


One can find a \powls in any large enough ordered web.

\begin{corollary}[{\cite[Corollary 9.3]{COSSI}}]
	\label{lemma:ordered-web-to-path-of-well-linked-sets-with-side-linkage}
	There exist two functions
	\boundDef{lemma:ordered-web-to-path-of-well-linked-sets-with-side-linkage}{h}{w, \ell}
    \(\bound{lemma:ordered-web-to-path-of-well-linked-sets-with-side-linkage}{h}{w, \ell} \in \Oh(w^{2} \ell^{2})\), and
	\boundDef{lemma:ordered-web-to-path-of-well-linked-sets-with-side-linkage}{v}{w, \ell}
	\(\bound{lemma:ordered-web-to-path-of-well-linked-sets-with-side-linkage}{v}{w, \ell} \in \PowerTower{1}{\Polynomial{25}{w, \ell}}\)
    such that for every ordered $(h,v)$-web $\Brace{\mathcal{H},\mathcal{V}}$ with $h \geq \bound{lemma:ordered-web-to-path-of-well-linked-sets-with-side-linkage}{h}{w, \ell}$ and $v \geq \bound{lemma:ordered-web-to-path-of-well-linked-sets-with-side-linkage}{v}{w, \ell}$ there is a path of well-linked sets $\Brace{\mathcal{S} = \Brace{ S_0, S_1, \dots, S_{\ell} }, \mathscr{P}}$ of width $w$ and length $\ell$ in $\ToDigraph{\mathcal{H} \cup \mathcal{V}}$ such that $B(S_{\ell}) \subseteq \End{\mathcal{H}}$.
	Additionally, there is a linkage $\mathcal{X} \subseteq \mathcal{V}$ of order $\ell + 1$ and a bijection $\pi : \mathcal{S} \rightarrow \mathcal{X}$ such that $A(S_i) \subseteq \V{\pi(S_i)}$ and $\V{\pi(S_i)} \cap \V{(\mathcal{S}, \mathscr{P})} \subseteq \V{S_i}$ for each $0 \leq i \leq \ell$.
\end{corollary}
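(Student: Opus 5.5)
The statement immediately above is the last one given: \cref{lemma:ordered-web-to-path-of-well-linked-sets-with-side-linkage}. It is a corollary of the machinery of \cite{COSSI}. So the plan is to build a path of order-linked sets inside the ordered web and then apply \cref{proposition:order-linked to path of well-linked sets}, keeping track of where the clusters sit relative to the vertical paths.

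\textbf{Step 1: find an acyclic grid in the web.} Write \(\mathcal{V} = (V_1, \dots, V_v)\) in the ordering from \cref{def:ordered_web}. Each \(H \in \mathcal{H}\) decomposes as \(H_1 \cdot \ldots \cdot H_v\), where \(H_i\) meets only \(V_i\). Every \(H\) therefore crosses the verticals in the fixed order \(V_1, \dots, V_v\). This is what makes the web "acyclic-grid-like" in the horizontal direction.

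The vertical paths, however, may meet the horizontal paths in arbitrary orders. To control this, record for each \(V_j\) the order in which it first meets the paths of \(\mathcal{H}\). With \(v\) exponential in a polynomial of \(w,\ell\), apply \cref{thm:erdos_szekeres} (or a pigeonhole over the at most \(h!\) permutations, which is where the single exponential comes from). This yields a subfamily of \(\ell+1\) groups of consecutive-ish verticals and \(\Oh(w\ell)\) horizontals with a consistent, monotone crossing pattern.

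\textbf{Step 2: build the path of order-linked sets.} Split the selected verticals into \(\ell+1\) blocks. In block \(i\), take the cluster \(S_i\) to be the union of the horizontal segments \(H_j\) within the block together with the vertical paths of the block. Take \(A(S_i)\) on one designated vertical \(\pi(S_i) \in \mathcal{X}\) and \(B(S_i)\) on the horizontals at the block's right end. The linkage \(\mathcal{P}_i\) consists of the horizontal segments between consecutive blocks.

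With \(h \in \Oh(w^2\ell^2)\) horizontals, the monotone crossings inside a block give \(w(\ell+1)\)-order-linkedness of \(A(S_i)\) to \(B(S_i)\). Routing a shift amounts to moving along the vertical and then along the horizontals, as in an acyclic grid. For the last cluster, take \(B(S_\ell)\) to be the endpoints of the horizontals, so that \(B(S_\ell) \subseteq \End{\mathcal{H}}\). Then apply \cref{proposition:order-linked to path of well-linked sets}. This keeps \(A(S'_i) \subseteq A(S_i) \subseteq \V{\pi(S_i)}\) and \(S'_i \subseteq S_i\), which preserves the side-linkage properties.

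\textbf{Step 3: the side-linkage condition.} The condition \(\V{\pi(S_i)} \cap \V{(\mathcal{S},\mathscr{P})} \subseteq \V{S_i}\) requires that the designated vertical of block \(i\) not be touched by the other clusters or the other linkages. To ensure this, include \(\pi(S_i)\) entirely inside \(S_i\), together with the horizontal pieces it touches. The ordered-web property then guarantees that horizontal pieces belonging to other blocks avoid \(\pi(S_i)\), since \(H_j\) meets only \(V_j\).

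\textbf{Main obstacle.} The hardest step is Step 1 together with the order-linkedness verification. Vertical paths can meet a horizontal path many times and in tangled patterns. Extracting, from the ordered web, verticals whose intersection pattern with \(\Oh(w\ell)\) horizontals behaves like an acyclic grid is where the degree-25 polynomial in the exponent arises. The first thing to try is an iterated Erdős–Szekeres argument on first-intersection orders, combined with minimality-style rerouting along the horizontals.
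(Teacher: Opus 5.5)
\textbf{There is a genuine gap in Steps~1--2.} You claim that a block of verticals with a common, monotone first-intersection pattern is order-linked ``as in an acyclic grid''. This is not justified, and the mechanism you propose does not give it.

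Equalising first-intersection orders, whether by \cref{thm:erdos_szekeres} or by pigeonholing over permutations, controls nothing about later intersections. Between its first hit of $H_a$ and its first hit of $H_b$, a vertical may return to any row it has already met. So ``go down the vertical from row $i$ to row $j$'' can be forced through rows used by other paths of the linkage you are building. For example, if every vertical revisits row~$1$ between consecutive first hits, the naive routing of any shift of a row $i>1$ runs into row~$1$, which may carry another path.

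Your plan neither selects the horizontals so as to avoid this nor shows that another routing exists. The ``minimality-style rerouting'' is also unavailable, since the statement assumes nothing about $\mathcal{H}$ being minimal with respect to $\mathcal{V}$. There is a parameter slip as well: \cref{proposition:order-linked to path of well-linked sets} needs $w$-order-linked clusters of \emph{width} $w(\ell+1)$, not $w(\ell+1)$-order-linkedness. The argument that works obtains only $1$-order-linkedness per block and boosts it by merging.

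\textbf{The missing idea is the temporal-digraph framework of \cref{sec:temporal_digraphs}.} The paper only cites this corollary from \cite{COSSI}, but the same construction is carried out in Case~2 of the proof of \cref{lemma:1-horizontal-web-double-side-linkage}. It runs as follows.
\begin{enumerate}
\item Split the ordered verticals into consecutive groups. For each group, take the routing temporal digraph of $\mathcal{H}$ through its verticals (\cref{def:routing-temporal-digraph}), one layer per vertical. Each layer is unilateral because that vertical meets every horizontal.
\item Apply \cref{theorem:one-way connected temporal digraph contains P_k routing} with $k = w(\ell+1)$ to get, in every group, a routing of a $k$-vertex path over some $k$ horizontals. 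This is where $h \geq k^2/2$, hence $h \in \Oh(w^2\ell^2)$, and the single exponential in $v$ come from.
\item Pigeonhole over the at most $\binom{h}{k}k!$ routings to obtain $w\ell+1$ groups with the same routing.
\item By \cref{lemma:P_k-routing-implies-1-order-linked}, each group (its verticals plus the relevant horizontal segments) is a $1$-order-linked cluster. The horizontal segments between groups make the path uniform.
\item \Cref{lemma:merging path of order-linked sets} merges $w$ consecutive clusters into $w$-order-linked ones, and \cref{proposition:order-linked to path of well-linked sets} finishes.
\end{enumerate}

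Your Step~3 is fine and matches this construction. Put $A$ on the first vertical of each group and let these verticals form $\mathcal{X}$; the ordered-web property keeps them away from the other clusters and linkages. Append the tails of the horizontals to the last cluster to get $B(S_\ell) \subseteq \End{\mathcal{H}}$.
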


While 1-order-linked sets offer very little connectivity,
one can increase the order of linkedness of a \pools{} by merging several clusters. 
\begin{lemma}[{\cite[Lemma 7.7]{COSSI}}]
	\label{lemma:increase-order-linkedness}
	Let $r,c,w$ be integers.
	Let $D = \Brace{\mathcal{S} = \Brace{S_0, S_1, \dots, S_{\ell}}, \mathscr{P}}$ be a uniform path of $r$-order-linked sets of width $w$ and length at least $c-1$. 
	Then $A(S_0)$ is $cr$-order-linked to $B(S_\ell)$ in $D$.
\end{lemma}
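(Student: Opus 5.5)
We prove the statement by induction on $c$, gluing together consecutive clusters. Write $S_{[0,j]}$ for the subgraph of $D$ formed by $S_0,\dots,S_j$ and the linkages $\mathcal{P}_0,\dots,\mathcal{P}_{j-1}$. The claim to establish is: for every $1 \leq k \leq c$, $A(S_0)$ is $kr$-order-linked to $B(S_{k-1})$ in $S_{[0,k-1]}$. Since $B(S_{k-1})$ is linked in order-preserving fashion onward to $B(S_\ell)$, this gives the lemma. Indeed, from $B(S_{c-1})$ we continue through the remaining clusters using $\mathcal{P}_i$ and the trivial $0$-shift (identity) linkages inside each $S_i$. Those identity linkages exist because $r$-order-linkedness includes the case where $\pi$ maps $a_i$ to $b_i$. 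Uniformity guarantees that each $\mathcal{P}_i$ maps the $j$-th element of $B(S_i)$ to the $j$-th element of $A(S_{i+1})$: it is an order-preserving bijection between ordered sets of equal size $w$, hence the index-identity. So the composite is index-preserving, and $r$-shifts are preserved.

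The base case $k=1$ is the hypothesis on $S_0$. For the inductive step, fix $A' \subseteq A(S_0)$ and $B' \subseteq B(S_k)$ such that $B'$ is a $(k+1)r$-shift of $A'$ via $\pi$. We split $\pi$ into two shifts. Let $M \subseteq A'$ be the moved elements, so $|M| \leq (k+1)r$. List $M$ in order and split it into its first $kr$ moved elements $M_1$ and the remaining at most $r$ elements $M_2$; the split can equally be made from the other end, whichever makes the step below work.

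Define an intermediate ordered set $C' \subseteq B(S_{k-1})$, identified with indices via uniformity, as follows:
\begin{itemize}
  \item unmoved elements and elements of $M_2$ keep their index;
  \item elements $a_i \in M_1$ go to index $\pi$-target$(a_i)$.
\end{itemize}
We must check two things. First, $C'$ must be a $kr$-shift of $A'$: injectivity, order preservation, and non-decreasing indices. Second, the remaining map $C' \to B'$ (after transporting across $\mathcal{P}_{k-1}$) must be an $r$-shift inside $S_k$. We then apply the induction hypothesis for the first part, traverse $\mathcal{P}_{k-1}$, and apply the $r$-order-linkedness of $S_k$ for the second. Concatenating the three gives the required linkage. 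The pieces are vertex-disjoint because the clusters are pairwise disjoint and the $\mathcal{P}_i$ are internally disjoint from them.

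The main obstacle is choosing the split so that both intermediate maps are genuine monotone, injective, non-decreasing shifts. Naively moving the first $kr$ elements early can collide with an unmoved later element sitting at the target index. To avoid this, choose the split along the order: take $M_2$ to be the last (largest-index) moved elements. A moved element jumps forward only past positions that are eventually vacated, and since $\pi$ is order-preserving, every element lying between $a_i$ and $\pi(a_i)$ must itself be moved, and moved further right. Hence moving a suffix of $M$ later, and everything before it first, keeps all intermediate positions free. If this monotonicity bookkeeping fails in some configuration, the fallback is to order the shift steps from right to left instead.
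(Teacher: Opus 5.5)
Your overall scheme is sound. You induct on the number of clusters, realise a $(k+1)r$-shift as a $kr$-shift inside $S_{[0,k-1]}$, transport across $\mathcal{P}_{k-1}$ (which uniformity forces to be index-preserving), and finish with an $r$-shift inside $S_k$. The extension from $B(S_{c-1})$ to $B(S_\ell)$ by $0$-shifts and the disjointness of the concatenation are also fine. The gap is in the one step that carries content, the choice of split: the option you commit to is the wrong one, and your justification for it runs backwards.

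Write $A'=\{a_{i_1},\dots,a_{i_n}\}$ and $B'=\{b_{j_1},\dots,b_{j_n}\}$ with $i_1<\dots<i_n$, $j_1<\dots<j_n$ and $i_t\le j_t$. Let $c_t$ be the intermediate index, so $c_t=j_t$ if the $t$-th element is in $M_1$ and $c_t=i_t$ otherwise. If $M_1$ is the first $kr$ moved elements and $M_2$ the last ones, the $c_t$ need not increase. For $r=k=1$, $A'=\{a_1,a_2\}$ and $B'=\{b_3,b_4\}$ you get $M_1=\{a_1\}$, $M_2=\{a_2\}$ and intermediate indices $3,2$. So $A'\to C'$ is not order-preserving, hence not a shift, and the induction hypothesis cannot be applied.

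The obstruction is never an unmoved element: if $t'>t$ is unmoved then $i_{t'}=j_{t'}>j_t$. It is a later moved element that has been postponed to $S_k$. The fact you cite (every element between $a_{i_t}$ and its target is itself moved, and further right) says precisely that those elements must be shifted \emph{before} $a_{i_t}$. Positions that are only eventually vacated are still occupied when $a_{i_t}$ jumps.

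The correct choice is your fallback, and it should be the primary one, with a proof. Let $M_1$ consist of the \emph{last} $\min(kr,|M|)$ moved elements, and $M_2$ of the remaining, earlier moved ones (at most $r$ of them). Then $c_t<c_{t+1}$ for every $t$:
\begin{itemize}
\item If both or neither of the $t$-th and $(t+1)$-th elements lie in $M_1$, this is $j_t<j_{t+1}$ or $i_t<i_{t+1}$.
\item If only the $(t+1)$-th does, then $c_t=i_t<i_{t+1}\le j_{t+1}=c_{t+1}$.
\item If only the $t$-th does, the $(t+1)$-th element is unmoved, since $M_1$ contains every moved element after the $t$-th. So $c_t=j_t<j_{t+1}=i_{t+1}=c_{t+1}$.
\end{itemize}
Combined with $i_t\le c_t\le j_t$, this shows that $C'$ is a $kr$-shift of $A'$, since only elements of $M_1$ change index. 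It also shows that $B'$ is an $r$-shift of the image of $C'$ in $A(S_k)$, since only elements of $M_2$ change index. After that, the rest of your argument goes through unchanged.
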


\begin{theorem}[{\cite[Theorem 7.8]{COSSI}}]
	\label{lemma:merging path of order-linked sets}
	Every uniform path of $r$-order-linked sets \(D = (\mathcal{S} = (S_0, S_1, \dots,\allowbreak S_{\ell}), \mathscr{P} = (\mathcal{P}_0, \mathcal{P}_1, \dots, \mathcal{P}_{\ell - 1}))\) of length at least $c\ell$ and width $w$ contains a uniform path of $cr$-order-linked sets \(\Brace{\mathcal{S}' = \Brace{ S'_0, S'_1, \dots, S'_{\ell}}, \mathscr{P}' = \Brace{ \mathcal{P}'_0, \mathcal{P}'_1, \dots, \mathcal{P}'_{\ell - 1}}}\) of length $\ell$ and width $w$. Additionally, for every $0 \leq i \leq \ell$ we have $S_i' \subseteq \SubPOSS{D}{ci}{c(i+1)-1}$, $A(S_i') \subseteq A(S_{ci})$ and $B(S_i') \subseteq B(S_{c(i + 1) - 1})$, and for $0 \leq i < \ell$ we have $\mathcal{P}'_i \subseteq \mathcal{P}_{(c-1)(i+1)}$.
\end{theorem}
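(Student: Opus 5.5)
The plan is to split the merged path into consecutive blocks and assemble the new clusters from them. Write \(\ell = c\ell'\) for the target length \(\ell'\), and partition the index range of \(D\) into the consecutive blocks \(I_i = \{ci, ci+1, \dots, c(i+1)-1\}\) for \(0 \leq i \leq \ell'\), discarding any leftover clusters at the end. For each block, let \(D_i = \SubPOSS{D}{ci}{c(i+1)-1}\) be the sub-path of order-linked sets consisting of \(S_{ci}, \dots, S_{c(i+1)-1}\) together with the linkages \(\mathcal{P}_{ci}, \dots, \mathcal{P}_{c(i+1)-2}\) between them. Each \(D_i\) is again a uniform path of \(r\)-order-linked sets of width \(w\) and length exactly \(c-1\). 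Applying \cref{lemma:increase-order-linkedness} to \(D_i\) shows that \(A(S_{ci})\) is \(cr\)-order-linked to \(B(S_{c(i+1)-1})\) inside \(D_i\).

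Next I define the new structure:
\begin{itemize}
  \item the cluster \(S'_i \coloneqq D_i\);
  \item the ordered sets \(A(S'_i) \coloneqq A(S_{ci})\) and \(B(S'_i) \coloneqq B(S_{c(i+1)-1})\), with their inherited orders;
  \item the connecting linkage \(\mathcal{P}'_i \coloneqq \mathcal{P}_{c(i+1)-1}\), which runs from \(B(S_{c(i+1)-1}) = B(S'_i)\) to \(A(S_{c(i+1)}) = A(S'_{i+1})\).
\end{itemize}
Then I verify the axioms. The clusters \(S'_i\) are pairwise disjoint because the blocks are disjoint and each \(\mathcal{P}_j\) is disjoint from every cluster other than \(S_j, S_{j+1}\). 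The sets \(A(S'_i)\) and \(B(S'_i)\) are disjoint when \(c \geq 2\), since they lie in different clusters or are already disjoint within one cluster. The linkage \(\mathcal{P}'_i\) is internally disjoint from \(S'_i\) and \(S'_{i+1}\): it touches them only at its endpoints, and it is disjoint from the other linkages of \(\mathscr{P}\). Uniformity of the new path is inherited directly from uniformity of \(\mathcal{P}_{c(i+1)-1}\) with respect to the inherited orders.

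The containment claims \(S'_i \subseteq \SubPOSS{D}{ci}{c(i+1)-1}\), \(A(S'_i) \subseteq A(S_{ci})\) and \(B(S'_i) \subseteq B(S_{c(i+1)-1})\) then hold by construction.

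The main obstacle is the indexing of the new linkages. The statement asserts \(\mathcal{P}'_i \subseteq \mathcal{P}_{(c-1)(i+1)}\), whereas the natural choice above is \(\mathcal{P}_{c(i+1)-1}\). I would try to reconcile the two by checking whether the index convention of \(\SubPOSS{}{}{}\), or the intended blocks, differ by an offset. If not, I would choose the blocks slightly differently, for example with spacing \(c-1\) plus a linking cluster, so that the stated index is met. If the stated index simply turns out to be a typo, I would prove the version with \(c(i+1)-1\).

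The second point to check is that \cref{lemma:increase-order-linkedness} requires length at least \(c-1\), which each block \(D_i\) has exactly.
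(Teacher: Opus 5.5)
The present paper gives no proof of this statement; it imports it from \cite{COSSI}. Your construction is the natural argument:
\begin{itemize}
\item cut $D$ into blocks of $c$ consecutive clusters and take each block $\SubPOSS{D}{ci}{c(i+1)-1}$ as a new cluster, with $A(S_{ci})$, $B(S_{c(i+1)-1})$ and the inherited orders;
\item connect consecutive blocks by the next original linkage;
\item get $cr$-order-linkedness from \cref{lemma:increase-order-linkedness}.
\end{itemize}
Your checks of disjointness, internal disjointness and uniformity are fine. Two points remain, and the first is a step that fails as written.

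\textbf{The blocks do not fit into $D$.} Your blocks $I_0,\dots,I_{\ell'}$ use the clusters $S_0,\dots,S_{c(\ell'+1)-1}$. They therefore need length at least $c(\ell'+1)-1=c\ell'+c-1$, but the hypothesis only guarantees length $c\ell'$, i.e.\ $c\ell'+1$ clusters. For $c\geq 2$ and length exactly $c\ell'$, the last block is just $S_{c\ell'}$. So $D_{\ell'}$ has length $0$, not $c-1$ as you assert, and \cref{lemma:increase-order-linkedness} only makes $S'_{\ell'}$ $r$-order-linked. ``Discarding leftover clusters'' handles a surplus, but here there is a deficit. No regrouping within this method helps: $\ell'+1$ disjoint groups of consecutive clusters drawn from $c\ell'+1$ clusters must include one with fewer than $c$ clusters.

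This is really a defect of the statement. Its own clause $B(S'_{\ell'})\subseteq B(S_{c(\ell'+1)-1})$ already presupposes a cluster with index $c(\ell'+1)-1$. You should say explicitly that you prove the theorem under the hypothesis that $D$ has length at least $c(\ell'+1)-1$; with that correction your argument goes through verbatim. The applications in \cref{lemma:separate-linkage-from-well-linked-poss} and \cref{lemma:1-horizontal-web-double-side-linkage} feed in paths of length exactly $c\ell'$, so their inputs must be lengthened by $c-1$. This does not affect the asymptotic bounds.

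\textbf{The index of $\mathcal{P}'_i$ is forced, so you need not hedge.}
\begin{itemize}
\item Every vertex of $\mathcal{P}_j$ that lies in a cluster belongs to $B(S_j)\cup A(S_{j+1})$.
\item $\Start{\mathcal{P}'_i}=B(S'_i)\subseteq B(S_{c(i+1)-1})$, and this set is non-empty.
\item Clusters are disjoint and $A(S_{j+1})\cap B(S_{j+1})=\emptyset$, so any containment $\mathcal{P}'_i\subseteq\mathcal{P}_j$ forces $j=c(i+1)-1$.
\end{itemize}
Since $(c-1)(i+1)=c(i+1)-1$ only when $i=0$, the stated index contradicts the stated $B$-containment for every $i\geq 1$. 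No re-blocking can satisfy it, including your spacing-$(c-1)$ fallback; blocks of $c-1$ clusters would in any case only give $(c-1)r$ via \cref{lemma:increase-order-linkedness}. The stated index is a typo for $c(i+1)-1$, and you should commit to $\mathcal{P}'_i=\mathcal{P}_{c(i+1)-1}$.
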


One of the main results of \cite{COSSI} is that every digraph of large enough directed \treewidth does contain a \powls that has the additional property of the $B$-set in its last cluster being well-linked to the $A$-set in the first cluster.

\thmHighDTWToPOWL*

We often need to find linkages inside paths of well-linked or order-linked sets.
We collect in the two lemmas and the observation below all the different cases we need,
proving that the required linkages exist.

\begin{lemma}[{\cite[Lemma 8.8]{COSSI}}]
	\label{lem:linkage_inside_poss}
	Let \(\Brace{\mathcal{S} = \Brace{S_0, S_1, \dots, S_{\ell}}, \mathscr{P} = \Brace{\mathcal{P}_0, \mathcal{P}_1, \dots, \mathcal{P}_{\ell - 1}}}\) be a path of well-linked sets of width  $w$ and length $\ell$.
	Let $X,Y \subseteq \V{(\mathcal{S}, \mathscr{P})}$ such that $\Abs{X} = \Abs{Y} = k$.
	Let $f : X \cup Y \rightarrow \N$ be a function such that $v \in S_{\Fkt{f}{v}} \cup \mathcal{P}_{\Fkt{f}{v}}$ for all $v \in X \cup Y$.
	There is an $X$-$Y$-linkage $\mathcal{L}$ in $(\mathcal{S}, \mathscr{P})$ if $\Fkt{f}{x} \leq \Fkt{f}{y} - 2$ for all $x \in X$ and all $y \in Y$ and at least one of the following is true:
	\renewcommand{\labelenumi}{\textbf{\theenumi}} 
	\renewcommand{\theenumi}{(L\arabic{enumi})}
	\begin{enumerate}[labelindent=0pt,labelwidth=\widthof{\ref{last-item-linkage-inside-POSS}},itemindent=1em]
		\item there are $0 \leq i < j \leq \ell$ such that $X \subseteq B(S_{i})$ and $Y \subseteq A(S_{j})$,
		\label{case:linkage_inside_poss_B_A}
		\item $\Abs{\Fkt{f}{x_1} - \Fkt{f}{x_2}} \geq 2$ for all $x_1, x_2 \in X$ with $x_1 \neq x_2$ and there is some $0 \leq i \leq \ell$ such that $Y \subseteq A(S_{i})$,
		      \label{case:linkage_inside_poss_scattered_A}
		\item $\Abs{\Fkt{f}{y_1} - \Fkt{f}{y_2}} \geq 2$ for all $y_1, y_2 \in Y$ with $y_1 \neq y_2$ and there is some $0 \leq i \leq \ell$ such that $X \subseteq B(S_{i})$, or
		      \label{case:linkage_inside_poss_B_scattered}
		\item $\Abs{\Fkt{f}{x_1} - \Fkt{f}{x_2}} \geq 2$ for all $x_1, x_2 \in X$ with $x_1 \neq x_2$ and $\Abs{\Fkt{f}{y_1} - \Fkt{f}{y_2}} \geq 2$ for all $y_1, y_2 \in Y$ with $y_1 \neq y_2$.
		      \label{case:linkage_inside_poss_scattered_scattered}
		      \label{last-item-linkage-inside-POSS}
	\end{enumerate}
	Furthermore, choose $i$ minimal with $S_i$ containing a vertex from $X$ and $j$ maximal with $S_j$ containing a vertex from $Y$.
    Then, $\mathcal{L}$ is contained inside $\SubPOSS{\Brace{\mathcal{S}, \mathscr{P}}}{i}{j}$.
\end{lemma}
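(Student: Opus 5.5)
We plan to prove the statement via Menger's Theorem (\cref{thm:menger}). Suppose, for a contradiction, that there is an $X$-$Y$-separator $Z$ in $\SubPOSS{\Brace{\mathcal{S},\mathscr{P}}}{i}{j}$ with $\Abs{Z} < k$. We want to exhibit some $x \in X \setminus Z$ and some $y \in Y \setminus Z$ joined by a path avoiding $Z$. Write $Z_t \coloneqq Z \cap \V{S_t}$ and $Z'_t \coloneqq Z \cap \V{\mathcal{P}_t}$.

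The basic local tool is a \emph{robustness} consequence of well-linkedness. Fix a cluster $S_t$ and let $U \subseteq A(S_t) \setminus Z$ be the set of vertices that reach no vertex of $B(S_t) \setminus Z$ in $S_t - Z_t$. Since $A(S_t)$ is well-linked to $B(S_t)$, there is a $U$-$(B(S_t)\setminus Z)$-linkage of order $\min(\Abs{U},\Abs{B(S_t)\setminus Z})$, and all of these paths must be hit by $Z_t$. Hence
\[
\min\Brace{\Abs{U}, \Abs{B(S_t) \setminus Z}} \leq \Abs{Z_t}.
\]
As long as $\Abs{Z} < k \leq w$, the minimum cannot be attained by $B(S_t)\setminus Z$. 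So all but at most $\Abs{Z_t}$ surviving vertices of $A(S_t)$ reach $B(S_t)\setminus Z$. The same argument, applied in the reverse direction, shows that all but at most $\Abs{Z_t}$ surviving vertices of $B(S_t)$ are reached from $A(S_t)\setminus Z$. Moreover, the paths of $\mathcal{P}_t$ hit by $Z'_t$ number at most $\Abs{Z'_t}$.

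Next we propagate a \emph{reachable set} forward, cluster by cluster. Let $R_t$ be the set of vertices of $A(S_t)$ reachable from $X \setminus Z$ in the graph minus $Z$. In case \ref{case:linkage_inside_poss_B_A}, $X \subseteq B(S_i)$, so $X\setminus Z$ has at least $k - \Abs{Z}$ vertices. These lose at most $\Abs{Z'_i}$ paths of $\mathcal{P}_i$ and then, by the robustness inequality, at most $\Abs{Z_{t}}+\Abs{Z'_t}$ further vertices at each later index $t$. Summing over $t$, at least $k - \Abs{Z} \geq 1$ vertices survive into $A(S_j)$. Because $A(S_j)$ is well-linked to $B(S_j)$, which contains $\ldots$ no, more precisely: applying the same count backwards from $Y\setminus Z$ (using that $Y \subseteq A(S_j)$ has $k-\Abs{Z}$ surviving vertices) and comparing the two surviving families inside a single cluster $S_t$ with $i < t < j$ via well-linkedness gives the required path. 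The hypothesis $\Fkt{f}{x} \leq \Fkt{f}{y}-2$ guarantees that such an intermediate cluster exists strictly between the start and end regions.

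In the scattered cases \ref{case:linkage_inside_poss_scattered_A}--\ref{case:linkage_inside_poss_scattered_scattered}, the condition $\Abs{\Fkt{f}{x_1}-\Fkt{f}{x_2}} \geq 2$ means the vertices of $X$ occupy pairwise distinct, non-adjacent indices. Each $x$ must be killed separately: either $x \in Z$, or $Z$ cuts $x$ off from $A(S_{\Fkt{f}{x}+1})$ (through its own cluster or its path in $\mathcal{P}_{\Fkt{f}{x}}$), or $x$'s contribution is lost later in the robustness count. We charge each of these events to distinct vertices of $Z$ lying in $S_{\Fkt{f}{x}} \cup \mathcal{P}_{\Fkt{f}{x}}$; the spacing makes these regions disjoint for different $x$. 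The same charging applies symmetrically to $Y$. This yields $\Abs{Z} \geq k$, a contradiction. The final claim, that the linkage lies inside $\SubPOSS{\Brace{\mathcal{S},\mathscr{P}}}{i}{j}$, is automatic, since we run Menger's Theorem in that subgraph.

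I expect the main obstacle to be the scattered start and end vertices. A vertex $x \in \V{S_t}$ need not lie on an $A(S_t)$-$B(S_t)$-path unless the path is strict, so the charging must use the fact that $x$ can always escape forward to $B(S_t)$ or along its $\mathcal{P}_t$-path. I would first try to prove this escape property from strictness of the clusters, or by replacing $x$ with the next vertex on its path in $\mathcal{P}_{\Fkt{f}{x}}$. The second delicate point is keeping the per-cluster robustness losses additive in $\Abs{Z}$ without double counting, so that the total loss stays below $k$.
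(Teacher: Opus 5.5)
\textbf{There is a genuine gap in the scattered cases (L2)--(L4), which carry the real content of the lemma.} Your reduction to Menger's Theorem is fine, and so is your remark that strictness is tacitly needed.

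\textbf{The charging argument fails.} It only uses that the regions $S_{f(x)}\cup\mathcal{P}_{f(x)}$ are disjoint for distinct $x$. That already holds when the values $f(x)$ are merely distinct, and in that setting the conclusion is false. Here is an example with $w=k=2$.
\begin{itemize}
\item Let $S_t$ consist of $A(S_t)=\{a_1,a_2\}$, $B(S_t)=\{b_1,b_2\}$ and two more vertices $x_2,z$, with arcs $a_1\to x_2\to z$, $z\to b_1$, $z\to b_2$, $a_2\to b_1$, $a_2\to b_2$. This is a strict well-linked cluster.
\item Let $S_{t-1}$ be complete bipartite from $A(S_{t-1})$ to $B(S_{t-1})$, and let $\mathcal{P}_{t-1}$ consist of single arcs.
\item Let $x_1\in B(S_{t-1})$ be the tail of the arc ending in $a_1$.
\end{itemize}
Every path leaving $x_1$ runs through $x_2$ and $z$. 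So $Z=\{z\}$ separates $\{x_1,x_2\}$ from any $Y$ further right, although $f(x_1)=t-1$ and $f(x_2)=t$. Your scheme would need two distinct vertices of $Z$ here. In fact one vertex of $Z_t$ kills both $x_2$ and the unit of flow arriving from $x_1$, and that loss does not lie in $S_{f(x_1)}\cup\mathcal{P}_{f(x_1)}$ at all. The spacing $\ge 2$ therefore has to be used in an essential way, and the proposal never does this.

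\textbf{The local tool is also misstated.} First, $\Abs{Z}<k\le w$ does not rule out $\Abs{B(S_t)\setminus Z}\le\Abs{Z_t}$; this is possible once $\Abs{Z_t}\ge w/2$. Second, a bound on the number of dead entry vertices is not what propagation needs. You need that $R_t$ reaches at least $\Abs{R_t}-\Abs{Z_t}$ \emph{distinct} vertices of $B(S_t)\setminus Z$. This follows by linking $R_t$ to $\Abs{R_t}$ vertices of $B(S_t)$ inside $S_t$, since each vertex of $Z_t$ destroys at most one path. Case (L1) needs none of this: it is exactly \cref{lem:poss-simple-routing}.

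\textbf{The missing idea is a pigeonhole step that makes the cumulative counting unnecessary.}
\begin{itemize}
\item For $x\in X$, let $U_x$ be the union of $S_{f(x)}$, $\mathcal{P}_{f(x)}$, $S_{f(x)+1}$ and $\mathcal{P}_{f(x)+1}$, the last one without its endpoints in $A(S_{f(x)+2})$.
\item Precisely because $\Abs{f(x)-f(x')}\ge 2$, these $k$ sets are pairwise disjoint. Since $\Abs{Z}<k$, some $U_x$ misses $Z$.
\item Follow an $A(S_{f(x)})$-$B(S_{f(x)})$-path through $x$ (by strictness), or $x$'s own path in $\mathcal{P}_{f(x)}$. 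This reaches some $a\in A(S_{f(x)+1})$.
\item $S_{f(x)+1}$ avoids $Z$, and $\{a\}$ is linked to every single vertex of $B(S_{f(x)+1})$. Hence $x$ reaches every vertex of $A(S_{f(x)+2})\setminus Z$ in $D-Z$.
\item The sets $X\subseteq B(S_i)$ in (L3) and $Y\subseteq A(S_i)$ in (L2) are used directly. For scattered $Y$, the mirror argument over levels $f(y)-1,f(y)$ gives some $y$ reachable from every vertex of $A(S_{f(y)-1})$.
\item A linkage of order $k$ between the two resulting sets, given by \cref{lem:poss-simple-routing}, has at most $\Abs{Z}<k$ paths meeting $Z$. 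A $Z$-free path among them gives the $x$-$y$-path. In the degenerate cases where the relevant indices coincide or are adjacent, the common $Z$-free cluster already connects them.
\end{itemize}
The $Y$-side needs the convention that the last vertex of a path of $\mathcal{P}_t$ belongs to level $t+1$. Read literally, a $y\in A(S_{t+1})$ with $f(y)=t$ can be the only entrance to some $y'$ with $f(y')=t+2$, and the statement fails.

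\textbf{An alternative fix} is to build the linkage level by level. Insert $x\in S_t$ by exchanging it into a full $A(S_t)$-$B(S_t)$-linkage, which costs one entry vertex $a$. Then use the $X$-free cluster $S_{f(x)-1}$ to route the incoming paths away from $a$.
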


\begin{observation}[{\cite[Observation 7.4]{COSSI}}]
	\label{lemma:linkage-inside-pools}
	Let $D = (\mathcal{S} = \Brace{S_0, S_1, \dots, S_{\ell}}, \mathscr{P})$ be a path of $0$-order-linked sets of width $w$. 
	For every $0 \leq i < j \leq \ell$, every $A' \in \Set{A(S_i), B(S_i)}$, and every $B' \in \Set{A(S_j), B(S_j)}$ there is an $A'$-$B'$-linkage $\mathcal{L}$ of order $w$ in $D$. 
	Furthermore, for all $i < k < j$ every path in $\mathcal{L}$ must intersect $A(S_k)$ and $B(S_k)$.
\end{observation}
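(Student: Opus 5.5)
Taking \(i=0\), \(j=\ell\) and \(A'=A(S_0)\), \(B'=B(S_\ell)\) as the representative case, I would induct along the path. The other choices are handled identically by starting or stopping at the \(B\)-set or \(A\)-set instead.

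The construction concatenates linkages alternately inside clusters and along the connecting linkages. The key point is that \(0\)-order-linkedness still routes every element to the element of the same index: a \(0\)-shift of \(A'\) is exactly the set with the same indices via the identity bijection. So for each \(k\), applying the definition with \(A(S_k)\) and the full set \(B(S_k)\) gives an \(A(S_k)\)-\(B(S_k)\)-linkage \(\mathcal{L}_k\) of order \(w\) inside \(S_k\). Then \(\mathcal{P}_k\) is a \(B(S_k)\)-\(A(S_{k+1})\)-linkage of order \(w\), and
\[
\mathcal{L} \coloneqq \mathcal{L}_i \cdot \mathcal{P}_i \cdot \mathcal{L}_{i+1} \cdots \mathcal{P}_{j-1} \cdot \mathcal{L}_j
\]
is well defined, since endpoint sets match exactly (\(\End{\mathcal{L}_k}=B(S_k)=\Start{\mathcal{P}_k}\), and so on). To get a linkage rather than a family of walks, I would check disjointness. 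The clusters are pairwise disjoint. Each \(\mathcal{P}_k\) is internally disjoint from \(S_k\) and \(S_{k+1}\), disjoint from every other cluster, and the \(\mathcal{P}_k\) are pairwise disjoint. Hence each concatenated walk is a path, and distinct paths share no vertex.

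For the second claim, fix \(i<k<j\) and a path \(L\in\mathcal{L}\). The naive argument is that \(L\) passes through \(\mathcal{P}_{k-1}\), which ends in \(A(S_k)\), and then enters \(\mathcal{P}_k\), which starts in \(B(S_k)\). That only shows the particular \(\mathcal{L}\) constructed above meets these sets, whereas the statement says every path in \(\mathcal{L}\) must do so. I read this as "every such linkage", and I expect this step to be the main obstacle.

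To handle arbitrary linkages, I would argue that \(A(S_k)\) and \(B(S_k)\) are separators. The only arcs leaving \(\bigcup_{m<k}(S_m\cup\mathcal{P}_m)\) minus its last vertices are those of \(\mathcal{P}_{k-1}\) entering \(A(S_k)\). The only exits from \(S_k\) toward later clusters go through \(B(S_k)\) into \(\mathcal{P}_k\). This uses the disjointness conditions and the fact that \(D\) is the union of the clusters and linkages, so no other arcs exist. There is one subtlety: \(D\) also contains arcs inside \(\mathcal{P}\)-linkages, but these only lead forward. Consequently, every path from a cluster of index at most \(i\) to one of index at least \(j\) in \(D\) must cross \(A(S_k)\) and then \(B(S_k)\), which gives the furthermore part for any linkage, and in particular for ours.
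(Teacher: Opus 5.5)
Your proof is correct and matches the natural argument behind this observation, which this paper only cites from \cite{COSSI}: $0$-order-linkedness, applied with $A(S_k)$ and all of $B(S_k)$, gives an $A(S_k)$-$B(S_k)$-linkage of order $w$ inside each cluster, and concatenating these with the linkages $\mathcal{P}_k$ yields the required linkage, which is a genuine linkage by the disjointness conditions in the definition. For the second part, your reasoning is exactly the intended one: $D$ is only the union of the clusters and the forward linkages $\mathcal{P}_m$, so any path from $S_i$ to $S_j$ can enter $S_k$ only through $A(S_k)$ and leave it only through $B(S_k)$, and this even proves the stronger claim for every such linkage rather than just the one you constructed.
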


\begin{lemma}[{\cite[Lemma 8.4]{COSSI}}]
	\label{lem:poss-simple-routing}
	Let $(\SSS \coloneqq (S_0, \dots, S_{\ell}), \PPPP \coloneqq (\mathcal{P}_0, \dots, \mathcal{P}_{\ell-1}))$ be a path of well-linked sets of width $w$ and length  $\ell$ and let $0 \leq i < j \leq \ell$.	Then for every $0 \leq i < j \leq \ell$, for every $A' \in \Set{A(S_i), B(S_i)}$ and for every $B' \in \Set{B(S_j), A(S_j)}$ we have that $A'$ is well-linked to $B'$ in $\SubPOSS{\Brace{\mathcal{S}, \mathscr{P}}}{i}{j}$. 
\end{lemma}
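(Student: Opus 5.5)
I would prove \cref{lem:poss-simple-routing} by induction on \(j - i\), repeatedly concatenating linkages, with well-linkedness inside single clusters as the engine. Fix \(A' \in \Set{A(S_i), B(S_i)}\), \(B' \in \Set{A(S_j), B(S_j)}\), and subsets \(X \subseteq A'\), \(Y \subseteq B'\) with \(\Abs{X} = \Abs{Y} = k\). The aim is an \(X\)-\(Y\)-linkage of order \(k\) inside \(\SubPOSS{(\mathcal{S},\mathscr{P})}{i}{j}\).

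The case \(A' = A(S_i)\), \(B' = B(S_j)\) is the basic one. The idea is to build a chain through the clusters.
\begin{enumerate}
\item Start inside \(S_i\): since \(A(S_i)\) is well-linked to \(B(S_i)\), link \(X\) to an arbitrary \(k\)-subset \(Z_i \subseteq B(S_i)\).
\item Follow \(\mathcal{P}_i\) from \(Z_i\) to its endpoints \(\mathcal{P}_i(Z_i) \subseteq A(S_{i+1})\).
\item Inside \(S_{i+1}\), use well-linkedness again to link these endpoints to some \(k\)-subset of \(B(S_{i+1})\).
\item Repeat this through every cluster. In the last cluster \(S_j\), link into exactly \(Y \subseteq B(S_j)\), again by well-linkedness of \(A(S_j)\) to \(B(S_j)\).
\end{enumerate}

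The pieces concatenate into paths: the linkages inside distinct clusters are disjoint, each \(\mathcal{P}_m\) is internally disjoint from \(S_m, S_{m+1}\) and disjoint from all other clusters, and distinct \(\mathcal{P}_m\) are pairwise disjoint. So the union is a genuine linkage of order \(k\), not merely a walk system.

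The other three combinations are truncations of this chain.
\begin{enumerate}
\item If \(A' = B(S_i)\), skip the linkage inside \(S_i\) and set \(Z_i = X\).
\item If \(B' = A(S_j)\), do not enter \(S_j\). Instead, in \(S_{j-1}\) (which exists, since \(i<j\)), choose the subset \(Z_{j-1} \coloneqq \mathcal{P}_{j-1}^{-1}(Y) \subseteq B(S_{j-1})\). This is possible because \(\mathcal{P}_{j-1}\) is a bijection between \(B(S_{j-1})\) and \(A(S_j)\), both of size \(w\). Then end the chain with the \(\mathcal{P}_{j-1}\)-paths.
\end{enumerate}

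One degenerate case needs care: \(A' = B(S_i)\), \(B' = A(S_j)\) with \(j = i+1\). There the chain is just \(\mathcal{P}_i\) restricted to \(X\), and it has to hit \(Y\) exactly, which fails in general for arbitrary \(X, Y\).

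This is the main obstacle. The literal statement with \(j = i+1\) and \(A' = B(S_i)\), \(B' = A(S_{i+1})\) appears false as stated, unless "well-linked" is interpreted using the clusters' internal connectivity. My plan is to check whether the intended reading requires \(j \geq i+2\) in this subcase, or whether it is implicitly excluded. Otherwise, for \(j-i\geq 2\), there is always an intermediate cluster whose well-linkedness absorbs the permutation freedom.

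The remaining bookkeeping is routine: confirm that every intermediate chosen set has size exactly \(k \leq w\), and that every path stays inside \(\SubPOSS{(\mathcal{S},\mathscr{P})}{i}{j}\).
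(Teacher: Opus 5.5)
Your chaining argument is correct and is the natural proof. The paper does not reprove this statement; it imports it from \cite{COSSI}, so there is no alternative route to compare against. The freedom to permute endpoints comes only from traversing some cluster $S_m$ from $A(S_m)$ to $B(S_m)$. Every case with $j \geq i+2$ contains such a traversal, as do the cases $(A(S_i),A(S_{i+1}))$, $(A(S_i),B(S_{i+1}))$ and $(B(S_i),B(S_{i+1}))$. In all of these your concatenation gives a genuine linkage, for exactly the disjointness reasons you list.

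You should drop the hedge on the remaining subcase. It is a genuine counterexample under the paper's definitions, not a matter of interpretation.
\begin{itemize}
\item Take $w=2$ and $\ell=1$. For $m\in\{0,1\}$, let $S_m$ have $A(S_m)=\{a^m_1,a^m_2\}$, $B(S_m)=\{b^m_1,b^m_2\}$, and exactly the four arcs from $A(S_m)$ to $B(S_m)$.
\item Let $\mathcal{P}_0$ consist of the single arcs $(b^0_1,a^1_1)$ and $(b^0_2,a^1_2)$.
\item This is even a strict path of well-linked sets.
\item The only out-neighbour of $b^0_1$ is $a^1_1$, and the out-neighbours $b^1_1,b^1_2$ of $a^1_1$ are sinks. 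So there is no $b^0_1$-$a^1_2$-path, and $B(S_0)$ is not well-linked to $A(S_1)$.
\end{itemize}
Hence the statement must exclude $A'=B(S_i)$, $B'=A(S_j)$ when $j=i+1$, and your proof establishes exactly this corrected version.

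The defect is harmless for the paper. The lemma is only invoked in the forms $B(S_j)$-to-$B(S_\ell)$ with $j<\ell$ and $A(S_0)$-to-$A(S_i)$ with $i\geq 1$ (in \cref{obs:restricting_length_poss}), and $A(S_0)$-to-$B(S_{\ell'})$ (in \cref{lemma:coss-or-minimal-2-horizontal-web}). Your argument covers all of these.
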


Finally, we sometimes need to restrict the size of a path of well-linked sets without losing width or length.
This is possible in the following case.

\begin{observation}[{\cite[Observation 8.9]{COSSI}}]
	\label{obs:restricting_width_poss}
	\label[lem-abbr]{lem-abbr:restricting_width_poss}
	Let $\Brace{\mathcal{S} = \Brace{S_0, S_1, \dots, S_{\ell}}, \mathscr{P} = \Brace{\mathcal{P}_0, \mathcal{P}_1, \dots, \mathcal{P}_{\ell - 1}}}$ be a path of well-linked sets of width at least $w$ and length $\ell$.
	Let $A_{0} \subseteq A(S_0)$ and $B_{\ell} \subseteq B(S_{\ell})$ with $\Abs{A(S_{0})} = \Abs{B_\ell} = w$.
	Then, $\Brace{\mathcal{S}, \mathscr{P}}$ contains a path of well-linked sets $(\mathcal{S}' = \Brace{S'_0, S'_1, \dots, S'_{\ell}}, \mathscr{P}' = \Brace{\mathcal{P}'_0, \mathcal{P}'_1, \dots, \mathcal{P}'_{\ell - 1}})$ of width $w$ and length $\ell$ such that $B(S_\ell') = B_\ell$, $A(S_0') = A_0$, $S'_i \subseteq S_i$ for all $0 \leq i \leq \ell$ and $\mathcal{P}'_i \subseteq \mathcal{P}_i$ for all $0 \leq i < \ell$.
\end{observation}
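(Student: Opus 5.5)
We have to prove \cref{obs:restricting_width_poss}. (The hypothesis presumably means \(\Abs{A_0} = \Abs{B_\ell} = w\).) The plan is to keep all clusters and linkage edges, and shrink only the terminal sets.

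The candidate structure is obtained by choosing new sets of size exactly \(w\) and then restricting the linkages. Concretely:
\begin{itemize}
\item Set \(A(S'_0) \coloneqq A_0\) and \(B(S'_\ell) \coloneqq B_\ell\).
\item Working from left to right, pick any \(B(S'_0) \subseteq B(S_0)\) of size \(w\).
\item Let \(\mathcal{P}'_0 \subseteq \mathcal{P}_0\) be the \(w\) paths starting in \(B(S'_0)\), and let \(A(S'_1) \coloneqq \End{\mathcal{P}'_0}\).
\item Continue in the same way: for each \(1 \leq i < \ell\), choose \(B(S'_i) \subseteq B(S_i)\) of size \(w\), restrict \(\mathcal{P}_i\) to the paths starting there to get \(\mathcal{P}'_i\), and let \(A(S'_{i+1}) \coloneqq \End{\mathcal{P}'_i}\).
\end{itemize}
This is possible because \(\mathcal{P}_i\) is a \(B(S_i)\)-\(A(S_{i+1})\)-linkage of order equal to the width, so it is a bijection between these sets. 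The one subtle point is the last step. \(A(S'_\ell)\) is forced by \(\mathcal{P}'_{\ell-1}\), while \(B(S'_\ell) = B_\ell\) is prescribed. This causes no conflict, because \(A(S'_\ell)\) and \(B(S'_\ell)\) need not be related by any linkage other than well-linkedness inside the cluster. Likewise \(A(S'_0) = A_0\) is prescribed and \(B(S'_0)\) is chosen freely.

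Keep \(S'_i \coloneqq S_i\) for all \(i\). The remaining task is to verify the definition of a path of well-linked sets:
\begin{itemize}
\item The clusters are pairwise disjoint, as before.
\item \(A(S'_i)\) and \(B(S'_i)\) are disjoint subsets of \(A(S_i)\) and \(B(S_i)\), each of size \(w\).
\item Well-linkedness is inherited by subsets directly from \cref{def:well-linked}. For every \(A'' \subseteq A(S'_i) \subseteq A(S_i)\) and \(B'' \subseteq B(S'_i)\) of equal size, the original property gives an \(A''\)-\(B''\)-linkage in \(S_i = S'_i\).
\item Each \(\mathcal{P}'_i\) is a subset of \(\mathcal{P}_i\). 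Its paths are therefore pairwise disjoint, disjoint from the other \(\mathcal{P}'_j\), internally disjoint from \(S_i\) and \(S_{i+1}\), and disjoint from all other clusters.
\item \(\mathcal{P}'_i\) is a \(B(S'_i)\)-\(A(S'_{i+1})\)-linkage of order \(w\) by construction.
\end{itemize}

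If one prefers \(S'_i\) to be smaller (for instance, strict), one may replace \(S'_i\) by the union of all \(A(S'_i)\)-\(B(S'_i)\)-linkages in \(S_i\). The statement only asks for \(S'_i \subseteq S_i\), though, so taking \(S'_i = S_i\) suffices.

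No step is a real obstacle. The only point needing care is the observation that the prescribed sets at the two ends, \(A_0\) and \(B_\ell\), are never linked to anything by the linkages \(\mathscr{P}'\). This is what makes the propagation consistent, so no Menger-type argument is needed.
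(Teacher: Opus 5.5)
Your proof is correct. Keeping $S'_i = S_i$, choosing each $B(S'_i)$ freely for $i<\ell$, restricting $\mathcal{P}_i$ to the paths that start in $B(S'_i)$, and noting that well-linkedness and all the disjointness conditions pass to subsets is the straightforward restriction argument. The paper gives no proof of its own here and only cites the observation from COSSI. Your reading of the hypothesis as $\Abs{A_0} = \Abs{B_\ell} = w$ is the intended one. The only cosmetic point is the degenerate case $\ell = 0$: there $B(S'_0)$ is not free but must be $B_\ell$.
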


\subsection{Temporal digraphs and routings}
\label{sec:temporal_digraphs}

\cite{COSSI} introduced a framework based on temporal digraphs
in order to construct paths of order-linked and of well-linked sets.
This framework is particularly useful when we are working with some
linkage \(\mathcal{L}\) intersecting a sequence \(H_{1}, H_{2}, \ldots, H_{t}\) of disjoint subgraphs in order.
Since there might be no path from \(H_i\) to \(H_j\) if \(i > j\),
each \(H_i\) behaves somewhat like a layer of a temporal digraph,
allowing us to reroute some path going through \(\mathcal{L}\).

In order to better understand the connectivity given by the \(H_i\) described above,
we need very precise ways of analysing the connectivity available in a temporal digraph.
Towards this end, \cite{COSSI} defines the concept of \emph{\(H\)-routings}, which we recall here as well.

	A \emph{temporal digraph} is a pair $T = (V, \AAA)$ consisting of a vertex set $V$ and sequence of arc sets $\mathcal{A} = \Brace{A_1, A_2, \dots A_{\ell}}$ such that $\Layer{T}{t} \coloneqq \Brace{V, A_t}$ is a digraph for all $1 \leq t \leq \ell$. 
	We also refer to $\Layer{T}{t}$ as \emph{layer $t$} of $T$ and call $t$ a \emph{time step}.
	The \emph{lifetime} of $D$ is given by $\Lifetime{D} \coloneqq \ell$.
	
	\begin{definition}
		\label{def:h-routings}
		Let $H$ be a digraph, $D$ be a digraph or temporal digraph, and let $S\subseteq \V{D}$.
		An \emph{$H$-routing (over $S$)} is a bijection $\varphi : \V{H} \to S$ such that for each $u$-$v$-path $P$ in $H$ we can find a $\varphi(u)$-$\varphi(v)$-path (or temporal path, resp.) in $D$ which is disjoint from $S \setminus \Fkt{\varphi}{\V{P}}$. 
	\end{definition}

	We apply the framework on temporal digraphs in our setting by constructing a \emph{routing temporal digraph},
	defined below and illustrated in \cref{fig:routing_temporal_digraph}.
	
	\begin{definition}
		\label{def:routing-temporal-digraph}
		Let $\mathcal{P}$ be a linkage and let $\mathcal{Q} = \Set{Q_1, Q_2, \dots, Q_{q}}$ be a set of pairwise disjoint digraphs such that each path $P_i \in \mathcal{P}$ can be partitioned as $P_i^1 \cdot P_i^2 \cdot \ldots \cdot P_i^{q} = P_i$ such that  $\V{P_i^j} \cap \V{\mathcal{Q}} \subseteq \V{Q_j}$ for all $1 \leq j \leq q$.
		
		The \emph{routing temporal digraph $\Brace{V, \mathcal{A}}$ of $\mathcal{P}$ through $\mathcal{Q}$} is constructed as follows.
			We set $V = \mathcal{P}$ and for each $1 \leq j \leq q$ we define $A_j = \{(P_a, P_b) \mid P_a, P_b \in \mathcal{P}$ and there is a path from \(\V{P_a}\) to \(\V{P_b}\) inside \(Q_j\) which is internally disjoint from \(\mathcal{P}\)\(\}\).
	\end{definition}

	\begin{figure*}[ht!]
		\centering
		\resizebox{!}{8cm}{		\includegraphics{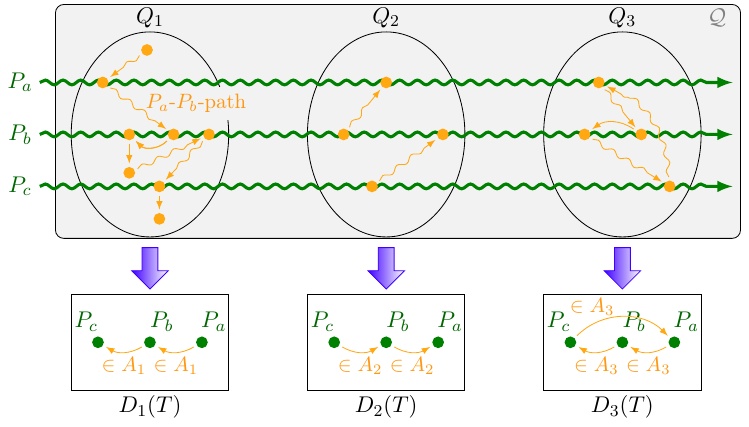}}
		\caption{The layers $D_j(T)$ of the temporal graph $T \coloneqq \Brace{V = \Set{a,b,c}, \mathcal{A} = \Set{A_1,A_2,A_3}}$ constructed from the graphs $Q_j$ displayed above as defined in~\cref{def:routing-temporal-digraph}.}
		\label{fig:routing_temporal_digraph}
	\end{figure*}

	A routing of a path in a routing temporal digraph can be used to obtain order-linkedness between two sets.

	\begin{lemma}[{\cite[Lemma 7.6]{COSSI}}]
		\label{lemma:P_k-routing-implies-1-order-linked}
		\label[lem-abbr]{lem-abbr:P_k-routing-implies-1-order-linked}
		Let $h,k$ be integers.
		Let $T$ be the routing temporal digraph of some linkage $\mathcal{L}$ through a sequence $\Brace{H_1, H_2, \dots, H_{h}}$ of disjoint digraphs. 
		Let $\mathcal{L}' \subseteq \mathcal{L}$ be a linkage of order at most $k$.
		If $T$ contains a $\Pk{k}$-routing on the paths $L_1, L_2, \dots, L_{k} \in \mathcal{L}'$, ordered according to their occurrence on the $\Pk{k}$-routing, then $A$ is $1$-order-linked to $B$ in $\ToDigraph{\mathcal{L} \cup \bigcup_{i=1}^{h} H_i}$, where $A = \Set{a_i \mid a_i \text{is the first vertex of \(L_i\) on \(H_1\)}}$ and $B = \Set{b_i \mid b_i \text{ is the last vertex of \(L_i\) on \(H_h\)}}$. 
	\end{lemma}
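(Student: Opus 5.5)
We prove the statement directly from the definitions, by induction along the temporal path. Write $L_1, \dots, L_k$ for the paths of $\mathcal{L}'$ in the order given by the $\Pk{k}$-routing $\varphi$; that is, $\varphi(u_i) = L_i$. For $1$-order-linkedness, let $A' \subseteq A$ and $B' \subseteq B$ with $B'$ a $1$-shift of $A'$ via $\pi$. We must produce an $A'$-$B'$-linkage realising $\pi$. Since $\pi$ is order preserving and sends $a_i$ to some $b_j$ with $j \geq i$, and at most one vertex is not sent to its own index, the task splits into two kinds of paths.

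\textbf{Fixed points.} Every $a_i$ with $\pi(a_i) = b_i$ is linked by the subpath of $L_i$ from $a_i$ to $b_i$.

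\textbf{The moved vertex.} There is at most one $a_i$ with $\pi(a_i) = b_j$ and $j > i$. Order preservation forces $a_{i+1}, \dots, a_j \notin A'$. Likewise $b_i, \dots, b_{j-1} \notin B'$, since otherwise they would be images of fixed points strictly between. So the paths $L_{i+1}, \dots, L_{j-1}$, together with the part of $L_i$ after $a_i$ and the part of $L_j$ before $b_j$, are unused by the fixed-point linkage.

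The subpath $(u_i, \dots, u_j)$ of $\Pk{k}$ is a path in $H = \Pk{k}$. By the definition of an $H$-routing, there is a temporal $L_i$-$L_j$-path in $T$ avoiding all $L_m$ with $m \notin [i,j]$. Its arcs are used at non-decreasing time steps $t_1 \leq t_2 \leq \dots$. Each arc $(L_a, L_b)$ used at time $t$ corresponds to a path inside $H_t$ from $\V{L_a}$ to $\V{L_b}$ that is internally disjoint from $\mathcal{L}$. We lift the temporal path to an actual path as follows. Start at $a_i$, the first vertex of $L_i$ on $H_1$. Walk along the current path $L_a$ up to its portion in $H_t$, take the connecting path in $H_t$ into $L_b$, and continue along $L_b$. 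Because $\mathcal{L}$ passes through the $H_1, \dots, H_h$ in order, and time steps are non-decreasing, the walk only moves forward along each $L$. Finally follow $L_j$ to $b_j$, its last vertex on $H_h$. Shortcutting repeated vertices yields a path.

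\textbf{Main obstacle.} The key point is disjointness. The lifted path uses only vertices of $L_i, \dots, L_j$ and interiors of the connecting paths in the $H_t$, and those interiors avoid $\mathcal{L}$. We must also check that the portions used of $L_i$ and $L_j$ avoid $a_i$'s predecessor segment conflicts; this holds because $L_i$ contributes no fixed-point path and neither does $L_j$ (as $a_j \notin A'$). One technical care point is the partition condition of the routing temporal digraph. The portion $P^t$ of each $L$ meets the union of the $H$'s only in $H_t$, so the forward walk is well defined even when consecutive arcs share a time step. With this, the fixed-point paths and the rerouted path are pairwise disjoint, giving the required linkage in $\ToDigraph{\mathcal{L} \cup \bigcup_{i=1}^{h} H_i}$.
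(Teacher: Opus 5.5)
Your overall strategy is the natural one for this statement, which the paper only cites from its companion paper and does not prove here. You route each fixed point $a_m\mapsto b_m$ along its own path $L_m$. You note that order preservation forces $a_{i+1},\dots,a_j\notin A'$ and $b_i,\dots,b_{j-1}\notin B'$ for the single shifted pair $a_i\mapsto b_j$. You then lift the temporal $L_i$-$L_j$ path given by the $P_k$-routing to a real path.

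\textbf{The lifting step fails as you wrote it.} You let the temporal path use non-decreasing time steps and claim the forward walk is well defined ``even when consecutive arcs share a time step''. That claim is false. Suppose $(L_a,L_b)$ and $(L_b,L_c)$ are both realised inside the same $H_t$. The first connecting path enters $L_b$ at some $y$ and the second leaves it at some $x$. The partition condition only puts $x$ and $y$ in the segment $L_b^t$; it says nothing about their order. If $x$ precedes $y$ on $L_b$, there is in general no $y$-$x$ path.

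A concrete counterexample: take $h=1$ and three paths $L_a=(p_1,p_2)$, $L_b=(q_1,q_2)$, $L_c=(s_1,s_2)$ inside $H_1$, plus the arcs $(p_2,q_2)$ and $(q_1,s_1)$. With non-strict time steps, $L_a,L_b,L_c$ form a $P_3$-routing. Yet $\{s_2\}$ is a $1$-shift of $\{p_1\}$ and there is no $p_1$-$s_2$ path. So the lemma itself is false under your reading.

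\textbf{The fix.} The argument needs temporal paths with strictly increasing time steps, so that each layer is used for at most one transfer. Then the exit point on $L_b$ lies in $L_b^{t'}$ with $t'>t$, hence after the entry point, and your forward walk (and the endpoint checks at $a_i$ and $b_j$) go through. You must invoke this explicitly rather than the equal-time-step claim.

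\textbf{A smaller inaccuracy.} The routing only guarantees that the temporal path avoids $\{L_1,\dots,L_k\}\setminus\{L_i,\dots,L_j\}$. It may pass through paths of $\mathcal{L}\setminus\mathcal{L}'$, so the lifted path is not confined to $L_i,\dots,L_j$ as you state. This is harmless, because the fixed-point paths lie only on $L_m$ with $m<i$ or $m>j$, which the lifted path still avoids. Your disjointness argument should be phrased in those terms.
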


	As every temporal digraph with enough unilateral layers contains the routing of a path,	we can use~\cref{lemma:P_k-routing-implies-1-order-linked} in order to construct order-linked sets in our digraph.
	
	\begin{theorem}[{\cite[Theorem 6.16]{COSSI}}]
		\label{theorem:one-way connected temporal digraph contains P_k routing}
		\label[thm-abbr]{thm-abbr:one-way connected temporal digraph contains P_k routing}
		There is a function
		\boundDef{theorem:one-way connected temporal digraph contains P_k routing}{\Lifetime{}}{n, k}
		\(\bound{theorem:one-way connected temporal digraph contains P_k routing}{\Lifetime{}}{n,k} \in \PowerTower{1}{\Polynomial{6}{k, n}}\)
		such that for every temporal digraph $T$ where each layer is unilateral with $\Lifetime{T} \geq \bound{theorem:one-way connected temporal digraph contains P_k routing}{\Lifetime{}}{n, k}$ and $n \coloneqq \Abs{\V{T}} \geq \frac{k^2}{2}$ there is a set $S \subseteq \V{T}$ such that $T$ contains a $\Pk{k}$-routing over $S$. 
	\end{theorem}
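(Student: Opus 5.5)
The plan is to reduce the temporal statement to a \emph{static} one on single layers, and then make that choice uniform across layers by pigeonhole. The reduction rests on the following. Fix \(S = \Set{s_1, \dots, s_k}\) and \(\varphi(u_i) = s_i\). It suffices to find \(k-1\) consecutive \emph{blocks} of layers with the following property: inside each block, for every \(1 \leq i < k\), some layer contains an \(s_i\)-\(s_{i+1}\)-path avoiding \(S \setminus \Set{s_i, s_{i+1}}\). Then for \(i < j\) a temporal \(s_i\)-\(s_j\)-path is obtained by doing the hop \(s_i \to s_{i+1}\) in the first block, \(s_{i+1} \to s_{i+2}\) in the second block, and so on. Every vertex of \(S\) it meets lies in \(\Fkt{\varphi}{\V{P}}\) for the subpath \(P = u_i \dots u_j\) of \(\Pk{k}\), which is exactly what \cref{def:h-routings} requires. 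So a single chosen \(S\) with a single order must admit all \(k-1\) "clean hops" in a bounded number of layers, repeated \(k-1\) times.

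To handle one unilateral layer, I would use that its condensation is a directed path of strong components \(C_1, \dots, C_m\), with every vertex of \(C_a\) reaching every vertex of \(C_b\) for \(a \leq b\). Take a closed-under-order spanning walk \(W\): traverse \(C_1\) by a closed walk through all its vertices, step to \(C_2\), and so on. Record the sequence of vertices by first occurrence on \(W\). For \(s\) before \(s'\) in this order, the segment of \(W\) from the first occurrence of \(s\) to that of \(s'\) contains an \(s\)-\(s'\)-path. That path is automatically disjoint from every vertex first occurring after \(s'\), but it may revisit earlier vertices. To control this, I would choose \(S\) so that consecutive chosen vertices are "separated" along \(W\). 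The hypothesis \(n \geq k^2/2\) suggests an Erdős–Szekeres-type selection (\cref{thm:erdos_szekeres}) from a length-\(n\) sequence to obtain \(k\) vertices for which each hop \(s_i \to s_{i+1}\) can be rerouted to avoid the rest of \(S\). This would give, for each layer, a set \(S_t\) of size \(k\) with an order \(\sigma_t\) realising all clean hops in that layer.

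To make the choice uniform, I would note that there are at most \(n^k \leq 2^{k \log n}\) possible pairs \((S_t, \sigma_t)\). With lifetime at least about \(k \cdot n^k\) (comfortably inside \(\PowerTower{1}{\Polynomial{6}{k,n}}\)), pigeonhole gives \(k-1\) layers, in increasing time, sharing the same \((S, \sigma)\). Each such layer serves as a block by itself, and the first paragraph finishes the proof.

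The step I expect to be the main obstacle is the single-layer selection, when large strong components exist (for instance a strongly connected layer). There, paths between chosen vertices may be forced through other chosen vertices, and a naive first-occurrence order does not suffice. If the per-layer Erdős–Szekeres argument fails, I would weaken the per-layer goal to achieving only \emph{one} clean hop between two prescribed vertices, which is easy when avoiding a set of size \(k-2\) in a unilateral graph with \(n \geq k^2/2\) vertices. I would then build \(S\) incrementally over many layers, paying an extra exponential in the lifetime bound, which still fits in a single power tower.
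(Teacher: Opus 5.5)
\textbf{The step that has to carry the whole proof is missing.} There is also a flaw in the reduction, and that one can be fixed.

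\textbf{The reduction.} Your reduction treats an $s_i$-$s_{i+1}$-path lying inside a single layer as if it were a temporal path (``each such layer serves as a block by itself''). In the framework of \cite{COSSI}, a temporal path traverses at most one arc per layer: time strictly increases, and waiting at a vertex is allowed. The paper only cites this theorem, so that definition is not restated here, but the way routings are used forces it. In \cref{lemma:P_k-routing-implies-1-order-linked}, an arc of layer $t$ is a crossing inside $H_t$, and two crossings inside the same $H_t$ cannot in general be concatenated into a path of $D$. The same fact shows in the dependence on $n$ of the lifetime in \cref{theorem:strongly connected temporal digraph contains H routing}. So one layer is not a block.

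This part can be repaired. Suppose a clean $s_i$-$s_{i+1}$-hop exists in each of $n-1$ layers of a block. Let $Z=S\setminus\{s_i,s_{i+1}\}$, and let $R$ be the set of vertices temporally reachable from $s_i$ in $T$ with $Z$ deleted. As long as $s_{i+1}\notin R$, the static hop in the next such layer must leave $R$ through an arc of that layer avoiding $Z$. Hence $R$ grows in each such layer, and $s_{i+1}$ is eventually reached. Pigeonholing $(k-1)(n-1)$ layers with the same pair $(S,\sigma)$ still costs only a single exponential.

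\textbf{The real gap: the per-layer selection.} You leave this step open, and you are right that it is the obstacle. It is not a side lemma. Consider the static statement: every unilateral digraph on $n\ge k^2/2$ vertices has $k$ vertices and an order in which each consecutive hop avoids the other chosen vertices.
\begin{itemize}
\item Applying the theorem to a temporal digraph all of whose layers equal one unilateral digraph shows the static statement is exactly the constant-layer case of the theorem.
\item By the repair above, the static statement also implies the theorem.
\end{itemize}
So your reduction moves all of the content into the step you do not prove.

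\textbf{Your suggested tools do not settle it.} First occurrences along a spanning walk give no control over revisits. Erd\H{o}s--Szekeres yields monotone subsequences, which says nothing about avoiding the other chosen vertices. For example, in a bidirected star every hop runs through the centre, so for $k\ge 4$ the set $S$ must avoid the centre regardless of the order chosen.

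\textbf{Your fallback is false.} In a directed path, deleting one intermediate vertex separates the ends. So a clean hop between two prescribed vertices avoiding $k-2$ others is not available in general. Unilaterality does not even fix the direction of such a hop. Finally, assembling $S$ over different layers conflicts with the requirement that all hops hold for one fixed $S$.

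\textbf{A pointer.} The threshold $k^2/2$ is exactly where a tree must have either $k$ leaves or a path on $k$ vertices. This settles the bidirected case via a spanning tree: take the chosen vertices along a long path, or take leaves joined through unchosen internal vertices. It suggests the selection should come from a branching structure of this kind rather than from a monotone subsequence.
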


	The following \namecref{theorem:strongly connected temporal digraph contains H routing} states that every temporal digraph of sufficiently long lifetime contains
	an $H$-routing for some $H \in \Set{ \biPk{k}, \Ck{k} }$ if all its layers are strongly connected.
	
	\begin{theorem}[{\cite[Theorem 6.20]{COSSI}}]
		\label{theorem:strongly connected temporal digraph contains H routing}
		\label[thm-abbr]{thm-abbr:strongly connected temporal digraph contains H routing}
		There exist functions
		\boundDef{thm-abbr:strongly connected temporal digraph contains H routing}{s}{k}
		\(\bound{theorem:strongly connected temporal digraph contains H routing}{s}{k} \in \Oh(k^{11})\) and
		\boundDef{thm-abbr:strongly connected temporal digraph contains H routing}{\Lifetime{}}{k}
		\(\bound{theorem:strongly connected temporal digraph contains H routing}{\Lifetime{}}{n,k} \in \Oh(k^{11} + k^{5} n)\)
        such that every temporal digraph $T$ in that $\Layer{T}{i}$ is strongly connected for all $1 \leq i \leq \Lifetime{T}$ fulfils the following.
		If $\Lifetime{T} \geq \bound{theorem:strongly connected temporal digraph contains H routing}{\Lifetime{}}{\Abs{\V{T}}, k}$, then for every set $S \subseteq \V{T}$ with $\Abs{S} \geq \bound{theorem:strongly connected temporal digraph contains H routing}{s}{k}$ there is a subset $S' \subseteq S$ with $\Abs{S'} = k$ such that $D$ contains an $H$-routing over $S'$ for some $H \in \Set{ \Ck{k},  \biPk{k}}$.
	\end{theorem}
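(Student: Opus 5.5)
The plan is to reduce each layer to a digraph on $S$ alone and then assemble the routing from a small number of layers, so that everything happens inside $S$. For a layer $\Layer{T}{i}$, define the \emph{$S$-digraph} $D_i^S$ on vertex set $S$: it has an arc $(u,v)$ whenever $\Layer{T}{i}$ contains a $u$-$v$-path internally disjoint from $S$. Any path of $\Layer{T}{i}$ between vertices of $S$ splits at its $S$-vertices into such segments. Hence $D_i^S$ is strongly connected whenever $\Layer{T}{i}$ is.

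The reason for this reduction is how routings compose. Suppose $\varphi$ is a bijection onto $S' \subseteq S$ and, for a path $P$ of $H$, we find a sequence of arcs of $D_{i_1}^{S}, D_{i_2}^{S}, \dots$ with $i_1 < i_2 < \cdots$. If this sequence walks from $\varphi(u)$ to $\varphi(v)$ and only visits vertices of $\varphi(\V{P})$, it lifts to a temporal path avoiding $S \setminus \varphi(\V{P})$. One subtlety is that the arcs are defined with respect to $S$, not $S'$. I would therefore either restrict to induced $S'$-digraphs after choosing $S'$, or shrink $S$ first and recompute. Since layers stay strongly connected, recomputing is harmless, and the losses go into the polynomial $\bound{theorem:strongly connected temporal digraph contains H routing}{s}{k}$.

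Within a single strongly connected $D_i^S$ on $s \geq \bound{theorem:strongly connected temporal digraph contains H routing}{s}{k}$ vertices, I would split into two cases. In the first case there is a long directed path, say of length $k^{c}$ for some constant $c$. In the second case every path is short, so by a Gallai--Roy/degree argument some vertex has out-degree or in-degree at least $s^{1/c}$ in an out-branching or in-branching rooted at it.

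In the star case, strong connectivity returns from each leaf to the centre. Among $k^2$ leaves, I would select $k$ leaves whose return paths are pairwise compatible, with a nested or monotone pattern obtained by Erd\H{o}s--Szekeres (\cref{thm:erdos_szekeres}). This yields, in a single layer, either a cycle through the selected vertices in some order or a bidirected-path-like structure. Any such local configuration can be used as a "gadget" that realises the arcs needed at one time step.

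The main step is to combine many layers. I would pass to an ordering of the candidate set. In each of the first $\Oh(k^{5})$ blocks of layers, I obtain from the long-path case a linear order of a large subset along a path of $D_i^S$. Using Erd\H{o}s--Szekeres repeatedly, I refine to a $k$-subset whose induced order is the same or reversed across two consecutive relevant blocks. I then distinguish two outcomes. If some block gives the forward order and a later block the reverse order, then the forward path serves subpaths of $H$ going right and the later reverse path serves subpaths going left. This yields a $\biPk{k}$-routing, provided each subpath of $H$ only needs arcs among consecutive chosen vertices, which holds because the routing is between chosen vertices along these orders. If instead all blocks agree in order, then strong connectivity of one further layer supplies a path from the last chosen vertex back to the first. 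Together with the forward orders, this closes a $\Ck{k}$-routing.

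The term $k^{5} n$ in $\bound{theorem:strongly connected temporal digraph contains H routing}{\Lifetime{}}{n,k}$ is needed because a single layer's return path may pass through up to $n$ non-$S$ vertices and many $S$-vertices. Repeating the shrinking of $S$ costs at most $n$ layers per refinement round.

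I expect the main obstacle to be the disjointness bookkeeping. Each time $S$ is shrunk to $S'$, arcs of earlier $D_i^{S}$ may now pass through discarded vertices, which is fine. However, arcs of $D_i^{S'}$ may be new and destroy the orders already fixed. I would handle this by processing layers strictly in time order and fixing $S'$ only at the end. The refinements must then be made monotone: each Erd\H{o}s--Szekeres step only deletes vertices, and orders along a path restricted to a subset remain orders along a path of the restricted $S$-digraph, after concatenating segments. This is exactly where I would check the argument most carefully.
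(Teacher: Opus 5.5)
Your plan fails at its foundation. You treat an arc of $D_i^S$, which is a whole path inside one layer, as something that lifts to a temporal path. In this framework a temporal path takes at most one arc from each layer: its time steps strictly increase. The application forces this. In a routing temporal digraph (\cref{def:routing-temporal-digraph}), an arc $(P_a,P_b)$ of layer $j$ is a connection through $Q_j$. Two such arcs of the same layer cannot in general be concatenated, because the second may leave $P_b$ at a vertex preceding the one where the first arrived. The lifting behind \cref{lemma:P_k-routing-implies-1-order-linked} therefore needs one arc per layer.

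The statement points the same way. In your model, a temporal digraph whose layers all equal one strongly connected digraph is no richer than its first layer. The theorem would then be equivalent to its lifetime-one case, and the requirement $O(k^{11}+k^5n)$ would be void. Your own account of the $k^5n$ term also does not fit your model, since there a single layer already contains the whole return path.

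With one arc per layer, strong connectivity only guarantees one arc leaving any proper vertex set. Reaching a given vertex can take arcs from about $n$ consecutive layers. Which vertices of $S$, possibly of $S'$, such a temporal path passes through is dictated by the layers, not by you. Controlling this is the whole content of the theorem, and nothing in the proposal does it.

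Even granting your model, the assembly step does not go through. The long paths found in different blocks need not share many vertices of $S$, so there is no common large set to which \cref{thm:erdos_szekeres} applies. The return path closing the $C_k$-routing may run through other vertices of $S'$, but the arc $(u_k,u_1)$ must be realised avoiding $S'\setminus\{\varphi(u_k),\varphi(u_1)\}$. The star case is never fed into the routing. The dichotomy is also false with polynomial bounds: if every path has length less than $L$, a BFS branching only yields a vertex of degree about $s^{1/L}$, which is useless for $L=k^c$ and $s$ polynomial in $k$.

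In the correct model, the natural substitute for $D_i^S$ is a window of about $n$ layers. Each layer has an arc leaving every proper nonempty vertex set. So the set of vertices that $u\in S$ reaches by temporal paths avoiding $S\setminus\{u\}$ gains a vertex in every layer until it hits $S$. Within $n$ layers, each $u$ then has a temporal path to some other vertex of $S$, and symmetrically one from another vertex of $S$, avoiding the rest of $S$. Unlike your arcs, these connections survive shrinking $S$. But each window then only gives a digraph on $S$ with minimum in- and out-degree one. You would still need a real argument for choosing $S'$ and its cyclic or linear order from a long sequence of such windows.
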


	\subsection{Cycles of well-linked sets and cylindrical grids}
	\label{sec:COSS_to_cyl_grid}
	
	In this section, we introduce an abstraction of cylindrical grids.
	
	\begin{definition}[cycle of well-linked sets]
		\label{def:cycle-of-well-linked-sets}
		A \emph{cycle of well-linked sets} of width $w$ and length $\ell$ is a tuple $\Brace{\mathcal{S},\mathscr{P}}$ such that
		\begin{itemize}             
			\item $\mathcal{S}$ is a sequence of $\ell$ pairwise disjoint subgraphs $\Brace{S_0,\dots,S_{\ell - 1}}$, which are called \emph{clusters},
			\item for every $0 \leq i < \ell$ there are disjoint sets $A(S_i),B(S_i) \subseteq \V{S_i}$ of size $w$ such that $A(S_i)$ is well-linked to $B(S_i)$ in $S_i$, 
			\item $\mathscr{P}$ is a sequence of $\ell$ pairwise disjoint linkages $\Brace{\mathcal{P}_0, \mathcal{P}_1, \dots, \mathcal{P}_{\ell - 1}}$ such that, for every $0 \leq i < \ell$, $\mathcal{P}_i$ is a $B(S_i)$-$A(S_{(i+1 \operatorname{mod} \ell)})$-linkage of order $w$ which is internally disjoint from $S_i$ and $S_{i+1}$ and is disjoint from every $S \in \mathcal{S} \setminus \Set{S_i, S_{i+1}}$. 
		\end{itemize}
		As before, we call $(\SSS, \PPPP)$ \emph{strict} if in every cluster $S_i$ every vertex $v \in V(S_i)$ is contained in an $A(S_i)$-$B(S_i)$-path. 
	\end{definition}
	Note that a \emph{cycle of well-linked sets} of width $w$ and length $\ell$ is a pair \((\mathcal{S}, \mathscr{P} \cup \Set{\mathcal{P}_{\ell}})\) where \((\mathcal{S}, \mathscr{P})\) is a path of well-linked sets of width $w$ and length $\ell - 1$, and \(\mathcal{P}_\ell\) is a linkage from the $B$-set of the last cluster to the $A$-set of the first cluster that is internally disjoint from \((\mathcal{S}, \mathscr{P})\),
	similarly to how a cylindrical grid is a fence plus a linkage from the right side of the fence to its left side which is internally disjoint from the fence.

	A useful property of a $(p, q)$-fence $(\PPP, \QQQ)$ is that if $A \subseteq \Start{\PPP}$ and $B \subseteq \End{\PPP}$ are sets with $|A| = |B| \leq p$ then there
	is an $A{-}B$-linkage $\LLL$ of order $|A|$ in the graph $\bigcup \PPP \cup \bigcup \QQQ$.
	In terms of connectivity, a path of well-linked sets displays essentially the same properties as a fence.
	The following theorem formalises this intuition, showing that we can always find a large fence inside a large
	path of well-linked sets.

	\begin{theorem}[{\cite[Theorem 8.6]{COSSI}}]
		\label{thm:poss-to-fence}
		\label[thm-abbr]{thm-abbr:poss-to-fence}
		There exist two functions \boundDef{thm-abbr:poss-to-fence}{w}{p,q} \(\bound{thm:poss-to-fence}{w}{p,q} \in \Oh(p^{5} q^{5})\), and \boundDef{thm-abbr:poss-to-fence}{\ell}{p,q} \(\bound{thm:poss-to-fence}{\ell}{p,q} \in \PowerTower{1}{\Polynomial{5}{p, q}}\) such that the following holds.
		Every path of a well-linked sets $\Brace{\mathcal{S} = (S_{0}, S_{1}, \ldots,\allowbreak S_{\ell}), \mathscr{P}}$ of width at least $\bound{thm-abbr:poss-to-fence}{w}{p,q}$ and length $\ell \geq \bound{thm-abbr:poss-to-fence}{\ell}{p,q}$ contains a $(p,q)$-fence \((\mathcal{P}, \mathcal{Q})\).
		Moreover, \(\Start{\mathcal{Q}} \subseteq A(S_0)\) and \(\End{\mathcal{Q}} \subseteq B(S_\ell)\).
	\end{theorem}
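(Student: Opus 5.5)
The plan is to take the rows of the fence to be a horizontal linkage running through the whole path of well-linked sets, and to build the columns from the temporal-digraph framework of \cref{sec:temporal_digraphs}. First I fix a linkage $\mathcal{L}$ of order $w$ from $A(S_0)$ to $B(S_\ell)$ inside $(\mathcal{S},\mathscr{P})$; it exists by \cref{lemma:linkage-inside-pools} or \cref{lem:poss-simple-routing}. By the furthermore-part of \cref{lemma:linkage-inside-pools}, every path of $\mathcal{L}$ meets $A(S_i)$ and $B(S_i)$ for every $0<i<\ell$. Moreover, since the clusters and the linkages $\mathcal{P}_i$ are arranged in sequence, each $L\in\mathcal{L}$ decomposes into consecutive segments, one per cluster. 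I would first replace $\mathcal{L}$ by a linkage that is minimal with respect to $\bigcup\mathscr{P}$, so that each path enters $S_i$ once and leaves it once. Then I form the routing temporal digraph $T$ of $\mathcal{L}$ through $(S_1,\dots,S_{\ell-1})$ as in \cref{def:routing-temporal-digraph}. Its vertex set is $\mathcal{L}$, so $\Abs{\V{T}}=w$, and its lifetime is about $\ell$.

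The key step is to show that every layer of $T$ is strongly connected. Suppose $X\subsetneq\mathcal{L}$ is a nonempty set of rows with no arc of layer $i$ leaving $X$. Put $k=\min(\Abs{X},w-\Abs{X})$. Choose $k$ entry vertices in $A(S_i)$ of rows in $X$, and $k$ exit vertices in $B(S_i)$ of rows outside $X$. Well-linkedness of $A(S_i)$ to $B(S_i)$ in $S_i$ gives a linkage between these sets. Each of its paths starts on an $X$-row and ends on a non-$X$-row. Following such a path from the last point where it leaves an $X$-row to the first subsequent point where it hits any row yields a subpath inside $S_i$ that is internally disjoint from $\mathcal{L}$ and leads from $X$ to $\mathcal{L}\setminus X$. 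That is an arc of layer $i$ leaving $X$, a contradiction. The choice of minimal $\mathcal{L}$ is used here: it makes the entry and exit vertices in $S_i$ well defined, so that the $A(S_i)$ and $B(S_i)$ vertices really lie on the intended rows.

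Next I split the sequence of clusters into $N=2p\cdot q$ consecutive blocks, each of length $\bound{theorem:strongly connected temporal digraph contains H routing}{\Lifetime{}}{w,k}$, with $k=\Theta(p+q)$. The restriction of $T$ to each block still has only strongly connected layers. Applying \cref{theorem:strongly connected temporal digraph contains H routing} and iterating, passing to subsets so that all blocks use the same set of rows, yields a common set $S'$ of $k$ rows. On each block, $T$ then admits a $\Ck{k}$- or $\biPk{k}$-routing over $S'$. To iterate, I repeatedly shrink the row set and apply the theorem to the next block; this requires $w\ge \bound{theorem:strongly connected temporal digraph contains H routing}{s}{k}$ only once, since later applications work inside $S'$. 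Alternatively, one could apply it once to the concatenation and then refine. Either kind of routing lets me route, within one block, from a chosen row to a chosen row while avoiding the other rows of $S'$: for $\Ck{k}$ one goes around the cycle, for $\biPk{k}$ one moves in either direction. By \cref{def:h-routings} this lifts to actual paths in the clusters. From $S'$ I select $q$ rows and order them according to the routing, to serve as $Q_1,\dots,Q_q$. I then build column $P_j$ in its own group of $q$ consecutive blocks. An odd column walks down along $Q_1,Q_2,\dots,Q_q$, stepping from $Q_t$ to $Q_{t+1}$ inside one block, while an even column walks up in the same manner. As a result, each column meets each row in a single subpath, the rows are hit in the prescribed order, and distinct columns use disjoint groups of blocks, hence are disjoint. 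The rows are subpaths of $\mathcal{L}$, so $\Start{\mathcal{Q}}\subseteq A(S_0)$ and $\End{\mathcal{Q}}\subseteq B(S_\ell)$.

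The main obstacle is making the column segments clean, which is where minimality of $\mathcal{L}$ and the ``internally disjoint from $\mathcal{L}$'' clause of the layer definition do the work. A routed path from $Q_t$ to $Q_{t+1}$ may touch $Q_t$ or $Q_{t+1}$ more than once, and may touch rows of $S'$ not among the $q$ chosen. So I route only between consecutive chosen rows, and I take each column intersection with a row to be the maximal row segment used between its entry and exit. I also have to check that consecutive segments of one column can be joined along a row without revisiting other rows. Regarding the bounds, the length is $N$ times a polynomial in $w$ and $k$. The exponential dependence on $p$ and $q$ then comes from the repeated subset selection, in which each block costs a factor in the required width, or from ensuring the same $k$-set works throughout. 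This is consistent with $\bound{thm-abbr:poss-to-fence}{w}{p,q}\in\Oh(p^5q^5)$ together with $\bound{thm-abbr:poss-to-fence}{\ell}{p,q}\in\PowerTower{1}{\Polynomial{5}{p,q}}$.
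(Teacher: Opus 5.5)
Your layer argument is fine. It works with one cluster per layer, and the minimality step is not needed: an $A(S_0)$--$B(S_\ell)$-linkage of order $w$ inside $(\mathcal{S},\mathscr{P})$ must use every path of every $\mathcal{P}_i$, so it enters each $S_i$ once through $A(S_i)$ and leaves once through $B(S_i)$ automatically.

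The gap is in the column construction, at the claim that either kind of routing lets you move within one block between two chosen rows while avoiding the other rows of $S'$. For a $\Ck{k}$-routing $\varphi$ this is false. The definition of an $H$-routing only guarantees, from $\varphi(u_a)$ to $\varphi(u_b)$, a temporal path avoiding the rows outside the cyclic interval $\varphi(u_a),\varphi(u_{a+1}),\dots,\varphi(u_b)$. So the step from $Q_{t+1}$ back to $Q_t$ that ``goes around the cycle'' may run through every other row of $S'$, including the other fence rows. Choosing the fence rows cleverly does not help once $q\ge 3$: for consecutive fence rows $\varphi(u_a),\varphi(u_b)$, the downward and upward steps are controlled only on the two complementary cyclic intervals, and every third fence row lies inside one of them. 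Your upward columns may therefore meet some $Q_j$ in several pieces or out of order, which violates the definition of a fence. Since \cref{theorem:strongly connected temporal digraph contains H routing} gives you no say over which of $\Ck{k}$ and $\biPk{k}$ occurs, your construction only handles the $\biPk{k}$ case. The cycle case needs a separate idea. One possibility: keep each upward column on a spare row of $S'$, and instead shift the fence rows one step forward along the cycle, $Q_q$ first, so that $Q_q,Q_{q-1},\dots,Q_1$ pass over that spare row in this order.

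The bookkeeping is also off. \cref{theorem:strongly connected temporal digraph contains H routing} needs $|S|\ge s(k)\in O(k^{11})$ to return a $k$-set. Re-applying it inside $S'$ block by block therefore shrinks the set polynomially each time, and after $\Theta(pq)$ blocks the initial width must be doubly exponential in $pq$. The orders induced on the surviving rows may also differ between blocks. Applying the theorem once to the concatenation gives only one temporal path per pair over the whole lifetime, not one per block. The workable route is to apply it in every block to all $w$ rows and pigeonhole over the at most $2\binom{w}{k}k!$ triples $(H,S',\varphi)$, paying only in length, as in the proof of \cref{lemma:separate-linkage-from-well-linked-poss}. Even then, your width requirement is $s(k)$ with $k\ge q$, which the available bound only puts at $O(q^{11})$. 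That does not give the $O(p^5q^5)$ width stated in the theorem, so your closing claim that the bounds are consistent is unsupported.
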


	In the same way as a path of well-linked sets can be constructed from a fence~\cite{COSSI}, a cycle of well-linked sets can easily be constructed from a cylindrical grid.
	We now turn to the converse operation, that is, how one can construct a cylindrical grid from a cycle of well-linked sets.
	We first need the following lemma from~\cite{kawarabayashi2015directed}.
	
	\begin{lemma}[{\cite[Lemma 6.3]{kawarabayashi2015directed}}]
		\label{lemma:fence-plus-disjoint-back-linkage-implies-cylindrical-grid}
		Let $t$ be an integer, let $(\mathcal{P}, \mathcal{Q})$ be a $(q,q)$-fence where $q \geq \bound{lemma:fence-plus-disjoint-back-linkage-implies-cylindrical-grid}{q}{t} \coloneqq (t-1)(2t-1)+1$ and let $\mathcal{R}$ be an $\End{\mathcal{P}}$-$\Start{\mathcal{P}}$-linkage of order $q$ which is internally disjoint from $(\mathcal{P}, \mathcal{Q})$. Then $(\mathcal{P}, \mathcal{Q})$ contains a cylindrical grid of order $t$ as a butterfly minor.
	\end{lemma}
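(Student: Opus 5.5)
The plan is to reduce the problem to a monotone pattern in how $\mathcal{R}$ reconnects the columns, via \cref{thm:erdos_szekeres}; the bound $(t-1)(2t-1)+1$ is exactly the Erd\H{o}s--Szekeres threshold for an increasing subsequence of length $t$ or a decreasing one of length $2t$. Recall that in the $(q,q)$-fence the columns $P_1,\dots,P_{2q}$ alternate direction. Odd columns run from $Q_1$ down to $Q_q$, so they start at the top and end at the bottom. Even columns run upwards, so they start at the bottom and end at the top. Thus $\End{\mathcal{P}}$ consists of the bottom ends of odd columns and the top ends of even columns, and $\Start{\mathcal{P}}$ consists of the top ends of odd columns and the bottom ends of even columns. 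Since $\mathcal{R}$ has order $q$, it uses $q$ of the $2q$ ends. I first restrict to a family of $q$ columns whose ends are all used by $\mathcal{R}$ (shrinking the fence if necessary while keeping the alternation of the retained columns).

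Next, index the retained columns by their left-to-right order along the rows. Let $\sigma$ record, for the $s$-th column whose end is hit by $\mathcal{R}$, the position of the column whose start $\mathcal{R}$ reaches. Applying \cref{thm:erdos_szekeres} to this sequence of length $q$ yields either an increasing subsequence of length $t$ or a decreasing subsequence of length $2t$. I then discard all columns outside the subsequence, together with the corresponding paths of $\mathcal{R}$. The discarded rows or columns can be deleted without destroying the fence structure, because every remaining row still crosses the remaining columns in order.

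In the increasing case, the $t$ selected paths of $\mathcal{R}$ preserve the left-to-right order of the columns. Hence each closed walk formed by a column, then a path of $\mathcal{R}$, then the next column, and so on, winds around consistently. I would cut the fence into $t$ horizontal strips of rows. Within each strip, the column segments joined through $\mathcal{R}$ form order-preserving detours, and together with subpaths of the rows they yield $t$ nested disjoint cycles. The columns themselves, with their alternating directions, then play the role of the $2t$ alternating radial paths $P_1,\dots,P_{2t}$ of \cref{def:cylindrical-grid}. Butterfly contractions of the subdivided row and column pieces (every internal vertex has in- or out-degree one) turn this subgraph into the cylindrical grid.

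In the decreasing case, the map reverses the order, and I pair the $2t$ columns symmetrically around the middle. Composing the reversal with the left-right mirror of the fence (rows read backwards) gives an order-preserving pattern on $t$ pairs. Each pair contributes one cycle, and the factor $2$ is spent because each cycle uses two columns. After this step the construction proceeds as in the increasing case.

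The main obstacle is making the cycle construction precise when $\mathcal{R}$ does not map a column back to itself: one must route through rows so that the $t$ cycles are disjoint, and verify that the alternating column orientations yield exactly the radial ordering required by \cref{def:cylindrical-grid}. I would first try the case where $\sigma$ is a cyclic shift. That case gives a single spiral, which I would cut along one row to obtain concentric cycles, and I would then argue that the general monotone case reduces to it.
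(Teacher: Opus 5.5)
The paper does not prove this lemma; it imports it from \cite{kawarabayashi2015directed}. Your proposal has a genuine gap, and it starts with how you read the statement.

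\textbf{The statement as typed swaps $\mathcal{P}$ and $\mathcal{Q}$ relative to \cref{def:fence}.} The back-linkage is meant to close the \emph{rows}: $\mathcal{R}$ is an $\End{\mathcal{Q}}$-$\Start{\mathcal{Q}}$-linkage, and the grid lives in the fence together with $\mathcal{R}$, since the fence alone is acyclic. This is how the lemma is used in the proof of \cref{thm:coss contains grid}, where the back-linkage assembled from $\mathcal{R}^2$ and $\mathcal{R}^1$ runs from $\End{\mathcal{Q}^1}$ to $\Start{\mathcal{Q}^1}$. It is also why the order is $q=|\mathcal{Q}|$; that is the mismatch you noticed when you wrote that $\mathcal{R}$ uses only $q$ of the $2q$ column ends.

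\textbf{Under your column reading the claim is false.} For $1\le i\le q$, let $R_i$ run from the last to the first vertex of $P_{2i-1}$ through new vertices. Every arc of the resulting digraph either stays inside one column (with its attached $R_i$) or moves rightwards along a row. Hence every cycle lies in a single $P_{2i-1}\cup R_i$, and the strong components are cycles. There is therefore no cylindrical grid of order $2$ as a butterfly minor, since such a grid is strongly connected but not a cycle.

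In this example your $\sigma$ is the identity, hence increasing. The step that breaks is the claim that column segments joined through $\mathcal{R}$, together with row subpaths, yield $t$ nested disjoint cycles with the columns as radial paths. No row arc between two distinct columns lies on any cycle, and the columns cannot be both the cycles and the radial paths.

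\textbf{How the argument goes with the correct reading.} Your Erd\H{o}s--Szekeres reduction and threshold are right, but they apply to the permutation $\pi$ of the rows, where $\mathcal{R}$ joins $\End{Q_i}$ to $\Start{Q_{\pi(i)}}$.
\begin{itemize}
\item \emph{Increasing case.} Let $i_1<\dots<i_t$ be increasing under $\pi$. Use a block of columns at the left of the fence to route $\Start{Q_{\pi(i_s)}}$ to $Q_{i_s}$ order-preservingly. A path that must go down turns onto a down-column, one that must go up turns onto an up-column. Downward paths turn in reverse index order and upward ones in index order, so no two cross. This closes each $Q_{i_s}$ into a cycle $C_s$. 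The columns of the remaining block, cut between $Q_{i_1}$ and $Q_{i_t}$, are the alternating radial paths, and butterfly contractions finish.
\item \emph{Decreasing case.} Let $i_1<\dots<i_{2t}$ be decreasing under $\pi$. The same order-preserving rerouting sends $\Start{Q_{\pi(i_s)}}$ to $Q_{i_{2t+1-s}}$. So for $1\le s\le t$ the cycle $C_s$ traverses both rows $Q_{i_s}$ and $Q_{i_{2t+1-s}}$, each forwards, plus two paths of $\mathcal{R}$. A column's piece between $Q_{i_1}$ and $Q_{i_t}$ and its piece between $Q_{i_{t+1}}$ and $Q_{i_{2t}}$ cross $C_1,\dots,C_t$ in opposite orders. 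Every $C_s$ meets all upper pieces and then all lower pieces in column order. Choosing down- and up-columns suitably therefore gives $2t$ alternating radial paths as in \cref{def:cylindrical-grid}.
\end{itemize}
The factor $2$ pays for each cycle using two rows. Your ``left-right mirror of the fence, reading rows backwards'' has no meaning for directed paths, and the spiral/cyclic-shift detour is not needed. As written, the cycle construction, which you yourself flag as the main obstacle, is the missing core of the proof.
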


	Combining \cref{lemma:fence-plus-disjoint-back-linkage-implies-cylindrical-grid,thm:poss-to-fence}, we obtain the following.

	\begin{theorem}
		\label{thm:coss contains grid}
		\label[thm-abbr]{thm-abbr:coss contains grid}
		There exist functions
		\boundDef{thm:coss contains grid}{w}{k}
		\(\bound{thm:coss contains grid}{w}{k} \in \Oh({k^{20}})\) and
		\boundDef{thm:coss contains grid}{\ell}{k}
		\(\bound{thm:coss contains grid}{\ell}{k} \in \PowerTower{1}{\Polynomial{9}{k}}\) such that the following holds.
		Every cycle of well-linked sets of width $w \geq \bound{thm-abbr:coss contains grid}{w}{k}$ and length $\ell \geq \bound{thm-abbr:coss contains grid}{\ell}{k}$ contains a cylindrical grid of order $k$.
	\end{theorem}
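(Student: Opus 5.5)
The plan is to combine \cref{thm:poss-to-fence} with \cref{lemma:fence-plus-disjoint-back-linkage-implies-cylindrical-grid}. Let $(\mathcal{S} = (S_0, \dots, S_{\ell-1}), \mathscr{P} \cup \Set{\mathcal{P}_{\ell-1}})$ be the given cycle of well-linked sets. Here $\mathcal{P}_{\ell-1}$ is the closing linkage, and the remaining clusters and linkages form a path of well-linked sets of length $\ell - 1$ and width $w$.

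First set $q \coloneqq \bound{lemma:fence-plus-disjoint-back-linkage-implies-cylindrical-grid}{q}{k} = (k-1)(2k-1)+1 \in \Oh(k^2)$. Then choose $\bound{thm:coss contains grid}{w}{k} \coloneqq \bound{thm:poss-to-fence}{w}{q,q} \in \Oh(q^{10}) = \Oh(k^{20})$ and $\bound{thm:coss contains grid}{\ell}{k} \coloneqq \bound{thm:poss-to-fence}{\ell}{q,q} + 1$, which is in $\PowerTower{1}{\Polynomial{10}{k}}$ as a crude bound. The stated degree $9$ presumably comes from a sharper accounting of the polynomial, and I would not fuss over it. Apply \cref{thm:poss-to-fence} to the path $(S_0, \dots, S_{\ell-1})$. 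This yields a $(q,q)$-fence $(\mathcal{P}', \mathcal{Q}')$ whose rows satisfy $\Start{\mathcal{Q}'} \subseteq A(S_0)$ and $\End{\mathcal{Q}'} \subseteq B(S_{\ell-1})$. The fence is contained in the path of well-linked sets, so the closing linkage $\mathcal{P}_{\ell-1}$ is internally disjoint from it.

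The main obstacle is matching the orientation of the fence to the lemma's hypothesis. \Cref{lemma:fence-plus-disjoint-back-linkage-implies-cylindrical-grid} needs a back-linkage from $\End{\mathcal{P}'}$ to $\Start{\mathcal{P}'}$, but our back-linkage runs from the ends of the rows $\mathcal{Q}'$ to their starts. Up to renaming, this is exactly the situation the lemma addresses. In the cylindrical grid, the paths $C_i - e_i$ are the rows traversed from $P_1$ to $P_{2k}$, and closing each of them gives the cycles. So I would apply the lemma with the roles of the two linkages in \cref{def:fence} swapped, treating $\mathcal{Q}'$ as the linkage to be closed up. This is justified by symmetry of the proof in \cite{kawarabayashi2015directed}, or by viewing the grid transposed.

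It remains to build a $\End{\mathcal{Q}'}$-$\Start{\mathcal{Q}'}$-linkage of order $q$ that is internally disjoint from the fence. First, $B(S_{\ell-1})$ is well-linked to $A(S_{\ell-1})$ only in the wrong direction, so I would instead use well-linkedness inside the end clusters. Cut the fence so that it lives in $S_1, \dots, S_{\ell-2}$, which is why the length bound has a little slack. Inside $S_{\ell-1}$, route $\End{\mathcal{Q}'}$ forward to a $q$-subset of $B(S_{\ell-1})$, and inside $S_0$ route from $A(S_0)$ to $\Start{\mathcal{Q}'}$; both steps use \cref{lem:poss-simple-routing}. Then pass through $\mathcal{P}_{\ell-1}$ using the corresponding $q$ paths. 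Concretely, apply \cref{thm:poss-to-fence} to the subpath $S_1, \dots, S_{\ell-2}$. Its fence rows then start in $A(S_1)$ and end in $B(S_{\ell-2})$. Connect $B(S_{\ell-2})$ via $\mathcal{P}_{\ell-2}$, $S_{\ell-1}$, $\mathcal{P}_{\ell-1}$, $S_0$ and $\mathcal{P}_0$ back to $A(S_1)$ as a single linkage of order $q$, using well-linkedness of $S_{\ell-1}$ and $S_0$ to match endpoints. This linkage is disjoint from the fence except at its endpoints. Finally, apply \cref{lemma:fence-plus-disjoint-back-linkage-implies-cylindrical-grid} to obtain a cylindrical grid of order $k$ as a butterfly minor.
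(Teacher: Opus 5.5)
Your proposal is correct and is essentially the paper's proof. You apply \cref{thm:poss-to-fence} to the path part of the cycle with $q=(k-1)(2k-1)+1$. You then close the rows $\mathcal{Q}'$ by a linkage that passes through the closing linkage and reroutes inside clusters lying outside the fence. This is exactly how the paper applies \cref{lemma:fence-plus-disjoint-back-linkage-implies-cylindrical-grid} to $\mathcal{Q}^1$, so no symmetry argument is needed. Finally you invoke \cref{lemma:fence-plus-disjoint-back-linkage-implies-cylindrical-grid}. The only difference is that the paper reserves a single cluster for the rerouting, while you reserve two ($S_{\ell-1}$ and $S_0$).

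With your choice the fence lives in a path of length $\ell-3$, so the length threshold must be that of \cref{thm:poss-to-fence} plus $3$, not plus $1$. The degree $10$ you obtain is exactly what the stated bound of \cref{thm:poss-to-fence} gives at $p=q=\Theta(k^2)$; the paper's proof does not justify anything sharper either.
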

	\begin{proof}
		Let $k_1 = \bound{lemma:fence-plus-disjoint-back-linkage-implies-cylindrical-grid}{q}{k}$.
		Let $\ell_1 = \bound{thm:poss-to-fence}{len}{k_1, k_1}$.
		Note that $w \geq \bound{thm:poss-to-fence}{w}{k_1, k_1}$ and $\ell \geq \ell_1 + 1$.
		
		Let $\Brace{\mathcal{S} = \Brace{S_0, S_1, \dots, S_{\ell}}, \mathscr{P} = \Brace{\mathcal{P}_0, \mathcal{P}_1, \dots, \mathcal{P}_{\ell}}}$ be a cycle of well-linked sets of width $w$ and length $\ell$.
		Note that $D_1 \coloneqq \Brace{\mathcal{S}' \coloneqq \Brace{S_0, S_1, \dots, S_{\ell - 1}}, \mathscr{P}' \coloneqq \Brace{\mathcal{P}_0, \mathcal{P}_1, \dots, \mathcal{P}_{\ell - 2}}}$ is a path of well-linked sets of width $w$ and length at least $\ell_1$.
		
		By~\cref{thm:poss-to-fence}, $D_1$ contains a $(k_1,k_1)$-fence $(\mathcal{P}^1, \mathcal{Q}^1)$ such that $\Start{\mathcal{Q}^1} \subseteq A(S_0)$ and $\End{\mathcal{Q}^1} \subseteq B(S_{\ell - 1})$.
		Let $\mathcal{R}^1 \subseteq \mathcal{P}_\ell$ be the set of paths satisfying $\End{\mathcal{R}^1} = \Start{\mathcal{Q}^1}$.
		Let $\mathcal{R}^2$ be an $\End{\mathcal{Q}^1}$-$\Start{\mathcal{R}^1}$-linkage of order $k_1$ in $\SubPOSS{\Brace{\mathcal{S}, \mathscr{P}}}{\ell_1}{\ell}$.
		By~\cref{lem:linkage_inside_poss}\cref{case:linkage_inside_poss_B_A}, such a linkage $\mathcal{R}^2$ exists.
		Further, $\mathcal{R}^2$ is internally disjoint from $(\mathcal{P}^1, \mathcal{Q}^1)$.
		By~\cref{lemma:fence-plus-disjoint-back-linkage-implies-cylindrical-grid}, $(\mathcal{P}^1, \mathcal{Q}^1)$ and $\mathcal{R}^2$ together contain a cylindrical grid of order $k$ as a butterfly minor.
	\end{proof}
	
	With the results of this section, we now have suitable abstractions of fences and cylindrical grids at hand.
	We have	also seen how to obtain a cylindrical grid from a cycle of well-linked sets.
    	What remains to show is how to find a cycle of well-linked sets in a given digraph of large directed \treewidth.
    We address this problem in the remainder of the paper. 
	
	\section{The 2-horizontal web}
	\label{sec:constructing cows}
	
We construct a cycle of well-linked sets from
a path of well-linked sets where the last cluster
is well-linked to the first cluster.
Despite the similarities of the statements, we cannot directly use many of the results of \cite{kawarabayashi2015directed}
since several of their steps depend on the functions of \cite[Corollary 5.19]{kawarabayashi2015directed},
which is a power tower whose height depends on the order of the cylindrical grid desired,
and is hence too large for the bounds we wish to obtain here.

Nevertheless, the overall structure of our proof is still based on \cite{kawarabayashi2015directed}.
In \cref{subsection:back-linkage-cluster-by-cluster}, we obtain a path of well-linked sets with a back-linkage intersecting it \emph{cluster-by-cluster} (or we obtain a cycle of well-linked sets).
Then, in \cref{subsection:2-horizontal-web}
we obtain \emph{horizontal webs}, which mimic the cluster-by-cluster property above.
	
	\subsection{Back-linkage intersecting cluster-by-cluster}
	\label{subsection:back-linkage-cluster-by-cluster}
	
	We start with the case where the back-linkage is mostly disjoint from the
	path of well-linked sets.
	With the following simple observation, we can
	construct a shorter path of well-linked sets and still have a
	back-linkage for it.
	This will be then used in \cref{lemma:back-linkage avoids many clusters}
	in order to construct a cycle of well-linked sets.
	
	\begin{observation}
		\label{obs:restricting_length_poss}
		Let $\bound{obs:restricting_length_poss}{w}{w} \coloneqq 2w$.\boundDef{obs:restricting_length_poss}{w}{w}
        Let $(\mathcal{S} = \Brace{S_0, S_1, \dots, S_{\ell}}, \mathscr{P})$ be a path of well-linked sets of width at least $\bound{obs:restricting_length_poss}{w}{w}$ and length $\ell$ in a digraph $D$.
		Let $\mathcal{R}$ be a $B(S_{\ell})$-$A(S_0)$-linkage of order $\bound{obs:restricting_length_poss}{w}{w}$.
		Let $0 \leq i \leq j \leq \ell$.
		Then there is a $B(S_{j})$-$A(S_i)$-linkage $\mathcal{R}'$ of order $w$ such that $\Subgraph{D}{\mathcal{R}'} \cap \SubPOSS{(\mathcal{S}, \mathscr{P})}{i}{j} \subseteq \Subgraph{D}{\mathcal{R} \cup \Start{\mathcal{R}'} \cup \End{\mathcal{R}'}}$.
	\end{observation}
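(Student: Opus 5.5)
The idea is to build \(\mathcal{R}'\) as a concatenation of three linkages. The first runs inside the tail of the path of well-linked sets from \(B(S_j)\) to \(B(S_\ell)\). The middle part uses the paths of \(\mathcal{R}\). The last runs inside the head of the path of well-linked sets from \(A(S_0)\) to \(A(S_i)\). The segment from \(S_j\) to \(S_\ell\) and the segment from \(S_0\) to \(S_i\) only touch \(\SubPOSS{(\mathcal{S},\mathscr{P})}{i}{j}\) at its boundary, so all interaction with the middle portion comes from \(\mathcal{R}\). The factor \(2\) in the width lets us absorb the collisions between the pieces, as a half-integral linkage turned integral via \cref{lemma:half_integral_to_integral_linkage}.

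\textbf{Step 1.} If \(j < \ell\), use \cref{lem:poss-simple-routing} to get a \(B(S_j)\)-\(B(S_\ell)\)-linkage \(\mathcal{L}_1\) of order \(2w\) inside \(\SubPOSS{(\mathcal{S},\mathscr{P})}{j}{\ell}\), ending exactly on \(\Start{\mathcal{R}}\). If \(j=\ell\), take trivial paths. Symmetrically, if \(i>0\), get an \(A(S_0)\)-\(A(S_i)\)-linkage \(\mathcal{L}_3\) of order \(2w\) inside \(\SubPOSS{(\mathcal{S},\mathscr{P})}{0}{i}\), starting exactly on \(\End{\mathcal{R}}\). Both \(\Start{\mathcal{R}}\) and \(\End{\mathcal{R}}\) have size \(2w\), which is at most the width, so the well-linkedness statement applies.

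\textbf{Step 2.} Form the walks \(L_1\cdot R\cdot L_3\) using the bijections. Then \(\mathcal{L}_1\), \(\mathcal{R}\) and \(\mathcal{L}_3\) are each integral, and \(\mathcal{L}_1\), \(\mathcal{L}_3\) are disjoint when \(i<j\).

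\textbf{Step 3.} Rather than shortcutting walks directly, which is delicate, apply Menger's theorem (\cref{thm:menger}) in the digraph \(H \coloneqq \ToDigraph{\mathcal{L}_1 \cup \mathcal{R} \cup \mathcal{L}_3}\). Let \(X\) be a \(B(S_j)\)-\(A(S_i)\)-separator in \(H\). Each of the \(2w\) concatenated walks contains a \(B(S_j)\)-\(A(S_i)\)-path, so \(X\) meets every walk. A vertex lies on at most two of these walks: one of \(\mathcal{L}_1\cup\mathcal{L}_3\) and one of \(\mathcal{R}\). Hence \(|X|\ge w\), and \(H\) contains a \(B(S_j)\)-\(A(S_i)\)-linkage \(\mathcal{R}'\) of order \(w\).

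\textbf{Step 4.} Every vertex of \(H\) in \(\SubPOSS{(\mathcal{S},\mathscr{P})}{i}{j}\) lies on \(\mathcal{R}\) or on \(\mathcal{L}_1\cup\mathcal{L}_3\). The latter meet that subgraph only inside \(S_j\) and \(S_i\). After taking \(\mathcal{R}'\) to be \(H\)-minimal, or simply trimming each path to start at its last vertex in \(B(S_j)\) and end at its first vertex in \(A(S_i)\), the only vertices of \(\mathcal{R}'\) from \(\mathcal{L}_1\) and \(\mathcal{L}_3\) inside the middle portion are endpoints. This gives the claimed containment.

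\textbf{Main obstacle.} The hard part is this last containment. A path of \(\mathcal{L}_1\) may pass through \(S_j\) itself, and through vertices of \(B(S_j)\) other than the chosen endpoint, as may paths of \(\mathcal{L}_3\) in \(S_i\). Trimming handles vertices of \(B(S_j)\) that occur later on a path. However, vertices of \(S_j\setminus B(S_j)\) after the start would violate the condition. To fix this, I would choose \(\mathcal{L}_1\) to start in \(B(S_j)\) and immediately leave \(S_j\) via \(\mathcal{P}_j\); this uses only that \(\mathcal{P}_j\) is internally disjoint from \(S_j\), together with the well-linkedness from \(A(S_{j+1})\) onward. Symmetrically, \(\mathcal{L}_3\) enters \(S_i\) only at its last vertex via \(\mathcal{P}_{i-1}\). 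With this choice, the subgraph \(\SubPOSS{(\mathcal{S},\mathscr{P})}{i}{j}\) meets \(\mathcal{L}_1\cup\mathcal{L}_3\) only in \(\Start{\mathcal{R}'}\cup\End{\mathcal{R}'}\).
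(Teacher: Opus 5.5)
Your proposal is correct and is essentially the paper's proof. The paper also concatenates a $B(S_j)$-$B(S_\ell)$-linkage in the tail, the back-linkage $\mathcal{R}$, and an $A(S_0)$-$A(S_i)$-linkage in the head, then extracts $w$ disjoint paths from this half-integral linkage via \cref{lemma:half_integral_to_integral_linkage}, whose proof is exactly your Menger counting in Step 3. Your routing of $\mathcal{L}_1$ through $\mathcal{P}_j$ and of $\mathcal{L}_3$ through $\mathcal{P}_{i-1}$ supplies the internal disjointness from $S_j$ and $S_i$ that the paper only asserts; together with $A(S_i)\cap B(S_i)=\emptyset$, it also makes $\mathcal{L}_1$ and $\mathcal{L}_3$ disjoint in the case $i=j$, which your Step 2 leaves open.
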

	\begin{proof}
		If $j < \ell$, then by~\cref{lem:poss-simple-routing} there is a linkage $\LLL_B$ from $B(S_j)$ to $B(S_\ell)$ in $\SubPOSS{\Brace{\mathcal{S}, \mathscr{P}}}{j}{\ell}$ which is internally disjoint from $S_j$.
		If $j = \ell$, we set $\LLL_B$ as the linkage containing only the vertices of $B(S_{\ell})$ and no arcs.
		
		Similarly, if $i = 0$, we set $\LLL_A$ as the linkage containing only the vertices of $A(S_{0})$ and no arcs.
		Otherwise, we set $\LLL_A$ as an $A(S_0)$-$A(S_{i})$-linkage of order $2w$ in $\SubPOSS{\Brace{\mathcal{S}, \mathscr{P}}}{0}{i}$, which exists by~\cref{lem:poss-simple-routing}.
		
		Let $\mathcal{L} = \mathcal{L}_B \cdot \mathcal{R} \cdot \mathcal{L}_A$.
		As $\mathcal{L}_B$ and $\mathcal{L}_A$ are internally disjoint, we have that $\mathcal{L}$ is a half-integral linkage from $B(S_{j})$ to $A(S_{i})$.
		By~\cref{lemma:half_integral_to_integral_linkage}, $\Subgraph{D}{\mathcal{L}}$ contains a $B(S_k)$-$A(S_{i})$-linkage $\RRR'$ of order $w$.
		
		As both $\mathcal{L}_A$ and $\mathcal{L}_B$ are internally disjoint from $\SubPOSS{(\mathcal{S}, \mathscr{P})}{i}{j}$, we have that $\Subgraph{D}{\mathcal{R}'} \cap \SubPOSS{(\mathcal{S}, \mathscr{P})}{i}{j} \subseteq \Subgraph{D}{\mathcal{R} \cup \Start{\mathcal{R}'} \cup \End{\mathcal{R}'}}$.
	\end{proof}
	
	If a back-linkage $\mathcal{R}$ is completely disjoint
	from a contiguous \emph{part} of a path of well-linked sets,
	then we can use \cref{obs:restricting_length_poss,obs:restricting_width_poss}
	in order to obtain a cycle of well-linked sets, whereas the original back-linkage
	becomes essentially one of the linkages between the clusters.
	
	We define
	\begin{align*}
		\boundDefAlign{lemma:back-linkage avoids many clusters}{\ell'}{\ell}
		\bound{lemma:back-linkage avoids many clusters}{\ell'}{\ell} & \coloneqq \ell - 1,\\[0em]
		\boundDefAlign{lemma:back-linkage avoids many clusters}{r}{w}
		\bound{lemma:back-linkage avoids many clusters}{w'}{w} & \coloneqq 2w,\\[0em]
		\boundDefAlign{lemma:back-linkage avoids many clusters}{w'}{w}
		\bound{lemma:back-linkage avoids many clusters}{r}{w} & \coloneqq 2w.
	\end{align*}
	
	\begin{lemma}
		\label{lemma:back-linkage avoids many clusters}
		Let $w, \ell$ be integers, let $\Brace{\mathcal{S}, \mathscr{P}}$ be a path of well-linked sets of length $\ell' \geq \bound{lemma:back-linkage avoids many clusters}{\ell'}{\ell}$ and width $w' \geq \bound{lemma:back-linkage avoids many clusters}{w'}{w}$ with a partial back-linkage $\mathcal{R}$ of order $r \geq \bound{lemma:back-linkage avoids many clusters}{r}{w}$ in a digraph $D$. 
		If there is a $0 \leq i \leq \ell' - \ell + 1$ such that $\mathcal{R}$ is internally disjoint from $\SubPOSS{\Brace{\mathcal{S}, \mathscr{P}}}{i}{i + \ell - 1}$, then $D$ contains a cycle of well-linked sets of length $\ell$ and width $w$ as a subgraph.
	\end{lemma}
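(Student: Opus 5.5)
We take the $\ell$ consecutive clusters $S_i, S_{i+1}, \dots, S_{i+\ell-1}$ as the clusters of the cycle. The connecting linkages are the restrictions of $\mathcal{P}_i, \dots, \mathcal{P}_{i+\ell-2}$, and the closing linkage comes from $\mathcal{R}$ via \cref{obs:restricting_length_poss}. Set $j \coloneqq i+\ell-1$; this is at most $\ell'$ by the assumption on $i$. Since $w' \geq 2w$ and $r \geq 2w$, we may first discard paths of $\mathcal{R}$ to assume it has order exactly $2w$, and apply \cref{obs:restricting_length_poss} with this $i$ and $j$. It yields a $B(S_j)$-$A(S_i)$-linkage $\mathcal{R}'$ of order $w$ with
\[
\Subgraph{D}{\mathcal{R}'} \cap \SubPOSS{(\mathcal{S}, \mathscr{P})}{i}{j} \subseteq \Subgraph{D}{\mathcal{R} \cup \Start{\mathcal{R}'} \cup \End{\mathcal{R}'}}.
\]
Because $\mathcal{R}$ is internally disjoint from $\SubPOSS{(\mathcal{S}, \mathscr{P})}{i}{j}$, the only vertices of this sub-path shared with $\mathcal{R}'$ are vertices of $\mathcal{R}$ lying in the sub-path, namely endpoints of $\mathcal{R}$. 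The endpoints of $\mathcal{R}$ lie in $B(S_{\ell'})$ and $A(S_0)$, so they can only be in the sub-path if $j=\ell'$ or $i=0$. In that case they are exactly the vertices of $\Start{\mathcal{R}'}\cup\End{\mathcal{R}'}$ that the construction of \cref{obs:restricting_length_poss} keeps trivial. One has to check this carefully, since the linkages $\mathcal{L}_A,\mathcal{L}_B$ used there are themselves internally disjoint from the sub-path. So $\mathcal{R}'$ is internally disjoint from the sub-path.

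Next we shrink the width. Let $B_j \coloneqq \Start{\mathcal{R}'} \subseteq B(S_j)$ and $A_i \coloneqq \End{\mathcal{R}'} \subseteq A(S_i)$, both of size $w$. Apply \cref{obs:restricting_width_poss} to the path of well-linked sets $\SubPOSS{(\mathcal{S}, \mathscr{P})}{i}{j}$, which has length $\ell-1$ and width $w' \geq w$. We obtain a path of well-linked sets $(\mathcal{S}'=(S'_0,\dots,S'_{\ell-1}), \mathscr{P}'=(\mathcal{P}'_0,\dots,\mathcal{P}'_{\ell-2}))$ of width $w$ and length $\ell-1$ with $A(S'_0)=A_i$, $B(S'_{\ell-1})=B_j$, and all clusters and linkages contained in the original ones. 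We then set $\mathcal{P}'_{\ell-1} \coloneqq \mathcal{R}'$.

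Finally we check \cref{def:cycle-of-well-linked-sets}, using the remark after it that a cycle of well-linked sets is a path of well-linked sets of length $\ell-1$ plus a linkage from the last $B$-set to the first $A$-set that is internally disjoint from it. $\mathcal{R}'$ is a $B(S'_{\ell-1})$-$A(S'_0)$-linkage of order $w$. It is internally disjoint from $(\mathcal{S}',\mathscr{P}')$, because that structure is a subgraph of $\SubPOSS{(\mathcal{S}, \mathscr{P})}{i}{j}$ and $\mathcal{R}'$ is internally disjoint from the latter. Therefore it is disjoint from the other linkages and internally disjoint from $S'_0$ and $S'_{\ell-1}$. When $\ell=1$, the single cluster $S_i$ closes on itself via $\mathcal{R}'$, which is consistent with the definition.

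The main obstacle is the first step: justifying from the inclusion in \cref{obs:restricting_length_poss} that $\mathcal{R}'$ meets the sub-path only in its own endpoints. This is delicate exactly in the boundary cases $i=0$ or $j=\ell'$, where $\mathcal{R}$ touches the sub-path at its endpoints. I would handle this by inspecting the proof of the observation rather than only its statement: there $\mathcal{L}_A$ and $\mathcal{L}_B$ are trivial in exactly these cases. The rest is bookkeeping.
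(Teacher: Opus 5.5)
Your proposal is correct and follows the paper's proof essentially step for step. You obtain $\mathcal{R}'$ of order $w$ from \cref{obs:restricting_length_poss}, shrink $\SubPOSS{(\mathcal{S},\mathscr{P})}{i}{i+\ell-1}$ to width $w$ via \cref{obs:restricting_width_poss} with end sets $\Start{\mathcal{R}'}$ and $\End{\mathcal{R}'}$, and close it with $\mathcal{R}'$. The boundary case $i=0$ or $i+\ell-1=\ell'$ that you flag is one the paper passes over in a single line, and your fix works; the precise form of your ``exactly'' claim is this. When $\mathcal{L}_B$ (resp.\ $\mathcal{L}_A$) is trivial, a vertex of $\Start{\mathcal{R}}$ (resp.\ $\End{\mathcal{R}}$) has in-degree (resp.\ out-degree) zero in $\ToDigraph{\mathcal{L}}$, so any path of $\mathcal{R}'$ through it must start (resp.\ end) there.
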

	\begin{proof}
		Let $D' \coloneqq \SubPOSS{\Brace{\mathcal{S}, \mathscr{P}}}{i}{i + \ell - 1}$, $\Brace{ S_0, S_1, \dots, S_{\ell'}	} \coloneqq \mathcal{S}$ and $\Brace{ \mathcal{P}_0, \mathcal{P}_1, \dots, \mathcal{P}_{\ell' - 1}} \coloneqq \mathscr{P}$.
		There is a $B(S_{i + \ell - 1})$-$A(S_i)$-linkage $\mathcal{R}'$ of order $r / 2 = w$ such that $\V{\mathcal{R}'} \cap \V{D'} \subseteq \V{\mathcal{R}} \cup \Start{\mathcal{R}'} \cup \End{\mathcal{R}'}$ due to~\cref{obs:restricting_length_poss}.
		As $\mathcal{R}$ is internally disjoint from $D'$, the linkage $\mathcal{R}'$ is a partial back-linkage for $D'$ of order $w$ which is also internally disjoint from $D'$.
		By~\cref{obs:restricting_width_poss}, $D'$ contains a path of well-linked sets $(\mathcal{S}' = \Brace{ S'_0, S'_1, \dots, S'_{\ell - 1}}, \mathscr{P}' = \Brace{ \mathcal{P}'_0, \mathcal{P}'_1, \dots, \mathcal{P}'_{\ell - 2}})$ of width $w$ and length $\ell - 1$ such that $S_j' \subseteq S_{i+j}$ for all $0 \leq j \leq \ell - 1$ and $\mathcal{P}_j' \subseteq \mathcal{P}_{i+j}$ for all $0 \leq j \leq \ell - 2$.
		Additionally, $A(S_0') = \End{\mathcal{R}'}$ and $B(S_{\ell -1}') = \Start{\mathcal{R}'}$.
		Hence, by definition, $\Brace{\mathcal{S}', \Brace{ \mathcal{P}'_0, \mathcal{P}'_1, \dots, \mathcal{P}'_{\ell - 1}, \mathcal{R}'}}$ is a cycle of well-linked sets of width $w$ and length $\ell$.
	\end{proof}

	If the back-linkage intersects most of the clusters of the path of well-linked sets, we need to control how these intersections happen.
	In particular, we show that we can obtain a path of well-linked sets and a back-linkage which intersects the clusters from the end to the beginning in a (mostly) ordered fashion.
	To describe this property precisely, we use the notion of \emph{jumps}, defined below.

	\begin{definition}
		\label{def:jumps}
		Let $\Brace{\mathcal{S},\mathscr{P}}$ be a path of well-linked sets of length $\ell$.
		A \emph{jump} of length $k$ over $\Brace{\mathcal{S},\mathscr{P}}$ is a path $R$ with $\Start{R} \in \V{S_i} \cup \V{\mathcal{P}_i}$ and $\End{R} \subseteq \V{S_j} \cup \V{\mathcal{P}_j}$ (if $j = \ell$, we require $\End{R} \subseteq \V{S_j}$ instead) such that $\Abs{j - i} = k$.
		If $i < j$, then $R$ is a \emph{forward} jump.
		If $i \geq j$ and $R$ is internally disjoint from $\Brace{\mathcal{S}, \mathscr{P}}$, then $R$ is a \emph{backward} jump, see~\cref{fig:jumps} for an illustration.
	\end{definition}
	\begin{figure}[!ht]
		
		\begin{center}
		\begin{tikzpicture}
			\tikzstyle{cluster} = [draw, rectangle, minimum height = 2cm, minimum width = 1cm, thick, gray!80!white, rounded corners]
			\newcommand*{\linkage}[2]{
								\node (aux1) at ($(#1)+(0.5cm,0)$) {};
				\node (aux2) at ($(#2)+(-0.5cm,0)$) {};
				
				\draw[directededge,gray!80!white] ($(aux1)+(0,0)$) to ($(aux2)+(0,0)$);
				\draw[directededge,gray!80!white] ($(aux1)+(0,0.3)$) to ($(aux2)+(0,0.3)$);
				\draw[directededge,gray!80!white] ($(aux1)+(0,0.6)$) to ($(aux2)+(0,0.6)$);
				\draw[directededge,gray!80!white] ($(aux1)+(0,-0.3)$) to ($(aux2)+(0,-0.3)$);
				\draw[directededge,gray!80!white] ($(aux1)+(0,-0.6)$) to ($(aux2)+(0,-0.6)$);
				
			}
			\node (center) at (0,0) {};
			\def\clusterDist{2}
			
			\node[cluster] (S1) at ($(center)+(-3.5*\clusterDist,0)$) {};
			\node[cluster] (S2) at ($(S1)+(\clusterDist,0)$) {};
			\node[cluster] (S3) at ($(S2)+(\clusterDist,0)$) {};
			\node[cluster] (S4) at ($(S3)+(\clusterDist,0)$) {};
			\node[cluster] (S5) at ($(S4)+(\clusterDist,0)$) {};
			\node[cluster] (S6) at ($(S5)+(\clusterDist,0)$) {};
			\node[cluster] (S7) at ($(S6)+(\clusterDist,0)$) {};
			\node[cluster] (S8) at ($(S7)+(\clusterDist,0)$) {};
			
			\linkage{S1}{S2}
			\linkage{S2}{S3}
			\linkage{S3}{S4}
			\linkage{S4}{S5}
			\linkage{S5}{S6}
			\linkage{S6}{S7}
			\linkage{S7}{S8}
			
			\node (B1-start) at ($(S4)+(0,0.6)$) {};
			\node (B1-end) at ($(S1)+(0,0.6)$) {};
			\node (B1-label) at ($(B1-start)!0.5!(B1-end)+(0,1)$) {\textcolor{myRed}{$B_1$}};
			\node (B2-start) at ($(S7)+(0.4*\clusterDist,0.6)$) {};
			\node (B2-end) at ($(S6)+(0.4*\clusterDist,0.6)$) {};
			\node (B2-label) at ($(B2-start)!0.5!(B2-end)+(0,1.1)$) {\textcolor{myRed}{$B_2$}};
			
			\node (J1-start) at ($(S3)+(0,-0.3)$) {};
			\node (J1-m1) at ($(S4)+(-0.2,-0.3)$) {};
			\node (J1-m2) at ($(S4)+(0.2,-0.3)$) {};
			\node (J1-end) at ($(S5)+(0,-0.3)$) {};
			\node (J1-label) at ($(J1-start)!0.5!(J1-end)+(0,-1.5)$) {\textcolor{myOrange}{$J_1$}};
			
			\node (J2-start) at ($(S5)+(0.5*\clusterDist,-0.6)$) {};
			\node (J2-end) at ($(S8)+(0,-0.6)$) {};
			\node (J2-label) at ($(J2-start)!0.5!(J2-end)+(0,-1)$) {\textcolor{myOrange}{$J_2$}};
			
			\draw[directededge,myRed,out=135,in=45] (B1-start.center) to (B1-end.center);
			\draw[directededge,myRed,out=90,in=80] (B2-start.center) to (B2-end.center);
			
			\draw[directededge, myLightBlue, dashed,out=185,in=355] (B2-end.center) to (B1-start.center);
			
			\draw[directededge, myOrange,out=10,in=170] (J1-start.center) to (J1-m1.center);
			\draw[directededge, myOrange, out=250, in=290, looseness=10] (J1-m1.center) to (J1-m2.center);
			\draw[directededge, myOrange,out=10,in=170] (J1-m2.center) to (J1-end.center);
			
			\draw[directededge,myOrange,out=290,in=240] (J2-start.center) to (J2-end.center);
		\end{tikzpicture}
		\end{center}
		\caption{A path of well-linked sets with two backward jumps $B_1$ and $B_2$ and two forward jumps $J_1$ and $J_2$.
		Note that even if the \textcolor{myLightBlue}{dashed blue path} exists, the two jumps cannot be combined into one, as the result would not be internally disjoint from the path of well-linked sets.
		This is not required for forward jumps, as $J_1$ shows.}
		\label{fig:jumps}
	\end{figure}
	
		Note that while a backward jump must be internally disjoint from the path of well-linked sets, we do not require this from a forward jump.
    In fact, a forward jump could simply be a subpath of a path $R \in \RRR$ which is fully contained inside the path of well-linked sets $(\SSS, \PPPP)$.

	By excluding forward jumps, we guarantee that the back-linkage must intersect the clusters of a path of well-linked sets in an ordered fashion.
	Additionally excluding backward jumps of length greater than one ensures that the back-linkage intersects all clusters.
	These properties will be useful later when we construct a web using the back-linkage and the path of well-linked sets.

	\begin{definition}
		\label{def:cluster_by_cluster}
		Let $\Brace{\mathcal{S},\mathscr{P}}$ be a path of well-linked sets and let $\mathcal{R}$ be a partial back-linkage for $\Brace{\mathcal{S}, \mathscr{P}}$.
		We say that $\mathcal{R}$ intersects $\Brace{\mathcal{S},\mathscr{P}}$ \emph{cluster-by-cluster} if $\mathcal{R}$ does not contain any forward or backward jump of length greater than one over $\Brace{\mathcal{S},\mathscr{P}}$.
	\end{definition}
	
	Note that even if $\RRR$ intersects $(\SSS, \PPPP)$ cluster-by-cluster, this does not imply that the paths in $\RRR$ visit the clusters strictly in reverse order $S_\ell, S_{\ell-1}, \dots, S_0$.
	It is still possible that a path in $\RRR$ intersects a cluster $S_i$, then goes back to $S_{i+1}$, and then intersects $S_i$ again.
	So the paths in $\RRR$ can go back and forth between two consecutive clusters numerous times.
	However, once $R$ hits a vertex in $S_{i-1}$, it can no longer go back to $S_{i+1}$.

	To obtain a back-linkage intersecting our path of well-linked sets cluster-by-cluster, we start with a back-linkage $\mathcal{R}$ which is weakly minimal with respect to the path of well-linked sets.

	The weak-minimality property allows us to exclude long forward jumps in the case where we do not have long backward jumps, since we can use the well-linkedness of the clusters to construct a linkage	which would contradict the weak-minimality property.

  We show how to eliminate backward jumps in the second step, proved in~\cref{lemma:coss-or-back-linkage-intersects-cluster-by-cluster} below.
			
	\begin{lemma}
		\label{lemma:back-linkage intersects cluster by cluster}
		Let $\Brace{\mathcal{S}, \mathscr{P}}$ be a strict		\footnote{In order to properly use the intersections of the back-linkage with the path of well-linked sets, we need the path of well-linked sets to be \emph{strict} as defined in~\cref{def:path-of-well-linked-sets}.
		Since we can always take a path of well-linked sets to be strict without losing width or height, we often implicitly assume that the path of well-linked sets we construct is strict.}
		path of well-linked sets of length \boundDef{lemma:back-linkage intersects cluster by cluster}{\ell}{j,\ell,m} $\ell' \geq \bound{lemma:back-linkage intersects cluster by cluster}{\ell}{j,\ell,m} \coloneqq 3j\ell m$ and width $w$ in a digraph $D$ and let $\mathcal{R}$ be a partial back-linkage of order at least $w$ for $\Brace{\mathcal{S}, \mathscr{P}}$ which is weakly $m$-minimal with respect to $\Brace{\mathcal{S}, \mathscr{P}}$ and does not induce any backward jumps of length $j$ or more.
		Then, there is a path of well-linked sets $\Brace{\mathcal{S}', \mathscr{P}'}$ of length $\ell$ and width $w$ within $\ToDigraph{\Brace{\mathcal{S}, \mathscr{P}}}$ with a back-linkage $\mathcal{R}' \subseteq \mathcal{R}$ such that $\mathcal{R}'$ intersects $\Brace{\mathcal{S}', \mathscr{P}'}$ cluster-by-cluster.
	\end{lemma}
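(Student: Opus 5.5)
The plan is to merge consecutive clusters into blocks. This absorbs short backward jumps and lets weak minimality rule out long forward jumps. I would first show that $\mathcal{R}$ has no forward jump of length at least some threshold $t$ of order $jm$. I would then group the $\ell'$ clusters into consecutive blocks of $t$ clusters each, and take the path of well-linked sets whose clusters are these blocks. Finally I would keep every third block (or similar) so that jumps between non-adjacent kept blocks are impossible.

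\textbf{Step 1 (no long forward jumps).} Suppose some $R \in \mathcal{R}$ contains a forward jump $R'$ from $S_i\cup\mathcal{P}_i$ to $S_{i'}\cup\mathcal{P}_{i'}$ with $i' - i \geq 3jm$. Since $\mathcal{R}$ has no backward jumps of length $j$ or more, every path of $\mathcal{R}$ moves backwards by fewer than $j$ clusters per excursion outside $(\mathcal{S},\mathscr{P})$. Consider the window of clusters strictly between $i$ and $i'$. Any arc $e$ of $R'$ not in $(\mathcal{S},\mathscr{P})$ must be part of such a bounded backward excursion or a forward hop. So $R'$ contains some arc $e \notin \E{(\mathcal{S},\mathscr{P})}$ whose tail lies in a cluster region $a$ and whose head lies in a region $b$ with $b \geq a+2$ (a forward move), or some arc leaving the path of sets. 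In the latter case I would pick the arc so that both sides of it still span many clusters.

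Weak $m$-minimality gives a $\V{R_1}$-$\V{R_2}$-separator $X$ of size at most $m-1$ in $(\mathcal{R}\cup(\mathcal{S},\mathscr{P}))-e$, where $R = R_1\cdot e\cdot R_2$. Because $X$ has fewer than $m$ vertices, among any $m$ disjoint groups of consecutive clusters one group avoids $X$. Inside $(\mathcal{S},\mathscr{P})$, \cref{lem:poss-simple-routing} gives $w$ disjoint paths through each cluster. Hence $R_1$, which reaches a cluster far enough before the head side, can be routed through an $X$-free stretch of clusters to $R_2$. Here I use that $R_1$ and $R_2$ each meet at least one full cluster region, via strictness, and that the backward excursions span fewer than $j$ clusters. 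This contradicts the separator. The window length $3jm$ is chosen so that there are $m$ disjoint windows, each of $3j$ clusters, large enough to absorb the backward excursions of length below $j$ on either side.

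\textbf{Step 2 (merging).} Set $t=3jm$ and define block $T_k$ to consist of clusters $S_{kt},\dots,S_{(k+1)t-1}$ together with the linkages between them. Take $A(T_k)=A(S_{kt})$ and $B(T_k)=B(S_{(k+1)t-1})$. By \cref{lem:poss-simple-routing}, $A(T_k)$ is well-linked to $B(T_k)$ inside $T_k$, so these blocks with the connecting linkages $\mathcal{P}_{(k+1)t-1}$ form a path of well-linked sets of length $\ell$ and width $w$. Under this coarsening, a backward jump over the new structure has length at most $\lceil j/t\rceil \le 1$. By Step 1, a forward jump has length at most $1$ as well. The jump definition also counts $\mathcal{P}_i$ together with $S_i$, which is consistent with putting the connecting linkages into the blocks. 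Finally take $\mathcal{R}'$ to be $\mathcal{R}$ restricted to $w$ paths with endpoints in the new end sets. Shrinking the end sets is harmless by \cref{obs:restricting_width_poss} applied to a path of sets whose end sets are the endpoints of $\mathcal{R}$.

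The main obstacle is Step 1. The difficulty is turning the separator of size below $m$ into a contradiction using only the well-linkedness of clusters. A separator vertex can lie on $\mathcal{R}$ itself, and $R_1$ might touch only a few clusters before $e$. I would handle this by choosing $e$ as the last non-$(\mathcal{S},\mathscr{P})$ arc before $R$ re-enters far ahead. I would then apply the pigeonhole principle over $m$ disjoint cluster windows to find one window disjoint from $X$, through which the required path can be routed.
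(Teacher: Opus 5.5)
\textbf{Gap in Step~1: the choice of $e$.} You claim the forward jump $R'$ must contain an arc outside $(\mathcal{S},\mathscr{P})$. That is false. A forward jump may be a subpath of $R$ lying entirely inside $(\mathcal{S},\mathscr{P})$ (the paper stresses this right after defining jumps), and weak $m$-minimality says nothing about arcs of $(\mathcal{S},\mathscr{P})$. Your fallback, ``the last non-$(\mathcal{S},\mathscr{P})$ arc before $R$ re-enters far ahead,'' does not address this case: the far end is then reached inside $(\mathcal{S},\mathscr{P})$, and there is no re-entry there.

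The paper's fix is to extend the jump backwards along $R$ as long as the arcs lie in $(\mathcal{S},\mathscr{P})$. Arcs of $(\mathcal{S},\mathscr{P})$ never decrease the cluster index, and $R$ starts in $B(S_{\ell'})$. So the extension stops at an arc $e_J$ not in $(\mathcal{S},\mathscr{P})$ entering the extended jump, and one splits $R$ at $e_J$. Equivalently, take $e$ to be the last arc of $R$ outside $(\mathcal{S},\mathscr{P})$ before the far end of the jump; when such an arc lies on $R'$, any one of them works.

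\textbf{Missing covering fact.} The fact that actually produces the contradiction is not pinned down in your proposal; your worry that ``$R_1$ might touch only a few clusters before $e$'' shows this. $\mathcal{R}$ can decrease its cluster index only through backward jumps of length $<j$. Hence the prefix before $e$, which starts in $B(S_{\ell'})$, meets every block of $j$ consecutive clusters from the start of the extended jump up to $\ell'$. Likewise the suffix after $e$, which passes through the far end of the jump and ends in $A(S_0)$, meets every such block below the far end. So in each of your $m$ windows of $3j$ clusters, the prefix meets the first $j$ clusters and the suffix meets the last $j$. Strictness together with \cref{lem:linkage_inside_poss} then gives a path between them inside the window. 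With these two points, your separator-plus-pigeonhole argument is equivalent to the paper's construction of $m$ disjoint paths from window $3i-2$ to window $3i$.

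\textbf{Step~2 is a different and valid route.} The paper keeps only the clusters $S_{3kmj}$ and routes new linkages through the skipped clusters. You instead merge each block of $t=3jm$ consecutive clusters into one cluster $T_k$, with $A(T_k)=A(S_{kt})$ and $B(T_k)=B(S_{(k+1)t-1})$; these are well-linked by \cref{lem:poss-simple-routing}. The advantage is that the coarsened structure has the same underlying digraph as $(\mathcal{S},\mathscr{P})$. A backward jump over it is therefore literally an old backward jump, of old length $<j\le t$ and hence new length at most $1$. A new forward jump of length at least $2$ has old length at least $t+1$, which Step~1 excludes. With subsampling, the new structure is a proper subgraph, so a segment of $\mathcal{R}$ internally disjoint from it need not be internally disjoint from $(\mathcal{S},\mathscr{P})$, and relating jump lengths needs more care.

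Two small fixes:
\begin{itemize}
\item With only $\ell'\ge 3j\ell m$ you do not get $\ell+1$ full blocks. Let the last block be $S_{\ell t},\dots,S_{\ell'}$, possibly just $S_{\ell'}$. Its $B$-set is then $B(S_{\ell'})$, so $\mathcal{R}$ itself is the required back-linkage and no restriction step is needed.
\item Drop ``keep every third block'' from your overview. It is unnecessary, and discarding blocks would reintroduce exactly the subgraph issue above.
\end{itemize}
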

	\begin{proof}
		Let $\Brace{S_0, S_1, \dots, S_{\ell'}} \coloneqq \mathcal{S}$ and $\Brace{\mathcal{P}_0, \mathcal{P}_1, \dots, \mathcal{P}_{\ell' - 1}} \coloneqq \mathscr{P}$.
		First, we prove that $\mathcal{R}$ contains no forward jumps of length more than $3mj$.
		Suppose there were a path $R \in \mathcal{R}$ containing a forward jump $J$ of length more than $3mj$ with $\Start{J} \in \V{S_{s}} \cup \V{\mathcal{P}_{s}}$ and $\End{J} \in \V{S_{t}} \cup \V{\mathcal{P}_{t}}$, for some $s$ smaller than $t$.
		Let $R_a \cdot J \cdot R_b \coloneqq R$ be a decomposition of $R$ into subpaths. 
		Let $e_J \in E(R)$ be the arc of $R$ with its tail in $R_a$ and whose head is the first vertex of $J$.         
		Going to a longer forward jump containing $J$ if necessary, we may assume w.l.o.g.~that $e_J$ is not contained in $(\SSS, \PPPP)$. 
		
		For each $1 \leq i \leq 3m$ let $\mathcal{H}_i = \Set{S_{s + (i-1)j + k} \cup \ToDigraph{\mathcal{P}_{s + (i-1)j + k}} \mid 0 \leq k \leq j - 1}$.
		As $\mathcal{R}$ does not contain any backward jumps of length $j$ or more, for every $1 \leq i \leq 3m$ there is a subgraph $H^a_i \in \mathcal{H}_i$ which intersects $R_a$ and there is an $H^b_i \in \mathcal{H}_i$ which intersects $R_b$.
		
		For each $1 \leq i \leq m$, let $v_i^a$ be an arbitrary vertex of $\V{\mathcal{H}^a_{3i - 2}} \cap \V{R_a}$, let $v_i^b$ be an arbitrary vertex of $\V{\mathcal{H}^b_{3i}} \cap \V{R_b}$ and let $L_i$ be a $v_i^a$-$v_i^b$-path inside $\SubPOSS{\Brace{\mathcal{S}, \mathscr{P}}}{3i-2}{3i}$.
		By~\cref{lem:linkage_inside_poss}\cref{case:linkage_inside_poss_scattered_scattered}, such a path $L_i$ exists.
		Note that $L_i$ is disjoint from $L_j$ for all $1 \leq i,j \leq m$ where $i \neq j$.
		Thus, $\mathcal{L} = \Set{L_i \mid 1 \leq i \leq m}$ is a $\V{R_a}$-$\V{R_b}$-linkage of order $m$ in $\Brace{\mathcal{S}, \mathscr{P}}$ which does not contain the arc $e_J$ defined above, a contradiction to the assumption that $\mathcal{R}$ is weakly $m$-minimal with respect to $\Brace{\mathcal{S}, \mathscr{P}}$.
		Thus, $\mathcal{R}$ does not contain any forward jumps of length greater than $3mj$.
		
		Second, we construct the desired path of well-linked sets.
		For each $0 \leq k < \ell$, let $S'_k = S_{3kmj}$ and let $\mathcal{P}_k'$ be a $B(S_{3kmj})$-$A(S_{3(k+1)mj})$-linkage of order $m$ inside the path of well-linked sets $\SubPOSS{\Brace{\mathcal{S}, \mathscr{P}}}{3kmj}{3(k+1)mj}$.
		Further, let $S'_\ell = S_{\ell'}$ and $\mathcal{P}'_\ell$ be a $B(S_{3(\ell - 1)mj})$-$A(S_{\ell'})$-linkage of order $w$ inside $\SubPOSS{\Brace{\mathcal{S}, \mathscr{P}}}{3(\ell - 1)mj}{\ell'}$.
		By~\cref{lem:linkage_inside_poss}\cref{case:linkage_inside_poss_B_A}, such linkages $\mathcal{P}_k'$ exist.
		
		Let $\mathcal{S}' = \Brace{S'_0, S'_1, \ldots, S'_\ell}$ and let $\mathscr{P}' = \Brace{\mathcal{P}'_0, \mathcal{P}'_{3mj}, \ldots \mathcal{P}'_{(\ell - 1)3mj}}$.
		Note that $\Start{\mathcal{R}} \subseteq \V{S'_\ell}$ and $\End{\mathcal{R}} \subseteq \V{S'_0}$.
		By construction, $\Brace{\mathcal{S}',\mathscr{P}'}$ is a path of well-linked sets of width $w$ and length $\ell$.
		Furthermore, every jump over $\Brace{\mathcal{S}',\mathscr{P}'}$ of length $j'$, for some $j'$, in $\mathcal{R}$ is a jump over $\Brace{\mathcal{S}',\mathscr{P}'}$ of length $3mj'$.
		Hence, $\mathcal{R}$ does not contain any forward jumps or backward jumps of length greater than one over $\Brace{\mathcal{S}',\mathscr{P}'}$.
		Finally, the linkage $\mathcal{R}$ is a back-linkage for $\Brace{\mathcal{S}',\mathscr{P}'}$ and $\mathcal{R}$ intersects $\Brace{\mathcal{S}', \mathscr{P}'}$ cluster-by-cluster.
	\end{proof}

	In order to exclude long backward jumps so that we can apply
	\cref{lemma:back-linkage intersects cluster by cluster},
	we try to construct a cycle of well-linked sets from the backward jumps.
	If we have many very long backward jumps, then these jumps essentially act like a back-linkage
	in the context of \cref{lemma:back-linkage avoids many clusters}.
	
	We define
	\begin{align*}
		\boundDefAlign{lemma:coss-or-back-linkage-intersects-cluster-by-cluster}{\ell'}{w_1, \ell_1, \ell_2, m}\bound{lemma:coss-or-back-linkage-intersects-cluster-by-cluster}{\ell'}{w_1,
			\ell_1, \ell_2, m} & \coloneqq 3 \ell_2 m ((\ell_1 + 3) (3\ell_2 m)^{w_1} + 6 \frac{(3\ell_2 m)^{w_1} - 1}{w_1 - 1}), \\[0em]
		\boundDefAlign{lemma:coss-or-back-linkage-intersects-cluster-by-cluster}{w'}{w_1, w_2}\bound{lemma:coss-or-back-linkage-intersects-cluster-by-cluster}{w'}{w_1,
			w_2} & \coloneqq 2w_2 + w_1.
	\end{align*}
	Observe that \(\bound{lemma:coss-or-back-linkage-intersects-cluster-by-cluster}{\ell'}{w_1,\ell_1,\ell_2,m} \in \PowerTower{1}{\Polynomial{2}{w_1,\ell_1,\ell_2,m}}\).
	\begin{lemma}
		\label{lemma:coss-or-back-linkage-intersects-cluster-by-cluster}
		Let $\ell_1, w_1, \ell_2, w_2$ be integers, let \((\mathcal{S} = \Brace{S_0, S_1, \dots S_{\ell'}}, \mathscr{P} = (\mathcal{P}_0, \mathcal{P}_1, \dots, \mathcal{P}_{\ell' - 1}))\) be a strict path of well-linked sets of length $\ell' \geq \bound{lemma:coss-or-back-linkage-intersects-cluster-by-cluster}{\ell'}{w_1, \ell_1, \ell_2, m}$ and width $w' \geq \bound{lemma:coss-or-back-linkage-intersects-cluster-by-cluster}{w'}{w_1, w_2}$ with a partial back-linkage $\mathcal{R}$ of order at least $w_2$ in a digraph $D$ such that $\mathcal{R}$ is weakly $m$-minimal with respect to $\Brace{\mathcal{S}, \mathscr{P}}$.
		Then $D$ contains at least one of the following:
		\begin{enamerate}{C}{item:coss-or-back-linkage-cluster-by-cluster:back-linkage}
			\item 
			\label{item:coss-or-back-linkage-cluster-by-cluster:coss}
			a cycle of well-linked sets of length $\ell_1$ and width $w_1$, or
			\item 
			\label{item:coss-or-back-linkage-cluster-by-cluster:back-linkage}
			a path of well-linked sets of length $\ell_2$ and width $w_2$ together with a partial back-linkage $\mathcal{R}' \subseteq \mathcal{R}$ of order $w_2$ intersecting it cluster-by-cluster.
		\end{enamerate}
	\end{lemma}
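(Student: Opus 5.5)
We argue by an iterative refinement on the length, tracking how many paths of the back-linkage have been ``used up'' as long backward jumps. The shape of the bound $\bound{lemma:coss-or-back-linkage-intersects-cluster-by-cluster}{\ell'}{w_1,\ell_1,\ell_2,m}$, a geometric sum in $3\ell_2 m$ with $w_1$ terms, suggests at most $w_1$ rounds, each shrinking the length by a factor of roughly $3\ell_2 m$. Set $j \coloneqq \ell_1 + 3$ (or a similar value of order $\ell_1$) as the threshold for a backward jump to count as ``long''. The key dichotomy at each round is provided by \cref{lemma:back-linkage intersects cluster by cluster}. If the current back-linkage $\mathcal{R}$ induces no backward jump of length $j$ or more over the current path of well-linked sets, and the length is at least $3j\ell_2 m$, that lemma directly yields \cref{item:coss-or-back-linkage-cluster-by-cluster:back-linkage}. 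Weak $m$-minimality with respect to the current path is inherited from the original one, since sub-paths of well-linked sets are subgraphs.

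Otherwise there is a long backward jump $J$, a subpath of some $R \in \mathcal{R}$, internally disjoint from the path, going from around cluster $S_b$ back to $S_a$ with $b - a \geq j$. We then restrict attention to the segment $\SubPOSS{(\mathcal{S},\mathscr{P})}{a+1}{b-1}$, or a suitable slightly shorter one. The jump $J$ is disjoint from this segment and will serve as one path of a future back-linkage. Concretely, we maintain a collection $\mathcal{J}$ of backward jumps found so far, all of which skip the whole current segment, and we reserve $|\mathcal{J}|$ units of width for them. This is the reason for the extra $+w_1$ in $w' = 2w_2 + w_1$. The paths of $\mathcal{R}$ containing these jumps are discarded, and the remaining paths are rerouted into a back-linkage of the sub-segment via \cref{obs:restricting_length_poss}. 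The order-$2w_2$ part of the width covers the halving in that observation. We recurse inside the segment.

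When we do recurse, we must choose the segment so that the new length is still large enough. Since the $3\ell_2 m$ blocks structure of \cref{lemma:back-linkage intersects cluster by cluster} only needs length $3j\ell_2 m$, we want each long jump to span at least $(3\ell_2 m)$ times the next required length plus a constant. If a long jump is shorter than that, we instead subdivide into blocks of length about $(3\ell_2m)^{k}$ and use a pigeonhole argument to find a block in which either there is no long jump, so we stop with (C2), or some jump covers the entire block. This recursion depth determines the recurrence $\ell^{(k+1)} = 3\ell_2 m(\ell^{(k)} + 6)$, which unrolls to the stated bound.

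After $w_1$ rounds in which long jumps were found, we have $w_1$ backward jumps, each internally disjoint from a common segment of length at least $\ell_1 + 3$. Their endpoints lie in clusters or linkages just outside this segment. Using \cref{lem:linkage_inside_poss}, with the scattered cases \cref{case:linkage_inside_poss_scattered_A} and \cref{case:linkage_inside_poss_B_scattered} on the margins (which is why about $3$ extra clusters on each side are budgeted), we connect the $B$-set of the last cluster of the segment to the jump starts, and the jump ends to the $A$-set of the first cluster. Combined with \cref{obs:restricting_width_poss}, this produces a path of well-linked sets of width $w_1$ and length $\ell_1 - 1$ with a disjoint back-linkage, that is, \cref{item:coss-or-back-linkage-cluster-by-cluster:coss}, exactly as in \cref{lemma:back-linkage avoids many clusters}.

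The main obstacle is disjointness. The jumps collected in different rounds must be pairwise disjoint and must come from distinct paths of $\mathcal{R}$. Their start and end points may also lie at different depths of the nesting, so the connecting linkages on the margins must avoid both the jumps and the core segment. I would handle this by always taking nested segments strictly inside all previous jumps, and by routing the connectors inside the margin clusters using the scattered cases of \cref{lem:linkage_inside_poss}. That this can be done with all connectors simultaneously disjoint is the most delicate point to verify.
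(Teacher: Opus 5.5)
You have the paper's overall structure: look for long backward jumps, apply \cref{lemma:back-linkage intersects cluster by cluster} when there are none, and after $w_1$ nested jumps close a cycle using the scattered cases of \cref{lem:linkage_inside_poss} and \cref{obs:restricting_width_poss}. However, the two steps that carry the argument are not right as written.

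The first is the threshold. With a fixed threshold $\ell_1+3$ the jumps do not nest. Your fallback is not a valid dichotomy either. Cutting into blocks and using pigeonhole to find a block with no long jump, or a jump covering it, fails when every block contains a jump of length exactly $\ell_1+3$ and there are no longer jumps. No fallback is needed if ``long'' depends on the depth. Put $d_1=\ell_1+3$ and $d_k=3\ell_2 m d_{k-1}+6$, and while $k$ jumps are still missing, count only jumps of length at least $d_k$. If there is none, the current segment has length at least $3\ell_2 m d_k$, so \cref{lemma:back-linkage intersects cluster by cluster} applies with $j=d_k$ and gives \cref{item:coss-or-back-linkage-cluster-by-cluster:back-linkage}. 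If there is one, the part inside it, minus three clusters at each end, has length at least $d_k-6=3\ell_2 m d_{k-1}$. In the example above you simply stop at the top level.

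The second issue is the disjointness step you leave open, and in your formulation it genuinely fails. You search for later jumps in a rerouted linkage over a sub-segment. Such a jump only has to avoid that sub-segment. Through $\mathcal{R}$ itself or the connectors of \cref{obs:restricting_length_poss}, it may pass through the outer part of $(\mathcal{S},\mathscr{P})$ or through endpoints of earlier jumps. It then collides with exactly the margin regions where the final connectors must run.

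Discarding the paths that carry jumps does not fix this, and it is not what the $+w_1$ is for. That term is width, so that the final connectors can have order $w_1$; it is not spare order of $\mathcal{R}$, of which only $w_2$ is assumed.

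The missing idea is to keep every jump a backward jump of the original $\mathcal{R}$ over the original $(\mathcal{S},\mathscr{P})$, with the following nesting. Say $J$ jumps over $J'$ if the start cluster of $J$ is at least two after that of $J'$ and its end cluster at least two before. Fix a maximal chain $J_{w_1-r},\dots,J_{w_1}$ with $J_k$ of length at least $d_k$ and $J_{k'}$ jumping over $J_k$ for $k'>k$.

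Nested jumps cannot share an endpoint, and their interiors avoid $(\mathcal{S},\mathscr{P})$. So they are pairwise vertex-disjoint even when they lie on the same path of $\mathcal{R}$. Their starts, and likewise their ends, are pairwise at least two clusters apart.

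If the chain is full and $J_1$ goes from $S_j$ back to $S_i$, apply \cref{lem:linkage_inside_poss}\cref{case:linkage_inside_poss_B_scattered} between $S_{j-2}$ and $S_{\ell'}$, and \cref{lem:linkage_inside_poss}\cref{case:linkage_inside_poss_scattered_A} between $S_0$ and $S_{i+2}$. This gives connectors in disjoint parts of the path, away from the core between $S_{i+2}$ and $S_{j-2}$. Hence $\mathcal{X}_1\cdot\mathcal{J}\cdot\mathcal{X}_2$ is a linkage, and \cref{obs:restricting_width_poss} finishes. If the chain is not full, the paper uses the rerouted linkage only once, inside the innermost jump of the chain, and invokes maximality of the chain to exclude jumps of length $d_{w_1-r-1}$ there.
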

	\begin{proof}
		We recursively define $d_i$ by $d_1 = \ell_1 + 3$ and $d_i = 3\ell_2 m d_{i - 1} + 6$.
		Solving the recurrence relation, we obtain that $d_{w_1} = ((\ell_1 + 3) (3\ell_2 m)^{w_1} + 6 \frac{(3\ell_2 m)^{w_1} - 1}{w_1 - 1})$ and thus $\ell' \geq 3\ell_2 m d_{w_1}$.
		
		Let $J_1, J_2$ be two backward jumps over $(\mathcal{S}, \mathscr{P})$ and let $x_1,x_2,y_1,y_2$ be such that \(\Start{J_1} \subseteq \V{S_{y_{1}}} \cup \V{\mathcal{P}_{y_1}}, \End{J_1} \subseteq \V{S_{x_1}} \cup \V{\mathcal{P}_{x_1}}, \Start{J_2} \subseteq \V{S_{y_2}} \cup \V{\mathcal{P}_{y_2}}\) and \(\End{J_2} \subseteq \V{S_{x_2}} \cup \V{\mathcal{P}_{x_2}}.\)
		We say that $J_1$ jumps over $J_2$ if $y_1 \geq y_2 + 2$ and $x_1 + 2 \leq x_2$.
		See~\cref{fig:jumps2} for an illustration.
		
		\begin{figure}[!ht]
			\begin{center}
			\begin{tikzpicture}
				\tikzstyle{cluster} = [draw, rectangle, minimum height = 2cm, minimum width = 1cm, thick, gray!80!white, rounded corners]
				\newcommand*{\linkage}[3]{
										\node (aux1) at ($(#1)+(0.5cm,0)$) {};
					\node (aux2) at ($(#2)+(-0.5cm,0)$) {};
					
					\draw[directededge,#3] ($(aux1)+(0,0)$) to ($(aux2)+(0,0)$);
					\draw[directededge,#3] ($(aux1)+(0,0.3)$) to ($(aux2)+(0,0.3)$);
					\draw[directededge,#3] ($(aux1)+(0,0.6)$) to ($(aux2)+(0,0.6)$);
					\draw[directededge,#3] ($(aux1)+(0,-0.3)$) to ($(aux2)+(0,-0.3)$);
					\draw[directededge,#3] ($(aux1)+(0,-0.6)$) to ($(aux2)+(0,-0.6)$);
					
																			}
				\node (center) at (0,0) {};
				\def\clusterDist{2}
				
				\node[cluster] (S1) at ($(center)+(-3.5*\clusterDist,0)$) {};
				\node[cluster,myGreen] (S2) at ($(S1)+(\clusterDist,0)$) {};
				\node (X1-label) at ($(S2)+(0,-1.5)$) {\textcolor{myGreen}{$X_1$}};
				\node[cluster] (S3) at ($(S2)+(\clusterDist,0)$) {};
				\node[cluster,myBlue] (S4) at ($(S3)+(\clusterDist,0)$) {};
				\node (X2-label) at ($(S4)+(0,-1.5)$) {\textcolor{myBlue}{$X_2$}};
				\node[cluster] (S5) at ($(S4)+(\clusterDist,0)$) {};
				\node[cluster,myBlue] (S6) at ($(S5)+(\clusterDist,0)$) {};
				\node (Y2-label) at ($(S6)+(0,-1.5)$) {\textcolor{myBlue}{$Y_2$}};
				\node[cluster] (S7) at ($(S6)+(\clusterDist,0)$) {};
				\node[cluster,myGreen] (S8) at ($(S7)+(\clusterDist,0)$) {};
				\node (Y1-label) at ($(S8)+(0,-1.5)$) {\textcolor{myGreen}{$Y_1$}};
				
				\draw[latex-latex,myOrange,thick] ($(S2.north east) + (0.1,0.1)$) -- ($(S4.north west) + (-0.1,0.1)$) node[midway,above] {$\geq 2$};
				\draw[latex-latex,myOrange,thick] ($(S6.north east) + (0.1,0.1)$) -- ($(S8.north west) + (-0.1,0.1)$) node[midway,above] {$\geq 2$};
				
				\linkage{S1}{S2}{gray!80!white}
				\linkage{S2}{S3}{myGreen}
				\linkage{S3}{S4}{gray!80!white}
				\linkage{S4}{S5}{myBlue}
				\linkage{S5}{S6}{gray!80!white}
				\linkage{S6}{S7}{myBlue}
				\linkage{S7}{S8}{gray!80!white}
				
				\node (J1-label) at ($(S8.north)!0.8!(S2.north)+(0,1.5)$) {\textcolor{myGreen}{$J_1$}};
				\node (J2-label) at ($(S6.north)!0.6!(S4.north)+(0,0.9)$) {\textcolor{myBlue}{$J_2$}};
				
				\draw[directededge,myGreen,out=150,in=30] (S8.north) to (S2.north);
				\draw[directededge,myBlue,out=150,in=30] (S6.north) to (S4.north);
			\end{tikzpicture}
			\end{center}
			\caption{The backward jump $J_1$ jumps over the backward jump $J_2$.}
			\label{fig:jumps2}
		\end{figure}
		
		If $\mathcal{R}$ does not contain any jump of length at least $d_{w_1}$ over $\Brace{\mathcal{S}, \mathscr{P}}$, then by~\cref{lemma:back-linkage intersects cluster by cluster} there is a path of well-linked sets $\Brace{\mathcal{S}' = \Brace{ S'_0, S'_1, \dots, S'_{\ell_2}}, \mathscr{P}'}$ of width $w_2$ and length $\ell_2$ together with a $B(S'_{\ell_2})$-$A(S'_0)$-linkage $\mathcal{R}'$ of order $w_2$ intersecting $\Brace{\mathcal{S}', \mathscr{P}'}$ cluster-by-cluster, satisfying~\cref{item:coss-or-back-linkage-cluster-by-cluster:back-linkage}. 
		
		Otherwise, let $r \in \Set{0, \ldots, w_1 - 1}$ be the highest number for which a set $\mathcal{J} = \Set{J_{w_1 -  r}, \ldots, J_{w_1}}$ of backward jumps over $(\mathcal{S}, \mathscr{P})$ exists such that for every $i \in \Set{w_1 - r, \ldots, w_1}$ and every $i+1 \leq j \leq w_1$, $J_i$ is a backward jump of length at least $d_i$ and $J_j$ jumps over $J_i$. 
		We distinguish between two possible cases.
		
		\textbf{Case 1:} $r < w_1 - 1$.
		
		Then $J_{w_1-r}$ is a backward jump of length at least $d_{w_1 - r}$.
		Let $i,j$ be such that \(\Start{J_{w_1 - r}} \subseteq \V{S_j} \cup \V{\mathcal{P}_j}\) and \(\End{J_{w_1 - r}} \subseteq \V{S_i} \cup \V{\mathcal{P}_i}.\)
		By~\cref{obs:restricting_length_poss}, there is a $B(S_{j-2})$-$A(S_{i+2})$-linkage $\mathcal{R}'$ of order $w_2$ such that $\V{\mathcal{R}'} \cap \V{\SubPOSS{(\mathcal{S}, \mathscr{P})}{i+2}{j-2}} \subseteq \V{\mathcal{R}} \cup \Start{\mathcal{R}'} \cup \End{\mathcal{R}'}$.
		Hence, any backward jump over $\SubPOSS{(\mathcal{S}, \mathscr{P})}{i+3}{j-3}$ contained in $\mathcal{R}'$ is also contained in $\mathcal{R}$.
		Finally, $\mathcal{R}'$ is also weakly $m$-minimal with respect to $\SubPOSS{(\mathcal{S}, \mathscr{P})}{i+3}{j-3}$.
		
		By choice of \(i\) and \(j\), if \(\mathcal{R}'\) contains a backward jump \(J'\) over $\SubPOSS{(\mathcal{S}, \mathscr{P})}{i+3}{j-3}$, then every jump in \(\mathcal{J}\) jumps over \(J'\).
		Since $r$ is maximal, there is no backward jump over $\SubPOSS{(\mathcal{S}, \mathscr{P})}{i+3}{j-3}$ of length at least $d_{w_1 - r - 1}$ in $\mathcal{R}'$.
										And because $j - 3 - (i + 3) \geq d_{w_1 - r} - 6 = 3 \ell_2 m d_{w_1 - r - 1}$, by~\cref{lemma:back-linkage intersects cluster by cluster} there is a path of well-linked sets $(\mathcal{S}', \mathscr{P}')$ of width $w_2$ and length $\ell_2$ together with a partial back-linkage $\mathcal{R}'' \subseteq \mathcal{R}'$ of order $w_2$ intersecting $\Brace{\mathcal{S}', \mathscr{P}'}$ cluster-by-cluster, satisfying~\cref{item:coss-or-back-linkage-cluster-by-cluster:back-linkage}.
		
		\textbf{Case 2:} $r = w_1 - 1$, that is, $w_1-r = 1$.
		
		We construct a linkage $\mathcal{R}'$ as follows.
		Let $i,j$ be such that $\Start{J_{w_1 - r}} \subseteq \V{S_j} \cup \V{\mathcal{P}_j}$ and $\End{J_{w_1 - r}} \subseteq \V{S_i} \cup \V{\mathcal{P}_i}$.
		
		For every two distinct jumps $J_x, J_y \in \mathcal{J}$ we have that $J_x$ jumps over $J_y$ or $J_y$ jumps over $J_x$.
		Further, $\mathcal{J}$ is internally disjoint from $\Brace{\mathcal{S}, \mathscr{P}}$.
		Hence, by~\cref{lem:linkage_inside_poss}\cref{case:linkage_inside_poss_B_scattered}, there is a $B(S_{j-2})$-$\Start{\mathcal{J}}$-linkage $\mathcal{X}_1$ of order $w_1$ in $\SubPOSS{(\mathcal{S}, \mathscr{P})}{j-2}{\ell'}$ which is internally disjoint from $S_{j - 2}$ and from $\mathcal{J}$.
		Additionally, by~\cref{lem:linkage_inside_poss}\cref{case:linkage_inside_poss_scattered_A}, there is an $\End{\mathcal{J}}$-$A(S_{i + 2})$-linkage $\mathcal{X}_2$ of order $w_1$ in $\SubPOSS{(\mathcal{S}, \mathscr{P})}{0}{i+2}$ which is internally disjoint from $S_{i + 2}$ and from $\mathcal{J}$.
		Thus, $\mathcal{R}' \coloneqq \mathcal{X}_1 \cdot \mathcal{J} \cdot \mathcal{X}_2$ is a linkage.
		
		By construction, the linkage $\mathcal{R}'$ above has order at least $w_1$ and is internally disjoint from $\SubPOSS{\Brace{\mathcal{S}, \mathscr{P}}}{i+2}{j-2}$, which is a path of well-linked sets of length $j-2 - (i + 2) \geq \ell_1 - 1$ and width $w_1$.
		Thus, by~\cref{obs:restricting_width_poss}, there is a path of well-linked sets $(\mathcal{S}' = \Brace{S'_0, S'_1, \dots, S'_{\ell_1 - 1}}, \mathscr{P}' = \Brace{\mathcal{P}'_0, \mathcal{P}'_1, \dots, \mathcal{P}'_{\ell - 2}})$ of length $\ell_1 - 1$ and width $w_1$ inside $\ToDigraph{\Brace{\mathcal{S}, \mathscr{P}}}$ such that $B(S'_{\ell_1 - 1}) = \Start{\mathcal{R}'}$ and $A(S'_0) = \End{\mathcal{R}'}$.
		By definition, $\Brace{\mathcal{S}', \Brace{ \mathcal{P}'_0, \mathcal{P}'_1, \dots, \mathcal{P}'_{\ell_1 - 2}, \mathcal{R}'}}$ is a cycle of well-linked sets of length $\ell_1$ and width $w_1$, satisfying~\cref{item:coss-or-back-linkage-cluster-by-cluster:coss}.
	\end{proof}
	
	\subsection{Obtaining a 2-horizontal web}
	\label{subsection:2-horizontal-web}
	
	We use the back-linkage intersecting our
	path of well-linked sets cluster-by-cluster
	in order to construct a new web with additional properties
	which we use later in \cref{sec:cows-from-horizontal-web}.
	Intuitively, we are looking for a web $(\mathcal{H}, \mathcal{V})$ which mimics the cluster-by-cluster property.
	That is, we want to be able to decompose the \emph{horizontal} linkage $\mathcal{H}$ into parts such that $\mathcal{V}$ intersects $\mathcal{H}$ in a horizontally sorted fashion.
	This type of web is called a \emph{horizontal web} and is defined below.
	An example is provided in \cref{fig:q-horizontal-web}.
	
	\begin{definition}
		\label{def:horizontal-web}
		Let $\Brace{\mathcal{H}, \mathcal{V}}$ be a web.
		We say that $\Brace{\mathcal{H}, \mathcal{V}}$ is a \emph{$c$-\HorizontalWeb} if every path $H_i \in \mathcal{H}$ can be decomposed into paths $H_i = H_i^1 \cdot H_i^2 \cdot \ldots \cdot H_i^c$ and every path $V_j \in \mathcal{V}$ can be decomposed into paths $V_j = V_j^1 \cdot V_j^2 \cdot \ldots \cdot V_j^c$ such that $V_j^x \cap H_i \subseteq H_i^{c - x} \cup H_i^{c - x + 1}$ and $V_j^x \cap H_i^{c - x + 1} \neq \emptyset$ for all $1 \leq x \leq c$, where for simplicity we define $H_i^{0}$ to be empty.
	\end{definition}

    \begin{figure}
        \centering
        \includegraphics{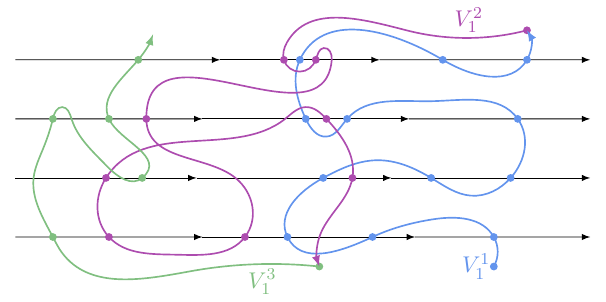}
        \caption{A $(4,1)$-web that is a $3$-horizontal web.
            Four horizontal paths $H_1,H_2,H_3$ and $H_4$ partitioned into three subpaths each and one vertical path $V_1$ partitioned into three subpaths \textcolor{myLightBlue}{$V_1^1$}, \textcolor{myLightViolet}{$V_1^2$} and \textcolor{myLightGreen}{$V_1^3$}.
            \textcolor{myLightBlue}{$V_1^1$} only intersects the latter two subpaths of the horizontal paths.
            \textcolor{myLightViolet}{$V_1^2$} only intersects the first two, note that it always intersects the second subpaths, but not necessarily the first.
            Finally, \textcolor{myLightGreen}{$V_1^3$} only intersects the first subpath of the horizontal paths.
        }
        \label{fig:q-horizontal-web}
    \end{figure}
		
	We start by constructing an ordered web from a back-linkage and a path of well-linked sets.
	
	\begin{lemma}
		\label{lemma:coss_or_ordered_web}
		Let $(\mathcal{S}, \mathscr{P})$ be a strict path of well-linked sets of length $\ell$ and width at least $1$ in a digraph $D$, and let $\mathcal{R}$ be a partial back-linkage of order $r$ intersecting $(\mathcal{S}, \mathscr{P})$ cluster-by-cluster.
		
		If $\ell \geq \bound{lemma:coss_or_ordered_web}{\ell}{r,v} \coloneqq 2(r - 1) + 2r(v - 1)$, then there is a linkage $\mathcal{V} \coloneqq \Brace{ V_1, V_2, \dots, V_{v}}$ of order $v$ inside $\ToDigraph{\Brace{\mathcal{S}, \mathscr{P}}}$ such that $\Brace{\mathcal{R}, \mathcal{V}}$ is an ordered web and for all $1 \leq i \leq v$ there are $0 \leq s_i \leq t_i \leq \ell$ with $V_i \subseteq \SubPOSS{(\mathcal{S}, \mathscr{P})}{s_i}{t_i}$ such that $t_i < s_j$ for all $1 \leq i < j \leq v$.
	\end{lemma}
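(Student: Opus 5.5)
The plan is to cut $(\mathcal{S},\mathscr{P})$ into $v$ consecutive windows of clusters separated by small gaps, and to build inside the $i$-th window one path $V_i$ that visits every path of $\mathcal{R}$. The cluster-by-cluster property of $\mathcal{R}$ should then force every $R\in\mathcal{R}$ to traverse the windows one after another, from the last window to the first. This gives the ordered web, with $\mathcal{V}$ ordered in reverse cluster order.

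\textbf{Step 1 (windows).} Write $\mathcal{R}=\{R_1,\dots,R_r\}$. For $1\le i\le v$ set $s_i\coloneqq 2r(i-1)$ and $t_i\coloneqq s_i+2(r-1)$. Then $t_v=2(r-1)+2r(v-1)\le\ell$, and $t_i<s_{i+1}$ with the single gap index $t_i+1=s_{i+1}-1$ between consecutive windows.

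\textbf{Step 2 (each $R$ meets every cluster).} $\mathcal{R}$ is a $B(S_\ell)$-$A(S_0)$-linkage, so each $R_a$ starts in $S_\ell$ and ends in $S_0$. Since it has no backward jump of length greater than one, each $R_a$ meets every $S_k\cup\mathcal{P}_k$. Using strictness, I expect to get a vertex of $R_a$ in every $S_k$, or at least in $S_k\cup\mathcal{P}_k$, which is all that \cref{lem:linkage_inside_poss} needs.

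\textbf{Step 3 (building $V_i$).} Pick $x_a\in V(R_a)\cap(S_{s_i+2(a-1)}\cup\mathcal{P}_{s_i+2(a-1)})$ for $1\le a\le r$. For each $a<r$, apply \cref{lem:linkage_inside_poss}\cref{case:linkage_inside_poss_scattered_scattered} with $X=\{x_a\}$ and $Y=\{x_{a+1}\}$, whose indices differ by $2$. This gives an $x_a$-$x_{a+1}$-path inside $\SubPOSS{(\mathcal{S},\mathscr{P})}{s_i+2(a-1)}{s_i+2a}$.

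Consecutive connecting paths may overlap in the shared cluster $S_{s_i+2a}$, so their concatenation is only a walk. I would shortcut it to a path, or instead route with a single linkage from $\{x_1,\dots,x_{r-1}\}$ to $\{x_2,\dots,x_r\}$ and check that the union contains a path through all $x_a$. In either case this produces a path $V_i\subseteq\SubPOSS{(\mathcal{S},\mathscr{P})}{s_i}{t_i}$ meeting every $R_a$. This step uses exactly the spacing $2(r-1)$ per window.

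\textbf{Step 4 (ordered web; main obstacle).} The key claim is the following: once $R_a$ has a vertex in index $\le t_i$, it never again has a vertex in index $\ge s_{i+1}=t_i+2$. Otherwise the subpath between these two vertices, or a subpath of it, would be a forward jump of length at least $2$, contradicting the cluster-by-cluster property.

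Given the claim, cut each $R_a$ as $R_a=H^v\cdot H^{v-1}\cdots H^1$. The cut between $H^{i+1}$ and $H^{i}$ is placed just before the first vertex of $R_a$ with index $\le t_i+1$, so that $H^{i}$ contains the whole window-$i$ part of $R_a$ and nothing of window $i+1$. By the claim, $V(H^{i})$ meets no window other than window $i$. Since $V_j$ lies in window $j$, $H^i$ meets $V_j$ exactly when $i=j$, using Step 3 for the case $i=j$. Re-indexing with $\mathcal{V}$ ordered as $(V_v,\dots,V_1)$ gives \cref{def:ordered_web}.

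The delicate points are the boundary conventions of \cref{def:jumps}: a vertex of $\mathcal{P}_k$ counts as index $k$, and backward subpaths of $R_a$ that run inside $(\mathcal{S},\mathscr{P})$ are not backward jumps. I expect these to force careful choice of the cut points. If one gap cluster turns out to be insufficient, I would enlarge the gaps by a constant, which only changes the bound by lower-order terms.
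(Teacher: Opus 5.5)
Your overall plan is the paper's: windows of $2r$ clusters, one vertex $x_a$ of each $R_a$ at spacing two, one path per window through all the $x_a$, and the ordered-web property derived from the absence of long forward jumps. However, Step 3 fails as written.

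\textbf{Step 3.} Shortcutting the walk $W_1\cdot W_2\cdots W_{r-1}$ can delete the vertices $x_a$. Suppose $W_a=(x_a,u,x_{a+1})$ and $W_{a+1}=(x_{a+1},u,x_{a+2})$. Then $W_a\cup W_{a+1}$ contains no $x_a$-$x_{a+2}$-path through $x_{a+1}$, and the shortcut path may miss $R_{a+1}$ entirely. Consecutive pieces really can overlap like this, since both enter the shared cluster. The single-linkage alternative is not available either. \cref{lem:linkage_inside_poss} needs $f(x)\le f(y)-2$ for \emph{all} $x\in X$, $y\in Y$, which fails for $X=\{x_1,\dots,x_{r-1}\}$ and $Y=\{x_2,\dots,x_r\}$. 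Moreover, a linkage need not respect the pairing $x_a\mapsto x_{a+1}$ anyway.

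The missing idea, which is how the paper proceeds, is to confine the pieces to pairwise disjoint parts of $(\mathcal{S},\mathscr{P})$. This is where strictness is needed, not in Step 2. Let $k=s_i+2(a-1)$.
\begin{itemize}
\item Strictness, together with the path of $\mathcal{P}_k$ leaving it, gives a path $Q_a$ from $A(S_k)$ through $x_a$ to $A(S_{k+1})$ inside $S_k\cup\mathcal{P}_k$. If $x_a$ lies on $\mathcal{P}_k$, route through $S_k$ to the start of its path instead.
\item Well-linkedness of $A(S_{k+1})$ to $B(S_{k+1})$, plus one path of $\mathcal{P}_{k+1}$, connects the end of $Q_a$ to the start of $Q_{a+1}$ inside $S_{k+1}\cup\mathcal{P}_{k+1}$.
\end{itemize}
Consecutive pieces meet only in their common endpoint, so the concatenation is automatically a path through every $x_a$. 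If you stop $Q_r$ at $x_r$, then $V_i$ stays in indices $s_i,\dots,t_i$ and you also keep exactly the stated bound on $\ell$.

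\textbf{Step 4.} Your key claim is correct: a subpath from index $\le t_i$ to index $\ge t_i+2$ is a forward jump of length at least two. Your cut rule, however, is wrong. If $H^i$ begins at a vertex of index $t_i+1$, then $R_a$ may afterwards move to index $t_i+2=s_{i+1}$, a forward jump of length one, which is allowed. It may then meet $V_{i+1}$ inside $H^i$.

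Cut by intersections instead of by indices. $V_i$ lies in indices $\le t_i$, and every $V_j$ with $j>i$ lies in indices $\ge t_i+2$. So your claim shows that on each $R_a$, every vertex of $V_j$ precedes every vertex of $V_i$. Each $R_a$ meets each $V_i$, so you can cut between the last vertex on $V_{i+1}$ and the first vertex on $V_i$. This gives the ordered web with $\mathcal{V}$ ordered as $(V_v,\dots,V_1)$.
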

	\begin{proof}
		Let $\Brace{ S_0, S_1, \dots, S_{\ell}} \coloneqq \mathcal{S}$ and $\Brace{ \mathcal{P}_0, \mathcal{P}_1, \dots, \mathcal{P}_{\ell - 1}} \coloneqq \mathscr{P}$.
		To simplify notation, we set $\mathcal{P}_\ell \coloneqq \emptyset$.
		Since $\mathcal{R}$ intersects $\Brace{\mathcal{S}, \mathscr{P}}$ cluster-by-cluster, every $R \in \mathcal{R}$ intersects some vertex of $\ToDigraph{S_i \cup \mathcal{P}_i}$ for every $0 \leq i < \ell$.
		Let $\mathcal{R} = \Brace{R_1, R_2, \dots, R_{r}}$ be an arbitrary ordering of the paths in $\mathcal{R}$.
		For each $1 \leq i \leq v$, construct a path $V^i$ intersecting every path in $\mathcal{R}$ as follows.
		
		For each $R_j \in \mathcal{R}$, let $k^i_j = 2(j - 1) + 2r(i - 1)$ and let $u^i_j$ be some vertex in $\V{R_j} \cap \V{S_{k^i_j}} \cup \V{\mathcal{P}_{k^i_j}}$.
		Let $Q^i_j$ be a path visiting $u^i_j$ with  $y^i_j \coloneqq \Start{Q^i_j} \in A(S_{k^i_j})$ and  $z^i_j\coloneqq \End{Q^i_j} \in A(S_{k^i_j + 1})$.
		Since $k^i_{j} - k^i_{j - 1} = 2$ for all $2 \leq j \leq r$, by~\cref{lem:linkage_inside_poss}\cref{case:linkage_inside_poss_scattered_scattered} there is a $z^i_j$-$y^i_{j+1}$-path $V^i_j$ inside $\SubPOSS{\Brace{\mathcal{S}, \mathscr{P}}}{k^i_j}{k^i_{j + 1}}$ for every $1 \leq j < r$.
		Since all $V^i_j$ and all $Q^i_j$ are pairwise internally disjoint and $V^i_j$ intersects $R_j$ at $u^i_j$, the path $V^i = Q^i_1 \cdot V^i_1 \cdot Q^i_2 \cdot V^i_2 \cdot \ldots \cdot Q^i_r$ in $\SubPOSS{\Brace{\mathcal{S}, \mathscr{P}}}{k^i_1}{k^i_r}$ intersects every path in $\mathcal{R}$.
		
		Let $\mathcal{V} = \Set{V^i \mid 1 \leq i \leq v}$.
		Since $k^i_1 - k^{i-1}_r = 2$ for all $2 \leq i \leq v$, all paths in $\mathcal{V}$ are pairwise disjoint.
		Further, such paths exist because $\ell \geq k^v_r = 2(r - 1) + 2r(v - 1)$.
		Finally, because $\mathcal{R}$ intersects $\Brace{\mathcal{S}, \mathscr{P}}$ cluster-by-cluster, $\Brace{\mathcal{R}, \mathcal{V}}$ is an ordered web.
	\end{proof}
	
	Our goal is to obtain a 2-horizontal web $(\mathcal{H}, \mathcal{V})$ where $\mathcal{H}$ is weakly $\Abs{\mathcal{H}}$-minimal with respect to $\mathcal{V}$.
	The idea is that $\HHH$ goes \emph{forwards} through the path of well-linked sets $(\SSS, \PPPP)$ from beginning to end, that is,
	in the same direction as the path of well-linked sets.
	$(\SSS, \PPPP)$ contains a forward linkage from its beginning to its end, and as a result of the way the linkage $\VVV$ is constructed, we can construct $\HHH$ so that it forms a web together with $\VVV$.
	However, if we naively choose $\mathcal{H}$ as a $\mathcal{V}$-minimal linkage,
	then we may lose intersections between $\mathcal{H}$ and $\mathcal{V}$, and the result would
	no longer be a 2-horizontal web.

	To address this issue, we need to be able to construct a cycle of well-linked sets in the case
	where many paths of $\mathcal{H}$ become disjoint from many paths in $\mathcal{V}$.
	In \cref{lemma:well-linked-poss-with-disjoint-forward-linkage-implies-coss}, we show that
	a path of well-linked sets which contains a \emph{forward linkage} which is disjoint from
	the back-linkage also contains a cycle of well-linked sets.

	Towards proving \cref{lemma:well-linked-poss-with-disjoint-forward-linkage-implies-coss},
	we show in \cref{lemma:separate-linkage-from-well-linked-poss} below that a path of well-linked sets which contains a \emph{forward linkage} $\mathcal{L}$
	can be split in such a way that
	some $\mathcal{L}^* \subseteq \mathcal{L}$ is disjoint from a smaller path of well-linked sets.

	We define
	\begin{align*}
	\Fkt{w'}{w, \ell, \ell^* }
		& \coloneqq 
			\ell^{*} + \bound{proposition:order-linked to path of well-linked sets}{w}{w,\ell },
	\\[0em]
		\boundDefAlign{lemma:separate-linkage-from-well-linked-poss}{q}{w, \ell, \ell^* }
	\bound{lemma:separate-linkage-from-well-linked-poss}{q}{w, \ell, \ell^* }
		& \coloneqq 
			\bound{theorem:strongly connected temporal digraph contains H routing}{s}{\ell^{*} + \bound{proposition:order-linked to path of well-linked sets}{w}{w, \ell } },
	\\[0em]
		\boundDefAlign{lemma:separate-linkage-from-well-linked-poss}{\ell'}{w, \ell, \ell^*}
	\bound{lemma:separate-linkage-from-well-linked-poss}{\ell'}{w, \ell, \ell^*}
		& \coloneqq 
			(3 w \ell {\binom{\bound{lemma:separate-linkage-from-well-linked-poss}{q}{w,\ell, \ell^{*} }}{\Fkt{w'}{w, \ell, \ell^{*} }}} (\Fkt{w'}{w, \ell, \ell^{*} })! + 3)\\[0em]
			& \phantom{\coloneqq} \quad\cdot \bound{theorem:strongly connected temporal digraph contains H routing}{\Lifetime{}}{\bound{lemma:separate-linkage-from-well-linked-poss}{q}{w, \ell, \ell^*}, \Fkt{w'}{w, \ell, \ell^*}} - 1.
	\end{align*}
	We note that \(\bound{lemma:separate-linkage-from-well-linked-poss}{q}{w,\ell,\ell^*} \in \Polynomial{22}{w, \ell, {\ell}^{*}}\) and \(\bound{lemma:separate-linkage-from-well-linked-poss}{\ell'}{w,\ell,\ell^*} \in \PowerTower{1}{\Polynomial{243}{w, \ell, {\ell}^{*}}}\).
	
	\begin{lemma}
		\label{lemma:separate-linkage-from-well-linked-poss}
		Let $\ell^*,w$ be integers, let $D = \Brace{\mathcal{S} = \Brace{S_0, S_1, \dots, S_{\ell'}}, \mathscr{P} = (\mathcal{P}_0, \mathcal{P}_1, \dots,\allowbreak \mathcal{P}_{\ell' - 1})}$ be a strict path of well-linked sets of width $w' \geq 1$ and length $\ell' \coloneqq \bound{lemma:separate-linkage-from-well-linked-poss}{\ell'}{w, \ell^*}$.
		Let $\mathcal{L}$ be an $A(S_0)$-$B(S_{\ell'})$-linkage of order at least $\bound{lemma:separate-linkage-from-well-linked-poss}{q}{w, \ell, \ell^*}$ such that every path in $\mathcal{L}$ intersects every $S_i \in \mathcal{S}$.
		Then, there is an $\mathcal{L}^{*} \subseteq \mathcal{L}$ of order $\ell^*$ for which $\Subgraph{D}{\mathcal{S} \cup \mathscr{P}}$ contains a path of well-linked sets $\Brace{\mathcal{S}' = \Brace{S'_0, S'_1, \dots, S'_{\ell}}, \mathscr{P}' = \Brace{\mathcal{P}'_0, \mathcal{P}'_1, \dots, \mathcal{P}'_{\ell - 1}}}$ of width $w$ and length $\ell$ which is disjoint from $\mathcal{L}^{*}$ such that $A(S_0') \subseteq A(S_0)$ and $B(S_\ell') \subseteq B(S_{\ell'})$.
		Further, for every $0 \leq i < j \leq \ell$ there are $0 \leq i_0 < i_1 < j_0 < j_1 \leq \ell$ such that $\mathcal{S}_i' \subseteq \SubPOSS{D}{i_0}{i_1}$ and $\mathcal{S}_j' \subseteq \SubPOSS{D}{j_0}{j_1}$.
		Finally, $\ToDigraph{\mathscr{P}'} \subseteq \mathcal{L} \setminus \mathcal{L}^*$.
	\end{lemma}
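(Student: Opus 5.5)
We intend to use the temporal-digraph framework of \cref{sec:temporal_digraphs}, with the linkage $\mathcal{L}$ as the vertex set and groups of consecutive clusters as layers. The bounds suggest this route: $q$ is the $\bound{theorem:strongly connected temporal digraph contains H routing}{s}{\cdot}$ of \cref{theorem:strongly connected temporal digraph contains H routing} for $k = \ell^* + w(\ell+1)$, and $\ell'$ is a lifetime bound multiplied by a pigeonhole factor $\binom{q}{k}k!$.

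\textbf{Layers.} Fix a subset $\mathcal{L}_0 \subseteq \mathcal{L}$ of exactly $q$ paths. Cut $D$ into $N = 3w\ell\binom{q}{k}k! + 3$ consecutive blocks, each consisting of $T = \bound{theorem:strongly connected temporal digraph contains H routing}{\Lifetime{}}{q,k}$ consecutive clusters, leaving a gap of one cluster between blocks where necessary. Since every path of $\mathcal{L}$ meets every cluster, we can decompose each $L \in \mathcal{L}_0$ according to the clusters, as \cref{def:routing-temporal-digraph} requires. Within a block, the $t$-th layer is $S_i$ together with $\mathcal{P}_i$. Each layer of the resulting routing temporal digraph is strongly connected on $\mathcal{L}_0$: every $L$ meets $S_i$, $S_i$ is strict and $A(S_i)$ is well-linked to $B(S_i)$, so any two vertices of $S_i$ are connected inside $S_i$. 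The main gap here is that \cref{def:routing-temporal-digraph} asks for connections that are internally disjoint from $\mathcal{L}$. We handle this by taking, between consecutive hits, the segments that avoid $\mathcal{L}_0$.

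\textbf{Routings.} By \cref{theorem:strongly connected temporal digraph contains H routing}, each block contains a set $S'_b \subseteq \mathcal{L}_0$ of size $k$ with an $H$-routing for $H \in \{\Ck{k}, \biPk{k}\}$. Both graphs contain a $\Pk{k}$-routing: take a Hamiltonian path and its ordering. There are at most $\binom{q}{k}k!$ choices of subset and ordering, so by pigeonhole at least $3w\ell + 3$ blocks share the same subset $\mathcal{K}$ and the same ordering. Let $\mathcal{L}^*$ be the first $\ell^*$ paths of $\mathcal{K}$ in this order, and let the remaining $w(\ell+1)$ paths be $\mathcal{K}'$.

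\textbf{The path of well-linked sets.} Applying \cref{lemma:P_k-routing-implies-1-order-linked} in each chosen block, restricted to the subpath ordering of $\mathcal{K}'$, gives clusters whose first-hit sets on $\mathcal{K}'$ are $1$-order-linked to their last-hit sets. We must avoid $\mathcal{L}^*$ when building these clusters, and this is the step I expect to be the main obstacle. The $\Pk{k}$-routing guarantees temporal paths that avoid the vertices of $S$ outside the routed segment. Since $\mathcal{L}^*$ is an initial segment of the order, routing among the vertices of $\mathcal{K}'$ (a suffix) avoids $\mathcal{L}^*$, so we use the digraph made of the clusters plus $\mathcal{L} \setminus \mathcal{L}^*$.

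Consecutive chosen blocks are joined by the subpaths of $\mathcal{K}'$ between them. This gives a uniform path of $1$-order-linked sets of length about $3w\ell$ whose linkages lie in $\mathcal{L} \setminus \mathcal{L}^*$. Merging groups of $3w$ of these sets with \cref{lemma:merging path of order-linked sets} raises the linkedness to $w$-order-linked, and \cref{proposition:order-linked to path of well-linked sets} then yields width $w$ and length $\ell$. Each new cluster lies inside a contiguous range of old clusters, and the ranges are pairwise separated, which gives the claimed interval property. Every linkage $\mathscr{P}'$ is a subset of $\mathcal{K}' \subseteq \mathcal{L} \setminus \mathcal{L}^*$.

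Finally, we attach $A(S'_0)$ to $A(S_0)$ and $B(S'_\ell)$ to $B(S_{\ell'})$. The paths of $\mathcal{K}'$ start in $A(S_0)$ and end in $B(S_{\ell'})$, so we extend the first and last clusters by the initial and final segments of $\mathcal{K}'$. This keeps the structure disjoint from $\mathcal{L}^*$. The three spare blocks in $N$ leave room for these boundary extensions.
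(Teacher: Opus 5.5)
There is a genuine gap in your \textbf{Layers} step. You justify strong connectivity of the layer $S_i\cup\mathcal{P}_i$ by saying that strictness and well-linkedness make any two vertices of $S_i$ connected inside $S_i$. That is false. These properties give only \emph{forward} connectivity inside a cluster: every vertex reaches $B(S_i)$, every vertex is reachable from $A(S_i)$, and any vertex of $A(S_i)$ reaches any vertex of $B(S_i)$. A cluster may be acyclic, for example the complete bipartite digraph from $A(S_i)$ to $B(S_i)$, or clusters arising from ordered webs. In such a cluster, a vertex near $B(S_i)$ cannot reach a vertex near $A(S_i)$.

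The linkage $\mathcal{L}$ is not assumed to lie inside $D$; in the applications, via \cref{lemma:well-linked-poss-with-disjoint-forward-linkage-implies-coss}, it may use arcs outside the path of well-linked sets. So some $L_a$ may meet the layer only in a vertex of $B(S_i)$, and some $L_b$ only in $A(S_i)$. From $L_a$ the only way on inside the layer is along $\mathcal{P}_i$, which leaves the layer. Hence the layer has no arc $L_a\to L_b$, and \cref{theorem:strongly connected temporal digraph contains H routing} does not apply. If $\mathcal{L}$ did lie in $D$, every $L$ would cross $S_i$ from $A(S_i)$ to $B(S_i)$ and single clusters would suffice, but for a different reason than the one you give.

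The missing idea is that a layer needs room to ``turn around''. The paper takes each layer to be three consecutive clusters $S_s,S_{s+1},S_{s+2}$ together with $\mathcal{P}_s,\mathcal{P}_{s+1}$:
\begin{itemize}
\item From $a_0\in V(L_a)\cap V(S_s)$, go by strictness to $B(S_s)$ and along $\mathcal{P}_s$ into $A(S_{s+1})$.
\item By well-linkedness of $S_{s+1}$, reach the vertex of $B(S_{s+1})$ whose $\mathcal{P}_{s+1}$-path ends at a vertex of $A(S_{s+2})$ from which $b_0\in V(L_b)\cap V(S_{s+2})$ is reachable, again by strictness.
\item Cutting this path at its hits on $\mathcal{L}$ gives a directed $L_a$-$L_b$-path in the layer.
\end{itemize}

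This is where the factor $3$ in $\ell'$ goes: $\ell'+1=3\ell_2T$ with $\ell_2=w\ell\binom{q}{k}k!+1$ blocks, each of $T$ three-cluster layers. You do not have $3w\ell\binom{q}{k}k!+3$ blocks of single clusters. The pigeonhole therefore yields only $w\ell+1$ blocks sharing a routing. You should then merge groups of $w$ rather than $3w$ sets via \cref{lemma:merging path of order-linked sets}. This still gives $w$-order-linked sets of width $w(\ell+1)$ and length $\ell$, enough for \cref{proposition:order-linked to path of well-linked sets}.

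With this repair, the rest of your plan is the paper's argument. That covers the Hamiltonian path in $\Ck{k}$ or $\biPk{k}$, the pigeonhole over subset and ordering, splitting the routed path into $\mathcal{L}^*$ and $w(\ell+1)$ remaining paths, deleting $V(\mathcal{L}^*)$ from the layers, and \cref{lemma:P_k-routing-implies-1-order-linked}.
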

	\begin{proof}
		Let
		$w_1 = \bound{proposition:order-linked to path of well-linked sets}{w}{w, \ell}$,
		$w_2 = \ell^* + w_1$,
		$w_3 = \bound{theorem:strongly connected temporal digraph contains H routing}{s}{w_2}$.
		$\ell_1 = w \ell$,
		$\ell_2 = \ell_1 \binom{w_3}{w_2}w_2! + 1$ and
		$\ell_3 = \bound{theorem:strongly connected temporal digraph contains H routing}{\ell}{w_2}$.
		Note that $\ell' \geq 3 \ell_2 \ell_3 - 1$ and that $\Abs{\mathcal{L}} \geq w_3$.
		
		For each $1 \leq i \leq \ell_2$, construct a temporal digraph $T_i$ as follows.
		For each $1 \leq j \leq \ell_3$, let $s_{i,j} = 3(i-1)\ell_3 + 3(j-1)$ and note that $s_{i,1} = 1 + s_{i-1,\ell_3}$ and that $s_{\ell_2, \ell_3} + 2 \leq \ell'$.
		Let $H_j^i = \Subgraph{D}{{S_{s_{i,j}} \cup S_{s_{i,j} + 1} \cup S_{s_{i,j} + 2}}}$.
		
		Let $\mathcal{H}^i = \Brace{H^i_1, H^i_2, \dots, H^i_{\ell_2}}$.
		Let $T_i$ be the routing temporal digraph of $\mathcal{L}$ through $\mathcal{H}^i$ as described in~\cref{def:routing-temporal-digraph}.
		Note that $\Lifetime{T_i} = \ell_3$ for every $i$.
		
		Let $A_i$ be the set containing the first intersection of each path in $\mathcal{L}$ with $H^i_1$ and let $B_i$ be the set containing the last intersection of each path in $\mathcal{L}$ with $H^i_{\ell_2}$.
		Since every path in $\mathcal{L}$ intersects every $S \in \mathcal{S}$, we have that $\Abs{A_i} = \Abs{B_i} = \Abs{\mathcal{L}}$.
		
		We show that every layer of $T_i$ is strongly connected.
		Let $L_a, L_b \in \mathcal{L}$ be two distinct paths. 
		Since every path in $\mathcal{L}$ intersects every cluster in $\mathcal{S}$, there is some $a_0 \in \V{S_{s_{i,j}}}$ and some $b_0 \in \V{S_{s_{i,j} + 2}}$ such that $L_a$ contains $a_0$ and $L_b$ contains $b_0$.
		Further, there is some $a_1 \in B(S_{s_{i,j}})$ and some $b_1 \in A(S_{s_{i,j} + 2})$ such that $a_0$ can reach $a_1$ in $S_{s_{i,j}}$ and $b_1$ can reach $b_0$ in $S_{s_{i,j} + 2}$.
		    				As $A(S_{s_{i,j}})$ is well-linked to $B(S_{s_{i,j} + 1})$, there is some $a_2$-$b_2$-path in $S_{s_{i,j} + 1}$, where $a_2 = \mathcal{P}_{s_{i,j}}(a_1)$ and $b_1 = \mathcal{P}_{s_{i,j}+2}(b_2)$.
		Hence, $a_0$ can reach $b_0$ in $H^i_j$.
		Thus, there is a $\V{L_a}$-$\V{L_b}$-path in $H^i_j$, which implies that $\Layer{T_i}{j}$ is strongly connected.
		(Recall that $\Layer{T_i}{j}$ is the layer $j$ of $T_i$.)
		
		As $\Lifetime{T_i} = \ell_3 = \bound{theorem:strongly connected temporal digraph contains H routing}{\Lifetime{}}{w_3, w_2}$ and $\Abs{\V{T_i}} = w_3 = \bound{theorem:strongly connected temporal digraph contains H routing}{s}{w_2}$, by~\cref{theorem:strongly connected temporal digraph contains H routing} $T_i$ contains an $R_i$-routing $\varphi_i$ for some $R_i \in \Set{\Ck{w_2}, \biPk{w_2}}$.
		Since there are $\ell_2$ temporal digraphs $T_i$, by the pigeon-hole principle there is a set $\mathcal{T} \coloneqq \Set{T_{t_0}, T_{t_1}, \dots, T_{t_{\ell_1}}}$ of size $\ell_1 + 1$ of temporal digraphs such that $R \coloneqq R_i = R_j$ and $\varphi \coloneqq \varphi_i = \varphi_j$ for all $T_i, T_j \in \mathcal{T}$.
		
		Let $R'$ be a path of length $w_2 - 1$ in $R$ and let $u_1, u_2, \dots, u_{w_2}$ be the vertices of $R'$ sorted according to their order on $R'$.
		Let $\mathcal{L}' = \Set{\varphi(u_i) \mid 1 \leq i \leq w_1}$, let $\mathcal{L}^* = \Set{\varphi(u_i) \mid w_1 + 1 \leq i \leq w_2}$ and let $\varphi' = \FktRest{\varphi}{\mathcal{L}'}$.
		Note that $\Abs{\mathcal{L}^*} = \ell^*$.
		
		For each $0 \leq j \leq \ell_1$, we construct a subgraph $S_j'$ of $D$ and a linkage $\mathcal{P}'_j$ as follows.
		Note that $\varphi'$ is a $\Pk{w_1}$-routing in $T_{t_j} - \V{\mathcal{L}^*}$.
		Let $\mathcal{Q}^j$ be the set of digraphs obtained by deleting $\V{\mathcal{L}^*}$ from each digraph in $\mathcal{H}^{t_j}$ and let $T'_j$ be the routing temporal digraph of $\mathcal{L}'$ through $\mathcal{Q}^j$.
		Observe that $\varphi'$ is also a $\Pk{w_1}$-routing in $T'_j$.
		
		Let $A_j' = A_j \cap \V{\mathcal{L}'}$ and let $B_j' = B_j \cap \V{\mathcal{L}'}$.
		By~\cref{lemma:P_k-routing-implies-1-order-linked}, we have that $A_j'$ is 1-order-linked to $B_j'$ in $\Subgraph{D}{\mathcal{Q}^j}$.
		We set $S'_j = \Subgraph{D}{\mathcal{Q}^j}$ and take $\mathcal{P}'_j$ as the $B_j'$-$A_{j+1}'$-linkage of order $w_1$ inside $\mathcal{L}'$ (to simplify notation, we set $A_{\ell_1 + 1}' = \End{\mathcal{L}'}$).
		
		After finishing the construction, let $\mathcal{S}' = \Brace{S'_0, S'_1, \dots, S'_{\ell_1}}$ and let \(\mathscr{P}' = (\mathcal{P}'_0, \mathcal{P}'_1, \dots, \mathcal{P}'_{\ell_1 - 1}).\)
		By construction, $\Brace{\mathcal{S}', \mathscr{P}'}$ is a path of 1-order-linked sets of width $w_1$ and length $\ell_1$.
		Furthermore, $\mathcal{L}^*$ is disjoint from $\Brace{\mathcal{S}', \mathscr{P}'}$.
		Finally, we have that $\ToDigraph{\mathcal{P}'_j} \subseteq \mathcal{L} \setminus \mathcal{L}^*$ and that $S_j' \subseteq \SubPOSS{D}{t_j}{t_{j+1}-1}$.
		
		By~\cref{lemma:merging path of order-linked sets}, $\Brace{\mathcal{S}', \mathscr{P}'}$ contains a path of $w$-order-linked sets $\Brace{\mathcal{S}^2, \mathscr{P}^2}$ of width $w_1$ and length $\ell$.
		By~\cref{proposition:order-linked to path of well-linked sets}, $\Brace{\mathcal{S}^2, \mathscr{P}^2}$ contains a path of well-linked sets $(\mathcal{S}^3, \mathscr{P}^3)$ of width $w$ and length $\ell$ satisfying the requirements of the statement.
	\end{proof}

	Using \cref{lemma:separate-linkage-from-well-linked-poss}, we can construct
	a path of well-linked sets which is disjoint from a \emph{forward linkage} $\mathcal{L}^*$ which
	is in turn disjoint from a back-linkage $\mathcal{R}$.
	The idea is then to apply \cref{lemma:coss_or_ordered_web} to the
	new path of well-linked sets and the back-linkage, obtaining an ordered web.
	With \cref{lemma:ordered-web-to-path-of-well-linked-sets-with-side-linkage}, we obtain
	another path of well-linked sets, this time following the \emph{direction} of $\mathcal{R}$.
	That is, the original \emph{forward linkage} $\mathcal{L}^*$ now behaves like a
	back-linkage for the last path of well-linked sets, and we can then construct
	a cycle of well-linked sets.

	We define
	\begin{align*}
		\bound{lemma:well-linked-poss-with-disjoint-forward-linkage-implies-coss}{r}{w, \ell }
		\boundDefAlign{lemma:well-linked-poss-with-disjoint-forward-linkage-implies-coss}{r}{w, \ell}
		& \coloneqq 
			\bound{lemma:ordered-web-to-path-of-well-linked-sets-with-side-linkage}{h}{2 w,\ell - 1 },
		\\[0em]
		\Fkt{\ell''}{w, \ell}
		& \coloneqq 
			\bound{lemma:coss-or-back-linkage-intersects-cluster-by-cluster}{\ell'}{w,\ell,\bound{lemma:coss_or_ordered_web}{\ell}{\bound{lemma:ordered-web-to-path-of-well-linked-sets-with-side-linkage}{h}{2 w,\ell - 1 },8 w + \bound{lemma:ordered-web-to-path-of-well-linked-sets-with-side-linkage}{v}{2 w,\ell - 1 } + 2 }, \bound{lemma:well-linked-poss-with-disjoint-forward-linkage-implies-coss}{r}{w,\ell} },
		\\[0em]
		\bound{lemma:well-linked-poss-with-disjoint-forward-linkage-implies-coss}{\ell'}{w, \ell}
		\boundDefAlign{lemma:well-linked-poss-with-disjoint-forward-linkage-implies-coss}{\ell'}{w, \ell}
		& \coloneqq 
		\bound{lemma:separate-linkage-from-well-linked-poss}{\ell'}{\bound{lemma:coss-or-back-linkage-intersects-cluster-by-cluster}{w'}{w,\bound{lemma:ordered-web-to-path-of-well-linked-sets-with-side-linkage}{h}{2 w,\ell - 1 } },\Fkt{\ell''}{w,\ell},2 w },
		\\[0em]
		\bound{lemma:well-linked-poss-with-disjoint-forward-linkage-implies-coss}{q}{w, \ell}
		\boundDefAlign{lemma:well-linked-poss-with-disjoint-forward-linkage-implies-coss}{q}{w, \ell}
		& \coloneqq 
		\bound{lemma:separate-linkage-from-well-linked-poss}{q}{w,\Fkt{\ell''}{w,\ell },\bound{lemma:ordered-web-to-path-of-well-linked-sets-with-side-linkage}{h}{2 w,\ell - 1 } }.
	\end{align*}
	Observe that we have
	\(\bound{lemma:well-linked-poss-with-disjoint-forward-linkage-implies-coss}{r}{w,\ell} \in \Oh(w^{2} \ell^{2} )\),
	\(\bound{lemma:well-linked-poss-with-disjoint-forward-linkage-implies-coss}{\ell'}{w,\ell} \in \PowerTower{3}{\Polynomial{25}{w, \ell}},\) and
	\(\bound{lemma:well-linked-poss-with-disjoint-forward-linkage-implies-coss}{q}{w,\ell} \in \PowerTower{2}{\Polynomial{25}{w, \ell}}\).
	
	\begin{lemma}
		\label{lemma:well-linked-poss-with-disjoint-forward-linkage-implies-coss}
		
		Let $\Brace{\mathcal{S} = \Brace{ S_0, S_1, \dots, S_{\ell'}}, \mathscr{P} = \Brace{ \mathcal{P}_0, \mathcal{P}_1, \dots, \mathcal{P}_{\ell' - 1}}}$ be a strict path of well-linked sets of width $w' \geq 1$ and length $\ell' \geq \bound{lemma:well-linked-poss-with-disjoint-forward-linkage-implies-coss}{\ell'}{w, \ell}$ with a partial back-linkage $\mathcal{R}$ of order $r \geq \bound{lemma:well-linked-poss-with-disjoint-forward-linkage-implies-coss}{r}{w, \ell}$ in a digraph $D$.
		Let $\mathcal{L}$ be an $A(S_0)$-$B(S_{\ell'})$ linkage of order $q \geq \bound{lemma:well-linked-poss-with-disjoint-forward-linkage-implies-coss}{q}{w, \ell, m}$ which is internally disjoint from $\mathcal{R}$ such that every $L \in \mathcal{L}$ intersects some vertex of $\ToDigraph{S_i \cup \mathcal{P}_i}$ for every $0 \leq i \leq \ell'$ and, for all \(0 \leq i < j \leq \ell'\), \(\mathcal{L}\) does not intersect \(\ToDigraph{S_i \cup \mathcal{P}_i}\) after intersecting \(\ToDigraph{S_j \cup \mathcal{P}_j}\).
		Then, $\ToDigraph{\mathcal{S} \cup \mathscr{P} \cup \mathcal{R} \cup \mathcal{L}}$ contains a cycle of well-linked sets of width $w$ and length $\ell$.
	\end{lemma}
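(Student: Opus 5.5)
We follow the strategy sketched just before the statement, and the definitions of the bounds indicate the order of the steps.

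\textbf{Step 1: split the forward linkage.} Apply \cref{lemma:separate-linkage-from-well-linked-poss} to $(\mathcal{S},\mathscr{P})$ and $\mathcal{L}$, with $\ell^* = \bound{lemma:ordered-web-to-path-of-well-linked-sets-with-side-linkage}{h}{2w,\ell-1}$ and target length $\ell''(w,\ell)$. The target width is $\bound{lemma:coss-or-back-linkage-intersects-cluster-by-cluster}{w'}{w,\ldots}$, as dictated by the definitions. The hypothesis that every $L\in\mathcal{L}$ meets every cluster is given. This yields $\mathcal{L}^*\subseteq\mathcal{L}$ of order $\ell^*$ and a new path of well-linked sets $(\mathcal{S}^1,\mathscr{P}^1)$ that is disjoint from $\mathcal{L}^*$. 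Since $A(S^1_0)\subseteq A(S_0)$ and $B(S^1_{\ell''})\subseteq B(S_{\ell'})$, the linkage $\mathcal{R}$, suitably restricted via \cref{obs:restricting_width_poss}, is still a partial back-linkage for it. We take $\mathcal{R}$ to be weakly $|\mathcal{R}|$-minimal with respect to $(\mathcal{S}^1,\mathscr{P}^1)$ (\cref{obs:H-minimal-implies-weakly-minimal}). Note that $\mathcal{R}$ is internally disjoint from $\mathcal{L}^*$.

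\textbf{Step 2: dichotomy.} Apply \cref{lemma:coss-or-back-linkage-intersects-cluster-by-cluster} to $(\mathcal{S}^1,\mathscr{P}^1)$ and $\mathcal{R}$, with parameters $w_1=w$, $\ell_1=\ell$, $w_2=r$, and $\ell_2=\bound{lemma:coss_or_ordered_web}{\ell}{\ell^*, 8w+\bound{lemma:ordered-web-to-path-of-well-linked-sets-with-side-linkage}{v}{2w,\ell-1}+2}$. If outcome (C1) occurs, we are done. Otherwise we obtain $(\mathcal{S}^2,\mathscr{P}^2)$ together with $\mathcal{R}'\subseteq\mathcal{R}$ intersecting it cluster-by-cluster. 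Apply \cref{lemma:coss_or_ordered_web} to get $\mathcal{V}$, and thus an ordered web $(\mathcal{R}',\mathcal{V})$ whose vertical paths lie in consecutive, pairwise disjoint segments of $(\mathcal{S}^2,\mathscr{P}^2)$. These segments are disjoint from $\mathcal{L}^*$.

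\textbf{Step 3: the second path of well-linked sets.} Now apply \cref{lemma:ordered-web-to-path-of-well-linked-sets-with-side-linkage} with width $2w$ and length $\ell-1$. This gives a path of well-linked sets $(\mathcal{S}^3,\mathscr{P}^3)$ inside $\mathcal{R}'\cup\mathcal{V}$ that follows the direction of $\mathcal{R}$, with $B(S^3_{\ell-1})\subseteq\End{\mathcal{R}'}$. The side linkage $\mathcal{X}\subseteq\mathcal{V}$ has $A(S^3_0)$ on one vertical path. We want to close the cycle using $\mathcal{L}^*$. The ends of $\mathcal{R}'$ lie near $A(S^2_0)$, and the vertical paths lie in disjoint segments of the old path of well-linked sets.

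So we route as follows. From $B(S^3_{\ell-1})$, travel inside the early part of $(\mathcal{S}^2,\mathscr{P}^2)$ to the start of $\mathcal{L}^*$; $\mathcal{L}^*$ starts in $A(S_0)$, so the prefix is reached through $(\mathcal{S},\mathscr{P})$ before $\mathcal{S}^1$. Then follow $\mathcal{L}^*$ forward. Finally, use the clusters of $(\mathcal{S},\mathscr{P})$ that $\mathcal{L}^*$ traverses, together with the spare vertical paths (the extra $8w+2$ in $\ell_2$), to reach $A(S^3_0)$ while avoiding $(\mathcal{S}^3,\mathscr{P}^3)$. Combining these pieces with \cref{lemma:half_integral_to_integral_linkage} (hence width $2w$) and \cref{obs:restricting_width_poss,lemma:back-linkage avoids many clusters} gives a cycle of well-linked sets of width $w$ and length $\ell$.

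\textbf{Main obstacle.} The hardest part is Step 3: guaranteeing that the closing linkage built from $\mathcal{L}^*$ is internally disjoint from $(\mathcal{S}^3,\mathscr{P}^3)$. The difficulty is that $(\mathcal{S}^3,\mathscr{P}^3)$ lives on $\mathcal{R}'\cup\mathcal{V}$, and $\mathcal{L}^*$ may hit $\mathcal{R}$ at its endpoints. We handle this using three facts. First, $\mathcal{L}$ visits the clusters monotonically. Second, $\mathcal{L}$ is internally disjoint from $\mathcal{R}$. Third, the last property of \cref{lemma:ordered-web-to-path-of-well-linked-sets-with-side-linkage}, $\V{\pi(S_i)}\cap\V{(\mathcal{S},\mathscr{P})}\subseteq\V{S_i}$, lets us use one unused vertical path as the entry into $A(S^3_0)$ and the segment $S_0$-side as the exit. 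If this routing fails, the fallback is to pick $\ell^*$ larger and accept a half-integral linkage, then apply \cref{lemma:half_integral_to_integral_linkage}.
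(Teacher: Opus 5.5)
Your Steps 1 and 2 agree with the paper's proof. The paper only keeps $|\mathcal{L}^*| = 2w$, which is all that is needed later. The gap is Step 3, the one genuinely new step: you do not actually carry it out, and the mechanism you sketch fails for three reasons.
\begin{itemize}
\item By \cref{lemma:ordered-web-to-path-of-well-linked-sets-with-side-linkage}, all of $A(S^3_0)$ lies on the single vertical path $\pi(S^3_0)$. Entering ``through one unused vertical path'', or through any vertical path, therefore delivers at most one path of the closing linkage, not $2w$ disjoint ones. The spare vertical paths cannot serve as routing paths into $A(S^3_0)$.
\item You cannot travel inside $(\mathcal{S}^2,\mathscr{P}^2)$ to $\mathcal{L}^*$, since $\mathcal{L}^*$ is disjoint from $(\mathcal{S}^1,\mathscr{P}^1)\supseteq(\mathcal{S}^2,\mathscr{P}^2)$. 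Nor can you in general route from $A(S_0)$ back to $\Start{\mathcal{L}^*}\subseteq A(S_0)$; $\mathcal{L}^*$ must be entered further along.
\item If you apply the corollary to the whole web $(\mathcal{R}',\mathcal{V})$, the new path of well-linked sets may use the ends of $\mathcal{R}'$ and the outermost vertical paths. These lie exactly in the regions near $S_0$ and $S_{\ell'}$ through which every closing linkage must pass, so internal disjointness from $(\mathcal{S}^3,\mathscr{P}^3)$ cannot be guaranteed. Taking $\ell^*$ larger does not address this.
\end{itemize}

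The missing idea is to use the two ends of the back-linkage itself as the arms of the closing linkage. Then only a forward connection from $A(S_0)$ to $B(S_{\ell'})$ remains to be supplied by $\mathcal{L}^*$. Concretely, the paper cuts $\mathcal{R}' = \mathcal{R}_a\cdot\mathcal{R}^3\cdot\mathcal{R}_b$, where $\mathcal{R}^3$ is the part meeting only the middle vertical paths $\mathcal{V}'=(V_{4w+2},\dots,V_{v_1-4w-1})$. It applies the corollary only to $(\mathcal{R}^3,\mathcal{V}')$, so that $A(S^3_0)\subseteq\Start{\mathcal{R}^3}$ and $B(S^3_{\ell-1})\subseteq\End{\mathcal{R}^3}$. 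The closing linkage is then $\mathcal{R}_b\cdot\mathcal{X}_1\cdot\mathcal{Z}_1\cdot\mathcal{L}''\cdot\mathcal{Z}_2\cdot\mathcal{X}_2\cdot\mathcal{R}_a$, built as follows.
\begin{itemize}
\item Follow $\mathcal{R}_b$ to $\End{\mathcal{R}}\subseteq A(S_0)$ and cross $S_0$ to $B(S_0)$.
\item Reach $\mathcal{L}^*$ at vertices chosen in the original clusters $S_{2i+1}$, $1\le i\le 2w$, via \cref{lem:linkage_inside_poss}\cref{case:linkage_inside_poss_B_scattered}. The required spacing of at least $2$ is why $4w+1$ clusters are needed.
\item Follow $\mathcal{L}^*$, and leave it symmetrically in the clusters $S_{\ell'-2i-1}$ via \cref{lem:linkage_inside_poss}\cref{case:linkage_inside_poss_scattered_A}, arriving in $A(S_{\ell'})$.
\item Cross $S_{\ell'}$ to $\Start{\mathcal{R}}\subseteq B(S_{\ell'})$ and follow $\mathcal{R}_a$ into $A(S^3_0)$.
\end{itemize}
This is a half-integral linkage of order $2w$ that is internally disjoint from $(\mathcal{S}^3,\mathscr{P}^3)$. \cref{lemma:half_integral_to_integral_linkage} together with \cref{obs:restricting_width_poss} then yields width $w$. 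So the $8w+2$ extra vertical paths are buffers that keep $(\mathcal{S}^3,\mathscr{P}^3)$ away from the first and last $4w+1$ clusters used by this routing; they are not paths to route along.
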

	\begin{proof}
		Let 
		$\ell_4 = \ell - 1$,
		$\ell_3 = 4w + 1$,
		$v_1 = \bound{lemma:ordered-web-to-path-of-well-linked-sets-with-side-linkage}{v}{2w, \ell_4} + 2 \ell_3$,
		$w_2 = \bound{lemma:ordered-web-to-path-of-well-linked-sets-with-side-linkage}{h}{2w, \ell_4}$,
		$w_1 = \bound{lemma:coss-or-back-linkage-intersects-cluster-by-cluster}{w'}{w,w_2}$,
		$\ell_2 = \bound{lemma:coss_or_ordered_web}{\ell}{w_2, v_1}$ and
		$\ell_1 = \bound{lemma:coss-or-back-linkage-intersects-cluster-by-cluster}{\ell'}{w, \ell, \ell_2, r}$.
		Note that $w_2 \geq 2w$,
		$\ell' \geq \bound{lemma:separate-linkage-from-well-linked-poss}{\ell'}{w_1, \ell_1, 2w}$,
		$r \geq w_2$ and
		$q \geq \bound{lemma:separate-linkage-from-well-linked-poss}{q}{w, \ell_1, w_2}$.
		
		By~\cref{lemma:separate-linkage-from-well-linked-poss} there is some linkage $\mathcal{L}' \subseteq \mathcal{L}$ of order $2w$ and a path of well-linked sets $(\mathcal{S}^1 = ({ S^1_0, S^1_1, \dots, S^1_{\ell_1}}), \mathscr{P}^1 = ({ \mathcal{P}^1_0, \mathcal{P}^1_1, \dots, \mathcal{P}^1_{\ell_1 - 1}}))$ of width $w_1$ and length $\ell_1$ inside $(\mathcal{S}, \mathscr{P})$ with $A(S_0^1) \subseteq A(S_0)$ and $B(S_{\ell_1}^1) \subseteq B(S_{\ell'})$ such that $\mathcal{L}'$ is internally disjoint from $\Brace{\mathcal{S}^1, \mathscr{P}^1}$.
		Further, $\ToDigraph{\mathscr{P}^1} \subseteq \ToDigraph{\mathcal{L} \setminus \mathcal{L}'}$ and $S^1_i \subseteq S_i$ for all $0 \leq i \leq \ell_1$.
		
		Let \(\mathcal{R}^1 \subseteq \ToDigraph{\Brace{\mathcal{S}^1, \mathscr{P}^1}} \cup \ToDigraph{\mathcal{R}}\) be a \(\Brace{\mathcal{S}^1, \mathscr{P}^1}\)-minimal linkage of order \(\Abs{\mathcal{R}}\) such that \(\Start{\mathcal{R}^1} = \Start{\mathcal{R}}\) and \(\End{\mathcal{R}^1} = \End{\mathcal{R}}\).
		By~\cref{obs:H-minimal-implies-weakly-minimal}, \(\mathcal{R}^1\) is weakly \(r\)-minimal with respect to \(\Brace{\mathcal{S}^1, \mathscr{P}^1}\).
		Further, \(\mathcal{R}^1\) is internally disjoint from \(\mathcal{L}'\).
		Applying~\cref{lemma:coss-or-back-linkage-intersects-cluster-by-cluster} to $(\mathcal{S}^1, \mathscr{P}^1)$ and $\mathcal{R}^1$ yields two cases.
		
		If~\cref{item:coss-or-back-linkage-cluster-by-cluster:coss} holds, then we have a cycle of well-linked sets of width $w$ and length $\ell$, as desired.
		Otherwise,~\cref{item:coss-or-back-linkage-cluster-by-cluster:back-linkage} holds.
		That is, $\ToDigraph{\Brace{\mathcal{S}^1, \mathscr{P}^1} \cup \mathcal{R}}$ contains a path of well-linked sets $\Brace{\mathcal{S}^2, \mathscr{P}^2}$ of length $\ell_2$ and width $w_2$ and a linkage $\mathcal{R}^2 \subseteq \mathcal{R}^1$ of order $w_2$ such that $\mathcal{R}^2$ intersects $\Brace{\mathcal{S}^2, \mathscr{P}^2}$ cluster-by-cluster.
		Note that $\mathcal{R}^2$ is weakly $r$-minimal with respect to $\Brace{\mathcal{S}^2, \mathscr{P}^2}$.
		
		By~\cref{lemma:coss_or_ordered_web}, there is some linkage $\mathcal{V} = \Brace{ V_1, V_2, \dots, V_{v_1}}$ of order $v_1$ inside $\Brace{\mathcal{S}^2, \mathscr{P}^2}$ such that $\Brace{\mathcal{R}^2, \mathcal{V}}$ is an ordered web.
		Additionally, for all $1 \leq i < j \leq v_1$ there are $0 \leq s_i \leq t_i < s_j \leq t_j \leq \ell_2$ such that $V_i \subseteq \SubPOSS{(\mathcal{S}^2, \mathscr{P}^2)}{s_i}{t_i}$ and $V_j \subseteq \SubPOSS{(\mathcal{S}^2, \mathscr{P}^2)}{s_j}{t_j}$.
		
		Let $\mathcal{V}' = \Brace{V_{\ell_3 + 1}, V_{\ell_3 + 1}, \ldots , V_{v_1 - \ell_3}}$ and observe that $\Abs{\mathcal{V}'} = \bound{lemma:ordered-web-to-path-of-well-linked-sets-with-side-linkage}{v}{2w, \ell_4}$.
		Decompose $\mathcal{R}^2$ as $\mathcal{R}^2_a \cdot \mathcal{R}^3 \cdot \mathcal{R}^2_b \coloneqq \mathcal{R}^2$ such that $\mathcal{R}^3$ intersects all paths of $\mathcal{V}'$, $\End{\mathcal{R}^2_a} \subseteq \V{V_{v_1 - \ell_3}}$, $\Start{\mathcal{R}^2_b} \subseteq \V{V_{\ell_3 + 1}}$ and $\mathcal{R}^2_a$ and $\mathcal{R}^2_b$ are internally disjoint from $\mathcal{V}'$.
		
		By~\cref{lemma:ordered-web-to-path-of-well-linked-sets-with-side-linkage}, $\Brace{\mathcal{R}^3, \mathcal{V}'}$ contains a path of well-linked sets \((\mathcal{S}^3 = ( S^3_0, S^3_1, \dots, S^3_{\ell_4}), \mathscr{P}^3)\) of width $2w$ and length $\ell_4$ such that $A(S_0^3) \subseteq \Start{\mathcal{R}^3}$ and \(B(S^3_{\ell_4}) \subseteq \End{\mathcal{R}^3}.\)
		Since $\mathcal{L}'$ is internally disjoint from $\Brace{\mathcal{S}^2, \mathscr{P}^2}$ and from $\mathcal{R}$, it is also internally disjoint from $\Brace{\mathcal{S}^3, \mathscr{P}^3}$.
        We construct a partial back-linkage $\mathcal{R}^4$ for $\Brace{\mathcal{S}^3, \mathscr{P}^3}$ as follows.
		
		Choose some $B'_0 \subseteq B(S_0)$ and some $A'_{\ell'} \subseteq A(S_{\ell'})$ of size $2w$.
		Let $\mathcal{X}_1$ be some $\End{\mathcal{R}^2_b}$-$B'_0$-linkage of order $2w$ in $S_0$ and let $\mathcal{X}_2$ be some $A'_{\ell'}$-$\Start{\mathcal{R}^2_a}$-linkage of order $2w$ in $S_{\ell'}$.
		Since $\End{\mathcal{R}^2_b} \subseteq A(S_0)$ and $\Start{\mathcal{R}^2_a} \subseteq B(S_{\ell'})$, the linkages $\mathcal{X}_1$ and $\mathcal{X}_2$ exist.
		
		Fix an arbitrary ordering of $\mathcal{L}' = \Set{L'_1, L'_2, \dots, L'_{2w}}$.
		For each $L'_i \in \mathcal{L}'$ let $k_i = 2i + 1$ and choose some $v_i \in \V{L_i} \cap \V{S_{k_i} \cup \mathcal{P}_{k_i}}$ and some $u_i \in \V{L_i} \cap \V{S_{\ell' - k_i} \cup \mathcal{P}_{\ell' - k_i}}$.
		Let $Y_1 = \Set{v_i \mid 1 \leq i \leq 2w}$ and $Y_2 = \Set{u_i \mid 1 \leq i \leq 2w}$.
		Since $k_i - k_j \geq 2$ and $\ell' - k_i - (\ell' - k_j) \geq 2$ if $i < j$, by~\cref{lem:linkage_inside_poss}\cref{case:linkage_inside_poss_B_scattered} there is a $B''_0$-$Y_1$-linkage $\mathcal{Z}_1$ of order $2w$ inside $\SubPOSS{\Brace{\mathcal{S}, \mathscr{P}}}{0}{k_{2w}}$ and by~\cref{lem:linkage_inside_poss}\cref{case:linkage_inside_poss_scattered_A} there is a $Y_2$-$A''_{\ell'}$-linkage $\mathcal{Z}_2$ of order $2w$ inside $\SubPOSS{\Brace{\mathcal{S}, \mathscr{P}}}{\ell' - k_{2w}}{\ell'}$.
		
		Let $\mathcal{L}''$ be the sublinkage of $\mathcal{L}'$ from $Y_1$ to $Y_2$.
		By construction, $\mathcal{X}_1$, $\mathcal{X}_2$, $\mathcal{Z}_1$ and $\mathcal{Z}_2$ are pairwise internally disjoint.
		Further, $\mathcal{R}^2_a$ and $\mathcal{R}^2_b$ are disjoint since they are both part of the linkage $\mathcal{R}$.
		Hence, $\mathcal{R}'' = \mathcal{R}^2_b \cdot \mathcal{X}_1 \cdot \mathcal{Z}_1 \cdot \mathcal{L}'' \cdot \mathcal{Z}_2 \cdot \mathcal{X}_2 \cdot \mathcal{R}^2_a$ is a half-integral $B(S^3_{\ell_4})$-$A(S^3_0)$ linkage of order $2w$ which is internally disjoint from $\Brace{\mathcal{S}^3, \mathscr{P}^3}$.
		By~\cref{lemma:half_integral_to_integral_linkage}, there is a $\End{\mathcal{R}''}$-$\Start{\mathcal{R}''}$ linkage $\mathcal{R}^4$ of order $w$ inside $\ToDigraph{\mathcal{R}''}$.
		Hence, $\mathcal{R}^4$ is a partial back-linkage of order $w$ for $\Brace{\mathcal{S}^3, \mathscr{P}^3}$ which is internally disjoint from $\Brace{\mathcal{S}^3, \mathscr{P}^3}$.
		
		By~\cref{obs:restricting_width_poss}, $\Brace{\mathcal{S}^3, \mathscr{P}^3}$ contains a path of well-linked sets \((\mathcal{S}^4 = (S^4_0, S^4_1, \dots,\allowbreak S^4_{\ell_4}), \mathscr{P}^4)\) of width $w$ and length $\ell_4$ as a subgraph with $A(S_0^4) = \End{\mathcal{R}^4}$ and $B(S_{\ell_4}^4) = \Start{\mathcal{R}^4}$.
		By definition, $\Brace{\mathcal{S}^4, \Brace{ \mathcal{P}^4_0, \mathcal{P}^4_1, \dots, \mathcal{P}^4_{\ell_4 - 1}, \mathcal{R}^4}}$ is a cycle of well-linked sets of width $w$ and length $\ell$.
	\end{proof}

We consider a relaxation of horizontal webs where we no longer require
that every path in one linkage intersects every path
in the other linkage, which we call a \emph{\(c\)-horizontal semi-web}.
Observe that every \(c\)-horizontal web is also a \(c\)-horizontal semi-web.
	
	\begin{definition}
		\label{def:c-horizontal-semi-web}
		Let $\mathcal{H}, \mathcal{V}$ be two linkages.
		We say that $\Brace{\mathcal{H}, \mathcal{V}}$ is a $c$-\emph{horizontal semi-web} if $\mathcal{H}$ can be decomposed as $\mathcal{H} = \mathcal{H}^1 \cdot \mathcal{H}^2 \cdot \ldots \cdot \mathcal{H}^c$ and $\mathcal{V}$ can be decomposed as $\mathcal{V} = \mathcal{V}^1 \cdot \mathcal{V}^2 \cdot \ldots \cdot \mathcal{V}^c$ such that $\ToDigraph{\mathcal{V}^i} \cap \ToDigraph{\mathcal{H}} \subseteq \ToDigraph{\mathcal{H}^{c - i + 1} \cup \mathcal{V}^{c - i}}$ (we set $\mathcal{V}^0 = \emptyset$ for simplicity).
	\end{definition}

	While choosing the linkage $\mathcal{H}$ to be weakly minimal with respect to $\mathcal{V}$ in a horizontal web may produce a semi-web instead, we can still preserve the \emph{horizontal} property of the semi-web, as shown below.

	\begin{observation}
		\label{lemma:horizontally-minimal-semi-web}
		Let $\Brace{\mathcal{H}, \mathcal{V}}$ be a $3$-horizontal semi-web in a digraph $D$.
		Let \(\mathcal{H}^1 \cdot \mathcal{H}^2 \cdot \mathcal{H}^3 \coloneqq \mathcal{H}\) and
		\(\mathcal{V}^1 \cdot \mathcal{V}^2 \cdot \mathcal{V}^3 \coloneqq \mathcal{V}\)
		be decompositions of these two linkages witnessing that \((\mathcal{H}, \mathcal{V})\) is a 3-horizontal semi-web.
		Let \(\mathcal{U}^2 = \mathcal{V}^2 \cdot \mathcal{V}^3\).
		Then $\ToDigraph{\Brace{\mathcal{H}, \mathcal{V}}}$ contains a linkage $\mathcal{P} = \mathcal{P}^1 \cdot \mathcal{P}^2$ of order $\Abs{\mathcal{H}}$ and
		such that
		\(\mathcal{P}\) is weakly \(\Abs{\mathcal{H}}\)-minimal with respect to \(\mathcal{V}\),
		\(\mathcal{P}^2\) is internally disjoint from \(\mathcal{U}^2\),
		$\Start{\mathcal{P}} = \Start{\mathcal{H}}$,
		$\End{\mathcal{P}} = \End{\mathcal{H}}$ and
		$\Start{\mathcal{P}^2} \subseteq \V{\mathcal{H}^{2}}$.
	\end{observation}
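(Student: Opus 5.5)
The plan is to build \(\mathcal{P}\) by minimising \(\mathcal{H}\) with respect to \(\mathcal{V}\) inside \(\ToDigraph{(\mathcal{H},\mathcal{V})}\), and then to read off the split point \(\mathcal{P}^1\cdot\mathcal{P}^2\) from the structure of the semi-web. Concretely, take \(\mathcal{P}\) to be a \(\mathcal{V}\)-minimal linkage of order \(\Abs{\mathcal{H}}\) in \(\ToDigraph{\mathcal{H}\cup\mathcal{V}}\) with \(\Start{\mathcal{P}}=\Start{\mathcal{H}}\) and \(\End{\mathcal{P}}=\End{\mathcal{H}}\). It exists because \(\mathcal{H}\) itself is such a linkage, and we can iteratively delete arcs outside \(\E{\mathcal{V}}\) as described after \cref{def:H-minimal}. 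By \cref{obs:H-minimal-implies-weakly-minimal}, \(\mathcal{P}\) is weakly \(\Abs{\mathcal{H}}\)-minimal with respect to \(\mathcal{V}\). This settles the first, third and fourth requirements directly.

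It remains to find a decomposition \(\mathcal{P}=\mathcal{P}^1\cdot\mathcal{P}^2\) with \(\Start{\mathcal{P}^2}\subseteq\V{\mathcal{H}^2}\) and \(\mathcal{P}^2\) internally disjoint from \(\mathcal{U}^2=\mathcal{V}^2\cdot\mathcal{V}^3\). The key structural fact comes from the semi-web condition with \(c=3\): \(\mathcal{V}^2\) meets \(\mathcal{H}\) only in \(\mathcal{H}^2\cup\mathcal{V}^1\), and \(\mathcal{V}^3\) meets \(\mathcal{H}\) only in \(\mathcal{H}^1\cup\mathcal{V}^2\). Consequently \(\mathcal{U}^2\) never touches \(\mathcal{H}^3\). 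Moreover, once a path of \(\ToDigraph{\mathcal{H}\cup\mathcal{V}}\) has left \(\mathcal{H}^1\cup\mathcal{H}^2\) into \(\mathcal{H}^3\), it can only travel along \(\mathcal{H}^3\) or \(\mathcal{V}^1\). That is because \(\mathcal{H}^3\) does not meet \(\mathcal{U}^2\), and \(\mathcal{V}^1\) only leads onward along \(\mathcal{V}^1\) itself. This should show that the part of the graph after \(\mathcal{H}^2\) is separated from \(\mathcal{U}^2\).

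For each \(P\in\mathcal{P}\), I would define \(\mathcal{P}^2\) as the suffix of \(P\) starting at the last vertex of \(P\) lying in \(\V{\mathcal{H}^1\cup\mathcal{H}^2\cup\mathcal{U}^2}\). The suffix should then be contained in \(\mathcal{H}^3\cup\mathcal{V}^1\) apart from its start, and is hence internally disjoint from \(\mathcal{U}^2\). Then I need to argue that the start vertex can be chosen in \(\V{\mathcal{H}^2}\).

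The \textbf{main obstacle} is ensuring that this last vertex lies in \(\mathcal{H}^2\) rather than in \(\mathcal{U}^2\setminus\mathcal{H}\) or in \(\mathcal{H}^1\). Two facts should handle it:
\begin{itemize}
\item \(\End{P}\) lies in \(\mathcal{H}^3\), and vertices of \(\mathcal{H}^3\) are reachable from \(\mathcal{H}^1\cup\mathcal{U}^2\) only by passing through \(\mathcal{H}^2\), or via \(\mathcal{V}^1\) entered from \(\mathcal{V}^2\) inside \(\mathcal{H}\).
\item The transition \(\mathcal{V}^2\to\mathcal{V}^1\) occurs at \(\End{\mathcal{V}^2}\). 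By the semi-web condition for \(i=1\), \(\mathcal{V}^1\) meets \(\mathcal{H}\) only in \(\mathcal{H}^3\), so this route enters \(\mathcal{H}^3\) without touching \(\mathcal{H}^2\).
\end{itemize}
This second possibility is the delicate case. I would handle it by placing the split at the first vertex of \(\mathcal{H}^3\) reached after the last visit to \(\mathcal{U}^2\), if needed, and then checking whether that vertex can be taken in \(\mathcal{H}^2\) by using the arcs of \(\mathcal{H}\) between \(\mathcal{H}^2\) and \(\mathcal{H}^3\). If not, I would choose \(\mathcal{P}\) more carefully: among all \(\mathcal{V}\)-minimal linkages, pick one that maximally reuses \(\mathcal{H}^3\). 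This keeps the final segment entering through \(\mathcal{H}^2\), and the minimality argument is otherwise unchanged.
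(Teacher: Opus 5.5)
Your first step is exactly the paper's: take a \(\mathcal{V}\)-minimal \(\Start{\mathcal{H}}\)-\(\End{\mathcal{H}}\)-linkage and invoke \cref{obs:H-minimal-implies-weakly-minimal}. Your choice of split vertex also helps: taking \(x\) to be the last vertex of \(P\) in \(\V{\mathcal{H}^1\cup\mathcal{H}^2\cup\mathcal{U}^2}\) makes internal disjointness from \(\mathcal{U}^2\) automatic. The real content, however, is showing that \(x\in\V{\mathcal{H}^2}\), and this is never proved.

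The analysis you give for that step has the orientation of \(\mathcal{V}\) reversed. Since \(\mathcal{V}=\mathcal{V}^1\cdot\mathcal{V}^2\cdot\mathcal{V}^3\), every path of \(\mathcal{V}\) traverses \(\mathcal{V}^1\) first. So \(\mathcal{V}^1\) does lead on into \(\mathcal{V}^2\), and your claim that it ``only leads onward along \(\mathcal{V}^1\) itself'' is false. In the forward direction, the part after \(\mathcal{H}^2\) is therefore \emph{not} separated from \(\mathcal{U}^2\). Conversely, a transition \(\mathcal{V}^2\to\mathcal{V}^1\) never happens. Your ``delicate case'' thus does not exist. The patches you propose would not work in any case: splitting at a vertex of \(\mathcal{H}^3\) violates \(\Start{\mathcal{P}^2}\subseteq\V{\mathcal{H}^2}\), and picking a minimal linkage that ``maximally reuses \(\mathcal{H}^3\)'' is not an argument.

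The missing idea is a one-way closure property. Let \(W=\V{\mathcal{H}^1\cup\mathcal{U}^2}\setminus\V{\mathcal{H}^2}\). In the union of \(\mathcal{H}\) and \(\mathcal{V}\), every out-neighbour of a vertex \(u\in W\) lies in \(W\cup\V{\mathcal{H}^2}\). There are two kinds of out-arc to check.
\begin{itemize}
\item An \(\mathcal{H}\)-arc leaves \(u\) only if \(u\in\V{\mathcal{H}}\). Then \(u\in\V{\mathcal{H}^1}\), because \(\mathcal{U}^2\) meets \(\mathcal{H}\) only in \(\mathcal{H}^1\cup\mathcal{H}^2\). So the arc ends in \(\mathcal{H}^1\cup\mathcal{H}^2\).
\item A \(\mathcal{V}\)-arc leaves \(u\) only if \(u\in\V{\mathcal{V}}\). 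Then \(u\in\V{\mathcal{U}^2}\), because \(\mathcal{V}^1\) meets \(\mathcal{H}^1\) only in vertices of \(\mathcal{H}^2\cup\mathcal{U}^2\). So the arc stays in \(\mathcal{U}^2\), since \(\mathcal{V}^2\cdot\mathcal{V}^3\) is a terminal segment of each path.
\end{itemize}
Now \(\End{P}\in\V{\mathcal{H}^3}\) is not in \(W\). Hence if \(x\in W\), the successor of \(x\) on \(P\) would lie in \(W\cup\V{\mathcal{H}^2}\), contradicting the maximality of \(x\). So \(x\in\V{\mathcal{H}^2}\), and \(x\) is the last vertex of \(P\) on \(\mathcal{H}^2\).

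This is exactly the paper's split point. The paper's assertion that \(\mathcal{P}^2\) is internally disjoint from \(\mathcal{U}^2\) ``by construction'' rests on this same property, namely that \(\mathcal{V}\) contains no path from \(\mathcal{H}^1\) to \(\mathcal{H}^3\).
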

	\begin{proof}
		Let $\mathcal{P}$ be a $\Start{\mathcal{H}}$-$\End{\mathcal{H}}$-linkage of order $\Abs{\mathcal{H}}$ which is $\mathcal{V}$-minimal.
		By~\cref{obs:H-minimal-implies-weakly-minimal}, $\mathcal{P}$ is also weakly $\Abs{\mathcal{H}}$-minimal with respect to $\mathcal{V}$.
		
		Since $\Brace{\mathcal{H}, \mathcal{V}}$ is a 3-horizontal semi-web, there is no path from $\mathcal{H}^1$ to $\mathcal{H}^3$ in $\mathcal{V}$.
		As $\Start{\mathcal{P}} = \Start{\mathcal{H}} = \Start{\mathcal{H}^1}$ and $\End{\mathcal{P}} = \End{\mathcal{H}} = \End{\mathcal{H}^3}$, every path in $\mathcal{P}$ must intersect some vertex of $\mathcal{H}^2$.
		Let $Y$ be the set containing, for each $P \in \mathcal{P}$, the last vertex of $P$ which is also in $\mathcal{H}^2$.
		
		Decompose $\mathcal{P}$ into $\mathcal{P}^1 \cdot \mathcal{P}^2 = \mathcal{P}$ such that $\Start{\mathcal{P}^2} = Y$.
				By construction, we have that $\mathcal{P}^2$ is internally disjoint from \(\mathcal{U}^2\).
    							\end{proof}

	Using \cref{lemma:horizontally-minimal-semi-web} above we can
	show that we can construct a horizontal web $(\mathcal{H}', \mathcal{V}')$
	from a semi-web $(\mathcal{H}, \mathcal{V})$
	such that $\mathcal{H}'$ is weakly minimal with respect to $\mathcal{V}'$, or
	we find large $\mathcal{H}' \subseteq \mathcal{H}$ and $\mathcal{V}' \subseteq \mathcal{V}$
	which are disjoint.
	From the latter case we will construct a cycle of well-linked sets in
	\cref{lemma:coss-or-minimal-2-horizontal-web},
	whereas the former case will be handled in \cref{sec:cows-from-horizontal-web}.
	
	We define
	\begin{align*}
		\boundDefAlign{lemma:c-horizontal-web-minimal-or-disjoint-linkage}{h}{h_1,h_2}
		\bound{lemma:c-horizontal-web-minimal-or-disjoint-linkage}{h}{h_1,h_2} & \coloneqq 
		2(h_2 - 1) + h_1,	\\[0em]
		\boundDefAlign{lemma:c-horizontal-web-minimal-or-disjoint-linkage}{v}{h,h_1,v_1,h_2,v_2}
		\bound{lemma:c-horizontal-web-minimal-or-disjoint-linkage}{v}{h,h_1,v_1,h_2,v_2} & \coloneqq 
		(v_2 - 1) \cdot 2 \binom{h}{h_2} +  (v_1 - 1) \cdot \binom{h}{h_1} - 1.
	\end{align*}
	Note that \(\bound{lemma:c-horizontal-web-minimal-or-disjoint-linkage}{v}{h,h_1,v_1,h_2,v_2} \in \PowerTower{1}{\Polynomial{3}{h, {h}_{1}, {v}_{1}, {h}_{2}, {v}_{2}}}\).

	\begin{lemma}
		\label{lemma:c-horizontal-web-minimal-or-disjoint-linkage}
		Let $\Brace{\mathcal{H}, \mathcal{V}}$ be a $3$-horizontal semi-web such that $h \coloneqq \Abs{\mathcal{H}} \geq \bound{lemma:c-horizontal-web-minimal-or-disjoint-linkage}{h}{h_1,h_2}$ and $v \coloneqq \Abs{\mathcal{V}} \geq \bound{lemma:c-horizontal-web-minimal-or-disjoint-linkage}{v}{h, h_1, v_1, h_2, v_2}$.
		Let \(\mathcal{H}^1 \cdot \mathcal{H}^2 \cdot \mathcal{H}^3 \coloneqq \mathcal{H}\) and
		\(\mathcal{V}^1 \cdot \mathcal{V}^2 \cdot \mathcal{V}^3 \coloneqq \mathcal{V}\) 
		be decompositions of these two linkages witnessing that \((\mathcal{H}, \mathcal{V})\) is a 3-horizontal semi-web.
		Then $\Brace{\mathcal{H}, \mathcal{V}}$ contains one of the following:
		\begin{enamerate}{W}{item:c-horizontal-web:minimal}
			\item \label{item:c-horizontal-web:minimal}
                a $2$-horizontal web $\Brace{\mathcal{H}', \mathcal{V}'}$ where $\ToDigraph{\mathcal{H}'} \subseteq \ToDigraph{\mathcal{H} \cup \mathcal{V}}$, $\mathcal{V}' \subseteq \mathcal{V}$, $\Abs{\mathcal{H}'} \geq h_1$, $\mathcal{H}'$ is weakly $h$-minimal with respect to $\mathcal{V}$ and $\Abs{\mathcal{V}'} \geq v_1$, or
			\item \label{item:c-horizontal-web:linkage}
                some linkage $\mathcal{H}' \subseteq \ToDigraph{\mathcal{H} \cup \mathcal{V}}$	of order $h_2$ and some linkage $\mathcal{V}' \subseteq \mathcal{V}$ of order $v_2$ such that $\mathcal{H}'$ is internally disjoint from~$\mathcal{V}'$.
                Additionally, $\Start{\mathcal{H}'} \subseteq \Start{\mathcal{H}}$ and $\End{\mathcal{H}'} \subseteq \V{\mathcal{H}^2}$, or $\Start{\mathcal{H}'} \subseteq \V{\mathcal{H}^2}$ and $\End{\mathcal{H}'} \subseteq \End{\mathcal{H}}$.
		\end{enamerate}
	\end{lemma}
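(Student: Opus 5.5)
We begin by applying \cref{lemma:horizontally-minimal-semi-web} to $(\mathcal{H}, \mathcal{V})$. This gives a linkage $\mathcal{P} = \mathcal{P}^1 \cdot \mathcal{P}^2$ of order $h$ in $\ToDigraph{\mathcal{H} \cup \mathcal{V}}$ which is weakly $h$-minimal with respect to $\mathcal{V}$. It has $\Start{\mathcal{P}} = \Start{\mathcal{H}}$, $\End{\mathcal{P}} = \End{\mathcal{H}}$ and $\Start{\mathcal{P}^2} \subseteq \V{\mathcal{H}^2}$, and $\mathcal{P}^2$ is internally disjoint from $\mathcal{U}^2 = \mathcal{V}^2 \cdot \mathcal{V}^3$.

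The semi-web structure controls where each part of $\mathcal{V}$ can meet $\mathcal{P}$:
\begin{itemize}
\item $\mathcal{V}^1$ meets only $\mathcal{H}^3$;
\item $\mathcal{V}^2$ meets only $\mathcal{H}^2 \cup \mathcal{V}^1$;
\item $\mathcal{V}^3$ meets only $\mathcal{H}^1 \cup \mathcal{V}^2$.
\end{itemize}
So the intersections of a path $P \in \mathcal{P}$ with a path $V \in \mathcal{V}$ occur in two regions. The first is the tail segment $V^2\cdot V^3$, which can only be hit by $\mathcal{P}^1$. The second is the head segment $V^1$, which can be hit by $\mathcal{P}^2$ and possibly also by $\mathcal{P}^1$.

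For each $V \in \mathcal{V}$ we record two sets:
\begin{itemize}
\item $X_V \subseteq \mathcal{P}$, the paths $P$ whose part $P^1$ intersects $V^2 \cdot V^3$;
\item $Y_V \subseteq \mathcal{P}$, the paths $P$ whose part $P^2$ intersects $V^1$.
\end{itemize}

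The plan is a pigeonhole/Ramsey-type counting over $\mathcal{V}$, matching the two summands in $\bound{lemma:c-horizontal-web-minimal-or-disjoint-linkage}{v}{}$. Call $V$ \emph{good} if $|X_V \cap Y_V| \geq h_1$, and \emph{bad} otherwise.

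\textbf{Case of many good paths.} By pigeonhole over the $\binom{h}{h_1}$ possible $h_1$-subsets, at least $v_1$ good paths share a common $h_1$-subset $\mathcal{H}' \subseteq \mathcal{P}$. Each such path of $\mathcal{P}$ meets each of these $V$ first in $V^2\cdot V^3$ (through $P^1$) and then in $V^1$ (through $P^2$). Decompose each $H' \in \mathcal{H}'$ as $P^1 \cdot P^2$, and each $V' \in \mathcal{V}'$ as $V^1 \cdot (V^2 \cdot V^3)$. Then $V'^1$ meets only $P^2$, and $V'^2 = V^2 \cdot V^3$ meets only $P^1$, with nonempty intersections in the required parts. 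This is exactly a $2$-horizontal web, so \cref{item:c-horizontal-web:minimal} holds. Weak $h$-minimality with respect to $\mathcal{V}$ is inherited, since $\mathcal{H}'$ is a sublinkage of $\mathcal{P}$ and separators are only monotonically affected.

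\textbf{Case of many bad paths.} Suppose at least $(v_2-1)\cdot 2\binom{h}{h_2}+1$ paths $V$ are bad. Since $h \geq 2(h_2-1)+h_1$ and $|X_V \cap Y_V| < h_1$, at least $2(h_2-1)+1$ paths of $\mathcal{P}$ lie outside $X_V$ or outside $Y_V$. Hence at least $h_2$ paths miss $X_V$, or at least $h_2$ paths miss $Y_V$.

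Pigeonholing over the two choices and the $\binom{h}{h_2}$ subsets gives $v_2$ paths $\mathcal{V}'$ and an $h_2$-subset of $\mathcal{P}$ that all miss, say, $Y_V$ for every $V \in \mathcal{V}'$. In that subcase we take $\mathcal{H}'$ to be the corresponding $\mathcal{P}^2$-parts. These are internally disjoint from $\mathcal{U}^2$ by construction and from $\mathcal{V}'^1$ by the choice of subset. They run from $\V{\mathcal{H}^2}$ to $\End{\mathcal{H}}$, giving the second alternative of \cref{item:c-horizontal-web:linkage}.

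In the other subcase we take the corresponding $\mathcal{P}^1$-parts, truncated at their last vertex in $\mathcal{H}^2$. These run from $\Start{\mathcal{H}}$ into $\V{\mathcal{H}^2}$ and avoid $V^2\cdot V^3$ by choice. The main obstacle is that they must also avoid $V^1$. Here we use that $V^1$ meets only $\mathcal{H}^3$: we must argue, via the choice of $\mathcal{P}$ and a suitable truncation point, that the truncated $\mathcal{P}^1$-part stays off $\mathcal{V}^1$. If it does not, we refine the cut point to the first vertex in $\mathcal{H}^2$ after which the path no longer visits $\mathcal{V}^1$.

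The careful handling of where $\mathcal{P}^1$ can touch $\mathcal{V}^1$, which $\mathcal{V}$-minimality may cause, is the delicate step of the proof. The rest is bookkeeping with the bounds.
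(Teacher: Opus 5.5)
Your overall architecture is the paper's. You apply \cref{lemma:horizontally-minimal-semi-web}, classify each $V\in\mathcal{V}$ by how many paths of $\mathcal{P}^1$ and $\mathcal{P}^2$ meet it, and pigeonhole onto a common $h_1$- or $h_2$-subset. Your good case and your $\mathcal{P}^2$-subcase are correct. The side claim that $V'^1$ meets only $P^2$ is false in general, but a $2$-horizontal web puts no constraint on where $V^1$ meets $H$, so nothing depends on it.

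The gap is the $\mathcal{P}^1$-subcase, which you flag but do not prove. There you only know that $P^1$ misses $V^2\cdot V^3$. However, \cref{item:c-horizontal-web:linkage} needs internal disjointness from all of $V$, and $P^1$ really can run along $\mathcal{V}^1$ between its first and last visit to $\mathcal{H}^2$; a $\mathcal{V}$-minimal linkage is, if anything, pushed onto arcs of $\mathcal{V}$. Your proposed repair does not work:
\begin{itemize}
\item truncating ``at the last vertex in $\mathcal{H}^2$'' is no truncation, since that is where $P^1$ already ends;
\item cutting at ``the first vertex in $\mathcal{H}^2$ after which the path no longer visits $\mathcal{V}^1$'' keeps exactly the $\mathcal{V}^1$-visits you want to remove, because the kept piece is the prefix starting at $\Start{\mathcal{H}}$.
\end{itemize}

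The missing idea is to cut at the \emph{first} vertex $y$ of $P$ in $\V{\mathcal{H}^2}$. In $\ToDigraph{\mathcal{H}\cup\mathcal{V}}-\V{\mathcal{H}^2}$, everything reachable from $\Start{\mathcal{H}}$ lies in $\V{\mathcal{H}^1}\cup\V{\mathcal{V}^2\cdot\mathcal{V}^3}$, for three reasons:
\begin{itemize}
\item an arc of $\mathcal{H}$ leaving $\mathcal{H}^1$ ends in $\mathcal{H}^2$;
\item a path $V\in\mathcal{V}$ meets $\mathcal{H}^1$ only on $V^2\cdot V^3$, and following $V$ forward never returns to $V^1$;
\item $V^2\cdot V^3$ meets $\mathcal{H}$ only in $\mathcal{H}^1\cup\mathcal{H}^2$.
\end{itemize}
This is the same reachability fact the observation uses to show that every path of $\mathcal{P}$ meets $\mathcal{H}^2$. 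It uses only that $\mathcal{V}^1$ avoids $\mathcal{H}^1$ and $\mathcal{V}^2\cdot\mathcal{V}^3$ avoids $\mathcal{H}^3$. So it also covers the looser reading of the semi-web, in which $\mathcal{V}^1$ may touch $\mathcal{H}^2$ and $\mathcal{V}^2$ may touch $\mathcal{H}^1$; that is how the semi-web arises in \cref{lemma:coss-or-minimal-2-horizontal-web}.

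Hence the prefix $P[\Start{P},y]$ is internally disjoint from $\mathcal{V}^1$. Being a prefix of $P^1$, it also avoids $V^2\cdot V^3$ for every $V\in\mathcal{V}'$. These $h_2$ prefixes run from $\Start{\mathcal{H}}$ into $\V{\mathcal{H}^2}$ and give the first alternative of \cref{item:c-horizontal-web:linkage}.

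For comparison, the paper avoids this issue by letting $\mathcal{Y}^1_j$ consist of the paths of $\mathcal{P}^1$ internally disjoint from the \emph{whole} of $V_j$, so its $i=1$ subcase needs no truncation. The price is that in its good case the intersection of $P^1$ with $V_j$ must still be located on $V_j^2\cdot V_j^3$, which your definition of $X_V$ supplies directly. With the truncation step added, your variant is complete.
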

	\begin{proof}
		By~\cref{lemma:horizontally-minimal-semi-web}, $\Brace{\mathcal{H}, \mathcal{V}}$ contains a linkage $\mathcal{P} = \mathcal{P}^1 \cdot \mathcal{P}^2$ of order $\Abs{\mathcal{H}}$
		such that 
		\(\mathcal{P}\) is weakly \(\Abs{\mathcal{H}}\)-minimal with respect to \(\mathcal{V}\),
		\(\mathcal{P}^2\) is internally disjoint from \(\mathcal{U}^2 \coloneqq \mathcal{V}^2 \cdot \mathcal{V}^3\),
		$\Start{\mathcal{P}} = \Start{\mathcal{H}}$,
		$\End{\mathcal{P}} = \End{\mathcal{H}}$ and
		$\Start{\mathcal{P}^2} \subseteq \V{\mathcal{H}^{2}}$.
		
		For each $V_j \in \mathcal{V}$ and each $1 \leq i \leq 2$ let $\mathcal{X}_j^i \subseteq \mathcal{P}^i$ be the paths of $\mathcal{P}^i$ which internally intersect $V_j$ and let $\mathcal{Y}_j^i \subseteq \mathcal{P}^i$ be the paths of $\mathcal{P}^i$ which are internally disjoint from $V_j$.
		Let $\mathcal{M} \subseteq \mathcal{V}$ be the set of paths $V_j \in \mathcal{V}$ for that some $i$ exists such that $\Abs{\mathcal{Y}_j^i} \geq h_2$.
		Let $\mathcal{N} = \mathcal{V} \setminus \mathcal{M}$.
		
		\textbf{Case 1:} $\Abs{\mathcal{M}} \leq (v_2 - 1) \cdot 2 \binom{\Abs{\mathcal{H}}}{h_2}$.
		
		Hence, $\Abs{\mathcal{N}} \geq \Abs{\mathcal{V}} - (v_2 - 1) \cdot 2 \binom{\Abs{\mathcal{H}}}{h_2} \geq (v_1 - 1) \cdot \binom{\Abs{\mathcal{H}}}{h_1} + 1$.
		For each $V_j \in \mathcal{N}$ let $\mathcal{X}_j = \Set{P^1 \cdot P^2 \in \mathcal{P} \mid P^1 \in \mathcal{X}^1_j \text{ and } P^2 \in \mathcal{X}^2_j}$.
		Since $\mathcal{X}^i_j \cup \mathcal{Y}^i_j = \mathcal{P}^i$ and $\Abs{\mathcal{Y}^i_j} < h_2$ for all $V_j \in \mathcal{N}$ and all $1 \leq i \leq 2$, we have that $\Abs{\mathcal{X}_j} \geq \Abs{\mathcal{H}} - 2 \cdot (h_2 - 1) \geq h_1$ for every $V_j \in \mathcal{N}$.
		
		There are at most $\binom{\Abs{\mathcal{H}}}{h_1}$ distinct linkages $\mathcal{H}' \subseteq \mathcal{P}$ of order $h_1$.
		By the pigeon-hole principle, there is some $\mathcal{V}' \subseteq \mathcal{V}$ of order $v_1$ for which some $\mathcal{H}' \subseteq \mathcal{P}$ of order $h_1$ exists such that $\mathcal{X}_j \supseteq \mathcal{H}'$ for all $V_j \in \mathcal{V}'$.

		Decompose \(\mathcal{H}'\) and \(\mathcal{V}'\) into
		\(\mathcal{F}^{1} \cdot \mathcal{F}^{2} \coloneqq \mathcal{H}'\) and
		\(\mathcal{W}^{1} \cdot \mathcal{W}^{2} \coloneqq \mathcal{V}'\)
		such that
		\(\mathcal{F}^2\) are maximal subpaths of \(\mathcal{P}^2\)
		which do not contain \(\Start{\mathcal{P}^2}\)
		and
		\(\mathcal{W}^1 \subseteq \mathcal{V}^1\).

		Because \(\mathcal{P}^2\)	is internally disjoint from \(\mathcal{U}^2\),
		we have that \(\mathcal{F}^2\) is disjoint from \(\mathcal{W}^2\).
		Further, by definition of \(\mathcal{X}^2_j\),
		every path in \(\mathcal{F}^2\) intersects every path in \(\mathcal{W}^1\).
		Finally, by definition of \(\mathcal{X}^1_j\),
		every path in \(\mathcal{F}^1\) intersects every path in \(\mathcal{W}^2\).
		Hence, $\Brace{\mathcal{H}', \mathcal{V}'}$ is a $2$-horizontal web with $\Abs{\mathcal{H}'} = h_1$, $\Abs{\mathcal{V}'} = v_1$, as witnessed by the decomposition above, and $\mathcal{H}'$ is weakly $\Abs{\mathcal{H}}$-minimal with respect to $\mathcal{V}'$, satisfying~\cref{item:c-horizontal-web:minimal}.
		
		\textbf{Case 2:} $\Abs{\mathcal{M}} \geq (v_2 - 1) \cdot 2 \binom{\Abs{\mathcal{H}}}{h_2} + 1$.
		
		By the pigeon-hole principle, there is some $i \in \Set{1,2}$ and some $\mathcal{H}' \subseteq \mathcal{P}^i$ of order $h_2$ for which there is a set $\mathcal{V}' \subseteq \mathcal{V}$ of order $v_2$ such that $\mathcal{Y}^i_j \supseteq \mathcal{H}'$ for all $V_j \in \mathcal{V}'$.
		Hence, $\mathcal{H}'$ is a linkage of order $h_2$ which is internally disjoint from $\mathcal{V}'$.
		
		If $i = 1$, then $\Start{\mathcal{H}'} \subseteq \Start{\mathcal{P}} = \Start{\mathcal{H}}$ and $\End{\mathcal{H}'} \subseteq \End{\mathcal{P}^1} \subseteq \V{\mathcal{H}^2}$.
		
		If $i = 2$, then $\Start{\mathcal{H}'} \subseteq \Start{\mathcal{P}^2} \subseteq \V{\mathcal{H}^2}$ and $\End{\mathcal{H}'} \subseteq \End{\mathcal{P}^2} = \End{\mathcal{H}}$.
		
		Hence,~\cref{item:c-horizontal-web:linkage} holds.
	\end{proof}

	A back-linkage $\mathcal{R}$ intersecting a path of well-linked sets cluster-by-cluster
	provides a similar structure as a horizontal web.
	Applying \cref{lemma:c-horizontal-web-minimal-or-disjoint-linkage} we can
	obtain a horizontal web $(\mathcal{H}, \mathcal{V})$ where $\mathcal{H}$ is weakly minimal with respect to $\mathcal{V}$, or
	we find some $\mathcal{H}' \subseteq \mathcal{H}$ which is disjoint from $\mathcal{V}' \subseteq \mathcal{V}$.
	Since the linkage $\mathcal{V}$ comes from the back-linkage $\mathcal{R}$ and $\mathcal{H}$ lies inside the path of well-linked sets,
	we can use \cref{lemma:well-linked-poss-with-disjoint-forward-linkage-implies-coss} to obtain
	a cycle of well-linked sets.

	We define
	\begin{align*}
		\boundDefAlign{lemma:coss-or-minimal-2-horizontal-web}{w}{h,w,\ell}
		\bound{lemma:coss-or-minimal-2-horizontal-web}{w}{h,w,\ell} & \coloneqq
		\bound{lemma:c-horizontal-web-minimal-or-disjoint-linkage}{h}{h, \bound{lemma:well-linked-poss-with-disjoint-forward-linkage-implies-coss}{q}{w,\ell}} + 2\bound{lemma:well-linked-poss-with-disjoint-forward-linkage-implies-coss}{r}{w, \ell},
		\\[0em]
		\boundDefAlign{lemma:coss-or-minimal-2-horizontal-web}{\ell}{w, \ell}
		\bound{lemma:coss-or-minimal-2-horizontal-web}{\ell}{w, \ell} & \coloneqq
		3 (\bound{lemma:well-linked-poss-with-disjoint-forward-linkage-implies-coss}{\ell'}{w, \ell} + 1) - 1,
		\\[0em]
		\boundDefAlign{lemma:coss-or-minimal-2-horizontal-web}{r}{h,w,\ell,v}
		\bound{lemma:coss-or-minimal-2-horizontal-web}{r}{h,w} & \coloneqq
		\bound{lemma:c-horizontal-web-minimal-or-disjoint-linkage}{v}
		{ \bound{lemma:c-horizontal-web-minimal-or-disjoint-linkage}{h}
			{ h
				, \bound{lemma:well-linked-poss-with-disjoint-forward-linkage-implies-coss}{q}{w, \ell}
			}
			, h
			, v
			, \bound{lemma:well-linked-poss-with-disjoint-forward-linkage-implies-coss}{q}{w, \ell}
			, 2\bound{lemma:well-linked-poss-with-disjoint-forward-linkage-implies-coss}{r}{w, \ell}
		},
		\\[0em]
		\boundDefAlign{lemma:coss-or-minimal-2-horizontal-web}{m}{h, w}
		\bound{lemma:coss-or-minimal-2-horizontal-web}{m}{h, w} & \coloneqq
		\bound{lemma:c-horizontal-web-minimal-or-disjoint-linkage}{h}{h, \bound{lemma:well-linked-poss-with-disjoint-forward-linkage-implies-coss}{q}{w, \ell}}.
	\end{align*}
	Observe that 
	\(\bound{lemma:coss-or-minimal-2-horizontal-web}{w}{h,w,\ell} \in \PowerTower{2}{\Polynomial{25}{h, w, \ell}}\),
	\(\bound{lemma:coss-or-minimal-2-horizontal-web}{\ell}{w,\ell} \in \PowerTower{3}{\Polynomial{25}{w, \ell}}\), as well as
	\(\bound{lemma:coss-or-minimal-2-horizontal-web}{r}{h,w,\ell,v} \in \PowerTower{3}{\Polynomial{26}{h,w,\ell,v}}\), and
	\(\bound{lemma:coss-or-minimal-2-horizontal-web}{m}{h,w,\ell} \in \PowerTower{2}{\Polynomial{25}{h, w, \ell}}\).
	
	\begin{lemma}
		\label{lemma:coss-or-minimal-2-horizontal-web}
		Let $w, \ell, h, v$ be integers, let $\Brace{\mathcal{S}, \mathscr{P}}$ be a strict path of well-linked sets of length $\ell' \geq \bound{lemma:coss-or-minimal-2-horizontal-web}{\ell}{w, \ell}$ and width $w' = \bound{lemma:coss-or-minimal-2-horizontal-web}{w}{h, w, \ell}$ with a back-linkage $\mathcal{R}$ of order $r \geq \bound{lemma:coss-or-minimal-2-horizontal-web}{r}{h,w,\ell,v}$ intersecting $\Brace{\mathcal{S}, \mathscr{P}}$ cluster-by-cluster.
		Then, $\ToDigraph{\mathcal{S} \cup \mathscr{P} \cup \mathcal{R}}$ contains one of the following:
		\begin{enamerate}{H}{item:coss-or-minimal-2-horizontal-web:coss}
			\item \label{item:coss-or-minimal-2-horizontal-web:coss}
                a cycle of well-linked sets of width $w$ and length $\ell$, or
			\item \label{item:coss-or-minimal-2-horizontal-web:c-horizontal-web}
                an $\bound{lemma:coss-or-minimal-2-horizontal-web}{m}{h, w, \ell}$-horizontally minimal $2$-horizontal web $\Brace{\mathcal{H}, \mathcal{V}}$ where $\mathcal{V} \subseteq \mathcal{R}$, $\Abs{\mathcal{H}} \geq h$ and $\Abs{\mathcal{V}} \geq v$.
		\end{enamerate}
	\end{lemma}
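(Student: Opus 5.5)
We will build a 3-horizontal semi-web from the path of well-linked sets and the back-linkage, apply \cref{lemma:c-horizontal-web-minimal-or-disjoint-linkage}, and in its second outcome construct a cycle of well-linked sets via \cref{lemma:well-linked-poss-with-disjoint-forward-linkage-implies-coss}.

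\textbf{Setup.} Write $\ell'' \coloneqq \bound{lemma:well-linked-poss-with-disjoint-forward-linkage-implies-coss}{\ell'}{w,\ell}$. Since $\ell' \geq 3(\ell''+1)-1$, we split $(\mathcal{S}, \mathscr{P})$ into three consecutive sub-paths $D_1, D_2, D_3$ of length at least $\ell''$ each. Using \cref{lem:poss-simple-routing}, we route inside $(\mathcal{S}, \mathscr{P})$ a forward $A(S_0)$-$B(S_{\ell'})$-linkage $\mathcal{H}$ of order $h_0 \coloneqq \bound{lemma:c-horizontal-web-minimal-or-disjoint-linkage}{h}{h, \bound{lemma:well-linked-poss-with-disjoint-forward-linkage-implies-coss}{q}{w,\ell}}$. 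We choose it cluster by cluster: inside each cluster via well-linkedness, and between clusters along $\mathscr{P}$. Then every path of $\mathcal{H}$ meets every cluster in order, and we decompose $\mathcal{H} = \mathcal{H}^1 \cdot \mathcal{H}^2 \cdot \mathcal{H}^3$ according to $D_1, D_2, D_3$. The width bound $w' = h_0 + 2\bound{lemma:well-linked-poss-with-disjoint-forward-linkage-implies-coss}{r}{w,\ell}$ leaves $2\bound{lemma:well-linked-poss-with-disjoint-forward-linkage-implies-coss}{r}{w,\ell}$ spare width, which we keep for later reuse.

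Take $\mathcal{V} \coloneqq \mathcal{R}$. Because $\mathcal{R}$ intersects $(\mathcal{S}, \mathscr{P})$ cluster-by-cluster, it passes through $D_3$, then $D_2$, then $D_1$, with only boundary overlaps. We therefore decompose $\mathcal{V} = \mathcal{V}^1 \cdot \mathcal{V}^2 \cdot \mathcal{V}^3$, where $\mathcal{V}^x$ is essentially the part inside $D_{4-x}$, cutting at the last visit to the boundary clusters. The absence of forward and backward jumps of length greater than one should give $\ToDigraph{\mathcal{V}^x} \cap \ToDigraph{\mathcal{H}} \subseteq \ToDigraph{\mathcal{H}^{4-x} \cup \mathcal{H}^{3-x}}$. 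This is the index pattern required in \cref{def:c-horizontal-semi-web}, after adjusting the cut points by one cluster so that overlaps only occur with the neighbouring part. So $(\mathcal{H}, \mathcal{V})$ is a 3-horizontal semi-web.

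\textbf{Applying \cref{lemma:c-horizontal-web-minimal-or-disjoint-linkage}.} We use $h_1 = h$, $v_1 = v$, $h_2 = \bound{lemma:well-linked-poss-with-disjoint-forward-linkage-implies-coss}{q}{w,\ell}$ and $v_2 = 2\bound{lemma:well-linked-poss-with-disjoint-forward-linkage-implies-coss}{r}{w,\ell}$. The hypotheses hold by the definitions of $\bound{lemma:coss-or-minimal-2-horizontal-web}{r}{}$ and $\bound{lemma:coss-or-minimal-2-horizontal-web}{m}{}$.
\begin{itemize}
\item Outcome \cref{item:c-horizontal-web:minimal} directly yields the 2-horizontal web of \cref{item:coss-or-minimal-2-horizontal-web:c-horizontal-web}. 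It has $\mathcal{V}' \subseteq \mathcal{R}$, and $\mathcal{H}'$ is weakly $h_0$-minimal with respect to $\mathcal{V}$, hence $\bound{lemma:coss-or-minimal-2-horizontal-web}{m}{}$-horizontally minimal.
\item Outcome \cref{item:c-horizontal-web:linkage} gives $\mathcal{H}'$ of order $h_2$, internally disjoint from some $\mathcal{V}' \subseteq \mathcal{R}$ of order $2\bound{lemma:well-linked-poss-with-disjoint-forward-linkage-implies-coss}{r}{w,\ell}$. Moreover, $\mathcal{H}'$ runs either from $\Start{\mathcal{H}}$ into $\mathcal{H}^2$ or from $\mathcal{H}^2$ to $\End{\mathcal{H}}$. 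In the first case $\mathcal{H}'$ traverses all of $D_1$; in the second it traverses all of $D_3$. Call the traversed part $D_i$.
\end{itemize}

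\textbf{Closing the second outcome.} The paths of $\mathcal{H}'$ lie in $\mathcal{H} \cup \mathcal{V}$. We restrict to $D_i$ and to the subpaths of $\mathcal{H}'$ between its first entry into $A$ of the first cluster and its last exit from $B$ of the last cluster. These give an $A$-$B$ forward linkage that meets every cluster and has the monotonicity required by \cref{lemma:well-linked-poss-with-disjoint-forward-linkage-implies-coss}.

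For the back-linkage, we take $\mathcal{V}'$ restricted to $D_i$. Combining it with \cref{obs:restricting_length_poss}, we obtain a partial back-linkage of order $\bound{lemma:well-linked-poss-with-disjoint-forward-linkage-implies-coss}{r}{w,\ell}$ for $D_i$ that is internally disjoint from the forward linkage. The factor $2$ in $v_2$ pays for the halving in \cref{obs:restricting_length_poss}. Then \cref{lemma:well-linked-poss-with-disjoint-forward-linkage-implies-coss} yields \cref{item:coss-or-minimal-2-horizontal-web:coss}.

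\textbf{Main obstacle.} The delicate part is the verification in this last step. First, $\mathcal{H}'$ may detour through $\mathcal{V}$, so we must check that restricting it to $D_i$ keeps it monotone through the clusters and disjoint from the back-linkage. Second, extending $\mathcal{V}'$ to the ends of $D_i$ must not create intersections with the forward linkage. I would try first to route these extensions through the reserved spare width of the clusters, which is disjoint from $\mathcal{H}$.
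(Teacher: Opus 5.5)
Your architecture is the paper's. You cut a forward linkage $\mathcal{H}$ into three thirds, form the $3$-horizontal semi-web $(\mathcal{H},\mathcal{R})$, and apply \cref{lemma:c-horizontal-web-minimal-or-disjoint-linkage} with exactly the paper's parameters. Outcome \cref{item:c-horizontal-web:minimal} gives \cref{item:coss-or-minimal-2-horizontal-web:c-horizontal-web} directly, and in outcome \cref{item:c-horizontal-web:linkage} you apply \cref{obs:restricting_length_poss} followed by \cref{lemma:well-linked-poss-with-disjoint-forward-linkage-implies-coss}.

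\textbf{The gap.} It is the step you defer as the ``main obstacle''. You assert that $\mathcal{H}'$ traverses all of $D_1$ (resp.\ $D_3$) but do not prove it. This is needed: $\mathcal{H}'$ lives in $\mathcal{H}\cup\mathcal{R}$, not inside $(\mathcal{S},\mathscr{P})$, and \cref{lemma:well-linked-poss-with-disjoint-forward-linkage-implies-coss} requires every forward path to meet every $S_j\cup\mathcal{P}_j$. The missing idea is to use the cluster-by-cluster hypothesis a second time, now against the detours of $\mathcal{H}'$:
\begin{itemize}
\item Suppose some $H\in\mathcal{H}'$, running from $A(S_0)$ to a vertex beyond $D_1$, misses $S_j\cup\mathcal{P}_j$ for some index $j$ of $D_1$.
\item Let $x$ be the first vertex of $H$ with index $j_1>j$, and let $y$ be the last vertex of $(\mathcal{S},\mathscr{P})$ on $H$ before $x$. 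Then $y$ has index $j_0<j$.
\item The $y$-$x$ segment is internally disjoint from $(\mathcal{S},\mathscr{P})$. Since $\mathcal{H}$ lies inside $(\mathcal{S},\mathscr{P})$ and no arc of $(\mathcal{S},\mathscr{P})$ skips an index, the segment consists of arcs of $\mathcal{R}$. It is therefore a subpath of a single path of $\mathcal{R}$.
\item So the segment is a forward jump of length $j_1-j_0\ge 2$, contradicting that $\mathcal{R}$ intersects $(\mathcal{S},\mathscr{P})$ cluster-by-cluster.
\end{itemize}
This is exactly the argument the paper gives. It establishes that every cluster is hit; the monotonicity condition you also list is not checked separately in the paper's proof.

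\textbf{Your second worry.} The concern about extending $\mathcal{V}'$, and your proposed fix of routing through ``spare width disjoint from $\mathcal{H}$'', are unnecessary. The pieces that \cref{obs:restricting_length_poss} attaches to $\mathcal{V}'$ lie, apart from their endpoints, in the part of $(\mathcal{S},\mathscr{P})$ outside $D_i$. For $D_1$, the piece before it is trivial and the piece after it starts in the $B$-set of its last cluster. This is why the paper can invoke that observation directly. The summand $2r(w,\ell)$ in $w'$ serves only its width requirement: the observation needs width at least twice the order of the back-linkage it returns.
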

	\begin{proof}
		Let
		$h_1 = \bound{lemma:well-linked-poss-with-disjoint-forward-linkage-implies-coss}{\ell'}{w,\ell}$,
		$h_2 = \bound{lemma:well-linked-poss-with-disjoint-forward-linkage-implies-coss}{q}{w,\ell}$,
		$h_3 = \bound{lemma:c-horizontal-web-minimal-or-disjoint-linkage}{h}{h, h_2}$,
		$w_1 = \bound{lemma:well-linked-poss-with-disjoint-forward-linkage-implies-coss}{r}{w, \ell}$ and
		$v_1 = 2w_1$.
		
		Let $\Brace{ S_0, S_1, \dots, S_{\ell'}} \coloneqq \mathcal{S}$ and $\Brace{ \mathcal{P}_0, \mathcal{P}_1, \dots, \mathcal{P}_{\ell' - 1}} \coloneqq \mathscr{P}$.
		To simplify notation, set \(\mathcal{P}_{\ell'} \coloneqq \emptyset.\)
				Let $\mathcal{H}$ be an $A(S_0)$-$B(S_{\ell'})$-linkage of order $h_3$ in $\Brace{\mathcal{S}, \mathscr{P}}$.
		By~\cref{lem:poss-simple-routing}, such a linkage exists.
		
		Let $t_i = (i-1) (h_1 + 1)$ for each $i \in \Set{1,2,3,4}$.
		Decompose $\mathcal{H}$ into $\mathcal{H} = \mathcal{H}^1 \cdot \mathcal{H}^2 \cdot \mathcal{H}^3$, where $\mathcal{H}^i$ is the sublinkage of $\mathcal{H}$ contained in $\SubPOSS{\Brace{\mathcal{S}, \mathscr{P}}}{t_i}{ t_{i+1} - 1}$.
		Decompose $\mathcal{R}$ iteratively as follows.
		Let $X_0 = \Start{\mathcal{R}}$ and let $X_3 = \End{\mathcal{R}}$.
		For each $0 \leq i \leq 3$ let $Y_i$ be the vertices of $\mathcal{R}$ such that for each $R \in \mathcal{R}$ the last intersection of $R$ with $\ToDigraph{S_{t_{i+1} - 1} \cup \mathcal{P}_{t_{i + 1} - 1}}$ lies in $Y_i$.
		Let $X_i$ be the successors of the vertices of $Y_i$ in $\mathcal{R}$.
		For each $1 \leq i \leq 3$, let $\mathcal{R}^i$ be the $X_{i-1}$-$X_i$-sublinkage of order $\Abs{\mathcal{R}}$ in $\mathcal{R}$.
		To simplify notation, set $\mathcal{R}^0$ as the linkage containing only the vertices of $\Start{\mathcal{R}^1}$.
		
		Because $\mathcal{R}$ intersects $\Brace{\mathcal{S}, \mathscr{P}}$ cluster-by-cluster, we have $\ToDigraph{\mathcal{H}^i} \cap \ToDigraph{\mathcal{R}} \subseteq \ToDigraph{\mathcal{R}^{4 - i} \cup \allowbreak \mathcal{R}^{3 - i}}$ for all $1 \leq i \leq 3$.
		Hence, $\Brace{\mathcal{H}, \mathcal{R}}$ is a $3$-horizontal semi-web.
		Further, $\Abs{\mathcal{H}} = h_3 = \bound{lemma:c-horizontal-web-minimal-or-disjoint-linkage}{h}{h, h_2}$ and $\Abs{\mathcal{R}} \geq \bound{lemma:c-horizontal-web-minimal-or-disjoint-linkage}{v}{h_3, h, v, h_2, v_1}$.
		By~\cref{lemma:c-horizontal-web-minimal-or-disjoint-linkage}, we have two cases.
		
		\textbf{Case 1:}
		\Cref{item:c-horizontal-web:minimal} holds.
		That is, $\Brace{\mathcal{H}, \mathcal{R}}$ contains a $2$-horizontal web $\Brace{\mathcal{H}', \mathcal{V}'}$ where $\ToDigraph{\mathcal{H}'} \subseteq \ToDigraph{\mathcal{H} \cup \mathcal{R}}$, $\mathcal{V}' \subseteq \mathcal{R}$, $\Abs{\mathcal{H}'} \geq h$, $\Abs{\mathcal{V}'} \geq v$ and $\mathcal{H}'$ is weakly $\Abs{\mathcal{H}}$-minimal with respect to $\mathcal{V}'$.
		This satisfies~\cref{item:coss-or-minimal-2-horizontal-web:c-horizontal-web}.
		
		\textbf{Case 2:}
		\Cref{item:c-horizontal-web:linkage} holds.
		That is, there is some $\mathcal{H}' \subseteq \ToDigraph{\mathcal{H} \cup \mathcal{R}}$ of order $h_2$ and some $\mathcal{V}' \subseteq \mathcal{R}$ of order $v_1$ such that $\mathcal{H}'$ is internally disjoint from~$\mathcal{V}'$.
		Additionally, $\Start{\mathcal{H}'} \subseteq \Start{\mathcal{H}}$ and $\End{\mathcal{H}'} \subseteq \V{\mathcal{H}^2}$, or $\Start{\mathcal{H}'} \subseteq \V{\mathcal{H}^2}$ and $\End{\mathcal{H}'} \subseteq \End{\mathcal{H}}$.
		We assume, without loss of generality, that $\Start{\mathcal{H}'} \subseteq \Start{\mathcal{H}}$ holds.
		The other case follows analogously by considering $\mathcal{H}^3$ instead of $\mathcal{H}^1$.
		
		We show that every path in $\mathcal{H}'$ intersects $\ToDigraph{S_i \cup \mathcal{P}_i}$ for every $0 \leq i \leq h_1$.
		
		Assume towards a contradiction that this is not the case.
		As $\Start{\mathcal{H}'} \subseteq A(S_0)$ and $\End{\mathcal{H}'} \subseteq \V{S_{h_1 + d}}$ for some integer $d$, there is some $0 \leq j \leq h_1$ for which some $H \in \mathcal{H}'$ exists such that $H$ is an $A(S_0)$-$\V{S_{h_1 + d}}$-path which does not intersect any vertex of $\ToDigraph{\mathcal{S}_j \cup \mathscr{P}_j}$.
		
		Let $j_0 < j$ be the largest index smaller than $j$ such that $H$ intersects $\ToDigraph{S_{j_0} \cup \mathcal{P}_{j_0}}$.
		Similarly, let $j_1 > j$ be the smallest index larger than $j$ such that $H$ intersects $\ToDigraph{S_{j_1} \cup \mathcal{P}_{j_1}}$.
		Since there is no $\V{S_{j_0} \cup \mathcal{P}_{j_0}}$-$\V{S_{j_1} \cup \mathcal{P}_{j_1}}$-path which is disjoint from $\ToDigraph{S_j \cup \mathcal{P}_j}$ inside $\Brace{\mathcal{S}, \mathscr{P}}$, $H$ contains a $\V{S_{j_0} \cup \mathcal{P}_{j_0}}$-$\V{S_{j_1} \cup \mathcal{P}_{j_1}}$-path $H_x$ which is internally disjoint from $\Brace{\mathcal{S}, \mathscr{P}}$ as a subpath.
		Hence, $H_x$ is a subpath of $\mathcal{V}' \subseteq \mathcal{R}$.
		This, however, implies that $H_x$ is a jump of length $j_1 - j_0 > 1$, a contradiction to the initial assumption that $\mathcal{R}$ intersects $\Brace{\mathcal{S}, \mathscr{P}}$ cluster-by-cluster.
		
		Let $S_{h_1}' = S_{h_1} \cup \ToDigraph{\mathcal{P}_{h_1}}$.
		Clearly, the digraph $(\mathcal{S}' \coloneqq (S_0, S_1, \dots, S_{h_1 - 1}, S_{h_1}'), \mathscr{P}' \coloneqq (\mathcal{P}_0, \mathcal{P}_1, \dots,$ $\mathcal{P}_{h_1 - 1}))$ is a path of well-linked sets of width $w' \geq 2w_1$ and length $h_1$.
		By~\cref{obs:restricting_length_poss}, there is a partial back-linkage $\mathcal{R}'$ of order $w_1$ for $\SubPOSS{\Brace{\mathcal{S}, \mathscr{P}}}{0}{h_1}$ (and, hence, for $\Brace{\mathcal{S}', \mathcal{P}'}$ as well) such that $\ToDigraph{\mathcal{R}'} \cap \Brace{\mathcal{S}', \mathscr{P}'} \subseteq \ToDigraph{\mathcal{V}' \cup \Start{\mathcal{R}'} \cup \End{\mathcal{R}'}}$.
		As $\mathcal{V}'$ is weakly $m$-minimal with respect to $\Brace{\mathcal{S}', \mathscr{P}'}$, the back-linkage $\mathcal{R}'$ is also weakly $m$-minimal with respect to $\Brace{\mathcal{S}', \mathscr{P}'}$.
		
		As every path in $\mathcal{H}'$ intersects $\ToDigraph{S_j \cup \mathcal{P}_j}$ for every $0 \leq j \leq h_1$, by~\cref{lemma:well-linked-poss-with-disjoint-forward-linkage-implies-coss} the digraph $\ToDigraph{\mathcal{S}' \cup \mathscr{P}' \cup \mathcal{R}' \cup \mathcal{H}'}$ contains a cycle of well-linked sets of width $w$ and length $\ell$, implying~\cref{item:coss-or-minimal-2-horizontal-web:coss}.
	\end{proof}
	
	\section{Constructing a cycle of well-linked sets}
	\label{sec:cows-from-horizontal-web}

	We now handle the remaining case
	where we obtain a 2-horizontal web as in \cref{lemma:coss-or-minimal-2-horizontal-web}\cref{item:coss-or-minimal-2-horizontal-web:c-horizontal-web}.
	The idea is to construct a new path of well-linked sets (\cref{subsec:new-path-of-well-linked-sets}) on the \emph{first half} of $\mathcal{H}$ in the horizontal web and then use the \emph{last half} of $\mathcal{H}$ to complete the back-linkage.
	We then conclude with the proof of our main result in \cref{subsec:the-directed-grid-theorem}.

	\subsection{A new path of well-linked sets}
	\label{subsec:new-path-of-well-linked-sets}

	We start with two simple observations which allow us to reroute paths inside a web and
	can be proven by a simple application of \emph{Menger's Theorem} (\cref{thm:menger}).

	\begin{observation}
		\label{lemma:web_mixed_linkages}
		Let $D$ be a digraph and let $\Brace{\mathcal{P}, \mathcal{Q}}$ be a web where $\Abs{\mathcal{P}} = \Abs{\mathcal{Q}}$.
		Then $\Start{\mathcal{P}}$ is well-linked to $\End{\mathcal{Q}}$ in $\ToDigraph{\mathcal{P} \cup \mathcal{Q}}$.
	\end{observation}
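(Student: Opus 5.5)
We check the definition of well-linkedness directly: fix $P' \subseteq \Start{\mathcal{P}}$ and $Q' \subseteq \End{\mathcal{Q}}$ with $\Abs{P'} = \Abs{Q'} = k$, and produce a $P'$-$Q'$-linkage of order $k$ in $\ToDigraph{\mathcal{P} \cup \mathcal{Q}}$. By Menger's Theorem (\cref{thm:menger}) it suffices to show that every $P'$-$Q'$-separator $X$ in $\ToDigraph{\mathcal{P} \cup \mathcal{Q}}$ has size at least $k$. So suppose, towards a contradiction, that $X$ is such a separator with $\Abs{X} < k$.

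We then count. Let $\mathcal{P}' \subseteq \mathcal{P}$ be the $k$ paths starting in $P'$, and $\mathcal{Q}' \subseteq \mathcal{Q}$ the $k$ paths ending in $Q'$. The paths of a linkage are pairwise vertex-disjoint, so each vertex of $X$ lies on at most one path of $\mathcal{P}'$ and at most one path of $\mathcal{Q}'$. Since $\Abs{X} < k$, there is some $P \in \mathcal{P}'$ disjoint from $X$ and some $Q \in \mathcal{Q}'$ disjoint from $X$.

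Because $(\mathcal{P}, \mathcal{Q})$ is a web, $P$ and $Q$ share a vertex $u$. We follow $P$ from $\Start{P} \in P'$ to $u$, and then $Q$ from $u$ to $\End{Q} \in Q'$. This walk avoids $X$, and shortcutting it gives a $P'$-$Q'$-path in $\ToDigraph{\mathcal{P} \cup \mathcal{Q}} - X$, contradicting the choice of $X$. Hence every such separator has size at least $k$, and Menger yields the required linkage.

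The only point needing care is the orientation. The web definition only says that the paths intersect, and in a directed graph this is enough here only because we travel along $P$ into the common vertex and then along $Q$ out of it. That is exactly why the statement pairs $\Start{\mathcal{P}}$ with $\End{\mathcal{Q}}$. The hypothesis $\Abs{\mathcal{P}} = \Abs{\mathcal{Q}}$ guarantees that every $k$ arising from equal-size subsets is at most $\Abs{\mathcal{P}}$ and at most $\Abs{\mathcal{Q}}$, so the sets $\mathcal{P}'$ and $\mathcal{Q}'$ always exist.
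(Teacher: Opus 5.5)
Your proof is correct and follows the paper's route: bound the size of any $P'$-$Q'$-separator by counting against the web's intersections, then apply Menger's Theorem. Your version is more precise than the paper's one-liner. The paper claims a separator ``must hit $|A|$ paths of $\mathcal{P}$'', but what actually holds, as you show, is that it must meet every path of $\mathcal{P}'$ or every path of $\mathcal{Q}'$. The hypothesis $|\mathcal{P}| = |\mathcal{Q}|$ is also not needed, since $k \le |\mathcal{P}|$ and $k \le |\mathcal{Q}|$ hold automatically.
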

	\begin{proof}
		Let $A \subseteq \Start{\mathcal{P}}$ and $B \subseteq \End{\mathcal{Q}}$ such that $\Abs{A} = \Abs{B}$.
		Since $\Brace{\mathcal{P}, \mathcal{Q}}$ is a web, there is no $A$-$B$-separator of size less than $\Abs{A}$, as such a separator must hit $\Abs{A}$ paths of $\mathcal{P}$.
		Hence, by~\cref{thm:menger} there is an $A$-$B$-linkage of size $\Abs{A}$ in $\ToDigraph{\mathcal{P} \cup \mathcal{Q}}$.
		Thus, $\Start{\mathcal{P}}$ is well-linked to $\End{\mathcal{Q}}$.
	\end{proof}
	
	\begin{observation}
		\label{lemma:well-linkedness_inside_horizontal_web}
		Let $\Brace{\mathcal{H}, \mathcal{V}}$ be a $2$-\HorizontalWeb.
		Define $\mathcal{H}^2 \coloneqq \Set{H_i^2 \mid H_i \in \mathcal{H}}$ and $\mathcal{V}^1 \coloneqq \Set{V_i^1 \mid V_i \in \mathcal{V}}$.
		Then, $\Start{\mathcal{H}^2}$ is well-linked to $\End{\mathcal{V}^1}$ inside $\ToDigraph{\mathcal{H}^2 \cup \mathcal{V}^1}$.
	\end{observation}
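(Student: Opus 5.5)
The plan is to reduce the statement to \cref{lemma:web_mixed_linkages} by showing that $(\mathcal{H}^2, \mathcal{V}^1)$ is itself a web in the sense of \cref{def:web}. More precisely, I will show that every path of $\mathcal{H}^2$ meets every path of $\mathcal{V}^1$, and then run the same Menger argument on the subsets involved.

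First I unpack \cref{def:horizontal-web} for $c = 2$. Each $V_j$ decomposes as $V_j^1 \cdot V_j^2$ and each $H_i$ as $H_i^1 \cdot H_i^2$. With $x = 1$ the conditions read $V_j^1 \cap H_i \subseteq H_i^{1} \cup H_i^{2}$ and $V_j^1 \cap H_i^{2} \neq \emptyset$. The second condition is the one I need: every $V_j^1 \in \mathcal{V}^1$ intersects every $H_i^2 \in \mathcal{H}^2$. Both $\mathcal{H}^2$ and $\mathcal{V}^1$ are linkages, since they consist of subpaths of the paths in the linkages $\mathcal{H}$ and $\mathcal{V}$. Hence $(\mathcal{H}^2, \mathcal{V}^1)$ is a web inside $\ToDigraph{\mathcal{H}^2 \cup \mathcal{V}^1}$.

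\cref{lemma:web_mixed_linkages} assumes $\Abs{\mathcal{P}} = \Abs{\mathcal{Q}}$, which need not hold here, so I repeat its proof directly rather than cite it. Let $A \subseteq \Start{\mathcal{H}^2}$ and $B \subseteq \End{\mathcal{V}^1}$ with $\Abs{A} = \Abs{B} = k$, and let $X$ be a set of fewer than $k$ vertices. Some path $H \in \mathcal{H}^2$ with $\Start{H} \in A$ avoids $X$, and some path $V \in \mathcal{V}^1$ with $\End{V} \in B$ avoids $X$, because the paths within each linkage are disjoint. Since $H$ and $V$ intersect, following $H$ from $\Start{H}$ to a common vertex and then $V$ to $\End{V}$ gives an $A$-$B$-walk, and hence an $A$-$B$-path, in $\ToDigraph{\mathcal{H}^2 \cup \mathcal{V}^1} - X$. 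Therefore every $A$-$B$-separator has size at least $k$, and \cref{thm:menger} yields an $A$-$B$-linkage of order $k$. This is exactly the required well-linkedness.

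I do not expect a real obstacle. The only points to check are the indexing convention in \cref{def:horizontal-web} (that $x = 1$ gives the non-emptiness of $V_j^1 \cap H_i^{2}$, using $H_i^{c-x+1} = H_i^2$) and that the definition of well-linkedness is per subset, so unequal sizes of $\mathcal{H}^2$ and $\mathcal{V}^1$ cause no problem.
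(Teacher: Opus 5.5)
Your proof is correct and takes the same route as the paper: the $x=1$ condition of \cref{def:horizontal-web} makes $(\mathcal{H}^2,\mathcal{V}^1)$ a web, and the conclusion then follows from the Menger argument of \cref{lemma:web_mixed_linkages}. The paper simply cites that observation, even though its hypothesis $\Abs{\mathcal{P}} = \Abs{\mathcal{Q}}$ is not guaranteed here; you instead re-run its Menger argument directly, which never uses that hypothesis, and this cleanly avoids the mismatch.
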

	\begin{proof}
		By definition $\Brace{\mathcal{H}^2, \mathcal{V}^1}$ is a web.
		Hence, by~\cref{lemma:web_mixed_linkages} $\Start{\mathcal{H}^2}$ is well-linked to $\End{\mathcal{V}^1}$ inside $\ToDigraph{\mathcal{H}^2 \cup \mathcal{V}^1}$.
	\end{proof}
	
	We also need the concepts of splits and segmentations together with the following \namecref{lemma:segmentation-or-2-split} from~\cite{kawarabayashi2015directed}.

	\begin{definition}[{\cite[Definitions 5.6 and 5.7]{kawarabayashi2022directed}}]
		\label{def:split-edge}
		Let $\PPP$ and $\QQQ^\star$ be linkages and let $\QQQ\subseteq \QQQ^\star$ be a sublinkage of order $q$. Let $r\geq 0$.
						        \begin{enamerate}{X}{last-item-split-edge}
			\item \label{def:split}
				An \emph{$(r, q')$-split of $(\PPP, \QQQ)$ (with respect to $\QQQ^\star$)} is a pair $(\PPP', \QQQ')$ of linkages of order $r = |\PPP'|$ and $q'=|\QQQ'|$ with $\QQQ'\subseteq \QQQ$ such that
				\begin{enumerate}[label=\itemsymbol]
					\item there is a path $P \in \PPP$ and arcs $e_1, \dots, e_{r-1}\in \A{P}\setminus \A{\QQQ^\star}$ such that $P = P_1e_1P_2\dots e_{r-1} P_r$ and $\PPP' \coloneqq (P_1, \dots, P_r)$,
					\item every $Q\in \QQQ'$ can be divided into subpaths $Q_1, \dots, Q_r$ such that $Q = Q_1e'_1\dots e'_{r-1}Q_r$ for suitable arcs $e'_1, \dots, e'_{r-1} \in E(Q)$, and
					\item $\emptyset \neq V(Q)\cap V(P_i) \subseteq V(Q_{r+1-i})$, for all $1\leq i \leq r$.
				\end{enumerate}
			
			\item \label{def:other-segmentation}
				A subset $\QQQ' \subseteq \QQQ$ of order $q'$ is a $q'$-\emph{segmentation of $P \in \mathcal{P}$ (with respect to $\QQQ^\star$)} if 
				\begin{enumerate}[label=\itemsymbol]
					\item there are arcs $e_1, \dots, e_{q'-1}\in \A{P}-\A{\QQQ^\star}$ with $P = P_1e_1 \dots P_{q'-1}e_{q'-1}P_{q'}$, for suitable subpaths $P_1, \dots, P_{q'}$, and
					\item $\QQQ'$ can be ordered as $(Q_1, \dots, Q_{q'})$ and $\V{Q_i}\cap \V{P} \subseteq \V{P_i}$.
				\end{enumerate}
			\item \label{def:segmentation}
				An \emph{$(r, q')$-segmentation (with respect to \(\mathcal{Q}^*\))} is a pair $(\PPP', \QQQ')$ where	$\PPP'$ is a linkage of order $r$ and $\QQQ'$ is a linkage of order $q'$ such that $\QQQ'$ is a $q'$-segmentation (with respect to \(\mathcal{Q}^*\)) of every path $P_i$ into segments $P^i_1e_1P^i_2\dots e_{q'-1}P^i_{q'}$.
			\item \label{def:ordered_segmentation}
				A segmentation $(\PPP', \QQQ')$ is \emph{ordered} if for all $P_i \in \PPP'$ the order $\Brace{Q_1,\dots,Q_{q'}}$ given by the $q'$-segmentation of $P_i$ is the same. 
				We say that $(\PPP', \QQQ')$ is an (ordered) $(r, q')$-segmentation of $(\PPP, \QQQ)$ if $\QQQ' \subseteq \QQQ$ and every path in $\PPP'$ is a subpath of a path in $\PPP$.\label{last-item-split-edge}
		\end{enamerate}
		An $(r, q)$-split $(\PPP, \QQQ)$ or an $(r, q)$-segmentation $(\PPP, \QQQ)$
		is well-linked if $\End{\mathcal{Q}}$ is well-linked to $\Start{\mathcal{Q}}$.
	\end{definition}

	Observe that every $(r, q)$-segmentation is an ordered $(r, q)$-web and
	that every $(r, q)$-split is a folded ordered \((r,q)\)-web.

	In order to construct a path of well-linked sets on the \emph{first half} of $\mathcal{H}$,
	we need to construct a split or a segmentation which is connected in an adequate way
	to the \emph{last half} of $\mathcal{H}$.
	Towards this end, we require the following statements from \cite{kawarabayashi2015directed, COSSI}.

	\begin{lemma}[{\cite[Corollary 5.12]{kawarabayashi2015directed}}]
		\label{lemma:segmentation-or-2-split}
		Let $H$ be a digraph and let $\mathcal{Q}^*$ be a linkage in $H$, and let $\mathcal{Q} \subseteq \mathcal{Q}^*$ be a linkage of order $q$.
		Let $P \subseteq H$ be a path intersecting every path in $\mathcal{Q}$.
		Let $c \geq 0$ be such that for every arc $e \in \A{P} \setminus \A{\mathcal{Q}^*}$ there are no $c$ pairwise vertex-disjoint paths in $H - e$ from $P_1$ to $P_2$ , where $P = P_1 \cdot e \cdot P_2$.
		For all $s, r \geq 0$, if $q \geq (r + c) \cdot s$, then
		\begin{enamerate}{R}{item:segmentation-or-2-split:2-split}
			\item \label{item:segmentation-or-2-split:segmentation}
                there is an $s$-segmentation $\mathcal{Q}' \subseteq \mathcal{Q}$ of $P$ with respect to $\mathcal{Q}^*$ or
			\item \label{item:segmentation-or-2-split:2-split}
                a $(2, r)$-split $((P_1 , P_2 ), \mathcal{Q}'' )$ of $(P, \mathcal{Q})$ with respect to $\mathcal{Q}^*$.
		\end{enamerate}
	\end{lemma}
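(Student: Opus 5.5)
The proof splits on whether some cuttable arc of $P$ is crossed by many paths of $\mathcal{Q}$: if so, the connectivity hypothesis on $c$ forces a $(2,r)$-split; if not, an interval-graph argument yields an $s$-segmentation. Below, a \emph{cuttable} arc means an arc $e \in \A{P} \setminus \A{\mathcal{Q}^*}$.

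\textbf{Block structure of $P$.} Cut $P$ at all its cuttable arcs. This writes $P$ as a concatenation of \emph{blocks} $B_1, B_2, \dots, B_m$, in order along $P$. Consecutive blocks are joined by a cuttable arc, and every arc inside a block lies in $\mathcal{Q}^*$. Consecutive vertices joined by an arc of $\mathcal{Q}^*$ lie on the same path of the linkage $\mathcal{Q}^*$. Hence each block lies on a single path of $\mathcal{Q}^*$, and since $\mathcal{Q}^*$ is a linkage, each block meets at most one $Q \in \mathcal{Q}$. For each $Q \in \mathcal{Q}$ let $I_Q = [a_Q, b_Q]$ be the interval of block indices from the first to the last block of $P$ meeting $Q$; it is non-empty because $P$ meets every $Q$. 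Distinct $Q$ have intervals whose endpoint blocks are distinct. A cuttable arc $e$ between $B_k$ and $B_{k+1}$ is \emph{crossed} by $Q$ if $a_Q \le k < b_Q$, i.e.\ $Q$ meets both sides $P_1, P_2$ of $P = P_1 \cdot e \cdot P_2$.

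\textbf{Case 1: some cuttable arc $e$ is crossed by at least $r + c - 1$ paths of $\mathcal{Q}$.} Take such an $e$ and write $P = P_1 \cdot e \cdot P_2$. Call a crossing path $Q$ \emph{bad} if, along $Q$, some vertex of $P_1$ precedes some vertex of $P_2$.
\begin{itemize}
\item Each bad $Q$ contains a $\V{P_1}$-$\V{P_2}$ subpath. It avoids $e$, because $e \notin \A{\mathcal{Q}^*}$ and $Q \subseteq \mathcal{Q}^*$.
\item The paths of $\mathcal{Q}$ are pairwise disjoint, so these subpaths are pairwise vertex-disjoint paths from $P_1$ to $P_2$ in $H - e$.
\item By the hypothesis on $c$, there are at most $c - 1$ bad paths.
\end{itemize}
At least $r$ crossing paths are therefore \emph{good}: all their $P_2$-vertices precede all their $P_1$-vertices along $Q$. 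Let $\mathcal{Q}''$ consist of $r$ good paths. Split each such $Q$ at an arc $e'$ lying after its last $P_2$-vertex and before its first $P_1$-vertex, giving $Q = Q_1 e' Q_2$. Then $\emptyset \ne \V{Q} \cap \V{P_1} \subseteq \V{Q_2}$ and $\emptyset \ne \V{Q} \cap \V{P_2} \subseteq \V{Q_1}$. So $((P_1, P_2), \mathcal{Q}'')$ is a $(2, r)$-split of $(P, \mathcal{Q})$ with respect to $\mathcal{Q}^*$, which is outcome \cref{item:segmentation-or-2-split:2-split}.

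\textbf{Case 2: every cuttable arc is crossed by at most $r + c - 2$ paths.}
\begin{itemize}
\item Consider the interval graph on $\{I_Q : Q \in \mathcal{Q}\}$. A clique has a common point (Helly property for intervals), namely a block index $k$.
\item Block $B_k$ meets at most one $Q$. So all intervals in the clique except possibly one contain $k$ without having $B_k$ as an endpoint block. Hence they all contain $k$ and $k+1$, or all contain $k-1$ and $k$, and so they cross a common cuttable arc.
\item The clique number is therefore at most $r + c - 1$.
\item Interval graphs are perfect, so there is an independent set of size at least $q/(r+c-1) \ge q/(r+c) \ge s$.
\end{itemize}
This gives $s$ pairwise disjoint intervals $I_{Q_1} < \dots < I_{Q_s}$ in order along $P$. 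Between consecutive ones there is a cuttable arc, and cutting $P$ at these $s - 1$ arcs gives $P = P_1 e_1 \dots e_{s-1} P_s$ with $\V{Q_i} \cap \V{P} \subseteq \V{P_i}$. This is an $s$-segmentation, outcome \cref{item:segmentation-or-2-split:segmentation}.

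\textbf{Main obstacle.} The delicate step is the clique bound in Case 2, together with making sure the cut arcs in both cases are genuinely outside $\A{\mathcal{Q}^*}$, as the definitions of split and segmentation require. Both rest on the block observation: a block meets at most one $Q \in \mathcal{Q}$, and consecutive blocks are separated by a cuttable arc.
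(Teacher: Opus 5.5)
The paper gives no proof of this lemma. It is imported verbatim from \cite{kawarabayashi2015directed} (Corollary~5.12 there), so there is no in-paper argument to compare with. Judged on its own, your proof is correct.

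The block decomposition is the right device. Consecutive arcs of $A(\mathcal{Q}^*)$ share a vertex, and that vertex lies on exactly one path of the linkage. So every block lies inside a single path of $\mathcal{Q}^*$ (or is a single vertex). Hence a block meets at most one $Q\in\mathcal{Q}$, and every arc between consecutive blocks lies outside $A(\mathcal{Q}^*)$.

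In Case~1, the forward-crossing paths give pairwise vertex-disjoint $V(P_1)$--$V(P_2)$ paths in $H-e$, because they use only arcs of $\mathcal{Q}^*$. So at most $c-1$ of them exist. For each backward-crossing $Q$, you cut at an arc $e'$ between its last $P_2$-vertex and its first $P_1$-vertex. This gives exactly the conditions of a $(2,r)$-split.

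In Case~2, the Helly argument correctly bounds the clique number by (maximum crossing number)$\,+1\le r+c-1$. The extra one accounts for the unique path that may meet $B_k$ itself.

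The only place where a more elementary route is available is Case~2, where you invoke perfection of interval graphs. A left-to-right greedy sweep builds the same independent set by hand: cut right after the block where the earliest-ending remaining interval ends, discard the at most $r+c-2$ paths crossing that cut, and recurse on the rest of $P$. This is just the elementary proof of $\alpha\ge n/\omega$ for interval graphs and avoids citing perfection. Either way, your argument gives the slightly stronger threshold $q\ge (r+c-1)s$.

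One caveat: dividing by $r+c-1$ requires $r+c\ge 2$, so the cases $r+c\le 1$ should be dispatched separately. They are degenerate. For $c=0$ the hypothesis forbids cuttable arcs altogether, so $P$ lies on one path of $\mathcal{Q}^*$ and $q\le 1$. For $r=c=0$ and $s\ge 2$ the statement as literally written is even false, so this is a defect of the formulation rather than of your argument.
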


	The following is an adaptation of the construction in the proof of \cite[Lemma 5.15]{kawarabayashi2015directed}, which is in turn based on some results of \cite{reed1996packing}.
	We obtain the same bounds as \cite{kawarabayashi2015directed}, but we require somewhat different properties from the segmentation obtained.
	In particular, we want the paths of the segmentation to be in the end of the linkage $\mathcal{H}^1$ in our horizontal web $(\mathcal{H}^1 \cdot \mathcal{H}^2 = \mathcal{H}, \mathcal{V})$.
	This will allow us to go from the last cluster of the path of well-linked sets to $\mathcal{H}^2$ without
	intersecting the new path of well-linked sets.

	The idea of the proof is to iteratively apply \cref{lemma:segmentation-or-2-split} to each path of $\mathcal{P}$.
	If we can split one path many times, the we obtain the desired $(y,q)$-split.
	Otherwise, we obtain many different paths of $\mathcal{P}$ which contain an $(x,q)$-segmentation.
	By taking the subpath at the end of path in $\mathcal{P}$ which is in the segmentation,
	we obtain the properties we need later.

	We define
	\begin{equation*}
		\boundDefAlign{lemma:web-to-split-or-segmentation-at-end}{q'}{q,c,z}
		\bound{lemma:web-to-split-or-segmentation-at-end}{q'}{q,c,z}  \coloneqq (q(c+1))^{2^{z}}(2^{2^{z} - 1})
	\end{equation*}
	and note that
	\(\bound{lemma:web-to-split-or-segmentation-at-end}{q'}{q,c,z} \in \PowerTower{2}{\Polynomial{2}{q, c, z}}\).
	
	\begin{lemma}
		\label{lemma:web-to-split-or-segmentation-at-end}
		Let $c,x,y,q,q' \geq 0$ and $p \geq x$ be integers.
		Let $\mathcal{W} = \Brace{\mathcal{P}, \mathcal{Q}}$ be a $\Brace{p,q'}$-web where $\mathcal{P}$ is weakly $c$-minimal with respect to $\mathcal{Q}$.
		If $q' \geq \bound{lemma:web-to-split-or-segmentation-at-end}{q'}{q, c, xy}$, then there is some $\mathcal{Q}' \subseteq \mathcal{Q}$ such that $\mathcal{W}$ contains one of the following	
		\begin{enamerate}{S}{item:web-to-split-or-segmentation-without-gaps:segmentation}
			\item \label{item:web-to-split-or-segmentation-without-gaps:split}
                a $\Brace{y, q}$-split $\Brace{\mathcal{P}', \mathcal{Q}'}$ of $\Brace{\mathcal{P}, \mathcal{Q}}$ or
			\item \label{item:web-to-split-or-segmentation-without-gaps:segmentation}
                an $\Brace{x, q}$-segmentation $\Brace{\mathcal{P}', \mathcal{Q}'}$ of $\Brace{\mathcal{P}, \mathcal{Q}}$ with $\End{\mathcal{P}'} \subseteq \End{\mathcal{P}}$.
		\end{enamerate}
	\end{lemma}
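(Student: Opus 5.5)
The plan is to process the paths of $\mathcal{P}$ one after another, each time calling \cref{lemma:segmentation-or-2-split} on a current path and the current sublinkage of $\mathcal{Q}$. The bound $\bound{lemma:web-to-split-or-segmentation-at-end}{q'}{q,c,z}$, with $z = xy$, is doubly exponential in $z$. This fits a procedure with at most $z$ rounds in which each round roughly squares the number of $\mathcal{Q}$-paths needed (the factors $(q(c+1))^{2^z}$ and $2^{2^z-1}$). The weak $c$-minimality of $\mathcal{P}$ with respect to $\mathcal{Q}$ gives exactly the hypothesis of \cref{lemma:segmentation-or-2-split}: for every arc $e \in \A{P}\setminus\A{\mathcal{Q}}$ there are fewer than $c$ disjoint $P_1$-$P_2$-paths in $\ToDigraph{\mathcal{P}\cup\mathcal{Q}} - e$. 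We take $H = \ToDigraph{\mathcal{P}\cup\mathcal{Q}}$ and $\mathcal{Q}^* = \mathcal{Q}$.

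We maintain a current sublinkage $\mathcal{Q}_t \subseteq \mathcal{Q}$ and two counters. The first, $\sigma$, is the number of pieces into which we have already split a single path, with current split pieces $P_1,\dots,P_\sigma$. The second, $\tau$, is the number of distinct paths of $\mathcal{P}$ that we have already certified to carry a segmentation of $\mathcal{Q}_t$. In each round we apply \cref{lemma:segmentation-or-2-split} to one path with parameters $s$ and $r$ roughly $\sqrt{|\mathcal{Q}_t|}$. This needs $|\mathcal{Q}_t| \geq (r+c)s$ and explains the square roots. If we get a $(2,r)$-split of a piece, we refine the current split. Using the order-reversal property in the definition of a split, splitting the last piece again yields a $(\sigma+1, r)$-split of the original path with respect to the smaller linkage. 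Once $\sigma = y$ we output \cref{item:web-to-split-or-segmentation-without-gaps:split}. If instead we get an $s$-segmentation of the current path, we pass to that segmentation linkage, increase $\tau$, and continue with a fresh path of $\mathcal{P}$; this is possible since $p \geq x$. Segmentations restrict to sublinkages, so earlier certified paths remain segmented with respect to the new $\mathcal{Q}_t$. One point needs care: a path can alternate between being split and being segmented. The counting should therefore be arranged so that the total number of rounds is at most $xy$: at most $y$ split-steps per segmentation-step, over at most $x$ paths. That budget matches the exponent $2^{xy}$.

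When $\tau = x$, we have $x$ paths each segmented by a common $\mathcal{Q}'$ of order at least $q$. To obtain $\End{\mathcal{P}'} \subseteq \End{\mathcal{P}}$, for each certified path we take the suffix starting at the first segment that meets $\mathcal{Q}'$. More precisely, we cut each $P_i$ at the arc $e_0 \notin \A{\mathcal{Q}}$ preceding its first segment; if this arc does not exist, we keep the whole path. Each resulting subpath ends at $\End{P_i}$ and still forms a $q$-segmentation with respect to $\mathcal{Q}'$. This gives \cref{item:web-to-split-or-segmentation-without-gaps:segmentation}. An ordered segmentation is not required here, so no Erd\H{o}s--Szekeres step is needed.

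The main obstacle I expect is making the bookkeeping consistent. Segmentations obtained for different paths must all refer to the same final $\mathcal{Q}'$. Split pieces must remain splits with arcs outside $\A{\mathcal{Q}^*}$ after we restrict $\mathcal{Q}$. Finally, the recursion $q_{t} \approx (q_{t+1}(c+1))^2\cdot 2$ must be verified, unrolled over $xy$ steps, to give exactly $(q(c+1))^{2^{xy}}2^{2^{xy}-1}$.
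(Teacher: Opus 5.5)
Your proposal is correct and is essentially the paper's proof: you iterate \cref{lemma:segmentation-or-2-split} with $r = s = q_{i+1}$ on the last piece (the one ending in $\End{\mathcal{P}}$) of the current path. Each round either refines that path's split, which composes into a longer split by the order-reversal property, or certifies the piece as a segmented suffix and moves to a fresh path, for at most $xy$ rounds. The one slip is the recursion you wrote at the end: the correct one is $q_{i-1} = q_i(q_i + c)$, which is at most $2B_i^2 = B_{i-1}$ for $B_i = (q(c+1))^{2^{xy-i}}2^{2^{xy-i}-1}$ (since $B_i \geq c$), whereas $2(q_{i}(c+1))^2$ would overshoot the stated bound.
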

	\begin{proof}
		For all $0\leq i\leq xy$, we define values $q_i$ inductively as follows.
		We set $q_{xy} \coloneqq q$ and $q_{i-1} \coloneqq q_i \cdot (q_i+c)$.
		We first show that $q_{0} \leq \bound{lemma:web-to-split-or-segmentation-at-end}{q'}{q, c, xy}$.
		
		\begin{claim}
			\label{claim:bound-for-qi}
			$q_{i} \leq (q(c+1))^{2^{xy - i}}(2^{2^{xy - i} - 1})$ for all $0 \leq i \leq xy$
		\end{claim}
		\begin{claimproof}
			Clearly $q_{xy} = q \leq q(c+1)$.
			Assume the inequality holds from $xy$ to $i > 0$.
			By definition of $q_{i-1}$ we obtain
            \belowdisplayskip=-12pt
			\begin{align*}
				q_{i-1} & = q_i \cdot   (q_i+c)
				\\[0em]        & \leq (q(c+1))^{2^{xy - i}}(2^{2^{xy - i} - 1}) \cdot
				((q(c+1))^{2^{xy - i}}(2^{2^{xy - i} - 1}) + c)
				\\[0em]        & = ((q(c+1))^{2^{xy - i}}(2^{2^{xy - i} - 1}))^2 + 
				(q(c+1))^{2^{xy - i}}(2^{2^{xy - i} - 1})c
				\\[0em]        & = (q(c+1))^{2^{xy - (i - 1)}}(2^{2^{xy - (i - 1)} - 2}) + 
				(q(c+1))^{2^{xy - i}}(2^{2^{xy - i} - 1})c
				\\[0em]        & \leq (q(c+1))^{2^{xy - (i - 1)}}(2^{2^{xy - (i - 1)} - 2} + 2^{2^{xy - i} - 1})
				\\[0em]        & = (q(c+1))^{2^{xy - (i - 1)}}(2^{2^{xy - i} - 1} \cdot 2^{2^{xy - i} - 1} + 2^{2^{xy - i} - 1})
				\\[0em]        & = (q(c+1))^{2^{xy - (i - 1)}}(2^{2^{xy - i} - 1} \cdot (2^{2^{xy - i} - 1} + 1))
				\\[0em]        & \leq (q(c+1))^{2^{xy - (i - 1)}}(2^{2^{xy - i} - 1} \cdot 2^{2^{xy - i}})
				\\[0em]        & = (q(c+1))^{2^{xy - (i - 1)}}(2^{2^{xy - (i - 1)} - 1}).
			\end{align*}\qedhere
                                    		\end{claimproof}
		Hence, by~\cref{claim:bound-for-qi}, we have $q_{0} \leq q' \leq (q(c+1))^{2^{xy}}(2^{2^{xy} - 1}) = \bound{lemma:web-to-split-or-segmentation-at-end}{q'}{q, c, xy}$.
		
		\newcommand{\Seg}{\variablestyle{seg}}
		\newcommand{\Split}{\variablestyle{split}}
		
		For each $0 \leq i \leq xy$ we construct a tuple $\mathcal{M}_i \coloneqq (\mathcal{P}^i, \mathcal{Q}^i, \mathcal{S}_{\Seg}^i, \mathcal{S}_{\Split}^i)$ satisfying all of the following
		\begin{enamerate}{M}{item:web-to-split-or-segmentation-at-end:last}
			\item \label{item:web-to-split-or-segmentation-at-end:orders}
                $\mathcal{Q}^i \subseteq \mathcal{Q}^*$ is a linkage of order $q_i$ and $\mathcal{P}^i \subseteq \mathcal{P}$ is a linkage such that, if there is no $P \in \mathcal{S}_{\Split}^i \setminus \mathcal{S}^i_{\Seg}$ with $\End{P} \subseteq \End{\mathcal{P}}$, then $\Abs{\mathcal{P}^i \cup \mathcal{S}_{\Seg}^i} \geq x$,
			\item \label{item:web-to-split-or-segmentation-at-end:split}
                $(\mathcal{S}_{\Split}^i, \mathcal{Q}^i)$ is a $(y_i, q_i)$-split of $(\mathcal{P}, \mathcal{Q})$ and
			\item \label{item:web-to-split-or-segmentation-at-end:segmentation}
                $(\mathcal{S}_{\Seg}^i, \mathcal{Q}^i)$ is an $(x_i, q_i)$-segmentation of $\Brace{\mathcal{P}, \mathcal{Q}}$ where $\End{\mathcal{S}_{\Seg}^i} \subseteq \End{\mathcal{P}}$.
                \label{item:web-to-split-or-segmentation-at-end:last}
		\end{enamerate}
		
		Furthermore, for all $i$, $(\mathcal{P}^i, \mathcal{Q}^i)$ has linkedness $c$ and $\V{\mathcal{P}^i} \cap \V{\mathcal{S}_{\Seg}^i} = \emptyset$.
		Recall that, in particular, this means that the paths in $\mathcal{S}_{\Split}^i$ are the subpaths of a single path in $\mathcal{P}$ that is split by arcs $e \in \E{P} \setminus \E{\mathcal{Q}^*}$.
		
		We first set $\mathcal{P}^0 \coloneqq \mathcal{P}$, $\mathcal{Q}^0 \coloneqq \mathcal{Q}^*$, $\mathcal{S}_{\Seg}^0 \coloneqq \emptyset$, $\mathcal{S}_{\Split}^0 \coloneqq \emptyset$.
		Clearly, this satisfies the conditions~\cref{item:web-to-split-or-segmentation-at-end:orders}, \cref{item:web-to-split-or-segmentation-at-end:split} and \cref{item:web-to-split-or-segmentation-at-end:segmentation} defined above.
		
		We do the following on step $i + 1 \geq 1$.
		If $|\mathcal{S}_{\Split}^i| \geq y$ or if $|\mathcal{S}_{\Seg}^i| \geq x$, stop the construction.
		Otherwise, proceed as follows.
		
		We first set $\mathcal{S}_{\Seg}' \coloneqq \mathcal{S}_{\Seg}^i$.
		If there is no $P \in \mathcal{S}_{\Split}^i \setminus \mathcal{S}_{\Seg}^i$ such that $\End{P} \subseteq \End{\mathcal{P}}$, we choose a path $P \in \mathcal{P}^i$ and set $\mathcal{S}_{\Split}' = \{ P \}$ and $\mathcal{P}^{i+1} \coloneqq \mathcal{P}^i \setminus \{P\}$.
		Clearly, $\End{P} \subseteq \End{\mathcal{P}}$.
		
		Otherwise, if there is some $P \in \mathcal{S}_{\Split}^i \setminus \mathcal{S}_{\Seg}^i$ with $\End{P} \subseteq \End{\mathcal{P}}$, we set $\mathcal{S}_{\Split}' \coloneqq \mathcal{S}_{\Split}^i$ and $\mathcal{P}^{i+1} \coloneqq \mathcal{P}^i$.
		
		Now, let $P \in \mathcal{S}_{\Split}' \setminus \mathcal{S}_{\Seg}'$ with $\End{P} \subseteq \End{\mathcal{P}}$.
		We apply~\cref{lemma:segmentation-or-2-split} to $P, \mathcal{Q}^i$ and $\mathcal{Q}^*$ setting $r = s = q_{i+1}$ in the statement of the lemma.
		If~\cref{item:segmentation-or-2-split:segmentation} holds and there is a $q_{i+1}$-segmentation $\mathcal{Q}_1 \subseteq \mathcal{Q}^i$ of $P$ with respect to $\mathcal{Q}^*$, we set
		\begin{equation*}
    		\mathcal{Q}^{i+1} \coloneqq \mathcal{Q}_1, \qquad
            \mathcal{S}_{\Seg}^{i+1} \coloneqq \mathcal{S}_{\Seg}^i \cup \Set{ P } \quad \text{ and }\quad
            \mathcal{S}_{\Split}^{i+1} \coloneqq \mathcal{S}_{\Split}.
		\end{equation*}
		Otherwise,~\cref{item:segmentation-or-2-split:2-split} holds and there is a $\Brace{2, q_{i+1}}$-split $\Brace{ \Brace{P_1, P_2}, \mathcal{Q}_2}$  where $\mathcal{Q}_2 \subseteq \mathcal{Q}^i$.
		Then we set
		\begin{eqnarray*}
			\mathcal{Q}^{i+1} &\coloneqq& \mathcal{Q}_2,\\[0em]
			\mathcal{S}_{\Seg}^{i+1} &\coloneqq& \mathcal{S}_{\Seg}^i\quad\text{ and }\\[0em]
			\mathcal{S}_{\Split}^{i+1} &\coloneqq& (\mathcal{S}_{\Split}^i \setminus \{ P \}) \cup \{ P_1, P_2 \}.
		\end{eqnarray*}
		
		If there is no $P \in \Abs{\mathcal{S}_{\Split}^{i+1}}$ with $\End{P} \subseteq \End{\mathcal{P}}$, then we obtained a segmentation and so added a path to $\mathcal{S}^{i+1}_{\Seg}$.
		As~\cref{item:web-to-split-or-segmentation-at-end:orders} holds for $i$, we have that $\Abs{\mathcal{P}^{i + 1} \cup \mathcal{S}_{\Seg}^{i+1}} \geq x$ in this case.
		
		It is easily verified that the conditions~\cref{item:web-to-split-or-segmentation-at-end:orders}, \cref{item:web-to-split-or-segmentation-at-end:split} and \cref{item:web-to-split-or-segmentation-at-end:segmentation} are maintained for $\mathcal{P}^{i+1}$, $\mathcal{Q}^{i+1}$, $\mathcal{S}_{\Seg}^{i+1}$ and $\mathcal{S}_{\Split}^{i+1}$.
		In particular, the linkedness $c$ of $(\mathcal{P}^{i+1}, \mathcal{Q}^{i+1})$ is preserved as deleting or splitting paths cannot increase forward connectivity.
		This concludes the construction.
		
		Note that in the construction, after every $y$ steps, either~\cref{item:segmentation-or-2-split:2-split} holds after every application of~\cref{lemma:segmentation-or-2-split}, and so we find a set $\mathcal{S}_{\Split}^i$ of size $y$ or~\cref{item:segmentation-or-2-split:segmentation} holds in at least one of the $y$ steps, and so we add a path to $\mathcal{S}_{\Seg}^i$.
		Whenever this happens, we take a new path $P \in \mathcal{P}^{i}$ in the next iteration, which always exists because of~\cref{item:web-to-split-or-segmentation-at-end:orders}.
		
		Hence, in the construction above, in each step we either increase $x_i$ and add the path $P$ with $\End{P} \subseteq \End{\mathcal{P}}$ to $\mathcal{S}_{\Seg}^i$ or we increase $y_i$.
		After at most  $i \leq xy$ steps, either we have constructed a set $\mathcal{S}_{\Seg}^i$ of order $x$ or a set $\mathcal{S}_{\Split}^i$ of order $y$.
		
		Because~\cref{item:web-to-split-or-segmentation-at-end:split} holds, if we found a set $\mathcal{S}_{\Split}^i$ of order $y$, then we can choose any set $\mathcal{Q}'\subseteq \mathcal{Q}^i$ of order $q$ and $(\mathcal{S}_{\Split}^i, \mathcal{Q}')$ satisfies~\cref{item:web-to-split-or-segmentation-without-gaps:split}.
		
		If, instead, we get a set $\mathcal{S}_{\Seg} \coloneqq \mathcal{S}_{\Seg}^i$ of order $x' \geq x$, then, by~\cref{item:web-to-split-or-segmentation-at-end:segmentation},
		$(\mathcal{S}_{\Seg}, \mathcal{Q}^i)$ is an $(x, q)$-segmentation of $(\mathcal{P}, \mathcal{Q})$ such that $\End{\mathcal{S}_{\Seg}} \subseteq \End{\mathcal{P}}$, satisfying~\cref{item:web-to-split-or-segmentation-without-gaps:segmentation}.
		
		Finally, it is easily seen that if $\mathcal{W}$ is well-linked then so is $(\mathcal{S}_{\Split}^i, \mathcal{Q}')$ (in case~\cref{item:web-to-split-or-segmentation-without-gaps:split} holds) or $(\mathcal{S}_{\Seg}, \mathcal{Q}^i)$ (in case~\cref{item:web-to-split-or-segmentation-without-gaps:segmentation} holds).
	\end{proof}

	In order to use the results of \cref{sec:powls-framework},
	we use the split or the segmentation obtained from \cref{lemma:web-to-split-or-segmentation-at-end}
	to construct a folded ordered web or an ordered web.
	We also require the following observation.

	\begin{observation}[{\cite[Observation 9.6]{COSSI}}]
		\label{obs:split-to-folded-web}
		Let $\Brace{\mathcal{P}', \mathcal{Q}'}$ be a $\Brace{2p, q}$-split of $\Brace{\mathcal{P}, \mathcal{Q}}$.
		Then there is some $\mathcal{P}''$ containing only subpaths of $\mathcal{P}$ such that $\Brace{\mathcal{Q}', \mathcal{P}''}$ is a folded ordered $(q, p)$-web.
	\end{observation}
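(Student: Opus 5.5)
Let $P \in \mathcal{P}$ be the path that is split, so that $\mathcal{P}' = (P_1, P_2, \dots, P_{2p})$ with $P = P_1 e_1 P_2 \dots e_{2p-1} P_{2p}$, and every $Q \in \mathcal{Q}'$ decomposes as $Q = Q_1 e'_1 \dots e'_{2p-1} Q_{2p}$ with $\emptyset \neq \V{Q} \cap \V{P_i} \subseteq \V{Q_{2p+1-i}}$. The plan is to pair up consecutive segments of the split path. For each $1 \leq j \leq p$ we set $P''_j \coloneqq P_{2j-1} \cdot e_{2j-1} \cdot P_{2j}$, which is a subpath of $P$. These paths are pairwise vertex-disjoint, since they consist of disjoint consecutive blocks of the single path $P$. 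Hence $\mathcal{P}'' \coloneqq (P''_1, \dots, P''_p)$ is a linkage of order $p$ made only of subpaths of $\mathcal{P}$. We then regard $\mathcal{Q}'$ as the horizontal linkage and $\mathcal{P}''$ as the vertical one.

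\textbf{Web.} Each $P''_j$ contains $P_{2j}$, and $P_{2j}$ meets every $Q \in \mathcal{Q}'$ by the non-emptiness condition of the split. So $(\mathcal{Q}', \mathcal{P}'')$ is a $(q,p)$-web.

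\textbf{Folded.} Split $P''_j$ as $P_{2j-1} \cdot (e_{2j-1} P_{2j})$. Both parts intersect every path of $\mathcal{Q}'$, again by the non-emptiness condition.

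\textbf{Ordered.} Order $\mathcal{P}''$ in reverse, i.e.\ take $V_k \coloneqq P''_{p+1-k}$. For each $Q \in \mathcal{Q}'$ set
\[
H_k \coloneqq Q_{2k-1} e'_{2k-1} Q_{2k},
\]
where the connecting arcs $e'$ between consecutive blocks are absorbed so that $Q = H_1 \cdot H_2 \cdot \ldots \cdot H_p$.
\begin{itemize}
\item $P''_{p+1-k}$ consists of $P_{2p+1-2k}$ and $P_{2p+2-2k}$.
\item Their intersections with $Q$ lie in $Q_{2k}$ and $Q_{2k-1}$ respectively, so they lie in $H_k$.
\item Hence $H_k$ meets $V_k$, and $H_k$ is disjoint from $V_{k'}$ for $k' \neq k$, because the blocks $Q_i$ are disjoint.
\end{itemize}
This is exactly the condition in the definition of an ordered web.

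The main obstacle is this index bookkeeping. One has to check that the reversal $i \mapsto 2p+1-i$ sends the pair $\{2j-1, 2j\}$ to an aligned pair $\{2k-1, 2k\}$ with $k = p+1-j$, which it does. One also has to be careful with where the arcs $e'_i$ go in the decomposition of $Q$, so that the pieces $H_k$ really concatenate to $Q$. Once the indexing is fixed, every property follows directly from the definition of a split.
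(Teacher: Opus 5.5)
Your proof is correct. The paper does not prove this observation itself but cites it from \cite{COSSI}; your construction is the natural argument for it: merge consecutive split segments into $P''_j = P_{2j-1}e_{2j-1}P_{2j}$, use the two halves for foldedness, and reverse the order so that the pair $\{2j-1,2j\}$ lands in the blocks $Q_{2k-1},Q_{2k}$ with $k=p+1-j$. Do state explicitly that $Q = H_1\cdot\ldots\cdot H_p$ is a decomposition into vertex-disjoint consecutive blocks, with each arc $e'_{2k}$ joining $H_k$ to $H_{k+1}$ belonging to neither piece; ``absorbing'' $e'_{2k}$ into $H_k$ would add the first vertex of $Q_{2k+1}$ to $H_k$, and that vertex may lie on $V_{k+1}$.
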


	We define
	\begin{align*}
		\boundDefAlign{cor:web-to-folded-or-ordered-web-without-gaps}{q}{q'', x}
		\bound{cor:web-to-folded-or-ordered-web-without-gaps}{q}{q'', x} & \coloneqq (q'')^{2^{2x - 1}} + 1,
		\\[0em]
		\boundDefAlign{cor:web-to-folded-or-ordered-web-without-gaps}{q'}{q, c, x, y}
		\bound{cor:web-to-folded-or-ordered-web-without-gaps}{q'}{q, c, x, y} & \coloneqq \bound{lemma:web-to-split-or-segmentation-at-end}{q'}{q,c,2 (2x - 1)(y - 1)y}, \text{ and}
		\\[0em]
		\boundDefAlign{cor:web-to-folded-or-ordered-web-without-gaps}{p}{x}
		\bound{cor:web-to-folded-or-ordered-web-without-gaps}{p}{x} & \coloneqq 2x - 1.
	\end{align*}
	Note that
	\(\bound{cor:web-to-folded-or-ordered-web-without-gaps}{q}{q'',x} \in \PowerTower{2}{\Polynomial{2}{q'', x}}\) and
	\(\bound{cor:web-to-folded-or-ordered-web-without-gaps}{q'}{q,c,x,y} \in \PowerTower{2}{\Polynomial{5}{q, c, x, y}}\).
	
	\begin{lemma}
		\label{cor:web-to-folded-or-ordered-web-without-gaps}
		Let $c,x,y,q'',q' \geq 0$, $q \geq \bound{cor:web-to-folded-or-ordered-web-without-gaps}{q}{q'', x}$ and $p \geq \bound{cor:web-to-folded-or-ordered-web-without-gaps}{p}{x}$ be integers.
		Let $\mathcal{W} = \Brace{\mathcal{P}, \mathcal{Q}}$ be a $\Brace{p,q'}$-web where $\mathcal{P}$ is weakly $c$-minimal with respect to $\mathcal{Q}$.
		If $q' \geq \bound{cor:web-to-folded-or-ordered-web-without-gaps}{q'}{q, c, x, y}$, then there is some $\mathcal{Q}' \subseteq \mathcal{Q}$ and some $\mathcal{P}'$ such that $\ToDigraph{\mathcal{P}'} \subseteq \ToDigraph{\mathcal{P}}$ and $W$ contains one of the following	
		\begin{enamerate}{O}{item:web-to-split-or-segmentation-without-gaps:segmentation}
			\item \label{item:web-to-folded-or-ordered-web-without-gaps:folded}
			a folded ordered $\Brace{q,y}$-web $\Brace{\mathcal{Q}', \mathcal{P}'}$, or
			\item \label{item:web-to-folded-or-ordered-web-without-gaps:ordered}
			an ordered $\Brace{x, q''}$-web $\Brace{\mathcal{P}', \mathcal{Q}'}$
			such that
			$\End{\mathcal{P}'} \subseteq \End{\mathcal{P}}$.
		\end{enamerate}
	\end{lemma}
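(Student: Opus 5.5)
The plan is to apply \cref{lemma:web-to-split-or-segmentation-at-end} with suitably inflated parameters. Afterwards, the split case is converted with \cref{obs:split-to-folded-web}, and the segmentation case is converted with an Erd\H{o}s--Szekeres argument (\cref{thm:erdos_szekeres}) that forces a common order of the vertical paths.

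Concretely, set $x' \coloneqq 2x-1 = \bound{cor:web-to-folded-or-ordered-web-without-gaps}{p}{x}$ and $y' \coloneqq 2y$. Then apply \cref{lemma:web-to-split-or-segmentation-at-end} to $\mathcal{W}$ with parameters $x', y'$, with $q$ in the role of $q$, and with the same $c$. This is admissible since $p \geq x'$. The bound $\bound{cor:web-to-folded-or-ordered-web-without-gaps}{q'}{q,c,x,y} = \bound{lemma:web-to-split-or-segmentation-at-end}{q'}{q,c,2(2x-1)(y-1)y}$ dominates $\bound{lemma:web-to-split-or-segmentation-at-end}{q'}{q,c,x'y'}$, because $x'y' = 2(2x-1)y$. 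This needs $y-1 \geq 1$; the degenerate case $y\le1$ is trivial or needs a separate word, and I would check that bound carefully. The lemma yields one of two outcomes.

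\textbf{Split case.} We obtain a $(2y,q)$-split $(\mathcal{P}',\mathcal{Q}')$ of $(\mathcal{P},\mathcal{Q})$. By \cref{obs:split-to-folded-web} there is $\mathcal{P}''$, consisting of subpaths of $\mathcal{P}$, such that $(\mathcal{Q}',\mathcal{P}'')$ is a folded ordered $(q,y)$-web. This is exactly \cref{item:web-to-folded-or-ordered-web-without-gaps:folded}.

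\textbf{Segmentation case.} We obtain an $(x',q)$-segmentation $(\mathcal{P}',\mathcal{Q}')$ with $\End{\mathcal{P}'} \subseteq \End{\mathcal{P}}$. Each $P_i \in \mathcal{P}'$ induces a linear order on $\mathcal{Q}'$ via its segments, but these orders may differ between paths. I fix the order $\sigma_1$ induced by $P_1$ and process $P_2,\dots,P_{x'}$ one at a time. At each step I apply \cref{thm:erdos_szekeres} to the current sublinkage, listed by $\sigma_1$ and read in the order of $P_i$. This gives a subsequence that is monotone, increasing or decreasing; squaring roots each time, after $x'-1$ steps a sublinkage of size about $q^{1/2^{x'-1}}$ survives. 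Since $q \geq (q'')^{2^{2x-1}}+1$, at least $q''$ paths remain, and on this sublinkage every $P_i$ induces either $\sigma_1$ or its reverse.

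By pigeonhole, at least $x$ of the $2x-1$ paths induce the same orientation. Restricting to them, and to their segments, gives an ordered $(x,q'')$-web. The segmentation property, that the $j$-th segment meets only $Q_j$, yields exactly the decomposition required by \cref{def:ordered_web}, after merging consecutive segments to absorb the deleted $Q$'s. The paths remain subpaths of $\mathcal{P}$ ending in $\End{\mathcal{P}}$.

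The main obstacle I expect is the reversed orientation. A path that induces the reverse order does not immediately fit a single consistent ordering in \cref{def:ordered_web}. The pigeonhole on the $2x-1$ paths is meant to resolve this, but I would double-check that it suffices rather than requiring some reinterpretation. I would also check that the exponent bookkeeping in the Erd\H{o}s--Szekeres iteration matches $2^{2x-1}$.
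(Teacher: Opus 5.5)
Your proposal is correct and follows the paper's proof essentially step for step. You apply \cref{lemma:web-to-split-or-segmentation-at-end} with $2x-1$ segmented paths, convert a split via \cref{obs:split-to-folded-web}, and in the segmentation case iterate Erd\H{o}s--Szekeres so every path agrees with or reverses a common order, then pigeonhole $x$ of the $2x-1$ paths with the same orientation; your fixed reference order $\sigma_1$ is a slightly cleaner variant of the paper's comparison with the previously processed path.

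Your remark on the parameter is also accurate. The paper applies the lemma with $2y$ in place of $y$ without saying so, and its stated exponent $2(2x-1)(y-1)y$ dominates the needed $2(2x-1)y$ only for $y \geq 2$. That is the regime in which the lemma is later used, so this is a bookkeeping slip in the paper's bound, not a gap in your argument.
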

	\begin{proof}
		Let	$x_1 = 2x - 1$.

		We apply~\cref{lemma:web-to-split-or-segmentation-at-end} to $\mathcal{W}$.
		If~\cref{item:web-to-split-or-segmentation-without-gaps:split} holds, then by~\cref{obs:split-to-folded-web} we obtain a folded ordered $(q,y)$-web and~\cref{item:web-to-folded-or-ordered-web-without-gaps:folded} holds.
		Otherwise,~\cref{item:web-to-split-or-segmentation-without-gaps:segmentation} holds and we obtain an $(x_1, q)$-segmentation $\Brace{\mathcal{P}^1, \mathcal{Q}'}$ of $(\mathcal{P}, \mathcal{Q})$ such that $\End{\mathcal{P}^1} \subseteq \End{\mathcal{P}}$.
		
		Recursively define $q_i$ by $q_{x_1} = q''$ and $q_i = \bound{thm:erdos_szekeres}{len}{q_{i + 1}, q_{i + 1}}$.
		We show that $q_i \leq (q'')^{2^{x_1 - i}} + 1$ for all $1 \leq i \leq x_1$.
		Clearly, $q_{x_1} = q'' \leq (q'')^{2^0} + 1$.
		By definition, for an arbitrary $1 \leq i \leq x_1$ we have
		\begin{align*}
			q_i & = (q_{i+1} - 1)^2 + 1
			\\[0em]  & \leq ((q'')^{2^{x_1 - i - 1}} + 1 - 1)^2 + 1
			\\[0em]  & = (q'')^{2^{x_1 - i}} + 1.
		\end{align*}
		Hence, $q \geq q_1$.
		
		We construct a set $\mathcal{Q}'' \subseteq \mathcal{Q}'$ as follows.
		Let $\Set{P^1_1, P^1_2, \dots, P^1_{x_1}} = \mathcal{P}^1$ be an arbitrary ordering of the paths in $\mathcal{P}^1$.
		Set $\mathcal{Q}_1 = \mathcal{Q}'$ and then iterate from $2$ to $x_1$, constructing a set $\mathcal{Q}_i$ of size $q_i$.
		
		On step $i \leq x_1$, consider the ordering $\preceq_{i-1}$ of the paths in $\mathcal{Q}_{i-1}$ according to their occurrence along $P^1_{i-1}$.
		By~\cref{thm:erdos_szekeres}, there is a $\mathcal{Q}_{i} \subseteq \mathcal{Q}_{i-1}$ of order at least $q_i$ such that $P^1_i$ intersects $\mathcal{Q}_i$ in order or in reverse with respect to $\preceq_{i-1}$.
		Since $\mathcal{Q}_i \subseteq \mathcal{Q}_{i-1}$, we have that each $P^1_j \in \mathcal{P}^1$ with $j \leq i$ also intersects $\mathcal{Q}_i$ in order or in reverse with respect to $\preceq_{i-1}$.
		
		After $x_1$ steps, we set $\mathcal{Q}'' \coloneqq \mathcal{Q}_{x_1}$.
		By construction, there is an ordering $\preceq$ of $\mathcal{Q}''$ such that each $P^1_i \in \mathcal{P}^1$ intersects $\mathcal{Q}''$ in order or in reverse with respect to $\preceq$.
		
		By the pigeon-hole principle, there is some $\mathcal{P}^2 \subseteq \mathcal{P}^1$ of order at least $x$ such that every path in $\mathcal{P}^2$ intersects the paths of $\mathcal{Q}''$ in the same order.
		Hence, $\Brace{\mathcal{P}^2, \mathcal{Q}''}$ is an ordered $(x, q'')$-web where $\End{\mathcal{P}^2} \subseteq \mathcal{P}$, satisfying~\cref{item:web-to-folded-or-ordered-web-without-gaps:ordered}.
	\end{proof}

	We consider two cases when constructing a path of well-linked sets or a path of order-linked sets
	from the \emph{first half} of $\mathcal{H}$ in our horizontal web $\Brace{\mathcal{H}^1 \cdot \mathcal{H}^2, \mathcal{V}^1 \cdot \mathcal{V}^2}$.

	In the first case, we obtain a folded ordered web $(\mathcal{V}', \mathcal{H}')$ from \cref{cor:web-to-folded-or-ordered-web-without-gaps}, where $\mathcal{V}' \subseteq \mathcal{V}^1$.
	In this case, we obtain a \emph{forward linkage} $\mathcal{L}$ which is contained in $\mathcal{V}^2$.
	This will allow us to use $\mathcal{V}^1$ to build a back-linkage which is disjoint from $\mathcal{L}$, and then use \cref{lemma:well-linked-poss-with-disjoint-forward-linkage-implies-coss} to obtain a cycle of well-linked sets.

	In the second case, we obtain an ordered web $(\mathcal{H}', \mathcal{V}')$ which ends on $\End{\mathcal{H}^1}$.
	We can then use the subpaths of $\mathcal{H}'$ after the last path of $\mathcal{V}'$ to construct a
	linkage to $\mathcal{H}^2$ which does not intersect the path of order-linked sets constructed from the ordered web.
	We will then need to use $\mathcal{V}^1$ in order to construct the back-linkage.
	As $\mathcal{V}^1$ may intersect our path of order-linked sets, we want to be able to restrict how these intersections happen.
	The idea will be to use paths of $\mathcal{V}^2$ leading to each cluster in order to
	obtain a large linkage which contradicts the assumption that $\mathcal{H}$ is weakly minimal with respect to $\mathcal{V}$.

	\Cref{lemma:1-horizontal-web-double-side-linkage} below provides the base cases for our argument of \cref{lemma:coss-inside-2-horizontal-web} later.
	The case \cref{item:1-horizontal-web-double-side-linkage:order-linked} below follows immediately from \cref{cor:web-to-folded-or-ordered-web-without-gaps}\cref{item:web-to-folded-or-ordered-web-without-gaps:folded}.
	For \cref{item:1-horizontal-web-double-side-linkage:double-order}, we will need to construct a routing temporal digraph and use the framework from \cref{sec:powls-framework} in order
	to construct the path of order-linked sets with the desired properties.
	
	We require the following lemma for our proof.

	\begin{lemma}[{\cite[Lemma 9.7]{COSSI}}]
		\label{lemma:folded-web-to-pows}
		There exist two functions \boundDef{lemma:folded-web-to-pows}{h}{w} \(\bound{lemma:folded-web-to-pows}{h}{w} \in \Oh(w^{11})\) and \boundDef{lemma:folded-web-to-pows}{v}{w, \ell} \(\bound{lemma:folded-web-to-pows}{v}{w,\ell} \in \PowerTower{1}{\Polynomial{2}{w, \ell}}\) such that every folded ordered $(h,v)$-web $\Brace{\mathcal{H}, \mathcal{V}}$ with $h \geq \bound{lemma:folded-web-to-pows}{h}{w}$ and $v \geq \bound{lemma:folded-web-to-pows}{v}{w, \ell}$ contains a path of well-linked sets $\Brace{\mathcal{S} = \Brace{ S_0, S_1, \dots, S_{\ell}}, \mathscr{P}}$ of width $w$ and length $\ell$.
		Additionally, there is a $\Start{\mathcal{H}}$-$\End{\mathcal{H}}$-linkage $\mathcal{L} = \mathcal{L}_1 \cdot \mathcal{L}_2 \cdot \mathcal{L}_3$ using only arcs of $\mathcal{H}$ such that $\mathcal{L}_2$ is an $A(S_0)$-$B(S_\ell)$-linkage of order $w$ inside $\Brace{\mathcal{S}, \mathscr{P}}$ and $\mathcal{L}_1$ and $\mathcal{L}_3$ are internally disjoint from $\Brace{\mathcal{S}, \mathscr{P}}$.
	\end{lemma}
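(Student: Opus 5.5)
The plan is to use the folded structure to obtain strongly connected layers of a routing temporal digraph of $\mathcal{H}$. Then apply \cref{theorem:strongly connected temporal digraph contains H routing} inside consecutive blocks of $\mathcal{V}$, and read off the clusters, the connecting linkages and the linkage $\mathcal{L}$ from the routed paths of $\mathcal{H}$.

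\textbf{Setup.} Let $\mathcal{V} = (V_1, \dots, V_v)$ be ordered as in \cref{def:ordered_web}, so that every $H \in \mathcal{H}$ decomposes as $H = H_1 \cdot \ldots \cdot H_v$ with $H_j$ meeting only $V_j$. For each $j$, let $G_j$ be the union of $V_j$ and all segments $H_j$, $H \in \mathcal{H}$. The $G_j$ are pairwise disjoint, and each path of $\mathcal{H}$ passes through $G_1, G_2, \dots, G_v$ in this order. Hence the routing temporal digraph $T$ of $\mathcal{H}$ through $(G_1, \dots, G_v)$ (\cref{def:routing-temporal-digraph}) is well defined.

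\textbf{Strong connectivity of each layer.} Let $H, H' \in \mathcal{H}$ be distinct. Since $V_j = V_j^a \cdot V_j^b$ is folded, $H$ meets $V_j^a$ and $H'$ meets $V_j^b$. Following $V_j$ from the last vertex of $H$ on $V_j^a$ to the first subsequent vertex of $H'$ on $V_j^b$, and shortening to the last vertex of $H$ before that point, yields a $\V{H}$-$\V{H'}$-path in $G_j$. This path is internally disjoint from $\mathcal{H}$, so $(H,H')$ is an arc of every layer of $T$, and every layer is strongly connected.

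\textbf{Building the clusters.} Split $\mathcal{V}$ into $\ell+1$ consecutive blocks, each of length $\bound{theorem:strongly connected temporal digraph contains H routing}{\Lifetime{}}{h,w}$, separated by single unused columns. On each block, \cref{theorem:strongly connected temporal digraph contains H routing} with $k=w$ and $h \geq \bound{theorem:strongly connected temporal digraph contains H routing}{s}{w} \in \Oh(w^{11})$ yields a $\Ck{w}$- or $\biPk{w}$-routing on some $w$ paths of $\mathcal{H}$. There are at most $2\binom{h}{w}w!$ choices of routing, so by pigeonhole over a further factor of blocks (which is where the single exponential in $\bound{lemma:folded-web-to-pows}{v}{w,\ell}$ comes from) we may assume the same routed set $\mathcal{H}^* \subseteq \mathcal{H}$ and the same routing occur in $\ell+1$ chosen blocks.

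For the $i$-th chosen block, the cluster $S_i$ is the union of its $G_j$, with all paths of $\mathcal{H}\setminus\mathcal{H}^*$ deleted. Set $A(S_i)$ to be the first vertices of $\mathcal{H}^*$ in the block and $B(S_i)$ the last ones. The linkage $\mathcal{P}_i$ consists of the segments of $\mathcal{H}^*$ between consecutive chosen blocks; these lie in unused columns or in $\mathcal{H}$-segments only, hence are disjoint from all clusters except at their endpoints. Finally, $\mathcal{L}$ is $\mathcal{H}^*$ itself: $\mathcal{L}_2$ runs from $A(S_0)$ to $B(S_\ell)$ inside the path of sets, while $\mathcal{L}_1$ and $\mathcal{L}_3$ are the prefixes and suffixes before the first and after the last chosen block. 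By orderedness these meet no column of a chosen block, so they are internally disjoint from $(\mathcal{S},\mathscr{P})$.

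\textbf{Main obstacle: well-linkedness of each cluster.} The expected difficulty is showing that $A(S_i)$ is well-linked to $B(S_i)$ inside $S_i$, and not merely $1$-order-linked as \cref{lemma:P_k-routing-implies-1-order-linked} gives. The fallback of converting order-linkedness via \cref{lemma:merging path of order-linked sets} and \cref{proposition:order-linked to path of well-linked sets} would force width $w(\ell+1)$ and thus $h$ depending on $\ell$, contradicting $\bound{lemma:folded-web-to-pows}{h}{w} \in \Oh(w^{11})$. So I would prove well-linkedness directly, and I am least certain of this step.

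Take $A' \subseteq A(S_i)$ and $B' \subseteq B(S_i)$ with $|A'|=|B'|$, and let $X$ be a set of fewer than $|A'|$ vertices. Some path of $\mathcal{H}^*$ starting in $A'$ and some path ending in $B'$ both avoid $X$. The $\Ck{w}$- or $\biPk{w}$-routing supplies temporal paths between any two routed paths, disjoint from the other routed paths, in any interval of the block. Moreover, the folded columns make every single layer behave like a complete digraph on $\mathcal{H}^*$, which should let these routes be placed in distinct, $X$-free columns. The plan is to combine this with Menger's Theorem to conclude that $X$ does not separate $A'$ from $B'$.

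If this direct argument fails, I would instead enlarge each block by a further factor and apply \cref{lemma:increase-order-linkedness} inside each cluster to reach $w$-order-linkedness. Since this only lengthens blocks and does not widen them, the $\Oh(w^{11})$ bound on $h$ would be kept. I would then extract well-linkedness as in the proof of \cref{proposition:order-linked to path of well-linked sets}, applied locally to each cluster.
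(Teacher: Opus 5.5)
The step you flag as uncertain, that $A(S_i)$ is well-linked to $B(S_i)$ inside $S_i$, is the whole content of the lemma. As you set things up, it does not go through, for three reasons.

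(i) The layers are not complete. The subpath of $V_j$ from $H$ to $H'$ may cross further paths of $\mathcal{H}$, so $(H,H')$ need not be an arc. Strong connectivity still follows from the chain of crossings, so your use of \cref{theorem:strongly connected temporal digraph contains H routing} survives. But ``every layer behaves like a complete digraph on $\mathcal{H}^*$'' is false.

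(ii) You form $S_i$ by deleting the paths of $\mathcal{H}\setminus\mathcal{H}^*$.
\begin{itemize}
\item The temporal paths of a $\Ck{w}$- or $\biPk{w}$-routing only have to avoid routed paths outside the relevant subpath. They may run along non-routed paths, so the routing is not available inside $S_i$.
\item Every column meets every path of $\mathcal{H}$, so the deletion also cuts each $V_j$ into pieces. This kills your ``$X$-free column'' idea as well.
\item A routing gives one temporal path per pair over the whole block, not one in every sub-interval. On its own it cannot get around a separator $X$.
\end{itemize}

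(iii) The fallback does not keep $h$ independent of $\ell$. The conversion in \cref{proposition:order-linked to path of well-linked sets} chooses, in each cluster, an $A$-window lying below a $B$-window. Order-linkedness only permits forward shifts and the uniform connecting linkages preserve indices, so these windows drift upwards by about $w$ per cluster. That drift is exactly where the width $w(\ell+1)$ comes from, and applying the conversion cluster by cluster does not avoid it.

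Your Menger sketch is the right idea once the columns are kept whole. No completeness of layers is needed, because the column itself is the connecting path; with this fix the routing and the pigeonhole step become unnecessary.
\begin{itemize}
\item Choose any $w$ paths of $\mathcal{H}$ and $\ell+1$ blocks $J_0,\dots,J_\ell$ of $w$ consecutive columns, separated by single gap columns.
\item Let $S_i$ be the union of all $V_j$ with $j\in J_i$, together with the segments of the chosen paths in these columns.
\item Take $A'\subseteq A(S_i)$ and $B'\subseteq B(S_i)$ with $|A'|=|B'|=k$, and a vertex set $X$ of $S_i$ with $|X|<k$. Then $X$ misses some chosen path $H_a$ starting in $A'$, some chosen path $H_b$ ending in $B'$, and some column $V_j$ with $j\in J_i$.
\item By foldedness $V_j^a$ meets $H_a$, and by orderedness it does so in a vertex $x$ of the block segment of $H_a$. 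Likewise $V_j^b$ meets $H_b$ in a vertex $y$ of its block segment, and $y$ comes after $x$ on $V_j$.
\item Follow $H_a$ to $x$, then $V_j$ to $y$, then $H_b$ to its last vertex in the block. This is an $X$-free $A'$-$B'$ walk in $S_i$, so \cref{thm:menger} gives the required linkage.
\end{itemize}
Your descriptions of $\mathscr{P}$ (the segments in the gap columns) and of $\mathcal{L}=\mathcal{L}_1\cdot\mathcal{L}_2\cdot\mathcal{L}_3$ then apply unchanged. So $h\ge w$ and $v\ge(\ell+1)(w+1)$ already suffice, which lies within the stated bounds.
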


	We now define 
	\begin{align*}
		\boundDefAlign{lemma:1-horizontal-web-double-side-linkage}{h}{w_2}
		\bound{lemma:1-horizontal-web-double-side-linkage}{h}{w_2} & \coloneqq 
		\bound{cor:web-to-folded-or-ordered-web-without-gaps}{p}{(w_{2})^{2} - 1},
		\\[0em]
		\Fkt{v'}{w_2, \ell_2} & \coloneqq 
		\bound{cor:web-to-folded-or-ordered-web-without-gaps}{q}{}((( w_{2} \ell_{2} - 1) {\binom{(w_{2})^{2} - 1}{w_{2}}} w_{2}! + 1)\\[0em]
		& \hspace{1.5cm} {} \cdot \bound{theorem:one-way connected temporal digraph contains P_k routing}{\Lifetime{}}{w_{2},(w_{2})^{2} - 1 },(w_{2})^{2} - 1 ),
		\\[0em]
		\boundDefAlign{lemma:1-horizontal-web-double-side-linkage}{v}{w_1, \ell_1, w_2, \ell_2, c }
		\bound{lemma:1-horizontal-web-double-side-linkage}{v}{w_1, \ell_1, w_2, \ell_2, c } & \coloneqq 
		\bound{cor:web-to-folded-or-ordered-web-without-gaps}{q'}{\bound{lemma:folded-web-to-pows}{h}{w_{1} } + \Fkt{v'}{w_{2},\ell_{2} },c,(w_{2})^{2} - 1,\bound{lemma:folded-web-to-pows}{v}{w_{1},\ell_{1} } }.
	\end{align*}
	Note that
	\(\bound{lemma:1-horizontal-web-double-side-linkage}{h}{w_2} \in \Oh((w_{2})^{2})\) and
	\(\bound{lemma:1-horizontal-web-double-side-linkage}{v}{w_1,\ell_1,w_2,\ell_2,c} \in \PowerTower{5}{\Polynomial{15}{w_1,\ell_1,w_2,\ell_2,c}}\).

	The case \cref{item:1-horizontal-web-double-side-linkage:order-linked} of \cref{lemma:1-horizontal-web-double-side-linkage} is illustrated in
	\cref{fig:1-horizontal-web-double-side-linkage:order-linked}.
	In this case, we obtain the clusters of the path of well-linked sets from the folded ordered web of \cref{cor:web-to-folded-or-ordered-web-without-gaps}\cref{item:web-to-folded-or-ordered-web-without-gaps:folded}.
	\begin{figure}[H]
		\centering
		\begin{tikzpicture}[yscale=0.6,xscale=0.7]
			 \node[vertex, fill = red]
	(v1) at (2.3, 3.45){};
\node[vertex, fill = blue]
	(v3) at (4.6, 5.75){};
\node[vertex, fill = blue]
	(v4) at (4.6, 0){};
\node[vertex, fill = blue]
	(v5) at (6.9, 5.75){};
\node[vertex, fill = blue]
	(v6) at (6.9, 0){};
\node[vertex, fill = blue]
	(v7) at (9.2, 0){};
\node[vertex, fill = blue]
	(v8) at (11.5, 0){};
\node[vertex, fill = blue]
	(v9) at (13.8, 0){};
\node[vertex, fill = blue]
	(v10) at (2.3, 0){};
\node[vertex, fill = blue]
	(v11) at (2.3, 5.75){};
\node[vertex, fill = yellow]
	(v12) at (0, 3.45){};
\node[vertex, fill = yellow]
	(v13) at (0, 2.3){};
\node[vertex, fill = yellow]
	(v14) at (0, 1.15){};
\node[vertex, fill = green]
	(v15) at (16.1, 3.45){};
\node[vertex, fill = green]
	(v16) at (16.1, 2.3){};
\node[vertex, fill = green]
	(v17) at (16.1, 1.15){};
\node[vertex, fill = red]
	(v19) at (13.8, 2.3){};
\node[vertex, fill = red]
	(v20) at (11.5, 2.3){};
\node[vertex, fill = red]
	(v21) at (9.2, 2.3){};
\node[vertex, fill = red]
	(v22) at (6.9, 2.3){};
\node[vertex, fill = red]
	(v23) at (4.6, 2.3){};
\node[vertex, fill = red]
	(v24) at (2.3, 2.3){};
\node[vertex, fill = blue]
	(v25) at (9.2, 5.75){};
\node[vertex, fill = blue]
	(v26) at (11.5, 5.75){};
\node[vertex, fill = blue]
	(v27) at (13.8, 5.75){};
\node[vertex, fill = red]
	(v28) at (2.3, 1.15){};
\node[vertex, fill = red]
	(v29) at (4.6, 1.15){};
\node[vertex, fill = red]
	(v30) at (4.6, 3.45){};
\node[]
	(v31) at (3.45, 5.75) [align = left]{{$S_0$}};
\node[]
	(v32) at (1.15, 4.6) [align = left]{\color{yellow}{$\mathcal{L}_1$}};
\node[]
	(v33) at (14.95, 4.6) [align = left]{\color{green}{$\mathcal{L}_3$}};
\node[]
	(v34) at (8.05, 4.6) [align = left]{\color{red}{$\mathcal{L}_2$}};
\node[line width = 0]
	(v35) at (5.75, 5.75) [align = left]{{$\mathcal{P}_0$}};
\node[line width = 0, draw = none]
	(v36) at (10.35, 5.75) [align = left]{{$\mathcal{P}_1$}};
\node[vertex, fill = red]
	(v37) at (6.9, 3.45){};
\node[vertex, fill = red]
	(v38) at (6.9, 1.15){};
\node[vertex, fill = red]
	(v39) at (9.2, 1.15){};
\node[vertex, fill = red]
	(v40) at (9.2, 3.45){};
\node[]
	(v41) at (8.05, 5.75) [align = left]{{$S_1$}};
\node[vertex, fill = red]
	(v42) at (11.5, 3.45){};
\node[vertex, fill = red]
	(v43) at (11.5, 1.15){};
\node[vertex, fill = red]
	(v44) at (13.8, 1.15){};
\node[vertex, fill = red]
	(v45) at (13.8, 3.45){};
\node[]
	(v46) at (12.65, 5.75) [align = left]{{$S_2$}};
\path[-latex, line width = 1.2, draw = red]
	(v29) to (v38);
\path[-latex, line width = 1.2, draw = red]
	(v40) to (v42);
\path[-latex, line width = 1.2, draw = red]
	(v39) to (v43);
\path[-latex, line width = 1.2, draw = red]
	(v30) to (v37);
\path[latex-, line width = 1.2, draw = blue, dashed]
	(v30) to (v23);
\path[latex-, line width = 1.2, draw = blue, dashed]
	(v23) to (v29);
\path[-latex, line width = 1.2, draw = blue, dashed]
	(v38) to (v22);
\path[-latex, line width = 1.2, draw = blue, dashed]
	(v22) to (v37);
\path[-latex, line width = 1.2, draw = blue, dashed]
	(v40) to (v21);
\path[-latex, line width = 1.2, draw = blue, dashed]
	(v21) to (v39);
\path[-latex, line width = 1.2, draw = red]
	(v1) to (v30);
\path[-latex, line width = 1.2, draw = red]
	(v37) to (v40);
\path[-latex, line width = 1.2, draw = red]
	(v42) to (v45);
\path[-latex, line width = 1.2, draw = red]
	(v20) to (v19);
\path[-latex, line width = 1.2, draw = red]
	(v43) to (v44);
\path[-latex, line width = 1.2, draw = red]
	(v21) to (v20);
\path[-latex, line width = 1.2, draw = red]
	(v38) to (v39);
\path[-latex, line width = 1.2, draw = red]
	(v22) to (v21);
\path[-latex, line width = 1.2, draw = red]
	(v23) to (v22);
\path[-latex, line width = 1.2, draw = red]
	(v24) to (v23);
\path[-latex, line width = 1.2, draw = red]
	(v28) to (v29);
\path[-latex, line width = 1.2, draw = blue, dashed]
	(v11) to (v1);
\path[-latex, line width = 1.2, draw = blue, dashed]
	(v1) to (v24);
\path[-latex, line width = 1.2, draw = blue, dashed]
	(v24) to (v28);
\path[-latex, line width = 1.2, draw = blue, dashed]
	(v30) to (v3);
\path[-latex, line width = 1.2, draw = blue, dashed]
	(v10) to (v4);
\path[-latex, line width = 1.2, draw = blue, dashed]
	(v28) to (v10);
\path[-latex, line width = 1.2, draw = blue, dashed]
	(v4) to (v29);
\path[-latex, line width = 1.2, draw = yellow]
	(v12) to (v1);
\path[-latex, line width = 1.2, draw = yellow]
	(v13) to (v24);
\path[-latex, line width = 1.2, draw = yellow]
	(v14) to (v28);
\path[-latex, line width = 1.2, draw = blue, dashed]
	(v39) to (v7);
\path[-latex, line width = 1.2, draw = blue, dashed]
	(v7) to (v6);
\path[-latex, line width = 1.2, draw = blue, dashed]
	(v6) to (v38);
\path[-latex, line width = 1.2, draw = blue, dashed]
	(v37) to (v5);
\path[-latex, line width = 1.2, draw = blue, dashed]
	(v25) to (v40);
\path[-latex, line width = 1.2, draw = blue, dashed]
	(v43) to (v8);
\path[-latex, line width = 1.2, draw = blue, dashed]
	(v8) to (v9);
\path[-latex, line width = 1.2, draw = blue, dashed]
	(v9) to (v44);
\path[-latex, line width = 1.2, draw = blue, dashed]
	(v26) to (v42);
\path[-latex, line width = 1.2, draw = blue, dashed]
	(v42) to (v20);
\path[-latex, line width = 1.2, draw = blue, dashed]
	(v20) to (v43);
\path[-latex, line width = 1.2, draw = blue, dashed]
	(v44) to (v19);
\path[-latex, line width = 1.2, draw = blue, dashed]
	(v19) to (v45);
\path[-latex, line width = 1.2, draw = blue, dashed]
	(v45) to (v27);
\path[-latex, line width = 1.2, draw = green]
	(v45) to (v15);
\path[-latex, line width = 1.2, draw = green]
	(v19) to (v16);
\path[-latex, line width = 1.2, draw = green]
	(v44) to (v17);

		\end{tikzpicture}
			\caption{Illustration of \cref{lemma:1-horizontal-web-double-side-linkage}\cref{item:1-horizontal-web-double-side-linkage:order-linked}.
			The clusters in blue (dashed) come from a folded ordered web.}
			\label{fig:1-horizontal-web-double-side-linkage:order-linked}
	\end{figure}
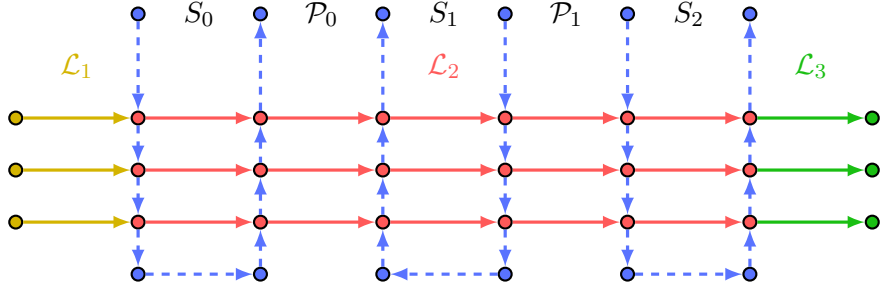

	In \cref{fig:1-horizontal-web-double-side-linkage:double-order} we illustrate
	the case \cref{item:1-horizontal-web-double-side-linkage:double-order} of \cref{lemma:1-horizontal-web-double-side-linkage}.
	In this case, we obtain the clusters of the path of order-linked sets from the ordered web of \cref{cor:web-to-folded-or-ordered-web-without-gaps}\cref{item:web-to-folded-or-ordered-web-without-gaps:ordered},
	and so each path of $\mathcal{V}$ used intersects at most one cluster.
	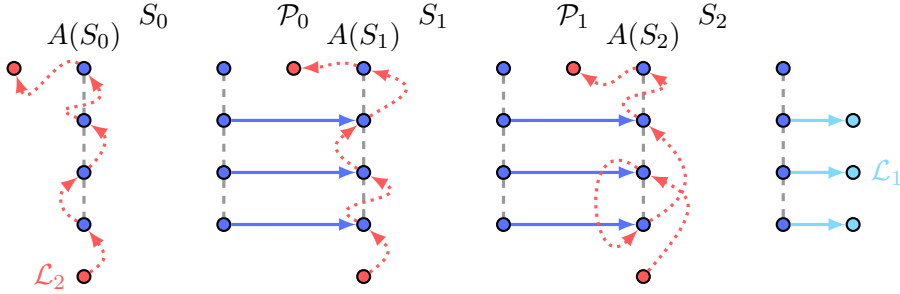
\begin{figure}[H]
		\centering
		\begin{tikzpicture}[yscale=0.6,xscale=0.8]
			 \node[vertex, fill = blue]
	(v1) at (1.15, 3.45){};
\node[vertex, fill = red]
	(v2) at (4.6, 4.6){};
\node[vertex, fill = blue]
	(v3) at (3.45, 4.6){};
\node[vertex, label = above:{{$A(S_1)$}}, fill = blue]
	(v4) at (5.75, 4.6){};
\node[vertex, label = above:{{$A(S_0)$}}, fill = blue]
	(v5) at (1.15, 4.6){};
\node[vertex, label = left:{\color{red}{$\mathcal{L}_2$}}, fill = red]
	(v6) at (1.15, 0){};
\node[vertex, fill = red]
	(v7) at (0, 4.6){};
\node[vertex, fill = red]
	(v8) at (5.75, 0){};
\node[vertex, fill = red]
	(v9) at (10.35, 0){};
\node[vertex, fill = cyan]
	(v10) at (13.8, 3.45){};
\node[vertex, label = right:{\color{cyan}{$\mathcal{L}_1$}}, fill = cyan]
	(v11) at (13.8, 2.3){};
\node[vertex, fill = cyan]
	(v12) at (13.8, 1.15){};
\node[vertex, fill = red]
	(v13) at (9.2, 4.6){};
\node[vertex, fill = blue]
	(v14) at (12.65, 2.3){};
\node[vertex, fill = blue]
	(v15) at (10.35, 2.3){};
\node[vertex, fill = blue]
	(v16) at (8.05, 2.3){};
\node[vertex, fill = blue]
	(v17) at (5.75, 2.3){};
\node[vertex, fill = blue]
	(v18) at (3.45, 2.3){};
\node[vertex, fill = blue]
	(v19) at (1.15, 2.3){};
\node[vertex, fill = blue]
	(v20) at (8.05, 4.6){};
\node[vertex, label = above:{{$A(S_2)$}}, fill = blue]
	(v21) at (10.35, 4.6){};
\node[vertex, fill = blue]
	(v22) at (12.65, 4.6){};
\node[vertex, fill = blue]
	(v23) at (1.15, 1.15){};
\node[vertex, fill = blue]
	(v24) at (3.45, 1.15){};
\node[vertex, fill = blue]
	(v25) at (3.45, 3.45){};
\node[]
	(v26) at (2.3, 5.75) [align = left]{{$S_0$}};
\node[]
	(v32) at (4.6, 5.75) [align = left]{{$\mathcal{P}_0$}};
\node[]
	(v33) at (9.2, 5.75) [align = left]{{$\mathcal{P}_1$}};
\node[vertex, fill = blue]
	(v34) at (5.75, 3.45){};
\node[vertex, fill = blue]
	(v35) at (5.75, 1.15){};
\node[vertex, fill = blue]
	(v36) at (8.05, 1.15){};
\node[vertex, fill = blue]
	(v37) at (8.05, 3.45){};
\node[]
	(v38) at (6.9, 5.75) [align = left]{{$S_1$}};
\node[vertex, fill = blue]
	(v39) at (10.35, 3.45){};
\node[vertex, fill = blue]
	(v40) at (10.35, 1.15){};
\node[vertex, fill = blue]
	(v41) at (12.65, 1.15){};
\node[vertex, fill = blue]
	(v42) at (12.65, 3.45){};
\node[]
	(v43) at (11.5, 5.75) [align = left]{{$S_2$}};
\path[-latex, line width = 1.2, draw = blue]
	(v24) to (v35);
\path[-latex, line width = 1.2, draw = blue]
	(v37) to (v39);
\path[-latex, line width = 1.2, draw = blue]
	(v36) to (v40);
\path[-latex, line width = 1.2, draw = blue]
	(v25) to (v34);
\path[line width = 1.2, draw = gray, dashed]
	(v25) to (v18);
\path[line width = 1.2, draw = gray, dashed]
	(v18) to (v24);
\path[line width = 1.2, draw = gray, dashed]
	(v35) to (v17);
\path[line width = 1.2, draw = gray, dashed]
	(v17) to (v34);
\path[line width = 1.2, draw = gray, dashed]
	(v37) to (v16);
\path[line width = 1.2, draw = gray, dashed]
	(v16) to (v36);
\path[-latex, line width = 1.2, draw = blue]
	(v16) to (v15);
\path[-latex, line width = 1.2, draw = blue]
	(v18) to (v17);
\path[line width = 1.2, draw = gray, dashed]
	(v5) to (v1);
\path[line width = 1.2, draw = gray, dashed]
	(v1) to (v19);
\path[line width = 1.2, draw = gray, dashed]
	(v19) to (v23);
\path[line width = 1.2, draw = gray, dashed]
	(v25) to (v3);
\path[line width = 1.2, draw = gray, dashed]
	(v34) to (v4);
\path[line width = 1.2, draw = gray, dashed]
	(v20) to (v37);
\path[line width = 1.2, draw = gray, dashed]
	(v21) to (v39);
\path[line width = 1.2, draw = gray, dashed]
	(v39) to (v15);
\path[line width = 1.2, draw = gray, dashed]
	(v15) to (v40);
\path[line width = 1.2, draw = gray, dashed]
	(v41) to (v14);
\path[line width = 1.2, draw = gray, dashed]
	(v14) to (v42);
\path[line width = 1.2, draw = gray, dashed]
	(v42) to (v22);
\path[-latex, line width = 1.2, draw = cyan]
	(v42) to (v10);
\path[-latex, line width = 1.2, draw = cyan]
	(v14) to (v11);
\path[-latex, line width = 1.2, draw = cyan]
	(v41) to (v12);
\path[-latex, line width = 1.2, draw = red, dotted]
	(v6) .. controls (1.494, 0.326) and (1.56, 0.423) .. (v23);
\path[-latex, line width = 1.2, draw = red, dotted]
	(v23) .. controls (0.693, 1.452) and (0.729, 1.868) .. (v19);
\path[-latex, line width = 1.2, draw = red, dotted]
	(v19) .. controls (1.657, 2.962) and (1.442, 3.193) .. (v1);
\path[-latex, line width = 1.2, draw = red, dotted]
	(v1) .. controls (0.334, 3.874) and (1.688, 3.6) .. (v5);
\path[-latex, line width = 1.2, draw = red, dotted]
	(v5) .. controls (0.744, 5.045) and (0.242, 3.612) .. (v7);
\path[-latex, line width = 1.2, draw = red, dotted]
	(v8) .. controls (6.344, 0.556) and (6.102, 0.709) .. (v35);
\path[-latex, line width = 1.2, draw = red, dotted]
	(v35) .. controls (4.928, 1.702) and (6.657, 1.637) .. (v17);
\path[-latex, line width = 1.2, draw = red, dotted]
	(v17) .. controls (5.095, 2.829) and (5.324, 3.014) .. (v34);
\path[-latex, line width = 1.2, draw = red, dotted]
	(v34) .. controls (6.228, 3.873) and (6.847, 4.158) .. (v4);
\path[-latex, line width = 1.2, draw = red, dotted]
	(v4) .. controls (5.35, 4.805) and (5.14, 4.63) .. (v2);
\path[-latex, line width = 1.2, draw = red, dotted]
	(v9) .. controls (11.29, 1.432) and (11.231, 1.763) .. (v15);
\path[-latex, line width = 1.2, draw = red, dotted]
	(v15) .. controls (9.268, 3.359) and (9.466, -0.047) .. (v40);
\path[-latex, line width = 1.2, draw = red, dotted]
	(v40) .. controls (11.279, 1.842) and (11.14, 2.456) .. (v39);
\path[-latex, line width = 1.2, draw = red, dotted]
	(v39) .. controls (9.463, 4.128) and (11.026, 3.995) .. (v21);
\path[-latex, line width = 1.2, draw = red, dotted]
	(v21) .. controls (10.005, 4.611) and (9.845, 4.034) .. (v13);

		\end{tikzpicture}
			\caption{Illustration of \cref{lemma:1-horizontal-web-double-side-linkage}\cref{item:1-horizontal-web-double-side-linkage:double-order}.
			The clusters come from an ordered web, which is then used to prove the red (dotted) linkage intersects
			the $A$ sets.}
			\label{fig:1-horizontal-web-double-side-linkage:double-order}
	\end{figure}

	\begin{lemma}
		\label{lemma:1-horizontal-web-double-side-linkage}
		Let $\Brace{\mathcal{H}, \mathcal{V}}$ be an $(h,v)$-web where $\mathcal{H}$ is weakly $c$-minimal with respect to $\mathcal{V}$.
		If $\Abs{\mathcal{H}} \geq \bound{lemma:1-horizontal-web-double-side-linkage}{h}{w_2}$ and $\Abs{\mathcal{V}} \geq \bound{lemma:1-horizontal-web-double-side-linkage}{v}{w_1, \ell_1, w_2, \ell_2, c}$, then one of the following is true:
		\begin{enamerate}{E}{item:1-horizontal-web-double-side-linkage:double-order}
			\item
			\label{item:1-horizontal-web-double-side-linkage:order-linked}
                $\Brace{\mathcal{H}, \mathcal{V}}$ contains a path of well-linked sets $\Brace{\mathcal{S} = \Brace{ S_0, S_1, \dots, S_{\ell_1}}, \mathscr{P}}$ of width $w_1$ and length $\ell_1$.
                Additionally, there is a $\Start{\mathcal{V}}$-$\End{\mathcal{V}}$-linkage $\mathcal{L} = \mathcal{L}_1 \cdot \mathcal{L}_2 \cdot \mathcal{L}_3$ of order $w_1$ using only arcs of $\mathcal{V}$ such that $\mathcal{L}_2$ is an $A(S_0)$-$B(S_{\ell_1})$-linkage and both $\mathcal{L}_1$ and $\mathcal{L}_3$ are internally disjoint from $\Brace{\mathcal{S}, \mathscr{P}}$.
			\item
			\label{item:1-horizontal-web-double-side-linkage:double-order}
                There is some $\mathcal{V}' \subseteq \mathcal{V}$ such that $\Brace{\mathcal{H}, \mathcal{V}'}$ contains a uniform path of $w_2$-order-linked sets $\Brace{\mathcal{S} = \Brace{ S_0, S_1, \dots, S_{\ell_2}},	\mathscr{P}}$ of width $w_2$ and length $\ell_2$ for which there are linkages $\mathcal{L}_1, \mathcal{L}_2$ such that
    			\begin{enamerate}[resume]{Y}{subitem:1-horizontal-web-double-side-linkage:L2b}
    				\item \label{subitem:1-horizontal-web-double-side-linkage:L1b}
                        $\mathcal{L}_1$ is a $B(S_{\ell_2})$-$\End{\mathcal{H}}$-linkage of order $w_2$ inside $\mathcal{H}$ which is internally disjoint from $\mathcal{V}'$ and from $\Brace{\mathcal{S}, \mathscr{P}}$, and
    				\item \label{subitem:1-horizontal-web-double-side-linkage:L2b}
                        $\mathcal{L}_2 \subseteq \mathcal{V}'$ is a linkage of order $\ell_2 + 1$ where for each $L_{2,j} \in \mathcal{L}_2$ there is some $0 \leq i \leq \ell_2$ such that $A(S_i) \subseteq \V{L_{2,j}}$ and $\V{L_{2,j}} \cap \V{\mathcal{S}, \mathscr{P}} \subseteq \V{S_i}$.
    			\end{enamerate}
		\end{enamerate}
	\end{lemma}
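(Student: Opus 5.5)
We apply \cref{cor:web-to-folded-or-ordered-web-without-gaps} to $(\mathcal{H}, \mathcal{V})$. We use $x = (w_2)^2 - 1$, $y = \bound{lemma:folded-web-to-pows}{v}{w_1,\ell_1}$, $q = \bound{lemma:folded-web-to-pows}{h}{w_1} + \Fkt{v'}{w_2,\ell_2}$ and $q''$ equal to the argument of $\bound{cor:web-to-folded-or-ordered-web-without-gaps}{q}{}$ in the definition of $v'$. The hypotheses on $\Abs{\mathcal{H}}$ and $\Abs{\mathcal{V}}$ are exactly those required, and $\mathcal{H}$ is weakly $c$-minimal with respect to $\mathcal{V}$.

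\textbf{Case O1.} Suppose we obtain a folded ordered $(q,y)$-web $(\mathcal{Q}', \mathcal{P}')$ with $\mathcal{Q}' \subseteq \mathcal{V}$ and $\ToDigraph{\mathcal{P}'} \subseteq \ToDigraph{\mathcal{H}}$. Since $q \geq \bound{lemma:folded-web-to-pows}{h}{w_1}$ and $y \geq \bound{lemma:folded-web-to-pows}{v}{w_1,\ell_1}$, \cref{lemma:folded-web-to-pows} yields a path of well-linked sets of width $w_1$ and length $\ell_1$. It also yields a $\Start{\mathcal{Q}'}$-$\End{\mathcal{Q}'}$-linkage $\mathcal{L}_1\cdot\mathcal{L}_2\cdot\mathcal{L}_3$ that uses only arcs of $\mathcal{Q}' \subseteq \mathcal{V}$. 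Its middle part runs from $A(S_0)$ to $B(S_{\ell_1})$ and its outer parts are internally disjoint from the path of well-linked sets. The paths of $\mathcal{Q}'$ are whole paths of $\mathcal{V}$, so $\Start{\mathcal{Q}'} \subseteq \Start{\mathcal{V}}$ and $\End{\mathcal{Q}'} \subseteq \End{\mathcal{V}}$. Keeping $w_1$ of these paths gives \cref{item:1-horizontal-web-double-side-linkage:order-linked}.

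\textbf{Case O2.} Suppose instead we obtain an ordered $(x,q'')$-web $(\mathcal{P}',\mathcal{Q}')$ with $x = (w_2)^2-1$ paths $\mathcal{P}'$ made of subpaths of $\mathcal{H}$, satisfying $\End{\mathcal{P}'} \subseteq \End{\mathcal{H}}$, and $\mathcal{Q}' = (Q_1,\dots,Q_{q''}) \subseteq \mathcal{V}$ ordered. Each $P \in \mathcal{P}'$ decomposes as $P^1\cdots P^{q''}$, with $P^i$ meeting only $Q_i$.

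We group the $Q_i$ into consecutive blocks and, within each block, take the digraphs $H_j = Q_{i_j} \cup \bigcup_{P} P^{i_j}$. We form the routing temporal digraph of $\mathcal{P}'$ through the $H_j$. Every layer is unilateral: each $P^{i}$ meets $Q_i$, so any two paths of $\mathcal{P}'$ are connected along $Q_i$ in the direction of $Q_i$.

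By \cref{theorem:one-way connected temporal digraph contains P_k routing} with $n = (w_2)^2-1 \geq w_2^2/2$ and $k = w_2$, each block of length $\bound{theorem:one-way connected temporal digraph contains P_k routing}{\Lifetime{}}{w_2, (w_2)^2-1}$ contains a $\Pk{w_2}$-routing. There are $(w_2\ell_2 - 1)\binom{(w_2)^2-1}{w_2}w_2! + 1$ blocks. By pigeonhole, $w_2\ell_2$ blocks share the same routed set and the same order. \Cref{lemma:P_k-routing-implies-1-order-linked} then makes each such block a $1$-order-linked cluster. The connecting linkages are the segments of the routed paths of $\mathcal{P}'$ between clusters, which gives a uniform path of $1$-order-linked sets.

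Next we merge groups of $w_2$ consecutive clusters using \cref{lemma:merging path of order-linked sets} (or \cref{lemma:increase-order-linkedness}). This produces a uniform path of $w_2$-order-linked sets of length $\ell_2$.

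We set $\mathcal{V}'$ to be the $Q_i$ that are used. $\mathcal{L}_1$ is the tail of the routed paths after the last cluster. These tails end in $\End{\mathcal{H}}$, and by the ordered-web property they avoid $\mathcal{V}'$ and the clusters.

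\textbf{Main obstacle.} The hardest part is \cref{subitem:1-horizontal-web-double-side-linkage:L2b}. For each merged cluster $S_i$ we need a single path $L_{2,j} \in \mathcal{V}'$ that contains $A(S_i)$ and meets the path of order-linked sets only inside $S_i$.

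The plan is to define each merged cluster so that its first layer is built on one path $Q$. We choose $A(S_i)$ as the first intersections of the routed paths with that layer. In the ordered web these intersections lie on $Q$, so we can take $L_{2,j} = Q$. Because $Q$ meets the horizontal paths only in the segment $P^{i}$, which lies inside that cluster, $Q$ does not touch any other cluster or any connecting linkage.

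This forces us to discard the remaining $Q$'s of the cluster from $\mathcal{V}'$, or to absorb them into $S_i$. We also have to check that the choice of $A$-sets stays compatible with the merging theorem, which only guarantees $A(S'_i) \subseteq A(S_{ci})$. Since each first-layer set lies on a single $Q$, this is consistent.
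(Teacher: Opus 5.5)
Your proposal is correct and follows essentially the paper's proof: the same application of \cref{cor:web-to-folded-or-ordered-web-without-gaps} with the same parameters, \cref{lemma:folded-web-to-pows} in the folded case, and in the ordered case blocks of consecutive vertical paths used as layers of a routing temporal digraph, then \cref{theorem:one-way connected temporal digraph contains P_k routing}, pigeonhole over routings, \cref{lemma:P_k-routing-implies-1-order-linked}, merging via \cref{lemma:merging path of order-linked sets}, the tails as $\mathcal{L}_1$, and the first vertical path of each cluster as $L_{2,j}$. One small fix: take the layers to be the vertical paths $Q_{i_j}$ alone (as with the paper's $G_i$), not $Q_{i_j}$ together with the horizontal segments $P^{i_j}$, so that the first vertices on the first layer given by \cref{lemma:P_k-routing-implies-1-order-linked} actually lie on $Q$ and $A(S_i)\subseteq V(Q)$ holds.
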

	\begin{proof}
		We define
		$h_2 = (w_2)^2 - 1$,
		$h_1 = \bound{lemma:folded-web-to-pows}{v}{w_1, \ell_1}$,
		$\ell_4 = w_2 \ell_2$,
		$\ell_3 = \bound{theorem:one-way connected temporal digraph contains P_k routing}{\Lifetime{}}{w_2, h_2}$,
		$t_1 = (\ell_4 - 1)\binom{h_2}{w_2} w_2! + 1$,
		$v_2 = t_1 \ell_3$,
		$v_1 = \bound{lemma:folded-web-to-pows}{h}{w_1} + \bound{cor:web-to-folded-or-ordered-web-without-gaps}{q}{v_2}$.
		
		By~\cref{cor:web-to-folded-or-ordered-web-without-gaps}, there is some $\mathcal{H}^1 \subseteq \mathcal{H}$ such that one of the following cases hold:
		
		\textbf{Case 1:}
		\Cref{item:web-to-folded-or-ordered-web-without-gaps:folded} holds.
		
		That is, there is a sublinkage $\mathcal{V}^1 \subseteq \mathcal{V}$ for which $\Brace{\mathcal{V}^1, \mathcal{H}^1}$ is a folded ordered $\Brace{v_1, h_1}$-web.
		As $v_1 \geq \bound{lemma:folded-web-to-pows}{h}{w_1}$, by~\cref{lemma:folded-web-to-pows} $\Brace{\mathcal{V}^1, \mathcal{H}^1}$ contains a path of well-linked sets $\Brace{\mathcal{S},\mathscr{P}}$ of width $w_1$ and length $\ell_1$.
		Additionally, there is a $\Start{\mathcal{V}^1}$-$\End{\mathcal{V}^1}$-linkage $\mathcal{L} = \mathcal{L}_1 \cdot \mathcal{L}_2 \cdot \mathcal{L}_3$ of order $w_1$ using only arcs of $\mathcal{V}^1$ such that $\mathcal{L}_2$ is an $A(S_0)$-$B(S_{\ell_1})$-linkage and both $\mathcal{L}_1$ and $\mathcal{L}_3$ are internally disjoint from $\Brace{\mathcal{S}, \mathscr{P}}$.
		This immediately satisfies~\cref{item:1-horizontal-web-double-side-linkage:order-linked}.
		
		\textbf{Case 2:}
		\Cref{item:web-to-folded-or-ordered-web-without-gaps:ordered} holds.
		
		That is, $\Brace{\mathcal{H}^1, \mathcal{V}}$ contains an ordered $\Brace{h_2, v_2}$-web $\Brace{\mathcal{H}^2, \mathcal{V}^1}$ such that $\End{\mathcal{H}^2} \subseteq \End{\mathcal{H}^1}$.
		
		Decompose $\mathcal{H}^2$ into $\mathcal{H}^2 = \mathcal{Q} \cdot \mathcal{L}_1'$ such that $\mathcal{L}_1'$ is internally disjoint from $\mathcal{V}^1$ and $\Start{\mathcal{L}_1'} \subseteq \V{\mathcal{V}^1}$.
		Since $\mathcal{L}_1'$ is internally disjoint from $\mathcal{V}^1$, we have that $\Brace{\mathcal{Q}, \mathcal{V}^1}$ is also an ordered $(h_2, v_2)$-web.
		
		Let $\Brace{V^1_1, V^1_2, \ldots, V^1_{v_2}} \coloneqq \mathcal{V}^1$ be an ordering of $\mathcal{V}^1$ witnessing that $\Brace{\mathcal{Q}, \mathcal{V}^1}$ is an ordered web.
		For each $1 \leq i \leq t_1$ let $T_i$ be the routing temporal digraph of $\mathcal{Q}$ through $G_i \coloneqq (V^1_{(i-1) \ell_3 + 1 }, V^1_{(i-1) \ell_3 + 2}$, $\ldots, V^1_{i \ell_3})$.
		As every path in $\mathcal{V}^1$ intersects every path in $\mathcal{Q}$, we have that $\Layer{T_i}{j}$ is unilateral for all $1 \leq i \leq t_1$ and all $1 \leq j \leq \ell_3$.
		Since $\Lifetime{T_i} = \ell_3$, by~\cref{theorem:one-way connected temporal digraph contains P_k routing} we have that every $T_i$ contains a $\Pk{w_2}$-routing $\varphi_i$ over some $\mathcal{Q}_i \subseteq \mathcal{Q}$.
		
		As there are at most $\binom{h_2}{w_2} w_2!$ distinct $\varphi_i$, by the pigeon-hole principle there is some $\mathcal{I} \subseteq \Set{1, \ldots, t_1}$ of size $\ell_4$ such that $\varphi \coloneqq \varphi_i = \varphi_j$ and $\mathcal{Q}' \coloneqq \mathcal{Q}_i = \mathcal{Q}_j$ hold for all $i, j \in \mathcal{I}$.
		
		For each $i \in \mathcal{I}$ let $\mathcal{R}_i$ be the maximal $\V{V^1_{(i - 1)\ell_3 + 1}}$-$\V{V^1_{i\ell_3}}$-linkage of order $w_2$ inside $\mathcal{Q}'$ and let $T_i'$ be the routing temporal digraph of $\mathcal{R}_i$ through $G_i$.
		Note that $\varphi$ induces a $\Pk{w_2}$-routing $\psi$ over $\mathcal{R}_i$ in $T_i'$.
		
		By~\cref{lemma:P_k-routing-implies-1-order-linked} we have that $\Start{\mathcal{R}_i}$ is 1-order-linked to $\End{\mathcal{R}_{i}}$ inside $\ToDigraph{\mathcal{R}_i} \cup \ToDigraph{G_i}$ for every $i \in \mathcal{I}$.
		For every two consecutive $i < j \in \mathcal{I}$ (that is, there is no $k \in \mathcal{I}$ with $i < k < j$) let $\mathcal{R}_i'$ be the $\End{\mathcal{R}_i}$-$\Start{\mathcal{R}_j}$-linkage of order $w_2$ in $\mathcal{Q}'$.
		We define
		\begin{align*}
			\mathscr{P} & \coloneqq \Brace{\mathcal{R}_{i}' \mid i \in \mathcal{I} \setminus \max(\mathcal{I})},\\[0em]
			\Brace{ \mathcal{P}_0, \mathcal{P}_1, \dots, \mathcal{P}_{w_2\ell_2 - 1}} & \coloneqq \mathscr{P}, \\[0em]
			\mathcal{S} & \coloneqq \Brace{\ToDigraph{\mathcal{R}_i} \cup \ToDigraph{G_i} \mid i \in \mathcal{I}},
			\\[0em]
			\Brace{ S_0, S_1, \dots, S_{w_2 \ell_2}} & \coloneqq \mathcal{S},
		\end{align*}
		whereas the order of the elements of $\mathcal{S}$ and $\mathscr{P}$ is given by the order of $i \in \mathcal{I}$.
		Finally, we set $A(S_j) = \Start{\mathcal{R}_i}$ and $B(S_j) = \End{\mathcal{R}_i}$ for every $0 \leq j \leq \ell_2$, where $\mathcal{R}_i$ is the sublinkage of $\mathcal{Q}'$ which is inside $S_j$.
		
		By choice of $\mathcal{R}_i$, $\Brace{\mathcal{S}, \mathscr{P}}$ is a uniform path of 1-order-linked sets of width $w_2$ and length $w_2 \ell_2$.
		By~\cref{lemma:merging path of order-linked sets}, there is a uniform path of $w_2$-order-linked sets \((\mathcal{S}' = \Brace{ S'_0, S'_1, \dots, S'_{\ell_2}}, \mathscr{P}' = \Brace{ \mathcal{P}'_0, \mathcal{P}'_1, \dots, \mathcal{P}'_{\ell_2 - 1}})\) of length $\ell_2$ and width $w_2$ inside $\Brace{\mathcal{S}, \mathscr{P}}$.
		Additionally, for every $0 \leq i \leq \ell_2$ we have $S_i' \subseteq \SubPOSS{(\mathcal{S}, \mathscr{P})}{i w_2}{(i+1) w_2  - 1}$, $A(S_i') \subseteq A(S_{i w_2})$ and $B(S_i') \subseteq B(S_{(i + 1) w_2 - 1})$, and for $0 \leq i < \ell_2$ we have $\mathcal{P}'_i \subseteq \mathcal{P}_{(i + 1)(w_2 - 1)}$.
		
		Let $\mathcal{L}_1 \subseteq \mathcal{L}_1'$ be the paths of $\mathcal{L}_1'$ satisfying $\Start{\mathcal{L}_1} = B(S_{\ell_2}')$.
		Since $\mathcal{L}_1'$ is internally disjoint from $\mathcal{V}^1$ by construction and $\End{\mathcal{L}_1} \subseteq \End{\mathcal{H}}$, we have that $\mathcal{L}_1$ is a $B(S'_{\ell_2})$-$\End{\mathcal{H}}$-linkage of order $w_2$ which is internally disjoint from $\mathcal{V}^1$, satisfying~\cref{subitem:1-horizontal-web-double-side-linkage:L1b}.
		
		For each $0 \leq i \leq \ell_2$ let $L_i \in \mathcal{V}^1$ be the path of $\mathcal{V}^1$ which intersects $A(S_i)$.
		By construction of $\mathcal{S}$, each path in $\mathcal{V}^1$ intersects at most one $A(S_j)$.
		Since $S_i' \subseteq \SubPOSS{(\mathcal{S}, \mathscr{P})}{i w_2}{(i+1) w_2 - 1}$, we have that $L_i$ intersects exactly one $A(S_i')$ as well.
		Let $\mathcal{L}_2 = \Set{L_{i} \mid 0 \leq i \leq \ell_2}$.
		
		By construction we have $\V{\mathcal{V}^1} \cap \V{\mathcal{P}'_i} \subseteq \Start{\mathcal{P}'_i} \cup \End{\mathcal{P}'_i}$ for every $\mathcal{P}_i' \in \mathscr{P}'$.
		Further, $L_{j}$ intersects only the cluster $S_i'$.
		Hence, $\mathcal{L}_2 \subseteq \mathcal{V}^1$ is a linkage of order $\ell_2 + 1$ and for each $L_{j} \in \mathcal{L}_2$ there is some $0 \leq i \leq \ell_2$ such that $A(S_i') \subseteq \V{L_{j}}$ and $\V{L_{j}} \cap \V{\mathcal{S}', \mathscr{P}'} \subseteq \V{S_i'}$, satisfying~\cref{subitem:1-horizontal-web-double-side-linkage:L2b} and so~\cref{item:1-horizontal-web-double-side-linkage:double-order} as well.
	\end{proof}
	
	To build our cycle of well-linked sets, we will first use a construction from \cite[Lemma 6.12]{kawarabayashi2015directed} in order to \emph{reserve} some paths of $\mathcal{H}^1$ in our
	2-horizontal web $(\mathcal{H}^1 \cdot \mathcal{H}^2, \mathcal{V}^1, \mathcal{V}^2)$.
	These paths will later be used to construct a back-linkage for the path of well-linked sets obtained
	from \cref{lemma:1-horizontal-web-double-side-linkage}\cref{item:1-horizontal-web-double-side-linkage:order-linked}.

	Further, we also want some path $H_e \in \mathcal{H}$ which can be used
	when we instead obtain a path of order-linked sets from \cref{lemma:1-horizontal-web-double-side-linkage}\cref{item:1-horizontal-web-double-side-linkage:double-order}.
	We can then use the linkage $\mathcal{L}_2$ from \cref{item:1-horizontal-web-double-side-linkage:double-order} in order to construct a linkage from $H_e$ to the path of order-linked sets.
	Using the clusters, we reach the paths of $\mathcal{V}^1$ which intersect our
	path of order-linked sets.
	In this way, we can obtain a large linkage avoiding an arc of $H_e$ if
	$\mathcal{V}^1$ intersects too many clusters.

	On the other hand, if $\mathcal{V}^1$ is disjoint from most clusters,
	then we can obtain a back-linkage as in the precondition of \cref{lemma:back-linkage avoids many clusters}.
	This can then be used to construct the cycle of well-linked sets as desired.
	
	We define 
	\begin{align*}
		\Fkt{\ell'}{w, c } & \coloneqq 4 w + \left(\ell - 1\right) \left(c + 1\right) - 2,
		\\[0em]
		\Fkt{w'}{w, \ell} & \coloneqq 
		\bound{lemma:well-linked-poss-with-disjoint-forward-linkage-implies-coss}{q}{w,\ell } + 2 \bound{lemma:well-linked-poss-with-disjoint-forward-linkage-implies-coss}{r}{w,\ell },
		\\[0em]
		\Fkt{v'}{w } & \coloneqq 
		\bound{lemma:1-horizontal-web-double-side-linkage}{h}{4 w },
		\\[0em]
		\Fkt{v''}{w, \ell, c } & \coloneqq 
		\bound{lemma:1-horizontal-web-double-side-linkage}{v}{\Fkt{w'}{w,\ell},\bound{lemma:well-linked-poss-with-disjoint-forward-linkage-implies-coss}{\ell'}{w, \ell},\bound{lemma:well-linked-poss-with-disjoint-forward-linkage-implies-coss}{r}{w,\ell },\Fkt{\ell'}{w, c },c } - 1,
		\\[0em]
		\Fkt{v'''}{w, \ell, c } & \coloneqq 
		  \Fkt{v''}{w,\ell,c } \left(\Fkt{v'}{w } + 1\right),
		\\[0em]
		\boundDefAlign{lemma:coss-inside-2-horizontal-web}{h}{w, \ell }
		\bound{lemma:coss-inside-2-horizontal-web}{h}{w, \ell } & \coloneqq 
		2 \bound{lemma:well-linked-poss-with-disjoint-forward-linkage-implies-coss}{r}{w,\ell } + \Fkt{v'}{w} + 1,
		\\[0em]
		\boundDefAlign{lemma:coss-inside-2-horizontal-web}{v}{w, \ell, c }
		\bound{lemma:coss-inside-2-horizontal-web}{v}{w, \ell, c } & \coloneqq 
		\Fkt{v'''}{w,\ell,c } {\binom{\bound{lemma:coss-inside-2-horizontal-web}{h}{w,\ell }}{2 \bound{lemma:well-linked-poss-with-disjoint-forward-linkage-implies-coss}{r}{w,\ell }}} + 1 + \bound{lemma:coss-inside-2-horizontal-web}{h}{w, \ell} c.
	\end{align*}
	Observe that
	\(\bound{lemma:coss-inside-2-horizontal-web}{h}{w,\ell} \in \Oh(w^{2} \ell^{2})\) and
	\(\bound{lemma:coss-inside-2-horizontal-web}{v}{w,\ell,c} \in \PowerTower{8}{\Polynomial{25}{w,\ell,c}}\).
	
	\begin{lemma}
		\label{lemma:coss-inside-2-horizontal-web}
		Let $\Brace{\mathcal{H}, \mathcal{V}}$ be a $2$-horizontal web where $\mathcal{H}$ is weakly $c$-minimal with respect to $\mathcal{V}$.
		If $\Abs{\mathcal{H}} \geq \bound{lemma:coss-inside-2-horizontal-web}{h}{w, \ell}$ and $\Abs{\mathcal{V}} \geq \bound{lemma:coss-inside-2-horizontal-web}{v}{w, \ell, c}$, then $\ToDigraph{\Brace{\mathcal{H}, \mathcal{V}}}$ contains a cycle of well-linked sets of length $\ell$ and width $w$.
	\end{lemma}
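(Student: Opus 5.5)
We write $\mathcal{H} = \mathcal{H}^1 \cdot \mathcal{H}^2$ and $\mathcal{V} = \mathcal{V}^1 \cdot \mathcal{V}^2$ for the decompositions witnessing the 2-horizontal web. Thus $\mathcal{V}^2$ meets only $\mathcal{H}^1$, $\mathcal{V}^1$ meets $\mathcal{H}^1 \cup \mathcal{H}^2$, and $(\mathcal{H}^2,\mathcal{V}^1)$ is a web.

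\textbf{Step 1: fix the reserved paths.} Following \cite[Lemma 6.12]{kawarabayashi2015directed}, we reserve from $\mathcal{H}$ a set $\mathcal{H}_R$ of $2\bound{lemma:well-linked-poss-with-disjoint-forward-linkage-implies-coss}{r}{w,\ell}$ paths for building a back-linkage, and one further path $H_e$. The remaining $\Fkt{v'}{w}$ paths form $\mathcal{H}_W$. We pigeonhole over the $\binom{\bound{lemma:coss-inside-2-horizontal-web}{h}{w,\ell}}{2\bound{lemma:well-linked-poss-with-disjoint-forward-linkage-implies-coss}{r}{w,\ell}}$ choices of $\mathcal{H}_R$. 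The reason is that we want a subfamily $\mathcal{V}_0 \subseteq \mathcal{V}$ of size about $\Fkt{v'''}{w,\ell,c}$ such that the reserved paths behave uniformly with respect to it. The additive $\bound{lemma:coss-inside-2-horizontal-web}{h}{w,\ell}c$ in the bound should be spent discarding those vertical paths that the weak $c$-minimality certificates force us to lose.

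\textbf{Step 2: apply \cref{lemma:1-horizontal-web-double-side-linkage}.} We apply it to the web $(\mathcal{H}_W^1, \mathcal{V}_0^2)$, restricted to the first halves. Weak $c$-minimality is inherited by subpaths.

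In case \cref{item:1-horizontal-web-double-side-linkage:order-linked}, we obtain a path of well-linked sets of width $\Fkt{w'}{w,\ell}$ and length $\bound{lemma:well-linked-poss-with-disjoint-forward-linkage-implies-coss}{\ell'}{w,\ell}$. Along with it we get a forward linkage $\mathcal{L}_2$ that lies on arcs of $\mathcal{V}^2$ and passes through every cluster in order. We build a partial back-linkage from $B(S_{\ell})$ to $A(S_0)$ using the reserved paths $\mathcal{H}_R$, which run through $\mathcal{H}^1$, then $\mathcal{H}^2$, and then back via $\mathcal{V}^1$, exploiting the well-linkedness given by \cref{lemma:well-linkedness_inside_horizontal_web}. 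The key point is that $\mathcal{V}^1$ is disjoint from $\mathcal{V}^2 \supseteq \mathcal{L}_2$, so this back-linkage is internally disjoint from the forward linkage. Then \cref{lemma:well-linked-poss-with-disjoint-forward-linkage-implies-coss} yields the cycle of well-linked sets.

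\textbf{Step 3: case \cref{item:1-horizontal-web-double-side-linkage:double-order}.} Here we have a uniform path of $w_2$-order-linked sets of width $4w$ and length $\Fkt{\ell'}{w,c}$. It comes with $\mathcal{L}_1$ running from $B(S_{\ell_2})$ to $\End{\mathcal{H}^1}$, disjoint from the structure, and with $\mathcal{L}_2 \subseteq \mathcal{V}'$, each path of which enters one $A(S_i)$ and touches only $S_i$.

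We continue $\mathcal{L}_1$ into $\mathcal{H}^2$ and then through $(\mathcal{H}^2,\mathcal{V}^1)$ to $\End{\mathcal{V}^1}$, again by \cref{lemma:well-linkedness_inside_horizontal_web}. We return along $\mathcal{V}^1$ toward the first cluster.

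\emph{Subcase (a): many clusters are hit.} Suppose the paths of $\mathcal{V}^1$ used for this return hit many clusters (more than about $(\ell-1)(c+1)$ of them). Then we route from $H_e$ through the $\mathcal{L}_2$-paths into distinct clusters. From each such cluster we reach a vertex of $\mathcal{V}^1$ lying on that cluster, and continue to the part of $H_e$ beyond a chosen arc $e$. This produces $c$ disjoint paths avoiding $e$, contradicting weak $c$-minimality of $\mathcal{H}$.

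\emph{Subcase (b): few clusters are hit.} Otherwise there is a window of $\ell$ consecutive clusters avoided by the back-linkage. We convert to a path of well-linked sets using \cref{proposition:order-linked to path of well-linked sets}; the width $4w$ is consumed as $2\cdot 2w$ for the $w$-order-linked to well-linked step and the halving. Then \cref{lemma:back-linkage avoids many clusters} gives the cycle of well-linked sets.

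The main obstacle is subcase (a) of Step 3. There we must make the $c$ rerouting paths genuinely vertex-disjoint from each other and separated from the arc $e$ of $H_e$. This requires carefully choosing $e$ between the $\mathcal{V}^2$-part and the $\mathcal{V}^1$-part of $H_e$, which is why $H_e$ is reserved in Step 1. The bookkeeping of which reserved paths remain uniform after the pigeonholing is the second delicate point.
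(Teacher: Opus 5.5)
You have the paper's architecture: reserve back-linkage paths, $|\mathcal{H}_W|$ paths for \cref{lemma:1-horizontal-web-double-side-linkage}, and one path $H_e$; spend $|\mathcal{H}|c$ vertical paths on weak minimality; close case E1 with \cref{lemma:well-linked-poss-with-disjoint-forward-linkage-implies-coss}; in case E2 either contradict weak $c$-minimality at the arc joining $H_e^1$ to $H_e^2$ or find an avoided window. What is missing is the mechanism that makes these pieces compatible, and the one justification you give in Step 2 is wrong.

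\textbf{The gap in Step 2.} Every path of $\mathcal{V}^2$ meets every path of $\mathcal{H}^1$, including the reserved ones. If you feed all of $\mathcal{V}_0^2$ into the 1-horizontal lemma, $\mathcal{L}_2$ may run through vertices of $\mathcal{H}_R^1$. A back-linkage travelling along $\mathcal{H}_R^1$ is then not internally disjoint from $\mathcal{L}_2$; disjointness of $\mathcal{V}^1$ and $\mathcal{V}^2$ says nothing about this part. You also cannot get from $B(S_\ell)$ onto $\mathcal{H}_R$, because the tail $\mathcal{L}_3$ may already have passed every reserved path. ``Pigeonhole over the choices of $\mathcal{H}_R$'' is empty until you say what property of a vertical path determines $\mathcal{H}_R$.

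\textbf{The missing idea.} The paper orders $\mathcal{H}^1$ along each vertical path by \emph{first} intersection and pigeonholes so that, for all chosen vertical paths, two things hold:
\begin{itemize}
\item $\mathcal{H}_R^1$ is the set of the last $|\mathcal{H}_R|$ paths met;
\item $\mathcal{H}_W^1$ is the set of the $|\mathcal{H}_W|$ paths met just before those.
\end{itemize}
This is the factor $(|\mathcal{H}_W|+1)\binom{|\mathcal{H}|}{|\mathcal{H}_R|}$ in the bound, and it forces $H_e$ to be the remaining, first-met path. Each vertical path is then cut as $\mathcal{Q}^1\cdot\mathcal{Q}^2\cdot\mathcal{Q}^M\cdot\mathcal{Q}^N$, where $\mathcal{Q}^M$ runs from the first hit of $\mathcal{H}_W^1$ to the first hit of $\mathcal{H}_R^1$. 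The 1-horizontal lemma is applied only to $(\mathcal{H}_W^1,\mathcal{Q}^M)$. Now $\mathcal{L}_2$ avoids $\mathcal{H}_R$, and $(\mathcal{Q}^N,\mathcal{H}_R^1)$ is a web, so \cref{lemma:web_mixed_linkages} routes the end of $\mathcal{L}_3$ onto the ends of $\mathcal{H}_R^1$, i.e.\ into $\mathcal{H}^2$.

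The $\le |\mathcal{H}|c$ discarded vertical paths are exactly those containing an $H_i^1$-to-$H_i^2$ subpath. Removing them is what allows the $\mathcal{V}^1/\mathcal{V}^2$ split point to be moved back until $\mathcal{Q}^1$ ends on $H_e^2$, while $\mathcal{Q}^2$ meets $H_e^1$ before reaching $\mathcal{H}_W^1$. Your subcase (a) needs precisely these two incidences. The rerouting paths start on $H_e^1$, enter clusters via $\mathcal{Q}^2\cdot\mathcal{L}_2$, and exit along $\mathcal{Q}^1$-segments that end on $H_e^2$. Without them, a $\mathcal{V}^1$-path that hits a cluster need not reach $H_e$ beyond the chosen arc afterwards.

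\textbf{Subcase (b) omits a step.} In case E2 each path of $\mathcal{L}_2$ meets a single cluster, and $A(S_i)$ lies entirely on one vertical path. Returning along vertical paths therefore gives at most one path into the $A$-set of any given cluster. That is not the order-$2w$ back-linkage ending in $A(S_0)$ that \cref{lemma:back-linkage avoids many clusters} needs.

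The summand $4w-2$ in the length $4w+(\ell-1)(c+1)-2$ exists to fix this, and your accounting never uses it. In the paper:
\begin{itemize}
\item the $i$-th returning path leaves its $\mathcal{L}_2$-path at the $i$-th vertex of $A(S_{2i})$;
\item order-linkedness of the clusters, together with uniformity, gathers these paths one cluster at a time over the first $4w-2$ clusters into a single linkage;
\item \cref{lemma:increase-order-linkedness} carries this linkage to the $A$-set of the window;
\item \cref{lemma:half_integral_to_integral_linkage} removes the half-integrality.
\end{itemize}
Accordingly, the window dichotomy must be stated for the $\mathcal{V}^1$-segments of exactly those vertical paths that carry $\mathcal{L}_2$ into these initial clusters. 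The window must also be placed after those first $4w-2$ clusters.
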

	\begin{proof}
		We define
    		\begin{gather*}
        		z_1 \coloneqq c + 1, \qquad
        		q_1 \coloneqq \bound{lemma:well-linked-poss-with-disjoint-forward-linkage-implies-coss}{q}{w, \ell}, \qquad
        		\ell_4 \coloneqq \ell - 1, \\[0em]
        		w_4 \coloneqq 2w, \quad
        		w_3 \coloneqq 2w_4, \quad
        		w_2 \coloneqq \bound{lemma:well-linked-poss-with-disjoint-forward-linkage-implies-coss}{r}{w, \ell}, \quad
        		w_1 \coloneqq q_1 + 2w_2, \\[0em]
        		\ell_3 \coloneqq 2(w_4 - 1), \quad
        		\ell_2 \coloneqq \ell_3 + z_1\ell_4, 
        		\ell_1 \coloneqq \bound{lemma:well-linked-poss-with-disjoint-forward-linkage-implies-coss}{\ell'}{w,\ell}, \\[0em]
        		m_1 \coloneqq 2w_2, \quad
        		m_2 \coloneqq \bound{lemma:1-horizontal-web-double-side-linkage}{h}{w_3}, \qquad
        		h_1 \coloneqq m_1 + m_2 + 1, \\[0em]
        		v_2 \coloneqq \bound{lemma:1-horizontal-web-double-side-linkage}{v}{w_1, \ell_1, w_3, \ell_2, c}, \quad
        		v_1 \coloneqq (v_2 - 1) (m_2 + 1) \binom{h_1}{m_1} + 1.
    		\end{gather*}
		Observe that $\bound{lemma:coss-inside-2-horizontal-web}{h}{w, \ell} = m_1 + m_2 + 1$ and $\bound{lemma:coss-inside-2-horizontal-web}{v}{w, \ell, c} = v_1 + \bound{lemma:coss-inside-2-horizontal-web}{h}{w,\ell}c$.

		Decompose $\mathcal{H}$ into $\mathcal{H} = \mathcal{H}^1 \cdot \mathcal{H}^2$ and decompose $\mathcal{V}$ into $\mathcal{V}^1 \cdot \mathcal{V}^2$ such that $\Start{\mathcal{V}^2} \subseteq \V{\mathcal{H}^2}$ and $\mathcal{V}^2$ is internally disjoint from $\mathcal{H}^2$.
		Hence, $\V{\mathcal{V}^2} \cap \V{\mathcal{H}} \subseteq \V{\mathcal{H}^1} \cup \Start{\mathcal{V}^2}$.
		By~\cref{def:horizontal-web}, such a decomposition exists.
		For each $H_i \in \mathcal{H}$ we write $H_i^1$ for the subpath of $H_i$ in $\mathcal{H}^1$ and $H_i^2$ for the subpath of $H_i$ in $\mathcal{H}^2$.
		
		Let $\mathcal{V}' \subseteq \mathcal{V}$ be the paths $V_j \in \mathcal{V}$ for which there is some $H_i \in \mathcal{H}$ such that $V_j$ contains a subpath $V_j'$ with $\Start{V_j'} \in \V{H_i^1}$ and $\End{V_j'} \in \V{H^2_i}$.
		Let $\mathcal{V}^* = \mathcal{V} \setminus \mathcal{V}'$.
		Since $\Brace{\mathcal{H}, \mathcal{V}}$ is $2$-horizontal web where \(\mathcal{H}\) is weakly $c$-minimal with respect to \(\mathcal{V}\), for each $H_i \in \mathcal{H}$ there are at most $c$ paths in $\mathcal{V}'$ which contain a subpath as above.
		Hence, \(\Abs{\mathcal{V}'} \leq \Abs{\mathcal{H}}c\) and thus $\Abs{\mathcal{V}^*} \geq v_1$.
		Further, $\mathcal{V}^*$ satisfies the following by construction.
		
		\begin{enamerate}{V}{claim:V*-no-forward-jumps}
			\item \label{claim:V*-no-forward-jumps}
			Let $Q_i^1 \cdot Q_i^2 \in \mathcal{V}^*$ be an arbitrary decomposition of a path in $\mathcal{V}^*$.
			If $Q_i^2$ intersects some $H^2_j \in \mathcal{H}^2$, then $Q_i^1$ is disjoint from $H^1_j$.
		\end{enamerate}
		
		Let $\mathcal{V}^3 \subseteq \mathcal{V}^1$ and $\mathcal{V}^4 \subseteq \mathcal{V}^2$ be the subpaths of $\mathcal{V}^1$ and $\mathcal{V}^2$ such that $\V{\mathcal{V}^3} \subseteq \V{\mathcal{V}^*}$ and $\V{\mathcal{V}^4} \subseteq \V{\mathcal{V}^*}$.
		Note that $\mathcal{V}^* = \mathcal{V}^3 \cdot \mathcal{V}^4$.
		
		For each subpath $V_i$ of $\mathcal{V}^*$ which contains some path of $\mathcal{V}^4$ as a subpath, define a linear ordering $\preceq_{V_i}$ on $\mathcal{H}^1$ such that $H_a \preceq_{V_i} H_b$ if $V_i$ does not intersect $H_b$ before the first intersection of $V_i$ with $H_a$.
		As every $V^4_i \in \mathcal{V}^4$ intersects every $H_a \in \mathcal{H}^1$, every $\preceq_{V_i}$ is a linear ordering.
		Define $\mathcal{M}(V_i)$ as the set of $m_1 + m_2$ maximal elements of $\preceq_{V_i}$ and $\mathcal{N}(V_i)$ as the set of $m_1$ maximal elements of $\preceq_{V_i}$.

		For each $1 \leq i \leq \Abs{\mathcal{V}^*}$, decompose $V_i \in \mathcal{V}^3 \cdot \mathcal{V}^4$ and construct a set $\mathcal{M}_i \subseteq \mathcal{H}^1$ 		iteratively as follows.
        				Start with the split $V_i = Q^1_i \cdot Q^2_i$ such that $Q^1_i \in \mathcal{V}^3$ and $Q^2_i \in \mathcal{V}^4$.
		Also, set $\mathcal{H}' = \mathcal{H}^2$ and $\mathcal{M}_i = \emptyset$.
		During the construction, we update the values of \(\mathcal{M}_i\), \(\mathcal{H}'\) \(Q^1_i\), \(Q^2_i\) (and two auxiliary paths \(Q^3_i\) and \(Q^4_i\)) multiple times.
		Repeat the following steps until stopping.
                
    \begin{enumerate}			\item Let $H_j \in \mathcal{H}$ be such that $\Start{Q^2_i} \in \V{H^2_j}$.
			\item \label{item:coss-inside-2-horizontal-web:stop}
                If $H^1_j \not\in \mathcal{M}(Q_i^2)$, stop the construction.
			\item Otherwise, set $\mathcal{H}' \coloneqq \mathcal{H}' \setminus \Set{H^2_j}$ and let $Q^3_i \cdot Q^4_i \coloneqq Q^1_i$ such that $\Start{Q^4_i} \subseteq \V{\mathcal{H}'}$ and $Q^4_i$ is internally disjoint from $\mathcal{H}'$.
			\item Set $Q^1_i \coloneqq Q^3_i$, $Q^2_i \coloneqq Q^4_i \cdot Q^2_i$ and $\mathcal{M}_i \coloneqq \mathcal{M}_i \cup \Set{H^1_j}$.
                \label{item:coss-inside-2-horizontal-web:last}
		\end{enumerate}
		
		By~\cref{claim:V*-no-forward-jumps}, whenever we add some $H^1_j$ to $\mathcal{M}_i$ in the construction above, then $H^1_j \in \mathcal{M}(Q^4_i \cdot \mathcal{Q}^2_i)$ as well.
		Hence, $\mathcal{M}_i \subseteq \mathcal{M}(Q^2_i)$.
		The construction above stops at step~\cref{item:coss-inside-2-horizontal-web:stop} for every $V_i \in \mathcal{V}^*$ after at most $\Abs{\mathcal{M}(V_i)}$ iterations because $\Abs{\mathcal{M}(V_i)} \leq \Abs{\mathcal{H}}$, every path in $\mathcal{V}^3$ intersects every path in $\mathcal{H}^2$, and $\Abs{\mathcal{M}_i}$ increases after each iteration.
		
		Since $\Abs{\mathcal{V}^*} \geq (v_2 - 1) \binom{h_1}{m_1} \binom{h_1 - m_1}{m_2} + 1$, by the pigeon-hole principle, there is some $\mathcal{Q}^* \subseteq \mathcal{V}^*$ of order $v_2$ such that $\mathcal{M}(Q^*_i) = \mathcal{M}(Q^*_j)$ and $\mathcal{N}' \coloneqq \mathcal{N}(Q^*_i) = \mathcal{N}(Q^*_j)$ for every $Q^*_i, Q^*_j \in \mathcal{Q}^*$.
		We set $\mathcal{M}' = \mathcal{M}(Q^*_i) \setminus \mathcal{N}(Q^*_i)$ for some $Q^*_i \in \mathcal{Q}^*$.
		Decompose $\mathcal{Q}^*$ into $\mathcal{Q}^* = \mathcal{Q}^1 \cdot \mathcal{Q}^2 \cdot \mathcal{Q}^M \cdot \mathcal{Q}^N$ such that $\mathcal{Q}^1 = \Set{Q^1_i \mid Q_i \in \mathcal{Q}^*}$, $\End{\mathcal{Q}^2} \subseteq \V{\mathcal{M}'}$, $\End{\mathcal{Q}^M} \subseteq \V{\mathcal{N}'}$, $\mathcal{Q}^1 \cdot \mathcal{Q}^2$ is internally disjoint from $\mathcal{M}' \cup \mathcal{N}'$, and $\mathcal{Q}^M$ is internally disjoint from $\mathcal{N}'$.
		By choice of $\mathcal{M}'$ and $\mathcal{N}'$, such a decomposition exists.
		
												By construction of $\mathcal{H}'$, for each $Q^1_i \in \mathcal{Q}^1$ we have that $Q^2_i$ intersects $H^1_j$, where $\End{Q^1_i} \subseteq \V{H^2_j}$ and $Q^1_i \cdot Q^2_i \in \mathcal{Q}^1 \cdot \mathcal{Q}^2$.
		Finally, $\Brace{\mathcal{M}', \mathcal{Q}^M}$ is a 1-horizontal web where \(\mathcal{M}'\) is weakly $c$-minimal with respect to \(\mathcal{Q}^M\), $\Abs{\mathcal{M}'} = m_2$ and $\Abs{\mathcal{Q}^M} = v_2$.
		From~\cref{lemma:1-horizontal-web-double-side-linkage}, we obtain two cases.
		
		\noindent\textbf{Case 1:}~\cref{item:1-horizontal-web-double-side-linkage:order-linked} holds.
		
		That is, $\Brace{\mathcal{M}', \mathcal{Q}^M}$ contains a path of well-linked sets $\Brace{\mathcal{S} = \Brace{ S_0, S_1, \dots, S_{\ell_1} }, \mathscr{P}}$ of width $w_1$ and length $\ell_1$.
		Additionally, there is a $\Start{\mathcal{Q}^M}$-$\End{\mathcal{Q}^M}$-linkage $\mathcal{L} = \mathcal{L}_1 \cdot \mathcal{L}_2 \cdot \mathcal{L}_3$ of order $w_1 \geq q_1$ using only arcs of $\mathcal{Q}^M$ such that $\mathcal{L}_2$ is an $A(S_0)$-$B(S_\ell)$-linkage and both $\mathcal{L}_1$ and $\mathcal{L}_3$ are internally disjoint from $\Brace{\mathcal{S}, \mathscr{P}}$.
		
		We construct a $B(S_{\ell_1})$-$A(S_0)$-linkage $\mathcal{R}$ of order $w_2$ which is internally disjoint from $\mathcal{L}_2$ as follows.
		Take an $\End{\mathcal{L}_2}$-$\End{\mathcal{N}'}$-linkage $\mathcal{X}_1$ in $\ToDigraph{\mathcal{N}' \cup \mathcal{Q}^N \cup \mathcal{L}_3}$.
		Take an $\End{\mathcal{X}_1}$-$\Start{\mathcal{L}_1}$-linkage $\mathcal{X}_2$ in $\ToDigraph{\mathcal{H}^2 \cup \mathcal{Q}^1}$.
		Since both $\Brace{\mathcal{M}',\mathcal{Q}^M}$ and $\Brace{\mathcal{H}^2, \mathcal{Q}^1}$ are webs and $\End{\mathcal{L}_3} \subseteq \End{\mathcal{Q}^M} = \Start{\mathcal{Q}^N}$, by~\cref{lemma:web_mixed_linkages} the linkages $\mathcal{X}_1$ and $\mathcal{X}_2$ exist.
		
		As~\cref{item:1-horizontal-web-double-side-linkage:order-linked} holds, the linkages $\mathcal{L}_1$, $\mathcal{L}_2$ and $\mathcal{L}_3$ are pairwise internally disjoint.
		Since $\mathcal{L}_2$ is contained in $\Brace{\mathcal{M}', \mathcal{Q}^M}$, we have that $\mathcal{L}_2$ is internally disjoint from $\mathcal{M}'$ and, hence, from $\mathcal{X}_1$.
		Further, as $\mathcal{L}_2$ only uses arcs of $\mathcal{Q}^M$, we have that $\mathcal{L}_2$ is internally disjoint from $\mathcal{X}_2$.
		Hence, $\mathcal{R}' = \mathcal{X}_1 \cdot \mathcal{X}_2 \cdot \mathcal{L}_1'$ is internally disjoint from $\mathcal{L}_2$, where $\mathcal{L}_1' \subseteq \mathcal{L}_1$ are the paths with $\Start{\mathcal{L}_1'} = \End{\mathcal{X}^2}$.
		
		Because $\mathcal{L}_1, \mathcal{L}_2$ and $\mathcal{L}_3$ only use arcs of $\mathcal{Q}^M$ and $\mathcal{Q}^M$ is internally disjoint from $\mathcal{N}'$ and from $\mathcal{H}^2$, we have that $\mathcal{L}_1$ and $\mathcal{L}_2$ are internally disjoint from $\mathcal{X}_1$ and from $\mathcal{X}_2$.
		Hence, $\mathcal{R}'$ is a half-integral $B(S_{\ell_1})$-$A(S_{0})$-linkage, as $\End{\mathcal{L}_1} = \Start{\mathcal{L}_2} \supseteq A(S_0)$ and $\Start{\mathcal{L}_2} = \End{\mathcal{L}_1} \supseteq B(S_{\ell_1})$.
		By~\cref{lemma:half_integral_to_integral_linkage}, there is a $B(S_{\ell_1})$-$A(S_{0})$-linkage $\mathcal{R}''$ of order $w_2$ inside $\ToDigraph{\mathcal{R}'}$.
		Hence, by~\cref{lemma:well-linked-poss-with-disjoint-forward-linkage-implies-coss}, $\ToDigraph{\Brace{\mathcal{S}, \mathscr{P}} \cup \mathcal{R}''}$ contains a cycle of well-linked sets of width $w$ and length $\ell$.
		
		\noindent\textbf{Case 2:}~\cref{item:1-horizontal-web-double-side-linkage:double-order} holds.
		
		That is, there is some $\mathcal{Q}'' \subseteq \mathcal{Q}^M$ such that $\Brace{\mathcal{M}', \mathcal{Q}''}$ contains a uniform path of $w_3$-order-linked sets $\Brace{\mathcal{S} = \Brace{ S_0, S_1, \dots, S_{\ell_2}}, \mathscr{P} = \Brace{\mathcal{P}_0, \mathcal{P}_1, \dots, \mathcal{P}_{\ell_2 - 1}}}$ of width $w_3$ and length $\ell_2$ for which there are linkages $\mathcal{L}_1$ and $\mathcal{L}_2$ satisfying~\cref{subitem:1-horizontal-web-double-side-linkage:L1b} and \cref{subitem:1-horizontal-web-double-side-linkage:L2b}.
		
		Let $\mathcal{L}'_2 \subseteq \mathcal{L}_2$ be the paths of $\mathcal{L}_2$ satisfying $\V{\mathcal{L}'_2} \cap \bigcup_{i=0}^{\ell_3}{\V{S_{2i}}} \neq \emptyset$, let $\mathcal{L}_3'$ be the paths of $\mathcal{Q}^2$ such that $\End{\mathcal{L}_3'} = \Start{\mathcal{L}_2'}$.
		Finally, let $\mathcal{Q}^4 \subseteq \mathcal{Q}^2$ be the paths satisfying $\End{\mathcal{Q}^4} = \Start{\mathcal{L}_2}$ and let $\mathcal{Q}^3 \subseteq \mathcal{Q}^1$ be the paths satisfying $\End{\mathcal{Q}^3} = \Start{\mathcal{L}_3'}$.
		
		\begin{claim}
			\label{claim:Q5-disjoint-from-POWS}
			There are $i,j$ with $j - i > \ell_4$ and $i \geq \ell_3 + 1$ for which some $\mathcal{Q}^5 \subseteq \mathcal{Q}^3$ of order $w_5$ exists such that $\mathcal{Q}^5$ is internally disjoint from $\SubPOSS{\Brace{\mathcal{S}, \mathscr{P}}}{i}{j}$.
		\end{claim}
		\begin{claimproof}
			Assume towards a contradiction that for every $\ell_3 + 1 \leq i \leq j \leq \ell_2$ with $j - i > \ell_4$ and
			every $\mathcal{Q}^5 \subseteq \mathcal{Q}^3$ of order at least $w_5$ there is a path $Q_x^5 \in \mathcal{Q}^5$ which intersects some vertex of $\SubPOSS{\Brace{\mathcal{S}, \mathscr{P}}}{i}{j}$.
			
			For each $1 \leq k \leq z_1$ we construct sets $\mathcal{S}^a_k, \mathcal{S}^b_k \subseteq \mathcal{S}$, $\mathcal{O}^1_k \subseteq \mathcal{Q}^3$ and bijections $f_{a,k} : \mathcal{O}^1_k \rightarrow \mathcal{S}^a_k$ and $f_{b,k} : \mathcal{O}^1_k \rightarrow \mathcal{S}^b_k$ as follows.
			
			Start with empty $\mathcal{S}^a_0, \mathcal{S}^b_0$, $\mathcal{O}^1_0$, $f_{a,0}$ and $f_{b,0}$.
			Iterate from $1$ to $z_1$.
			On step $k$, choose some $Q_j^1 \in \mathcal{Q}^3 \setminus \mathcal{O}^1_{k - 1}$ such that $Q^1_j$ intersects some $S_i \in \mathcal{S}$ in $\SubPOSS{\Brace{\mathcal{S}, \mathscr{P}}}{\ell_3 + 1 + w k}{\allowbreak\ell_3 + 1 + w (k + 1) - 1}$, and then set
			$\mathcal{O}^1_k = \mathcal{O}^1_{k-1} \cup \Set{Q^1_j}$,
			$\mathcal{S}^a_k = \mathcal{S}^a_{k-1} \cup \Set{S_{i-1}}$ and
			$\mathcal{S}^b_k = \mathcal{S}^b_{k-1} \cup \Set{S_i}$.
			Further, define $f_{a,k}$ and $f_{b,k}$ as the functions satisfying
			$f_{a,k}(Q^i_x) = f_{a,k-1}(Q^i_x)$ for all $Q^i_x \in \mathcal{S}^a_{k-1}$,
			$f_{b,k}(Q^i_x) = f_{b,k-1}(Q^i_x)$ for all $Q^i_x \in \mathcal{S}^b_{k-1}$,
			$f_{a,k}(Q^1_j) = S_{i-1}$, and $f_{b,k}(Q^1_j) = S_{i}$.
			
			Because $\Abs{\mathcal{O}^1_{k-1}} = k - 1$, we have $\Abs{\mathcal{Q}^3 \setminus \mathcal{O}^1_{k-1}} \geq w_5$.
			Hence, in every step $k$, there is some $Q^1_j \in \mathcal{Q}^3 \setminus \mathcal{O}^1_{k-1}$ which intersects $\SubPOSS{\Brace{\mathcal{S}, \mathscr{P}}}{\ell_3 + 1 + w (k - 1)}{\ell_3 + 1 + w k - 1}$.
			Further, $\Brace{\mathcal{S}, \mathscr{P}}$ has length $\ell_2 = \ell_3 + (c+1) \ell_4$.
			Thus, we can construct such sets $\mathcal{S}^a_k, \mathcal{S}^b_k$ and $\mathcal{O}^1_k$.
			Let $\mathcal{S}^a = \mathcal{S}^a_{z_1}$,
			$\mathcal{S}^b = \mathcal{S}^b_{z_1}$,
			$\mathcal{O}^1 = \mathcal{O}^1_{z_1}$,
			$f_a = f_{a, z_1}$, and 
			$f_b = f_{b, z_1}$.
			
			Let $X = \V{\mathcal{O}^1} \cap \V{\SubPOSS{\Brace{\mathcal{S}, \mathscr{P}}}{\ell_3 + 1}{\ell_2}}$.
			We construct an $X$-$\End{\mathcal{Q}^3}$-linkage $\mathcal{Z}$ of order $z_1$ as follows.
			For each $O_j^1 \in \mathcal{O}^1$ choose some $x \in X \cap f_b(O_j^1)$ and add the $x$-$\End{\mathcal{O}^1} \subseteq \End{\mathcal{Q}^3}$ subpath of $O_j^1$ to $\mathcal{Z}$.
			Note that $\Abs{X} = \Abs{\mathcal{Z}} \geq z_1$.
			By choice of $P^2_e$, $\End{\mathcal{Z}} \subseteq \V{P^2_e}$.
			Let $a$ be the last arc of $P^1_e$.
			
			Construct a $\V{P^1_e}$-$\V{P^2_e}$-linkage $\mathcal{F}$ of order $c$ avoiding $a$ as follows.
			For each $O_i^1 \in \mathcal{O}^1$ let $S_j = f_b(O_i^1)$ and let $\Brace{a_{j,1}, a_{j,2}, \ldots , a_{j,w_3}} \coloneqq A(S_{j})$ and $\Brace{a_{j-1,1}, a_{j-1,2}, \ldots , a_{j-1,w_3}} \coloneqq A(S_{j-1})$ be ordered according to the orders witnessing that $A(S_{j})$ is $w_3$-order-linked to $B(S_j)$ and $A(S_{j-1})$ is $w_3$-order-linked to $B(S_{j-1})$.
			Let $L_i^2 \in \mathcal{L}_2$ be the path with $\V{L_i^2} \cap \V{\Brace{\mathcal{S}, \mathscr{P}}} \subseteq f_a(O^1_i) = S_{j-1}$ and let $F_i$ be a $\V{P^1_e}$-$a_{j-1,1}$-path in $\mathcal{Q}^4 \cdot \mathcal{L}_2$.
			
			The path of $w_3$-order-linked sets $\Brace{\mathcal{S}, \mathscr{P}}$ is contained within $\ToDigraph{\mathcal{M}' \cup \mathcal{Q}^M}$.
			By~\cref{subitem:1-horizontal-web-double-side-linkage:L2b}, $\mathcal{L}_2 \subseteq \mathcal{Q}^M$ holds.
			Further, $\mathcal{Q}^4$ is contained inside $\mathcal{Q}^2$.
			By choice of $\mathcal{Q}^4$, every path in $\mathcal{Q}^4$ intersects $P^1_e$.
			For each $O^1_i \in \mathcal{O}^1$ there is some $L_2^i \in \mathcal{L}_2$ such that $A(S_j) \subseteq \V{L_2^i}$, where $S_j = f_a(O_i^1)$.
			Hence, there is some $Q^4_i \in \mathcal{Q}^4$ such that $Q^4_i \cdot L_2^i$ contains a $\V{P^1_e}$-$a_{j-1,1}$-path as desired.
			Thus, the linkage $\mathcal{F}_1$ above exists.
			
			Construct an $\End{\mathcal{F}_1}$-$\Start{\mathcal{Z}'}$-linkage $\mathcal{F}_2$ as follows.
			For each $O^1_i \in \mathcal{O}^1$, let $S_j = f_b(O^1_i)$ and let $F_{4,i}$ be an $A(S_j)$-$x_i$-path in $S_j$, where $x_i \in \Start{\mathcal{Z}} \cap \V{O^1_i}$.
			Let $\Set{a_{j,k}} = \Start{F_{4,i}}$.
			Let $F_{3,i}$ be an $a_{j-1,1}$-$b_{j-1,k}$-path in $S_{j-1}$.
			As $\Set{b_{j-1,k}}$ is a 1-shift of $\Set{a_{j-1,1}}$ and $A(S_{j-1})$ is $\ell_3$-order-linked to $B(S_{j-1})$ in $S_{j-1}$, such a path $F_{3,i}$ exists.
			Now set $F_{2,i} = F_{3,i} \cdot P_{j-1,k} \cdot F_{4,i}$, where $P_{j-1,k} \in \mathcal{P}_{j-1}$ is the $b_{j-1,k}$-$a_{j,k}$-path in $\mathcal{P}_{j-1}$.
			
			Since $\Brace{\mathcal{S}, \mathscr{P}}$ is a uniform path of $w_3$-order-linked sets, each $F_{2,i}$ is a path.
			Let $\mathcal{F}_3 = \Set{F_{3,i} \mid 0 \leq i \leq c}$ and $\mathcal{F}_4 = \Set{F_{4,i} \mid 0 \leq i \leq c}$.
			As each $S_j$ contains at most one path of $\mathcal{F}_3 \cup \mathcal{F}_4$, we have that $\mathcal{F}_3$ and $\mathcal{F}_4$ are two disjoint linkages inside $\Brace{\mathcal{S}, \mathscr{P}}$.
			Hence, $\mathcal{F}_2 = \Set{F_{2,i} \mid 0 \leq i \leq c}$ is an $\End{\mathcal{F}_1}$-$\Start{\mathcal{Z}}$-linkage of order $c+1$ as desired.
			Finally, let $\mathcal{F}_5$ be the $\Start{\mathcal{F}_1}$-$\End{\mathcal{F}_2}$-linkage contained inside $\ToDigraph{\mathcal{F}_1 \cup \mathcal{F}_2}$.
			Since each path in one linkage intersects exactly one path in the other, we have that $\Abs{\mathcal{F}_5} = \Abs{\mathcal{F}_1}$.
			
			Construct an $\End{\mathcal{F}_5}$-$\V{P^2_e}$-linkage $\mathcal{F}_3$ of order $z_1$ by following the corresponding paths of $\mathcal{Z}$ until the first intersection with $P^2_e$.
			This is possible by choice of $\mathcal{Z}$.
			
			If there is some path in $\mathcal{F} \coloneqq \mathcal{F}_5 \cdot \mathcal{F}_3$ using $a$, we delete this path from $\mathcal{F}$.
			Hence, we obtain a $\V{P^1_e}$-$\V{P^2_e}$-linkage $\mathcal{F}$ of order at least $c$ inside $\ToDigraph{\mathcal{H}^1 \cup \mathcal{Q}^2 \cup \mathcal{Q}^M} - a$, contradicting the initial assumption that $\Brace{\mathcal{H}, \mathcal{V}}$ is a 2-horizontal web where \(\mathcal{H}\) is weakly $c$-minimal with respect to \(\mathcal{V}\).
		\end{claimproof}
		
		By~\cref{claim:Q5-disjoint-from-POWS}, there is some $\mathcal{Q}^5 \subseteq \mathcal{Q}^3$ of order $w_3$ and some $\ell_3 + 1 \leq i < j \leq \ell_2$ such that $\mathcal{Q}^5$ is internally disjoint from $\SubPOSS{\Brace{\mathcal{S}, \mathscr{P}}}{i}{j}$ and $j - i \geq \ell_4 - 1$.
		
		By~\cref{proposition:order-linked to path of well-linked sets}, the path of $w_3$-order-linked sets $\SubPOSS{\Brace{\mathcal{S}, \mathscr{P}}}{i}{j}$ contains a path of well-linked sets $\Brace{\mathcal{S}' = \Brace{S'_0, S'_1, \dots, S'_{\ell_4}}, \mathscr{P}' = \Brace{ \mathcal{P}'_0, \mathcal{P}'_1, \dots, \mathcal{P}'_{\ell_4 - 1}}}$ of width $w_3$ and length $\ell_4$ such that $A(S_0') \subseteq A(S_i)$ and $B(S_{\ell_4}') \subseteq B(S_{j})$.
		
		Construct a $B(S'_{\ell_4})$-$A(S'_0)$-linkage $\mathcal{R}$ of order $w_3$ as follows.
		
		By~\cref{lemma:linkage-inside-pools}, there is a  $B(S'_{\ell_4})$-$B(S_{\ell_2})$-linkage $\mathcal{Z}_5$	of order $w_3$ inside $\Brace{\mathcal{S}, \mathscr{P}}$.
		
		Let $\mathcal{L}_1' \subseteq \mathcal{L}_1$ be the linkage satisfying $\Start{\mathcal{L}_1'} = \End{\mathcal{Z}_5}$ and let $\mathcal{L}_3'' \subseteq \mathcal{L}_3'$ be the linkage satisfying	$\Start{\mathcal{L}_3''} = \End{\mathcal{Q}^5}$.
		Take an $\End{\mathcal{L}_1'}$-$\Start{\mathcal{L}_3''}$-linkage $\mathcal{X}_1$ of order $w_3$ in $\ToDigraph{\mathcal{H}^2 \cup \mathcal{Q}^5}$.
		Because	$\Brace{\mathcal{H}^2, \mathcal{Q}^5}$ is a web, and because $\End{\mathcal{L}_1'} \subseteq \End{\mathcal{H}^1} = \Start{\mathcal{H}^2}$ and $\Start{\mathcal{L}_3''} = \End{\mathcal{Q}^5}$ hold, by~\cref{lemma:web_mixed_linkages} such a linkage $\mathcal{X}_1$ exists.
		
		For each $i \in \Set{0, \ldots, w_3 - 1}$ let $X_{2,i}$ be a path inside $\ToDigraph{\mathcal{L}_3'' \cup \mathcal{L}_2'}$ which starts on $\Start{\mathcal{X}_1}$ and ends on $a_{2i,i} \in A(S_{2i})$, where $\Brace{a_{2i,0}, a_{2i,1}, \dots, a_{2i,w_3 - 1}} \coloneqq A(S_{2i})$ is sorted according to the order witnessing that $A(S_{2i})$ is $w_3$-order-linked to $B(S_{2i})$ inside $S_{2i}$.
		Let $\mathcal{X}_2 = \Set{X_{2,0}, X_{2,1}, \ldots, X_{2,w_5 - 1}}$.
		By choice of $\mathcal{L}_2'$ and	of $\mathcal{L}_3''$ and because~\cref{subitem:1-horizontal-web-double-side-linkage:L2b} holds, such a linkage $\mathcal{X}_2$ exists.
		
		Construct an $\End{\mathcal{X}_2}$-$A(S_{2(w_3 - 1)})$-linkage $\mathcal{X}_3$ inside $\Brace{\mathcal{S}, \mathscr{P}}$ as follows.
		First construct, for each $0 \leq i \leq w_3 - 1$, an $A(S_{2(i - 1)})$-$A(S_{2i})$-linkage $\mathcal{X}^i_3$ of order $i + 1$.
		Start with $\mathcal{X}_3^0 \coloneqq \Set{a_{0,0}} \subseteq A(S_0)$.
		
		On step $i \geq 1$,	let $\Brace{b_{2(i - 1),0}, b_{2(i - 1),1}, \dots, b_{2(i - 1), w_3 - 1}} \coloneqq B(S_{2(i - 1)})$ be the ordering of the set $B(S_{2(i - 1)})$ witnessing that $A(S_{2(i - 1)})$ is $w_3$-order linked to $B(S_{2(i - 1)})$.
		Let $\mathcal{Y}_i$ be an $\End{\mathcal{X}_3^{i-1}}$-$B_i$-linkage of order $i$ in $S_{2(i - 1)}$, where $B_i = \Set{b_{2(i - 1),j} \in B(S_{2(i - 1)}) \mid 1 \leq j \leq i}$.
		Since $A(S_{2(i - 1)})$ is $w_3$-order-linked to $B(S_{2(i - 1)})$ in $S_{2(i - 1)}$ and $\End{\mathcal{X}_3^{i-1}}$ contains the minimal $i$ elements of the corresponding ordering,	such a linkage $\mathcal{Y}_i$ exists.
		
		Let $\mathcal{Z}_i$ be a $B_i$-$A_i$-linkage of order $i + 1$ in $\mathcal{P}_{2(i-1)}$ such that $\End{\mathcal{Y}_i} \subseteq \Start{\mathcal{Z}_i}$,	where $A_i = \Set{a_{2i,j} \mid 1 \leq j \leq i + 1}$.
		Since $\Brace{\mathcal{S},\mathscr{P}}$ is a uniform path of $w_3$-order-linked sets, such a linkage $\mathcal{Z}_i$ exists.
		Set $\mathcal{X}_3^{i} = \mathcal{X}_3^{i-1} \cdot \mathcal{Y}_i \cdot \mathcal{Z}_i$.
		Since $\Abs{\mathcal{Y}_i \cdot \mathcal{Z}_i} = i + 1$, $\mathcal{X}_3^i$ is a $\Start{\mathcal{X}_3^i}$-$A_i$-linkage of order $i + 1$.
        (Recall that, by definition of the concatenation operation~$\cdot$, the additional path in $\mathcal{Z}_i$ which does not have a corresponding endpoint in $\mathcal{Y}_{i}$ is simply added to the result of the concatenation.)
		
		After iterating all the steps above, we obtain an $\End{\mathcal{X}_2}$-$A(S_{2(w_3 - 1)})$-linkage $\mathcal{X}_3 \coloneqq \mathcal{X}_3^{w_3 - 1}$ of order $w_3$ as desired.
		By~\cref{lemma:increase-order-linkedness}, $A(S_{2(w_3 - 1)})$ is $w_3$-order-linked to $A(S_{i}) \supseteq A(S_0')$ in	$\SubPOSS{\Brace{\mathcal{S}, \mathscr{P}}}{2(w_3 - 1)}{i}$.
		As $\Start{\mathcal{X}_3}$ contains the minimal $w_3$ elements of $A(S_{2(w_3 - 1)})$, the set $A(S_0')$ is an $w_3$-shift of $A(S_{2(w_3 - 1)})$.
		Hence, there is an $\End{\mathcal{X}_3}$-$A(S_0')$-linkage $\mathcal{X}_4$ of order $w_3$ in $\SubPOSS{\Brace{\mathcal{S}, \mathscr{P}}}{2(w_3 - 1)}{i}$.
		
		The concatenation $\mathcal{X}_2 \cdot \mathcal{X}_3 \cdot \mathcal{X}_4$ produces a half-integral $\Start{\mathcal{L}_3''}$-$A(S_0')$-linkage of order $w_3$.
		By~\cref{lemma:half_integral_to_integral_linkage} there is a $\Start{\mathcal{X}_2}$-$A(S_{0}')$-linkage $\mathcal{X}_5$ of order $w_4$ inside $\ToDigraph{\mathcal{X}_2 \cup \mathcal{X}_3}$.
		
		Let $\mathcal{X}_6 \subseteq \mathcal{Z}_5 \cdot \mathcal{L}_1' \cdot \mathcal{X}_1$ be the linkage of order $w_4$ with $\End{\mathcal{X}_6} = \Start{\mathcal{X}_5}$.
		We claim that $\mathcal{X}_6 \cdot \mathcal{X}_5$ is a half-integral $B(S_{\ell_4}')$-$A(S_0')$-linkage of order $w_4$.
		
		Assume towards a contradiction that there is some $v \in \V{\mathcal{Z}_5 \cdot \mathcal{L}_1'} \cap \V{\mathcal{X}_1} \cap \V{\mathcal{X}_5}$.
		Since $\mathcal{Z}_5 \cdot \mathcal{L}_1'$ is contained in $\ToDigraph{\mathcal{M}' \cup \mathcal{Q}^M}$ and $\mathcal{X}_1$ is contained inside $\ToDigraph{\mathcal{Q}^5 \cup \mathcal{H}^2}$, we have that $v \in \V{\mathcal{Q}^5} \cap \V{\mathcal{M}'}$.
		Furthermore, $v$ is not in $\SubPOSS{\Brace{\mathcal{S}, \mathscr{P}}}{0}{2(w_5 - 1)}$ as $\mathcal{Z}_5 \cdot \mathcal{L}_1'$ is disjoint from $\SubPOSS{\Brace{\mathcal{S}, \mathscr{P}}}{0}{2(w_3 - 1)}$ by construction.
		As $v \in \V{\mathcal{X}_5}$ and $\mathcal{X}_3 \cdot \mathcal{X}_4$ is contained inside the path of order-linked sets $\SubPOSS{\Brace{\mathcal{S}, \mathscr{P}}}{0}{2(w_3 - 1)}$,	we have that $v \in \V{\mathcal{X}_2} \subseteq \V{\mathcal{Q}^2}$ as well.
		This however implies that $v \in \V{\mathcal{Q}^2} \cap \V{\mathcal{Q}^5} = \Start{\mathcal{Q}^2}$.
		However, $\Start{\mathcal{Q}^2} \cap \V{\mathcal{M}'} = \emptyset$ by choice of $\mathcal{Q}^2$, a contradiction to the previous observation that $v \in \V{\mathcal{Q}^5} \cap \V{\mathcal{M}'}$.
		Hence, by~\cref{lemma:half_integral_to_integral_linkage}, $\ToDigraph{\mathcal{L}_1 \cdot \mathcal{X}_1 \cdot \mathcal{X}_6}$ contains a $B(S_{\ell_4}')$-$A(S_0')$-linkage $\mathcal{R}$ of order $w$.
		
		We show that $\V{\mathcal{R}} \cap \V{\SubPOSS{\Brace{\mathcal{S}, \mathscr{P}}}{i}{j}} \subseteq B(S_{\ell_4}') \cup A(S_0')$.
		
		Because~\cref{subitem:1-horizontal-web-double-side-linkage:L1b} holds, we have that $\mathcal{L}_1'$ is internally disjoint from $\Brace{\mathcal{S}, \mathscr{P}}$.
		By construction we have that $\V{\mathcal{X}_1} \cap \V{\Brace{\mathcal{M}', \mathcal{Q}^2}} \subseteq \V{\mathcal{Q}^5}$.
		By choice of $\mathcal{Q}^5$ we have that $\V{\mathcal{Q}^5} \cap \V{\SubPOSS{\Brace{\mathcal{S}, \mathscr{P}}}{i}{j}} = \emptyset$.
		Hence, $\mathcal{X}_1$ is disjoint from	$\SubPOSS{\Brace{\mathcal{S}, \mathscr{P}}}{i}{j}$.
		
		The linkage $\mathcal{X}_2$ is contained in	$\ToDigraph{\mathcal{L}''_3 \cdot \mathcal{L}'_2}$ and is thus disjoint from $\SubPOSS{\Brace{\mathcal{S}, \mathscr{P}}}{i}{j}$ because~\cref{subitem:1-horizontal-web-double-side-linkage:L2b} holds and $i > 2(w_3 - 1)$.
		
		The linkage $\mathcal{X}_5$ is contained in	$\SubPOSS{\Brace{\mathcal{S}, \mathscr{P}}}{0}{i}$ and is thus internally disjoint from $\SubPOSS{\Brace{\mathcal{S}, \mathscr{P}}}{i}{j}$.
		Hence, $\mathcal{X}_5$ is also internally disjoint from	$\SubPOSS{\Brace{\mathcal{S}, \mathscr{P}}}{i}{j}$.
		This implies that $V(\SubPOSS{\Brace{\mathcal{S}, \mathscr{P}}}{i}{j}) \cap \V{\mathcal{R}} \subseteq B(S_{\ell_4}') \cup A(S_0')$, as desired.
		Hence, $\Brace{\mathcal{S}', \Brace{ \mathcal{P}'_0, \mathcal{P}'_1, \dots, \mathcal{P}'_{\ell_4 - 1}, \mathcal{R}} }$ is a cycle of well-linked sets of width $w$ and length $\ell$, as desired.
	\end{proof}
	
\subsection{The Directed Grid Theorem}
\label{subsec:the-directed-grid-theorem}
	We are now ready to prove our main theorems.
	We state our main result in terms of cylindrical grids, walls  and cycles of well-linked sets, as each may be useful in a different setting.
	
	We define
	\begin{align*}
	\Fkt{m'}{w, \ell} & \coloneqq 
			\bound{lemma:coss-or-minimal-2-horizontal-web}{m}{\bound{lemma:coss-inside-2-horizontal-web}{h}{w,\ell }, w, \ell},
	\\[0em]
		\boundDefAlign{thm:POSS_plus_back-linkage_to_COSS}{w'}{w, \ell }
	\bound{thm:POSS_plus_back-linkage_to_COSS}{w'}{w, \ell } & \coloneqq 
			\bound{lemma:coss-or-back-linkage-intersects-cluster-by-cluster}{w'}{w,\bound{lemma:coss-or-minimal-2-horizontal-web}{w}{\bound{lemma:coss-inside-2-horizontal-web}{h}{w,\ell },w, \ell } },
	\\[0em]
		\boundDefAlign{thm:POSS_plus_back-linkage_to_COSS}{r}{w, \ell }
	\bound{thm:POSS_plus_back-linkage_to_COSS}{r}{w, \ell } & \coloneqq 
			\bound{lemma:coss-or-minimal-2-horizontal-web}{r}{\bound{lemma:coss-inside-2-horizontal-web}{h}{w,\ell },w,\ell, \bound{lemma:coss-inside-2-horizontal-web}{v}{w,\ell, \bound{lemma:coss-or-minimal-2-horizontal-web}{m}{\bound{lemma:coss-inside-2-horizontal-web}{h}{w,\ell }, w, \ell}} },
	\\[0em]
		\boundDefAlign{thm:POSS_plus_back-linkage_to_COSS}{\ell'}{w, \ell }
	\bound{thm:POSS_plus_back-linkage_to_COSS}{\ell'}{w, \ell} & \coloneqq 
			\bound{lemma:coss-or-back-linkage-intersects-cluster-by-cluster}{\ell'}{w, \ell, \bound{lemma:coss-or-minimal-2-horizontal-web}{\ell}{w, \ell}, \bound{thm:POSS_plus_back-linkage_to_COSS}{r}{w, \ell}}.
	\end{align*}
	Observe that
	\begin{align*}
		\bound{thm:POSS_plus_back-linkage_to_COSS}{w'}{w,\ell} & \in \PowerTower{2}{\Polynomial{97}{\ell, w}},
		\\[0em]
		\bound{thm:POSS_plus_back-linkage_to_COSS}{r}{w,\ell} & \in \PowerTower{13}{\Polynomial{97}{\ell, w}} \text{ and }
		\\[0em]
		\bound{thm:POSS_plus_back-linkage_to_COSS}{\ell'}{w,\ell} & \in \PowerTower{14}{\Polynomial{97}{\ell, w}}.
	\end{align*}

	\thmPOWLtoCOWS*
\begin{proof}
		Assume, without loss of generality, that $\mathcal{R}$ is weakly $r$-minimal with respect to	$\Brace{\mathcal{S}, \mathscr{P}}$ and that \(r = \bound{thm:POSS_plus_back-linkage_to_COSS}{r}{w,\ell}\).
		If this is not the case, we just choose a $\Start{\mathcal{R}}$-$\End{\mathcal{R}}$-linkage of order $\bound{thm:POSS_plus_back-linkage_to_COSS}{r}{w,\ell} \leq \Abs{\mathcal{R}}$	which is $\Brace{\mathcal{S}, \mathscr{P}}$-minimal.
		By~\cref{obs:H-minimal-implies-weakly-minimal}, such a linkage is also weakly $\bound{thm:POSS_plus_back-linkage_to_COSS}{r}{w,\ell}$-minimal with respect to $\Brace{\mathcal{S}, \mathscr{P}}$.

		We define
		$h = \bound{lemma:coss-inside-2-horizontal-web}{h}{w,\ell }$,
		$w_1 = \bound{lemma:coss-or-minimal-2-horizontal-web}{w}{h, w, \ell}$,
		$m = \bound{lemma:coss-or-minimal-2-horizontal-web}{m}{h, w, \ell}$,
		$v = \bound{lemma:coss-inside-2-horizontal-web}{v}{w,\ell,m}$
		and
		$\ell_1 = \bound{lemma:coss-or-minimal-2-horizontal-web}{\ell}{w,r }$.
		Observe that $w' \geq \bound{lemma:coss-or-back-linkage-intersects-cluster-by-cluster}{w'}{w, w_1}$, $\ell' \geq \bound{lemma:coss-or-back-linkage-intersects-cluster-by-cluster}{\ell'}{w, \ell, \ell_1, r}$ and $r \geq \bound{lemma:coss-or-minimal-2-horizontal-web}{r}{h,w,v}$.

		Applying~\cref{lemma:coss-or-back-linkage-intersects-cluster-by-cluster} to $(\mathcal{S}, \mathscr{P})$ and $\mathcal{R}$ yields two cases.
		If~\cref{item:coss-or-back-linkage-cluster-by-cluster:coss} holds, then we obtain a cycle of well-linked sets of width $w$ and length $\ell$ as desired.
		Otherwise,~\cref{item:coss-or-back-linkage-cluster-by-cluster:back-linkage} holds, and $\ToDigraph{\Brace{\mathcal{S},\mathscr{P}} \cup \mathcal{R}}$ contains a path of well-linked sets $\Brace{\mathcal{S}', \mathscr{P}'}$ of width $w_1$ and length $\ell_1$ with a back-linkage $\mathcal{R}'$ of order $w_1$ intersecting $\Brace{\mathcal{S}', \mathscr{P}'}$ cluster-by-cluster such that	$\mathcal{R}' \subseteq \mathcal{R}$.
		Note that $\mathcal{R}'$ is also weakly $r$-minimal with respect to	$\Brace{\mathcal{S}', \mathscr{P}'}$.

		Applying~\cref{lemma:coss-or-minimal-2-horizontal-web} to $\Brace{\mathcal{S}', \mathscr{P}'}$ and $\mathcal{R}'$ yields two further cases.
		If~\cref{item:coss-or-minimal-2-horizontal-web:coss} holds,	then we obtain a cycle of well-linked sets of width $w$ and length $\ell$ as desired.
		Otherwise,~\cref{item:coss-or-minimal-2-horizontal-web:c-horizontal-web} holds, and we obtain a $2$-horizontal $(h,v)$-web $\Brace{\mathcal{H}, \mathcal{V}}$ such that $\mathcal{H}$ is weakly $m$-minimal with respect to $\mathcal{V}$.

		By~\cref{lemma:coss-inside-2-horizontal-web}, $\Brace{\mathcal{H}, \mathcal{V}}$ contains a cycle of well-linked sets of width $w$ and length $\ell$, as desired.
\end{proof}

	We define 
	\(
        \bound{thm:high_dtw_to_COSS}{dtw}{w, \ell} \coloneqq 
		\bound{thm:high_dtw_to_POSS_plus_back-linkage}{t}{\bound{thm:POSS_plus_back-linkage_to_COSS}{w'}{w, \ell} + \bound{thm:POSS_plus_back-linkage_to_COSS}{r}{w, \ell}, \bound{thm:POSS_plus_back-linkage_to_COSS}{\ell'}{w, \ell}},
    \)
	   \boundDef{thm:high_dtw_to_COSS}{dtw}{w,\ell} and note that
    \(
        \bound{thm:high_dtw_to_COSS}{dtw}{w,\ell} \in \PowerTower{21}{\Polynomial{97}{w, \ell}}
    \).
	The following theorem is our main result, stated in terms of cycles of well-linked sets.
	
	\begin{restatable}{theorem}{thmHighDTWToCOSS}
		\label{thm:high_dtw_to_COSS}
		Let $w, \ell$ be integers.
		Every digraph $D$ with $\dtw{D} \geq \bound{thm:high_dtw_to_COSS}{dtw}{w, \ell}$ contains a cycle of well-linked sets $\Brace{\mathcal{S},\mathscr{P}}$ of width $w$ and length $\ell$.
	\end{restatable}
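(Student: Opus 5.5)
The plan is to chain \cref{thm:high_dtw_to_POSS_plus_back-linkage} and \cref{thm:POSS_plus_back-linkage_to_COSS}; the definition of \(\bound{thm:high_dtw_to_COSS}{dtw}{w,\ell}\) is tailored to exactly this composition. Set \(w' \coloneqq \bound{thm:POSS_plus_back-linkage_to_COSS}{w'}{w,\ell} + \bound{thm:POSS_plus_back-linkage_to_COSS}{r}{w,\ell}\) and \(\ell' \coloneqq \bound{thm:POSS_plus_back-linkage_to_COSS}{\ell'}{w,\ell}\). Since \(\dtw{D} \geq \bound{thm:high_dtw_to_POSS_plus_back-linkage}{t}{w',\ell'}\), \cref{thm:high_dtw_to_POSS_plus_back-linkage} yields a path of well-linked sets \((\mathcal{S} = (S_0,\dots,S_{\ell'}),\mathscr{P})\) of width \(w'\) and length \(\ell'\). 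It also guarantees that \(B(S_{\ell'})\) is well-linked to \(A(S_0)\) in \(D\).

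Next I extract the back-linkage. Well-linkedness applied to the full sets \(B(S_{\ell'})\) and \(A(S_0)\), both of size \(w'\), gives a \(B(S_{\ell'})\)-\(A(S_0)\)-linkage \(\mathcal{R}\) in \(D\) of order \(w' \geq \bound{thm:POSS_plus_back-linkage_to_COSS}{r}{w,\ell}\).

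It remains to make the path strict, which \cref{thm:POSS_plus_back-linkage_to_COSS} requires. Following the paper's footnote, I replace each cluster \(S_i\) by the union of all \(A(S_i)\)-\(B(S_i)\)-paths in \(S_i\). This keeps the sets \(A(S_i)\) and \(B(S_i)\), because every vertex of either set is the endpoint of some such path (a single-vertex linkage between a vertex of \(A(S_i)\) and one of \(B(S_i)\) exists). Well-linkedness is preserved, since every linkage witnessing it consists of such paths and therefore survives. The linkages \(\mathscr{P}\) are unchanged. The result is a strict path of well-linked sets with the same width, length and end sets, so \(\mathcal{R}\) is still a \(B(S_{\ell'})\)-\(A(S_0)\)-linkage for it.

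Now \cref{thm:POSS_plus_back-linkage_to_COSS} applies: the width \(w' \geq \bound{thm:POSS_plus_back-linkage_to_COSS}{w'}{w,\ell}\), the order of \(\mathcal{R}\) is at least \(\bound{thm:POSS_plus_back-linkage_to_COSS}{r}{w,\ell}\), and the length is exactly \(\bound{thm:POSS_plus_back-linkage_to_COSS}{\ell'}{w,\ell}\). It gives a cycle of well-linked sets of width \(w\) and length \(\ell\) inside \(\ToDigraph{(\mathcal{S},\mathscr{P})\cup\mathcal{R}} \subseteq D\).

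The argument is essentially bookkeeping. The only points needing care are the strictness reduction and checking that the argument order in \(\bound{thm:high_dtw_to_COSS}{dtw}{}\) matches the hypotheses of both theorems. Width \(w'\) is the sum of the two bounds because one quantity must serve both as the path's width and as the back-linkage's order. Monotonicity in the width is not needed, since \(\mathcal{R}\) may have larger order than \(\bound{thm:POSS_plus_back-linkage_to_COSS}{r}{w,\ell}\) and the cluster width \(w'\) already meets the required bound.
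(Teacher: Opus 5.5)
Your proposal is correct and follows the paper's proof: you take the path of well-linked sets with well-linked ends from \cref{thm:high_dtw_to_POSS_plus_back-linkage}, extract the back-linkage from that well-linkedness, and apply \cref{thm:POSS_plus_back-linkage_to_COSS}. The only addition is that you make the strictness reduction explicit, by restricting each cluster to the union of its $A(S_i)$-$B(S_i)$-paths; the paper leaves this implicit via its footnote, and your argument for it is sound.
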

	\begin{proof}
		Let
		$r_1 = \bound{thm:POSS_plus_back-linkage_to_COSS}{r}{w, \ell}$,
		$w_1 = \bound{thm:POSS_plus_back-linkage_to_COSS}{w'}{w, \ell} + r_1$ and
		$\ell_1 = \bound{thm:POSS_plus_back-linkage_to_COSS}{\ell'}{w, \ell}$.

		By~\cref{thm:high_dtw_to_POSS_plus_back-linkage}, $D$ contains a path of well-linked sets $\Brace{\mathcal{S} = \Brace{ S_0, S_1, \dots, S_{\ell_1}}, \mathscr{P}}$ of width $w_1$ and length $\ell_1$ where $B(S_{\ell_1})$ is well-linked to $A(S_0)$ in $D$.
		Hence, there is a $B(S_{\ell_1})$-$A(S_0)$ linkage $\mathcal{R}$ of order $r_1$ in $D$.
		By~\cref{thm:POSS_plus_back-linkage_to_COSS}, $\ToDigraph{\mathcal{S} \cup \mathscr{P} \cup \mathcal{R}}$ contains a cycle of well-linked sets $\Brace{\mathcal{S}', \mathscr{P}'}$ of width $w$ and length $\ell$.
	\end{proof}
	
	We close this section by stating our main result in terms of cylindrical grids and walls. 
	Define
    \(\bound{thm:high-treewidth-implies-cylindrical-grid}{dtw}{k} \coloneqq \bound{thm:high_dtw_to_COSS}{dtw}{\bound{thm:coss contains grid}{w}{k}, \bound{thm:coss contains grid}{\ell}{k}}.\)
	Note that
	\(\bound{thm:high-treewidth-implies-cylindrical-grid}{dtw}{k} \in \PowerTower{22}{\Polynomial{9}{k}}\).
	
	\thmCylindricalGrid*
	\begin{proof}
		By~\cref{thm:high_dtw_to_COSS},	$D$ contains a cycle of well-linked sets $\Brace{\mathcal{S}, \mathscr{P}}$ of width $\bound{thm:coss contains grid}{w}{k}$ and length $\bound{thm:coss contains grid}{\ell}{k}$.
		By~\cref{thm:coss contains grid}, $\Brace{\mathcal{S}, \mathscr{P}}$ contains a cylindrical grid of order $k$.
	\end{proof}

	\begin{theorem}
		\label{statement:high-dtw-to-wall}
		Every digraph \(D\) of \(\DTreewidth{D} \geq \bound{statement:high-dtw-to-wall}{dtw}{k} \coloneqq \bound{thm:high-treewidth-implies-cylindrical-grid}{dtw}{2k - 2}\)
		contains the cylindrical wall of order \(k\) as a topological minor.
	\end{theorem}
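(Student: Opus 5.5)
The plan is to chain three earlier results: \cref{thm:high-treewidth-implies-cylindrical-grid}, \cref{statement:grid-contains-wall} and \cref{statement:degree-3-topological-butterfly-minor}. Let \(D\) satisfy \(\DTreewidth{D} \geq \bound{thm:high-treewidth-implies-cylindrical-grid}{dtw}{2k-2}\).

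First, \cref{thm:high-treewidth-implies-cylindrical-grid} with parameter \(2k-2\) shows that \(D\) contains a cylindrical grid \(G\) of order \(2k-2\) as a butterfly minor. Next, \cref{statement:grid-contains-wall} shows that \(G\) contains the cylindrical wall \(W_k\) of order \(k\) as a topological minor. A subdivision is a special case of butterfly contraction, since the arcs incident to a subdivision vertex have in-degree or out-degree one and are therefore butterfly contractible. Together with taking subgraphs, this means \(W_k\) is a butterfly minor of \(G\).

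Since butterfly minors compose, \(W_k\) is a butterfly minor of \(D\). To justify composition, apply the butterfly contractions producing \(G\) from a subgraph of \(D\), then pass to the subgraph of \(G\) realising the subdivision of \(W_k\). Equivalently, composing the models of \cref{def:butterfly-model} gives a model of \(W_k\) in \(D\). I expect this composition of the two minor relations to be the only step needing care, because the paper does not state transitivity explicitly. If a direct argument is wanted, I would avoid transitivity altogether: pull the subdivided wall back through the butterfly model of \(G\) in \(D\). Each wall vertex has degree at most 3, so it uses only a small in-tree or out-tree of its branch set, and this yields a subdivision of \(W_k\) directly.

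Finally, I check the degree conditions of \cref{statement:degree-3-topological-butterfly-minor} for \(W_k\). Splitting every vertex of in- and out-degree two into \(v^{\text{in}}, v^{\text{out}}\) leaves each vertex with in-degree at most 2, out-degree at most 2, and total degree at most 3. The vertices on the boundary cycles \(C_1\) and \(C_k\) already have total degree 3 in the minimal grid, so they need no splitting. Applying the observation therefore upgrades \(W_k\) from a butterfly minor of \(D\) to a topological minor of \(D\), which proves the theorem.
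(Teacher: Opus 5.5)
Your proposal is correct and follows the paper's proof exactly: the paper chains \cref{thm:high-treewidth-implies-cylindrical-grid}, \cref{statement:grid-contains-wall} and \cref{statement:degree-3-topological-butterfly-minor} in the same order. The paper leaves three steps implicit, and you justify each of them: that a topological minor is a butterfly minor, that butterfly minors compose, and that the wall satisfies the degree conditions.
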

	\begin{proof}
		By \cref{thm:high-treewidth-implies-cylindrical-grid}, \(D\) contains a cylindrical grid of order 
		\(2k - 2\) as a butterfly minor.
		By \cref{statement:grid-contains-wall},
		\(D\) contains the wall of order \(k\) as a butterfly minor.
		By \cref{statement:degree-3-topological-butterfly-minor},
		\(D\) contains the wall of order \(k\) as a topological minor.
	\end{proof}

\section{Conclusion}
\label{sec:conclusion}

While we significantly improved the bounds for the Directed Grid Theorem, the functions are still very large and likely far from optimal.
Determining the best bounds is an interesting open question.

Our modular approach allows us to break the question above down into smaller parts, facilitating further improvements to the Directed Grid Theorem.
In particular, the concept of 2-horizontal webs played a central role in our proof, and studying this object more closely may lead to new, valuable insights.
For example, consider the construction given in \cref{fig:horizontal-web-inside-grid} and the following simple observation.

\begin{figure}[H]
		\centering
		\includegraphics{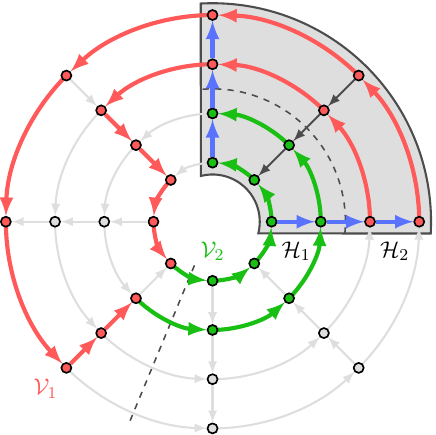}
		\caption{An illustration of the construction in the proof of \cref{statement:horizontal-web-inside-grid}.
		The area with a gray background corresponds to where the linkage \(\mathcal{H}\) is,
		with the dashed line marking where \(\mathcal{H}_1\) ends and where \(\mathcal{H}_2\) starts.
		The dashed gray line outside of the gray area delimits where \(\mathcal{V}_1\) ends and where
		\(\mathcal{V}_2\) starts.
		The light gray arcs and vertices are not used by the horizontal web.
			\label{fig:horizontal-web-inside-grid}}
\end{figure}

\begin{observation}
	\label{statement:horizontal-web-inside-grid}
		Every cylindrical grid of order \(2k\) contains a 2-horizontal web \((\mathcal{H}, \mathcal{V})\) where \(\mathcal{H}\) is minimal with respect to \(\mathcal{V}\) and \(\Abs{\mathcal{H}} = k =\Abs{\mathcal{V}}\).
\end{observation}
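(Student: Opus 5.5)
We give an explicit construction inside a cylindrical grid \(G_{2k}\) with cycles \(C_1,\dots,C_{2k}\) and paths \(P_1,\dots,P_{4k}\), following \cref{fig:horizontal-web-inside-grid}. The vertical linkage \(\mathcal{V}\) will consist of \(k\) cycle segments. The horizontal linkage \(\mathcal{H}\) will consist of \(k\) paths that travel through the grid using the columns \(P_i\), and hence cross the cycles. The difficulty is that \(\mathcal{V}\) must hit \(\mathcal{H}\) "backwards": the first part \(V^1\) must meet \(H^2\), and the second part \(V^2\) must meet \(H^1\).

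For \(\mathcal{H}\) we use the \(k\) inner cycles \(C_{k+1},\dots,C_{2k}\), or rather long subpaths of them, as horizontal paths. For \(\mathcal{V}\) we use \(k\) paths, each built from an odd column going from \(C_1\) towards \(C_{2k}\) and an even column coming back. Each \(V_j\) thus consists of a downward column segment, a short piece along an outer cycle \(C_i\) with \(i\le k\) used as a U-turn, and an upward column segment.

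We order the column pairs so that the downward part of every \(V_j\) crosses every \(H_i\) in the right half of its arc, which is \(H_i^2\). The upward part then crosses every \(H_i\) in the left half, \(H_i^1\). The distinct U-turn rows keep the \(V_j\) pairwise disjoint. Since the grid has \(4k\) columns and we need only \(2k\) of them, we can take consecutive columns \(P_{2j-1},P_{2j}\) placed alternately in the two halves.

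With this choice, the decompositions \(H_i=H_i^1\cdot H_i^2\) (cutting each arc at its midpoint) and \(V_j=V_j^1\cdot V_j^2\) (cutting at the U-turn) satisfy \cref{def:horizontal-web} with \(c=2\). Concretely, \(V_j^1\cap H_i\subseteq H_i^2\) holds, and \(V_j^1\) meets \(H_i^2\) nontrivially. Likewise \(V_j^2\cap H_i\subseteq H_i^1\), and \(V_j^2\) meets \(H_i^1\). These conditions can be read off directly from where the columns sit.

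For minimality, we argue that every arc of \(\mathcal{H}\) already lies in \(\mathcal{V}\), or that \(\mathcal{H}\) cannot be rerouted. The main obstacle is checking that \(\mathcal{H}\) is \(\mathcal{V}\)-minimal. The key observation we would use is that in \(\ToDigraph{\mathcal{H}\cup\mathcal{V}}\), every arc of \(\mathcal{H}\) outside \(\mathcal{V}\) is a cycle arc between two consecutive column crossings. Deleting such an arc \(e\) of \(H_i\) leaves only column segments as alternative routes. Those segments run monotonically, downward within \(V^1\) or upward within \(V^2\). Therefore any \(\Start{\mathcal{H}}\)-\(\End{\mathcal{H}}\)-linkage of order \(k\) avoiding \(e\) would need two of its paths to share the single column vertex on \(C_i\) adjacent to \(e\), giving a separator of size \(k-1\).

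If this counting proves delicate, the fallback is to choose \(\mathcal{H}\) as an arbitrary \(\mathcal{V}\)-minimal linkage with the same endpoints. We would then show by the same separator argument that it still crosses each \(V_j\) in the required halves. This works because the region between columns admits no other route in \(\mathcal{H}\cup\mathcal{V}\).
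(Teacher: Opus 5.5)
\textbf{There is a genuine gap: your $\mathcal{V}$, as described, does not form a $2$-horizontal web, and your minimality argument misses the main detour.}

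\textbf{The construction of $\mathcal{V}$.} The orientation is inconsistent. A $V_j$ that runs down an odd column (increasing cycle index), turns along some $C_i$ with $i\le k$, and then runs up an even column only ever visits $C_1,\dots,C_k$. So it never meets the $H_i$, which lie on $C_{k+1},\dots,C_{2k}$. It must instead go up an even column from $C_{2k}$, turn on an outer cycle, and come back down an odd column.

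More seriously, the proposed column pairs cannot work. By \cref{def:horizontal-web}, every vertex of $V_j^2\cap H_i$ lies in the prefix $H_i^1$. The single crossing of the column segment $V_j^1$ with $H_i$ must lie in the suffix $H_i^2$. Hence along each $H_i$, every column used by a second part precedes every column used by a first part. Since $H_i$ is a subpath of a cycle, it meets the columns in the cyclic order $P_1,\dots,P_{4k}$. Interleaved consecutive pairs $P_{2j-1},P_{2j}$ admit no such arrangement once $k\ge 2$.

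The two kinds of columns must form separate blocks. For instance, take $H_l\subseteq C_{k+l}$ from $P_1$ to $P_{4k-2}$, cut between $P_{2k-1}$ and $P_{2k}$. Let $V_j$ go up $P_{4k-2j}$ to $C_{k-j+1}$, then forward along $C_{k-j+1}$ through $P_{4k}$ to $P_{2j-1}$, then down $P_{2j-1}$. So the U-turns are long, wrap around the cylinder, and must be nested on distinct outer cycles to keep the $V_j$ disjoint. This wrap-around is the missing idea. A $2$-horizontal web with $|\mathcal{H}|\ge 2$ always contains a directed cycle: go along $V_j$ from $H_i^2$ to $H_i^1$, then back along $H_i$. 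So such a web can never lie inside the acyclic fence obtained by deleting the arcs that close the cycles $C_i$.

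\textbf{Minimality.} Once $\mathcal{V}$ is correct, column segments are not the only detours. A path on $C_{k+1}$ can leave upward along the up-column of some $V_j$, follow its U-turn around the cylinder, and re-enter through the down-column. Your claim that two paths must share ``the column vertex adjacent to $e$'' does not account for this.

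What works is a forcing argument:
\begin{itemize}
\item The $k$ start vertices fill the inner vertices of the first column.
\item In the down-column block, every column arc leaving an inner vertex ends in a vertex occupied by another path, so all $k$ paths must follow $\mathcal{H}$ through that block.
\item Afterwards, an excursion up some $V_j$ is forced back into a down-column vertex the same path already used, because outer vertices have out-degree one in $\mathcal{H}\cup\mathcal{V}$.
\end{itemize}
Your fallback of re-minimizing $\mathcal{H}$ and re-checking the crossings is not an argument. Losing crossings under minimization is exactly the issue the paper needs \cref{lemma:c-horizontal-web-minimal-or-disjoint-linkage} for in general.

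\textbf{Comparison with the paper.} The paper swaps the roles of rows and columns:
\begin{itemize}
\item $\mathcal{H}$ consists of $k$ equally oriented columns, cut where they pass from one half of the cycles to the other.
\item $\mathcal{V}^1$ consists of segments of the $k$ cycles met by $\mathcal{H}^2$.
\item $\mathcal{V}^2$ is rerouted through the oppositely oriented columns to the other $k$ cycles, then follows those cycles around the cylinder across $\mathcal{H}^1$.
\end{itemize}
With this choice minimality is immediate. The only arcs of $\mathcal{H}\cup\mathcal{V}$ leading from one cycle to the next in the direction of $\mathcal{H}$ are the $k$ arcs of $\mathcal{H}$ itself, so deleting one leaves a cut of size $k-1$.
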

\begin{proof}
	Let \(C_{1}, C_{2}, \ldots, C_{2k}\) be the cycles and let \(P_{1}, P_{2}, \ldots, P_{4k}\) be the paths of a cylindrical grid of order \(2k\).

	Let \(\mathcal{H} = \{P_1, P_3, \ldots, P_{2k - 1}\}\) and let \(\mathcal{H}_1 \cdot \mathcal{H}_2 = \mathcal{H}\) such that \(\mathcal{H}_2\) contains the subpaths of \(\mathcal{H}\) intersecting \(C_{1}, C_{2}, \ldots, C_{k}\) and \(\mathcal{H}_1\) contains the subpaths of \(\mathcal{H}\) intersecting \(C_{k + 1}, C_{k + 2}, \ldots, C_{2k}\).

	Let \(\mathcal{V}_1\) be the subpaths of \(C_{1}, C_{2}, \ldots, C_{k}\) from \(\V{P_1}\) to \(\V{P_{2k}}\).
	Note that \(\mathcal{V}_1\) intersects all paths of \(\mathcal{H}_2\).

	Construct \(\mathcal{V}_2\) as follows.
	Let \(\mathcal{R}\) be the linkage from \(\End{\mathcal{V}_1}\) to \(\bigcup_{i = k + 1}^{2k}\V{C_i} \cap \V{P_{4k}}\).
	This linkage exists since we can use each of the \(k\) paths \(P_{2k}, P_{2k + 2}, \ldots, P_{4k - 2}\) in order to reroute one path from a vertex of \(C_{1}, C_{2}, \ldots, C_{k}\) to a vertex of \(C_{k + 1}, C_{k + 2}, \ldots, C_{2k}\).
	We construct \(\mathcal{V}_2\) by first taking \(\mathcal{R}\) and then the subpaths of \(C_{k + 1}, C_{k + 2}, \ldots, C_{2k}\) from \(P_{4k - 2}\) to \(P_{2k - 1}\).
	Note that \(\mathcal{V}_2\) intersects all paths of \(\mathcal{H}_1\) without intersecting any path of \(\mathcal{H}_2\).
	Let \(\mathcal{V} = \mathcal{V}_1 \cdot \mathcal{V}_2\).
	Hence, \((\mathcal{H}, \mathcal{V})\) is a 2-horizontal web where \(\mathcal{H}\) is minimal with respect to \(\mathcal{V}\).
	Further, \(\Abs{\mathcal{H}} = \Abs{\mathcal{V}} = k\).
\end{proof}

In~\cref{lemma:coss-inside-2-horizontal-web}, we essentially prove that the converse holds for some elementary function.
Do we obtain a polynomial bound for the other direction as well?
Alternatively, one can start with the weaker question, asking for a cycle of well-linked sets instead.

\begin{question}
	Are there polynomials \(h,v\) such that every 2-horizontal web \((\mathcal{H}, \mathcal{V})\) where \(\mathcal{H}\) is weakly \(c\)-minimal with respect to \(\mathcal{V}\) with \(\Abs{\mathcal{H}} \geq h(w, \ell, c)\) and \(\Abs{\mathcal{V}} \geq v(w, \ell, c)\) contains a cycle of well-linked sets of length \(\ell\) and width \(w\)?
\end{question}

Similarly, it is not difficult to see that a 2-horizontal web has a large directed \treewidth.
To see this, recall the definition of \emph{haven} from~\cite{johnson2001directed}.
A haven of order \(w\) in \(D\) is a function \(\beta\) assigning to every set \(Z \subseteq \V{D}\) with \(\Abs{Z} < w\) the vertex-set of a strong component of \(D \setminus Z\) in such a way that if \(Z' \subseteq Z \subseteq \V{D}\), then \(\Func{\beta}{Z} \subseteq \Func{\beta}{Z'}\).
Johnson, Robertson, Seymour and Thomas~\cite{johnson2001directed} prove that, if a digraph contains a haven of order \(w\), then its directed \treewidth is at least \(w - 1\).

\begin{observation}
	Every 2-horizontal web \((\mathcal{H}, \mathcal{V})\) contains a haven of order \(\min(h,v)\), where \(h = \Abs{\mathcal{{H}}}\) and \(v = \Abs{\mathcal{V}}\).
	In particular, \((\mathcal{H}, \mathcal{V})\) has directed \treewidth at least \(\min(h,v) - 1\).
\end{observation}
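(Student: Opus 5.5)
The plan is to construct the haven explicitly. Set $k \coloneqq \min(h,v)$ and work in $D \coloneqq \ToDigraph{\mathcal{H} \cup \mathcal{V}}$. For every $Z \subseteq \V{D}$ with $\Abs{Z} < k$, the paths of $\mathcal{H}$ are pairwise disjoint, so $Z$ meets fewer than $k \leq h$ of them. Hence some $H_i \in \mathcal{H}$ is disjoint from $Z$. In the same way, some $V_j \in \mathcal{V}$ is disjoint from $Z$. We let $\beta(Z)$ be the vertex set of the strong component of $D - Z$ that contains all paths of $\mathcal{H} \cup \mathcal{V}$ avoiding $Z$. The main work is to show that this component is well defined.

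The key step is a local claim: for every $H_i \in \mathcal{H}$ and every $V_j \in \mathcal{V}$, the digraph $H_i \cup V_j$ is strongly connected. We read this off from \cref{def:horizontal-web} with $c = 2$.
\begin{itemize}
\item For $x = 1$ the definition gives $V_j^1 \cap H_i \subseteq H_i^1 \cup H_i^2$ and $V_j^1 \cap H_i^2 \neq \emptyset$. Let $a$ be a vertex of $V_j^1 \cap H_i^2$.
\item For $x = 2$, since $H_i^0 = \emptyset$, it gives $V_j^2 \cap H_i \subseteq H_i^1$ and $V_j^2 \cap H_i^1 \neq \emptyset$. Let $b$ be a vertex of $V_j^2 \cap H_i^1$.
\end{itemize}
Because $V_j = V_j^1 \cdot V_j^2$, the path $V_j$ contains an $a$-$b$-subpath. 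Because $H_i = H_i^1 \cdot H_i^2$, the path $H_i$ contains a $b$-$a$-subpath. Concatenating the two gives a closed walk through $a$ and $b$. Every vertex of $H_i$ lies on $H_i$, which passes through $b$ and $a$, and every vertex of $V_j$ lies on $V_j$, which passes through $a$ and $b$. It follows that every vertex of $H_i \cup V_j$ can reach $a$ and can be reached from $a$, so the claim holds.

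Now fix $Z$ and one avoiding $V_j$. By the local claim, every avoiding $H_i$ lies in the strong component of $D - Z$ that contains $V_j$. Any other avoiding $V_{j'}$ is strongly connected with an avoiding $H_i$, which in turn is strongly connected with $V_j$. Hence all avoiding paths lie in a single strong component, and $\beta(Z)$ is well defined.

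It remains to check monotonicity. Let $Z' \subseteq Z$. Every path that avoids $Z$ also avoids $Z'$. The set $\beta(Z)$ induces a strongly connected subgraph of $D - Z \subseteq D - Z'$, and it contains a path that avoids $Z'$. Therefore $\beta(Z)$ lies inside the strong component of $D - Z'$ containing that path, which is $\beta(Z')$. So $\beta$ is a haven of order $k$. The lower bound $\dtw{D} \geq k - 1$ then follows from the result of Johnson, Robertson, Seymour and Thomas quoted above.

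The only step that needs care is the index bookkeeping in \cref{def:horizontal-web}: one must get $V_j^1$ hitting $H_i^2$ before $V_j^2$ hits $H_i^1$, so that the two subpaths close into a cycle. Everything else is routine.
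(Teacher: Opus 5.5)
Your outline matches the paper's: surviving paths, one common strong component of $D - Z$, monotonicity, then the Johnson--Robertson--Seymour--Thomas bound. However, your key local claim is false: $H_i \cup V_j$ is in general \emph{not} strongly connected. The step ``every vertex of $H_i$ lies on $H_i$, which passes through $b$ and $a$, hence can reach $a$ and be reached from $a$'' does not follow. A vertex of $H_i$ before $b$ reaches $a$ but need not be reachable from it, and a vertex after $a$ is reachable from $a$ but need not reach it.

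Here is a concrete counterexample. Prepend to $H_1$ a new vertex $s$ with an arc from $s$ to the first vertex of $H_1$, and put $s$ into $H_1^1$. This is still a $2$-horizontal web, since $s$ lies on no path of $\mathcal{V}$. But $s$ has in-degree $0$, so it is a singleton strong component of $D - Z$. Hence for every $Z$ disjoint from $H_1$ (e.g.\ $Z = \emptyset$) the component you call $\beta(Z)$ does not exist.

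The paper's proof tries to remove such dangling ends by first trimming the surviving paths to sublinkages that start and end on the other surviving linkage, a step you skip. Even the trimmed union can contain a source, for instance when $H_1$ and $V_1$ start at the same vertex. So the following repair is the safer one in either version.

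Anchor on the junctions. Let $j_{H_i}$ be the last vertex of $H_i^1$ and $j_{V_j}$ the last vertex of $V_j^1$.
Since $b \in H_i^1$ and $a \in H_i^2$, the $b$-$a$-subpath of $H_i$ passes through $j_{H_i}$. Since $a \in V_j^1$ and $b \in V_j^2$, the $a$-$b$-subpath of $V_j$ passes through $j_{V_j}$. So your closed walk contains both junctions. If $a = b$, which the shared endpoints of the decompositions permit, then $a = j_{H_i} = j_{V_j}$.

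Consequently, whenever $H_i$ and $V_j$ avoid $Z$, their junctions lie in one strong component of $D - Z$. Chaining through one avoiding $H_i$ and one avoiding $V_j$, as you did, shows that all junctions of $Z$-avoiding paths lie in a single strong component; take that as $\beta(Z)$. Your monotonicity argument then goes through unchanged, with ``contains a path avoiding $Z'$'' replaced by ``contains the junction of a path avoiding $Z'$''.
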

\begin{proof}
	Let \(w = \min(h,v)\).
	Let \(Z \subseteq \V{D}\) be some set of size less than \(w\).
	Let \(\mathcal{H}' \subseteq \mathcal{H}\) and \(\mathcal{V}' \subseteq \mathcal{V}\) such that neither \(\mathcal{H}'\) nor \(\mathcal{V}'\) contain any vertex of \(Z\).
	Let \(\mathcal{V}''\) be the sublinkage of \(\mathcal{V}'\) where the paths start and end inside \(\V{\mathcal{H}'}\).
	Similarly, let \(\mathcal{H}''\) be the sublinkage of \(\mathcal{H}'\) where the paths start and end inside \(\V{\mathcal{V}''}\).
	Observe that \((\mathcal{H}'', \mathcal{V}'')\)	is also a 2-horizontal web.

	Let \(\mathcal{H}_1 \cdot \mathcal{H}_2\) be the partition of \(\mathcal{H}''\) and let \(\mathcal{V}_1 \cdot \mathcal{V}_2\) be the partition of \(\mathcal{V}''\) witnessing that \((\mathcal{H}'', \mathcal{V}'')\) is a 2-horizontal web.

	As \(\mathcal{V}_1\) intersects all paths of \(\mathcal{H}_2\) and \(\mathcal{V}_2\) intersects all paths of \(\mathcal{H}_1\), it is easy to see that the digraph \(D'\) induced by \(\mathcal{H}'' \cup \mathcal{V}''\) is strongly connected.
	Hence, we assign \(\Func{\beta}{Z} = D'\).
	Clearly, \(\beta\) is a haven of order \(w\).
\end{proof}

\Cref{lemma:coss-or-minimal-2-horizontal-web,thm:high_dtw_to_POSS_plus_back-linkage} essentially imply that the converse statement holds with an elementary function between the directed \treewidth and the size of the 2-horizontal web.
Can we also obtain a polynomial bound?

\begin{question}
	Is there a polynomial \(w\) such that every digraph \(D\) with \(\DTreewidth{D} \geq w(h,v,c)\) contains a  2-horizontal web \((\mathcal{H}, \mathcal{V})\) where \(\mathcal{H}\) is weakly \(c\)-minimal with respect to \(\mathcal{V}\) with \(\Abs{\mathcal{H}} \geq h\) and \(\Abs{\mathcal{V}} \geq v\)?
\end{question}

If there is a polynomial bound for the Directed Grid Theorem, the answer to both questions above is \say{yes}.
Conversely, obtaining good functions for either question (even if not polynomial) would immediately lead to better bounds for the Directed Grid Theorem by applying our framework.
Hence, if one intends to improve the bounds of the Directed Grid Theorem further, it is only natural to consider both questions above, as they are necessary and sufficient steps towards this end.

\bibliographystyle{alphaurl}
\bibliography{literature.bib}

\end{document}